\documentclass{article}

\usepackage[T1]{fontenc}
\usepackage{amsmath,amsfonts,amsthm}
\usepackage{mathtools}
\usepackage{thmtools}
\usepackage{enumitem}
\usepackage[normalem]{ulem}
\usepackage[letterpaper,top=2cm,bottom=2cm,left=3cm,right=3cm,marginparwidth=1.75cm]{geometry}
\usepackage{authblk}

\usepackage{multirow}
\usepackage{comment}
\usepackage{complexity}
\usepackage{subcaption}
\usepackage{float}
\usepackage{nicefrac}

\usepackage[dvipsnames]{xcolor}
\usepackage[colorlinks=true,pdfpagemode=UseNone,urlcolor=RoyalBlue,linkcolor=RoyalBlue,citecolor=OliveGreen,pdfstartview=FitH,linktocpage]{hyperref}

\usepackage{dsfont}
\usepackage{xspace}
\usepackage{tikz}
\usepackage[skins]{tcolorbox}
\usepackage[nameinlink, capitalise]{cleveref}
\crefname{section}{Section}{Sections}
\crefname{subsection}{Subsection}{Subsections}
\usepackage{csquotes}
\usepackage[backend=biber, style=alphabetic, maxbibnames=10, minalphanames=3, maxalphanames=4]{biblatex}
\DeclareSourcemap{
  \maps[datatype=bibtex, overwrite]{
    \map{
      \step[fieldset=editor, null]
    }
  }
}
\DeclareSourcemap{
  \maps[datatype=bibtex]{
    \map{
       \step[fieldset=series, null]
    }
  }
}

\declaretheorem{theorem}
\declaretheorem[sibling=theorem]{lemma}
\declaretheorem[sibling=theorem]{claim}
\declaretheorem[sibling=theorem]{claim*}

\declaretheorem[sibling=theorem]{fact}

\declaretheorem[sibling=theorem]{definition}

\theoremstyle{definition}

\usepackage{figchild}

\newcommand{\eps}{\epsilon}
\renewcommand{\epsilon}{\varepsilon}

\newcommand{\ind}{\mathds{1}}

\usepackage{float}
\usepackage{tikz}
\usetikzlibrary{calc} 
\usepackage{array}
\usepackage{multirow}

\usepackage[noend]{algpseudocode}
\usepackage{amsmath, amssymb}
\usepackage{amsthm}
\usepackage{comment}
\usepackage{mdframed}
\usepackage{xspace}
\usepackage{multicol}
\usepackage{tabularx}
\usepackage{makecell}

\usepackage{enumitem}

\declaretheoremstyle[
    bodyfont=\normalfont  
]{upright}

\declaretheorem[
    style=upright,  
    refname={Algorithm,Algorithms},
    Refname={Algorithm,Algorithms},
    name={Algorithm}
]{algorithm}
\usepackage{graphicx}

\newcommand{\dist}{d}

\newcommand{\Sf}{S_{\mathsf{free}}}
\newcommand{\Sr}{S_{\mathsf{reg}}}
\newcommand{\Ff}{F_{\mathsf{free}}}

\newcommand{\calI}{\mathcal{I}}

\newcommand{\calB}{\mathcal{B}}
\newcommand{\calC}{\mathcal{C}}

\newcommand{\calN}{\mathcal{N}}
\newcommand{\calQ}{\mathcal{Q}}

\newcommand{\calH}{\mathcal{H}}
\newcommand{\calM}{\mathcal{M}}
\newcommand{\calP}{\mathcal{P}}
\newcommand{\calX}{\mathcal{X}}

\newcommand{\calU}{\mathcal{U}}
\newcommand{\bcalU}{\widetilde{\mathcal{U}}}
\newcommand{\tmu}{\widetilde{\mu}}
\newcommand{\calR}{\mathcal{R}}

\newcommand{\bcalR}{\widetilde{\mathcal{R}}}
\newcommand{\calL}{\mathcal{L}}

\newcommand{\clcost}{\text{cost}}
\newcommand{\closedclcost}{\text{closedcost}}
\newcommand{\increase}{\text{cost-inc}}

\newcommand{\clients}{D}
\newcommand{\clientspure}{D^*_{\texttt{pure}}}
\newcommand{\clientscheap}{D^*_{\texttt{cheap}}}
\newcommand{\clientsexpensive}{D^*_{\texttt{exp}}}
\newcommand{\facilities}{F}
\newcommand{\opt}{\text{OPT}}
\newcommand{\cost}{\text{cost}}
\newcommand{\sopt}{\text{opt}}
\newcommand{\optlpfl}{\text{opt}_{\text{LP}}}

\newcommand{\Maxdist}{M}
\newcommand{\dummyset}{\Lambda}
\newcommand{\core}{C^{\text{core}}}

\newcommand{\logadaptalg}{\textsc{LogAdaptiveAlgorithm}\xspace}

\newcommand{\JMSalg}{\textsc{Greedy Algorithm}\xspace}
\newcommand{\mergealg}{\textsc{MergeSolutions}\xspace}
\newcommand{\completesol}{\textsc{CompleteSolution}\xspace}
\newcommand{\completesolx}{\textsc{CompleteSolution}}
\newcommand{\completesequence}{\textsc{CompleteSequence}\xspace}
\newcommand{\calLexp}{\calL_{\text{exp}}}

\usepackage[colorinlistoftodos,textsize=tiny,textwidth=2cm,color=green!50!gray]{todonotes}

\newlength\hwhatcwidth
\newlength\hwhatuwidth
\newcommand*\halfwidehat[1]{%
  \settowidth\hwhatcwidth{\ensuremath{#1}}%
  \settowidth\hwhatuwidth{\ensuremath{\mathrm{#1}}}%
  \addtolength\hwhatuwidth{-\hwhatcwidth}
  \makebox[0pt][l]{
    \kern\dimexpr0.25\hwhatcwidth-0.667\hwhatuwidth\relax
    \ensuremath{\widehat{\vphantom{#1}\rule{0.5\hwhatcwidth}{0pt}}}
  }#1
}

\newtcolorbox{myframe}[2][]{%
  enhanced,colback=white,colframe=black,coltitle=black,
  sharp corners,boxrule=0.4pt,
  fonttitle=\itshape,
  attach boxed title to top left={yshift=-0.3\baselineskip-0.4pt,xshift=2mm},
  boxed title style={tile,size=minimal,left=0.5mm,right=0.5mm,
    colback=white,before upper=\strut},
  title=#2,#1
}

\title{An Improved Greedy Approximation\\ for (Metric) $k$-Means\thanks{R. Gao was supported by a Stanford Graduate Fellowship. F. Grandoni was partially supported by the SNF Grants 200021-200731 and 200021-236706. E. Lee was partially supported by the NSF Grant 2236669. E. van. Wijland. 
was partially supported by the French PEPR integrated projects EPIQ (ANR-22-PETQ-0007).}
}

\author[1]{Moses Charikar}
\author[2]{Vincent Cohen-Addad}
\author[1]{Ruiquan Gao}
\author[3]{Fabrizio Grandoni}
\author[4]{Euiwoong Lee}
\author[5]{Ernest van Wijland}
\affil[1]{Stanford University, \url{{moses, ruiquan}@cs.stanford.edu}}
\affil[2]{Google Reserach, \url{cohenaddad@google.com}}
\affil[3]{IDSIA, USI-SUPSI, \url{fabrizio.grandoni@gmail.com}}
\affil[4]{University of Michigan, \url{euiwoong@umich.edu}}
\affil[5]{Université Paris-Cité, \url{ernest.vanwijland@irif.fr}}

\date{}

\newcommand{\avg}{\text{avg}}

\newcommand{\ho}{\hat{\opt}}

\newcommand{\apxLSkmeans}{25}

\begin{document}

    \maketitle


\begin{abstract}
\noindent 
Clustering is a basic task in data analysis and machine learning, and the optimization of clustering objectives are well-studied optimization problems; amongst these, the $k$-Means objective is arguably the most well known.
Given a collection of points in a metric space, the goal is to partition them into $k$ clusters, each with an associated center, so as to minimize the sum of squared distances of points to their cluster centers.  
In this paper, we present a polynomial-time $3+2\sqrt{2}+\eps<5.83$-approximation algorithm for $k$-Means in general metrics. This substantially improves on the current-best $(9+\eps)$-approximation in [Ahmadian, Norouzi-Fard, Svensson, Ward - FOCS'17, SICOMP'20], and even slightly improves on the $5.92$-approximation in [Cohen-Addad, Esfandiari, Mirrokni, Narayanan - STOC'22] for the Euclidean special case.

A natural approach for $k$-Means is to leverage Lagrangian Multiplier Preserving (LMP)  approximations for the facility location problem.
The previous best results for $k$-Means build upon an adaptation of an LMP $3$-approximation for facility location with metric connection costs in [Jain, Vazirani - J.ACM'01] based on a primal-dual method, rather than on the improved LMP greedy $2$-approximation for the same problem in [Jain, Mahdian, Markakis, Saberi, Vazirani - J.ACM'03]. The barrier to using the improved LMP algorithm was that no adaptation of this algorithm and its analysis to the case of squared metric connection costs was known (since squared distances violate triangle inequality). Our main contribution is overcoming this barrier by providing such an adaptation. This new LMP approximation algorithm is then combined with the framework recently introduced in [Cohen-Addad, Grandoni, Lee, Schwiegelshohn, Svensson - STOC'25] for the related (metric) $k$-Median problem.
\end{abstract}

\thispagestyle{empty}
\newpage

\tableofcontents
\thispagestyle{empty}

\newpage
\setcounter{page}{1}

\section{Introduction}
\label{sec:intro}

In a generic clustering problem we are given a collection of points together with a dissimilarity measure (i.e.~distance function) between pairs of points. The high-level goal is to partition the points into a ``small'' number of \emph{clusters} so that similar points are clustered together while dissimilar ones are clustered separately. One of the most fundamental and best-studied clustering problems is \emph{$k$-Means}. Here we are given a collection $D$ of $n$ points (or \emph{clients}) and a collection $F$ of \emph{centers} (or \emph{facilities}), as well as a integer $k>0$. We are also given metric distances $d:(D\cup F) \times (D\cup F)\rightarrow \mathbb{R}_{\geq 0}$. Our goal is to select a set $S$ of $k$ centers (the \emph{open} centers) so as to minimize the sum of the squared distances from each client to the closest open center\footnote{It can equivalently be phrased as the problem of partitioning the given points ($D$) into $k$ disjoint clusters and finding the best center (amongst $F$) for each cluster, to minimize the $k$-Means objective.}, i.e.,
$$
\sum_{j\in D}d^2(j,S)=\sum_{j\in D}\min_{i\in S}d^2(j,i).
$$
Observe that each feasible solution $S$ naturally induces $k$ clusters where the cluster associated with $i\in S$ is given by the clients that are closer to $i$ than to any other center in $S$ (breaking ties arbitrarily). We next use $\opt_k$ to denote a reference optimal solution and $\sopt_k$ to denote its cost. When $k$ is clear from the context, we simply use $\opt$ and $\sopt$.

$k$-Means is NP-hard and well-studied in terms of approximation algorithms. It is impossible to approximate it (in polynomial time) better than a factor $1+8/e \approx 3.94$ \cite{JMS02}. The current-best (polynomial-time) approximation factor for this problem is $9+\eps$ for any constant $\eps>0$ by Ahmadian, Norouzi-Fard, Svensson, and Ward \cite{AhmadianNSW20}. Their algorithm is based on a primal-dual Lagrangian Multiplier Preserving (LMP) $9$-approximation algorithm for facility location with squared metric connection costs (similar in spirit to a classical primal-dual $3$-approximation algorithm by Jain and Vazirani \cite{JaV01} for metric connection costs),
plus a careful way to combine so-called bi-point solutions that introduces a $1+\eps$ factor only in the approximation (rather than a larger constant factor as in prior work). This improved on an earlier $\apxLSkmeans$-approximation by Gupta and Tangwongsan \cite{LSGupta} based on local search: we will need this result later. Interestingly a $(1+8/e+\eps)$-approximation can be obtained in FPT time \cite{Cohen-AddadG0LL19}.

Very often in practice one considers the Euclidean special case of $k$-Means (which in the literature is often just called $k$-Means), where the clients are $n$ points in the ($d$-dimensional) Euclidean space, with the respective distances. In this case one is allowed to select any point in the Euclidean space as a center. However known reductions \cite{matouvsek2000approximate, de2003approximation, feldman2007ptas} 
allow one to focus on a set of candidate centers $F$ of size $n^{O_\eps(1)}$ while introducing a factor $1+\eps$ in the approximation for any constant $\eps>0$. Therefore one can (essentially) see the Euclidean case as a special case of the metric one. The best-known approximation factor for the Euclidean case is $5.92$ by Cohen-Addad, Esfandiari, Mirrokni, Narayanan \cite{Cohen-AddadEMN22}. This improves on a sequence of increasingly better approximations for the problem: a $(9+\eps)$-approximation by Kanungo, Mount, Netanyahu, Piatko, Silverman, and Wu \cite{KanungoMNPSW04}, a $6.36$-approximation by Ahmadian et al. \cite{AhmadianNSW20}, and a $6.13$-approximation by Grandoni, Ostrovsky, Rabani, Schulman, and Venkat \cite{GrandoniORSV22}.

Our main result is a $5.83$-approximation for (metric) $k$-Means. More precisely, we get the following:
\begin{theorem}\label{thr:main}
For any constant $\eps>0$, there is a polynomial-time randomized $(3+2\sqrt{2}+\eps)$-approximation for (metric) $k$-Means.    
\end{theorem}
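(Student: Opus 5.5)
The proof follows the abstract's plan in two parts. First, build an LMP approximation for facility location with squared metric connection costs by adapting the greedy (JMMSV) algorithm. Second, plug it into the framework of Cohen-Addad, Grandoni, Lee, Schwiegelshohn and Svensson for $k$-Median, which turns an LMP-type greedy facility-location algorithm into a $k$-clustering algorithm while paying only a $(1+\eps)$ factor.

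\textbf{The greedy algorithm and the target guarantee.} Fix an opening cost $\lambda$. Each client $j$ raises a budget $\alpha_j$ continuously. A client contributes $\max(0,\alpha_j-d^2(j,i))$ to facility $i$, and $i$ opens once the total contribution reaches $\lambda$. Following JMMSV, once a client is connected it may still contribute to other facilities via switching, i.e. by reducing its connection cost. The target is the guarantee
$$\sum_j d^2(j,S)+\rho\lambda|S| \le \rho\,(\text{optimum facility-location cost}),\qquad \rho = 3+2\sqrt2 = (1+\sqrt2)^2 .$$
The constant arises as follows. In the metric case the factor-revealing analysis compares $\alpha_j$ with $d(j,i^*)+\alpha_{j'}$-type quantities. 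With squared distances, relaxed triangle inequalities $(a+b)^2\le(1+\delta)a^2+(1+1/\delta)b^2$ appear. Optimizing the relevant two-hop bound $(1+\sqrt{\cdot})^2$ should give exactly $(1+\sqrt2)^2$.

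\textbf{Analyzing one optimal star.} I would fix a facility $i^*$ of the optimal facility-location solution and its clients $j_1,\dots,j_m$, ordered by the time $\alpha$ at which they become frozen. I would then show
$$\sum_\ell \alpha_\ell \le \rho\bigl(\lambda+\textstyle\sum_\ell d^2(j_\ell,i^*)\bigr).$$
This needs two constraints, mirroring the JMMSV factor-revealing LP.
\begin{itemize}
\item \emph{Opening constraint.} At time $\alpha_\ell$, the clients still active together with the already-connected ones (counted through their current connection distances $r_{h,\ell}$) cannot overpay $i^*$: $\sum_h \max(0,\cdot)\le\lambda$.
\item \emph{Connection constraint.} For $h<\ell$, $\alpha_\ell \le d(j_\ell,i)^2$ for the facility $i$ serving $j_h$. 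Taking square roots, $\sqrt{\alpha_\ell}\le \sqrt{r_{h,\ell}}+d(j_h,i^*)+d(j_\ell,i^*)$.
\end{itemize}
Bounding the resulting program by $\rho$ is the main obstacle. The square-root form of the constraints breaks the linearity that JMMSV exploited. I would first try a dual-fitting, per-client charging argument: use $\sqrt{\alpha_\ell}\le\sqrt{r}+d_h+d_\ell$ with Cauchy--Schwarz weights $\sqrt2$ and $1$, average over $h$, and combine with the opening constraint. Should the algorithm need modification for this to go through, the modification would be letting a client's contribution use a scaled distance, e.g. $\max(0,\alpha_j-\gamma d^2)$. The point is that exactly $3+2\sqrt2$ (not $9$) should emerge from the optimal choice of $\delta=\sqrt2$.

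\textbf{From the LMP algorithm to $k$-Means.} I would invoke the CGLSS framework. It runs the greedy process for a sequence of $\lambda$ values and maintains a quasi-monotone sequence of nested solutions, so that one close to $k$ facilities is found. It then repairs the count using a small number of extra or removed centers, losing only $(1+\eps)$. The remaining step is to check that the properties of the greedy algorithm used in that framework survive squared costs. These properties are robustness to small changes in $\lambda$ and the ability to complete a solution. Where the framework applies the triangle inequality, relaxed versions with $\eps$-dependent constants suffice. For degenerate instances, or where the framework needs a constant-factor starting solution, I would use the $25$-approximation of \cite{LSGupta}. Randomness enters through the framework's random selection among candidate solutions. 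Altogether this yields a randomized $(3+2\sqrt2+\eps)$-approximation.
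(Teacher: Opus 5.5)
There is a genuine gap, and the most serious part is your last step. The CGLSS walking-between-solutions framework does not return $k$ centers at a $(1+\eps)$ loss. What it yields is Theorem~\ref{thr:mainBicriteria}: a solution with $k$ regular centers plus $\Delta=O(\log n/\eps^2)$ free ones, of cost $(3+2\sqrt2+\eps)\sopt$. The number of extra centers is proportional to the number of phases of a log-adaptive variant of the greedy algorithm. You would also have to construct that variant, with dual feasibility built into the test for when a facility may be opened, since discrete jumps of the $\alpha_j$ would otherwise violate it. There is no generic way to delete those $\Delta$ centers at only a $(1+\eps)$ loss.

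The missing idea is a stability dichotomy. Run the bicriteria algorithm with target $k'=\max\{1,k-\Delta\}$. If $\sopt_{k'}\le(1+\eps)\sopt_k$ you are done. Otherwise some $k''\in(k',k]$ has $\sopt_{k''}\le(1+\eps)\sopt_k$ and $\sopt_{k''-1}\ge(1+\beta)\sopt_{k''}$ with $\beta=\Omega(\eps^3/\log n)$. Such stable instances need an entirely different algorithm (Theorem~\ref{thr:mainStable}): local search (this is where the $25$-approximation enters), $D^2$-sampling, guessing balls around leaders, removal of expensive and cheap centers, and submodular maximization over a partition matroid. Its ratio $5+\eps$ must stay below $3+2\sqrt2$, and it is also the actual source of randomness. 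Without this component your argument proves at most the bicriteria statement.

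The LMP part is also not in working shape. The algorithm you start from (bids $[\alpha_j-d^2(j,i)]^+$ with JMMSV switching) is provably bad. In the example of Figure~\ref{fig:counter-example}, a client connects to a far facility at connection cost only slightly below its $\alpha_j$ and lowers its bid to a nearby facility by $\Theta(\sqrt f)$. That delays the nearby facility and lets another client overshoot by an additive $\Theta(\sqrt f)$, giving LMP ratio $\Omega(\sqrt f)$.

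Your fallback of bidding $[\alpha_j-\gamma d^2(j,i)]^+$ points the right way, but the details that make it work are missing. A client must still stop as soon as $\alpha_j\ge d^2(j,S)$, becoming indirectly connected and keeping the bid $[\alpha_j-\gamma d^2(j,i)]^+$. Only after contributing to an opening does it bid $[\gamma d^2(j,S)-\gamma d^2(j,i)]^+$. This is what lets the bids in the opening constraint absorb the bound from Lemma~\ref{lem:apxTriangleInequality3}: $\alpha_j\le d^2(j,i')\le\gamma d^2(j',S)+(2+\frac{2}{\gamma-1})(d^2(j,i)+d^2(j',i))$, where $j'$ is an earlier directly connected client served by $i'$ and $i$ is the facility under consideration. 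The ratio is then $\Gamma=\gamma+2+\frac{2}{\gamma-1}$, minimized at $\gamma=1+\sqrt2$. This is an additive three-hop bound, not your two-hop heuristic.

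You leave bounding the resulting system as ``the main obstacle''. The paper closes it by combining the per-client inequalities with nonnegative weights $\beta_j$ satisfying $\sum_{j'>j}\beta_{j'}\le\frac{s-j}{s-j+|DC^j|}$, where $DC^j$ is the set of earlier clients already directly connected when $j$ stops. Also, to get the LMP form you state, the greedy must be run with opening cost $\rho\lambda$, not $\lambda$.
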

We remark that the above result not only substantially improves the current best $(9+\eps)$-approximation for the general metric case \cite{AhmadianNSW20}, but even slightly the current best $5.92$-approximation for the Euclidean case \cite{Cohen-AddadEMN22}. We overview our approach in Section \ref{sec:overview}.

\subsection{Related Work}
\label{sec:related}

$k$-Means belongs to the family of $k$-clustering problems where the target number of clusters $k$ is fixed. Other famous examples are $k$-Center and $k$-Median. In $k$-Center one wishes to select a set $S$ of $k$ centers so as to minimize the maximum distance from any client to $S$, i.e., the objective function to minimize is $\max_{j\in D}d(j,S)$. This problem admits a simple greedy $2$-approximation which is best possible unless $P=NP$ \cite{Gon85,HoS86}. $k$-Median is defined like $k$-Means, except that here one wishes to minimize the sum of the distances rather than squared distances, i.e., the objective function is $\sum_{j\in D}d(j,S)$. $k$-Median is very close to   
$k$-Means in terms of results and techniques. $k$-Median is hard to approximate below a factor $1 + 2/e$~\cite{JMS02}. For general metric distances, the first constant approximation was achieved by Charikar, Guga, Tardos and Shmoys \cite{CharikarGTS99}. After a very long sequence of improvements \cite{JaiV01,JainMMSV03,AryaGKMMP04,LiS16,ByrkaPRST17,Cohen-AddadLS23,GowdaPST23}, the current-best $(2+\eps)$-approximation for this problem was very recently achieved by Cohen-Addad, Grandoni, Lee, Schwiegelshohn, and Svensson \cite{CGLSS25stoc}. This is also the best result for the Euclidean case, improving on an earlier $2.406$-approximation \cite{Cohen-AddadEMN22}.

\section{Overview of our Approach}
\label{sec:overview}

At a very high level, our approach looks similar in spirit to the one leading to a recent $(2+\eps)$-approximation for the related $k$-Median problem by Cohen-Addad, Grandoni, Lee, Schwiegelshohn, and Svensson \cite{CGLSS25stoc}. In more detail, we exploit a combination of two different algorithms. The first one is a bicriteria approximation algorithm with the desired approximation factor, which opens $O(\log n/\eps^2)$ more centers than the $k$ allowed ones. 
\begin{theorem}\label{thr:mainBicriteria}
For any constant $\eps\in (0,1/6)$, there is a polynomial-time algorithm for $k$-Means that returns a solution containing at most $k+O(\log n/\eps^2)$ centers and of cost at most $(3+2\sqrt{2}+\eps)\sopt$.  
\end{theorem}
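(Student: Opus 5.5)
We follow the template of \cite{CGLSS25stoc} for $k$-Median, replacing the LMP primal-dual routine by a greedy LMP algorithm for facility location with squared metric connection costs in the spirit of \cite{JainMMSV03}. The first and central step is to design such a greedy algorithm: clients raise budgets $\alpha_j$ uniformly, a facility $i$ is opened once $\sum_j \max(0,\alpha_j - d^2(j,i))$ reaches the opening cost $\lambda$, and clients may switch to a newly opened closer facility, contributing their savings. We then analyze it by a factor-revealing argument. For any star (a facility $i$ together with the clients $j_1,\dots,j_m$ it serves in an optimal solution), we want to show
\[
\sum_{\ell}\alpha_{j_\ell} \le \gamma\Big(\lambda + \sum_\ell d^2(j_\ell,i)\Big), \qquad \gamma = 3+2\sqrt2 ,
\]
that is, an LMP guarantee with factor $\gamma$. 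Here the triangle inequality is replaced by the relaxed inequality $d^2(a,c)\le (1+\delta)d^2(a,b)+(1+1/\delta)d^2(b,c)$, applied with $\delta$ optimized. We expect the choice $\delta=\sqrt2$ to be what produces $3+2\sqrt2$. This step is the main obstacle. The difficulty is that the JMMSV factor-revealing LP relies on the exact triangle inequality, so the client constraints have to be rederived with the relaxed inequality and the resulting program bounded. We would try first to argue directly with $\sqrt{\alpha}$-radii: when the $\ell$-th client becomes frozen, earlier clients are connected within distance $\sqrt{\alpha_\ell}$ plus the detour through $i$. Expanding the squares and applying Cauchy--Schwarz should then bound the ratio by $(1+\sqrt2)^2$.

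Next, to convert the LMP facility location algorithm into a $k$-Means algorithm, we binary search over $\lambda$ and track how the greedy solution changes as $\lambda$ varies. Following \cite{CGLSS25stoc}, rather than combining bi-point solutions (which would cost extra constant factors), we build a sequence of greedy solutions for closely spaced $\lambda$. Consecutive solutions should differ by few facilities, so we can walk along the sequence and stop at a solution with $k$ plus a small number of centers. The analysis would show that, after discretizing $\lambda$ into $\mathrm{poly}(n)/\eps$ steps and interleaving solutions, an intermediate solution with at most $k+O(\log n/\eps^2)$ centers has cost at most $(\gamma+\eps)\sopt$. The $\log n/\eps^2$ overhead comes from the number of geometric scales of $\lambda$ (there are $O(\log n/\eps)$ of them after rounding distances to polynomially bounded range using the $\apxLSkmeans$-approximation of \cite{LSGupta}) times $O(1/\eps)$ extra facilities per scale.

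Finally, we verify the standard preprocessing. Distances are scaled so that $\sopt$ is known within a constant via \cite{LSGupta}. Facilities whose distances exceed $\mathrm{poly}(n)\cdot\sopt$ are discarded, and tiny distances are rounded, each at a $(1+\eps)$ loss. This keeps the number of relevant $\lambda$ scales logarithmic. The restriction $\eps<1/6$ is used only to absorb the constants in the $(1+O(\eps))$ factors. Combining the LMP bound, summed over the stars of $\opt_k$, with the choice of $\lambda$ at which the solution opens roughly $k$ facilities gives cost $\le\gamma(\sopt + k\lambda) - \lambda|S|$. Together with $|S|\ge k$, this yields the claimed $(3+2\sqrt2+\eps)\sopt$ bound with at most $k+O(\log n/\eps^2)$ centers.
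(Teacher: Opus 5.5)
\textbf{Your greedy rule is the one that fails.} Active clients bidding $[\alpha_j-d^2(j,i)]^+$, and connected clients offering their savings $d^2(j,S)-d^2(j,i)$, is the na\"ive squared-distance version of JMMSV. It is not an LMP $O(1)$-approximation, so no choice of $\delta$ in a relaxed triangle inequality can make the analysis work.

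The paper's example (Figure~\ref{fig:counter-example}) shows this. Put clients $1,2$ at $-\frac12,\frac12$ on a line, facility $i$ at $0$, and a far facility $i'$ at $\sqrt{\lambda/2}$ that other clients help open just before time $\lambda/2+\frac14$. At that moment $i$ has collected almost $\lambda$. Client $2$ has $d^2(2,i')=\lambda/2-\sqrt{\lambda/2}+\frac14$, only slightly below $\alpha_2$, so it connects to $i'$ and its offer to $i$ drops to $\lambda/2-\sqrt{\lambda/2}$. Hence $i$ opens only at $\alpha_1=\lambda/2+\sqrt{\lambda/2}+\frac14$. This gives $\alpha_1+\alpha_2\approx\lambda+\sqrt{\lambda/2}+\frac12$ against a star of connection cost $\frac12$, which violates $\sum_j[\alpha_j-\Gamma d^2(j,i)]^+\le\lambda$ unless $\Gamma\ge 1+\sqrt{2\lambda}$.

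Your target inequality $\sum_\ell\alpha_{j_\ell}\le\gamma(\lambda+\sum_\ell d^2(j_\ell,i))$ is also not an LMP bound, since it inflates the opening cost by $\gamma$. With it, your last step only gives $C(S)\le\gamma\sopt+(\gamma k-|S|)\lambda$, not $\gamma\sopt$. What you need is $\sum_\ell\alpha_{j_\ell}\le\lambda+\Gamma\sum_\ell d^2(j_\ell,i)$, with the algorithm run at $\lambda=\Gamma f$, and that is exactly what the example refutes.

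The missing idea is the paper's modified bidding:
\begin{itemize}
\item Active and \emph{indirectly connected} clients bid $[\alpha_j-\gamma d^2(j,i)]^+$.
\item \emph{Directly connected} clients bid $[\gamma d^2(j,S)-\gamma d^2(j,i)]^+$.
\item So a client contributes to a facility only if its budget is at least $\gamma$ times its squared distance to it; it still connects, without contributing, once $\alpha_j\ge d^2(j,S)$.
\end{itemize}
Now take $j'$ directly connected to $i'$. The path $j\to i\to j'\to i'$ and $(x+y+z)^2\le\gamma x^2+(2+\frac{2}{\gamma-1})(y^2+z^2)$ give $\alpha_j\le\gamma d^2(j',S)+(2+\frac{2}{\gamma-1})(d^2(j,i)+d^2(j',i))$. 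The $\gamma d^2(j',S)$ term matches $j'$'s bid exactly. Combining the resulting inequalities with nonnegative weights $\beta_j$ yields $\Gamma=\gamma+2+\frac{2}{\gamma-1}$, which equals $3+2\sqrt2$ at $\gamma=1+\sqrt2$. Your $(1+\sqrt2)^2$ heuristic is not this argument.

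\textbf{The passage to $k$ centers also lacks its mechanism.} Greedy outputs at nearby values of $\lambda$ need not differ in few facilities. An infinitesimal change of $\lambda$ can cascade through the whole run; this is precisely the bi-point obstacle. So walking along solutions for closely spaced $\lambda$ is unjustified, and ``$O(\log n/\eps)$ scales times $O(1/\eps)$ facilities'' does not correspond to anything the algorithm does.

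In the paper the overhead comes from adaptivity, in three pieces:
\begin{itemize}
\item The greedy is made log-adaptive (\logadaptalg): active budgets rise in $O(\log n/\eps^2)$ phases $\theta=(1+\eps^2)^p$. Bids and connection thresholds are scaled by $(1-\delta)$, and a facility opens only when \emph{openable}: clients within $\sqrt{\eps\theta}$ of it may rise slightly, and an explicit dual-feasibility constraint, checkable by an LP, is imposed. This lets the LMP analysis survive the discrete jumps.
\item \mergealg keeps two solutions that sandwich $k$, agree on their first $p-1$ phases, and differ in a single parameter by at most $\eta=2^{-n}$. It reconciles phase $p$ by inserting $O(1)$ \emph{free} facilities whose offsets $u(\tilde i)$ are binary-searched. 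After the last phase there are exactly $k$ regular and $O(\log n/\eps^2)$ free facilities.
\item The robust analysis of Theorem~\ref{thm:robust} shows the LMP inequality survives unpaid free facilities and regular ones paid only up to $n\gamma\eta$. The final bound uses $|\Sr|=k$, not $|S|\ge k$ with every facility paid.
\end{itemize}
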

The second algorithm is a radically different $(5+\eps)$-approximation algorithm that only works for instances which are \emph{stable} in the following sense. We say that an instance of $k$-Means is $\beta$-stable if $\sopt_{k-1}\geq (1+\beta)\sopt_k$, where $\sopt_{k-1}$ is the optimal cost for the same instance but with the target number of centers being $k-1$.
\begin{theorem}\label{thr:mainStable}
For any constants $\eps,\zeta>0$, there is a polynomial-time randomized algorithm for $k$-Means that with high probability returns a solution of cost at most $(5+\eps)\sopt$ assuming that the input instance is $(\zeta/\log n)$-stable.   
\end{theorem}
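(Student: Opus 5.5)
The plan is to follow the $k$-Median approach of \cite{CGLSS25stoc}, adapted to squared distances. The tool that makes stability useful is the following consequence of $\beta$-stability with $\beta=\zeta/\log n$. Suppose some optimal cluster $C_i^*$, with center $c_i^*$, has cost $\sum_{j\in C_i^*}d^2(j,c_i^*)$ that is tiny compared to $\beta\,\sopt$. Closing $c_i^*$ and rerouting $C_i^*$ to the nearest other optimal center would then give a $(k-1)$-solution of cost below $(1+\beta)\sopt_k$, contradicting stability. Hence every optimal cluster is ``expensive to merge'': for each $i$,
$$\sum_{j\in C_i^*}d^2\bigl(j,\opt\setminus\{c_i^*\}\bigr)\geq \beta\,\sopt .$$
Writing $D_i$ for the distance from $c_i^*$ to its closest other optimal center, this means $|C_i^*|\,D_i^2$ (up to constants via the relaxed triangle inequality $(a+b)^2\le (1+\delta)a^2+(1+1/\delta)b^2$) is at least about $\beta\,\sopt$ minus the cluster's own cost. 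Each cluster therefore has a well-separated core: points within distance $\epsilon' D_i$ of $c_i^*$.

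The algorithm is a guess-and-sample scheme in the style of $k$-means++/$D^2$-sampling, boosted by stability. We first compute a constant-factor solution (the $25$-approximation of \cite{LSGupta}) to estimate $\sopt$ up to a constant. We then build the solution cluster by cluster. At each step, with a current partial center set $S$, we sample a client proportionally to $d^2(j,S)$ and guess (polynomially many choices) a radius. This identifies a core of some not-yet-covered optimal cluster, and we open the best facility for that core.

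Because of stability, each uncovered cluster carries a $\Omega(\beta)$ fraction of the residual cost whenever the residual is large. So a sample lands in a core with probability $\Omega(\beta)=\Omega(1/\log n)$. Since $\zeta,\eps$ are constants, $O(\log n)$ repetitions per step suffice, giving polynomial time overall with high probability via standard amplification.

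For the cost analysis, each client $j\in C_i^*$ is charged to the opened center $s_i$ near $c_i^*$. Using $d(s_i,c_i^*)$ at most a small multiple of the average cost of the core, we get
$$d^2(j,s_i)\le (1+\delta)\,d^2(j,c_i^*)+(1+1/\delta)\,d^2(c_i^*,s_i).$$
Summing and optimizing $\delta$ gives a constant, and the target is to tune the thresholds so that it is $5+\eps$. Points outside cores but inside the cluster get the factor from going via the core (the factor $4$ from doubling distances), plus one for their own cost. Balancing these terms is what should yield $5$.

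The main obstacle is making the sampling argument robust when some cores have very few points. Such clusters must be handled by exhaustive guessing of their centers, which is affordable only because stability forces them to be few or to have a large separation $D_i$. Controlling the total number of guessed centers to $O(1/\zeta\cdot\mathrm{poly}(1/\eps))$ is what I would attempt first, via the potential argument on $d^2(j,S)$ decreasing geometrically.
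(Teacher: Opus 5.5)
There is a genuine gap: your scheme builds all $k$ centers by adaptive $D^2$-sampling plus guessing, and that cannot run in polynomial time with non-negligible success probability. By your own estimate a step succeeds only with probability $\Omega(1/\log n)$, and you cannot certify that a step succeeded. So you either need all $k$ steps to succeed simultaneously, which happens with probability $2^{-\Omega(k)}$, or you must branch over all $O(\log n)\times(\text{radii})$ choices at every step, giving $(\log n)^{\Omega(k)}$ branches. ``Standard amplification'' only works when the number of guessed objects is $O(\log n\cdot\mathrm{poly}(1/\eps))$, so that enumeration costs $n^{\eps^{-O(1)}}$.

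Your fallback of showing that only $O(\mathrm{poly}(1/\eps)/\zeta)$ clusters need guessing also fails. With $\beta=\zeta/\log n$, stability allows $\Theta(\log n/\zeta)$ optimal clusters, each of cost $\Omega(\beta\,\sopt)$. For these, your bound $|C_i^*|D_i^2\gtrsim\beta\,\sopt-\text{cost}(C_i^*)$ is vacuous, so stability gives them no separated core. Separately, the $\Omega(\beta)$ hitting probability assumes the residual cost is $O(\sopt)$, which is false early in a from-scratch construction.

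The missing idea is to not build from scratch. The paper starts from a locally optimal solution $S$, which is a $25$-approximation. Stability then shows that all but $O(\log n/\eps^3)$ optimal clusters are pure with respect to $S$ (Lemma \ref{lem:numnonpure}). Local optimality makes $S$ pay at most the optimal cost plus $O(\sqrt{\eps})\sopt$ on pure clusters (Lemma \ref{lem:purecost}). Non-pure clusters that are covered by $S$, or carry little $D^2$-mass in $S$, are already served within factor about $4$ (Lemma \ref{lem:cheapcost}). Only the remaining $O(\log n/\eps^3)$ expensive clusters must be found. They are hit by a single non-adaptive sample of $O(\log n\,\eps^{-5}\log(1/\eps))$ points drawn from $S$, and enumerating subsets of this sample together with radii costs $n^{\eps^{-O(1)}}$.

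Even then, one must decide which $|\ho|$ centers of $S$ to close and which facility to open in each guessed ball. The paper guesses the expensive part $\calQ$ and the ``cheap'' part $\bcalU$ of the closed set. It then maximizes a monotone submodular function over the partition matroid that allows one center per ball. The $\tfrac12$-approximation yields cost about $\tfrac12\bigl(\clcost(M'_O,\mu'_O)+\clcost(M'_D,\mu'_D)\bigr)$. Here clients of expensive clusters pay at most $\opt_p$ in $M_O$ and about $9\opt_p$ via dummy centers in $M_D$, and all other clients pay at most about $4\opt_p$ in aggregate in both. The sum is therefore at most $(10+O(\sqrt{\eps}))\sopt$ (Lemma \ref{lemma:costboundofMOandMDwithassignments}). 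That averaging is where $5$ comes from. Your ``$4$ from doubling plus one'' accounting is not attached to any concrete step and does not replace it.
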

It is relatively easy to derive Theorem \ref{thr:main} from the above two theorems. 
\begin{proof}[Proof of Theorem \ref{thr:main}]
We compute a set of feasible solutions, and return the cheapest one. Let $\Delta$ be the maximum extra number of centers computed by the algorithm from Theorem \ref{thr:mainBicriteria} (this number is independent from $k$). Let 
$k'=\max\{1,k-\Delta\}$. One solution is obtained by computing the optimum solution with one center if $k'=1$, and otherwise by running
the algorithm from Theorem \ref{thr:mainBicriteria} with target number of centers being $k'$ (notice that it returns a solution with at most $k$ centers, hence feasible). Furthermore, we run the algorithm from Theorem \ref{thr:mainStable} for every integer $k''\in (k',k]$ (thus obtaining solutions with $k''\leq k$ centers, hence feasible).

If 
$\sopt_{k'}$ is not much larger than $\sopt_k$ (more precisely
$\sopt_{k'}\leq (1+\eps)\sopt_k$), 
the first solution is  $(1+\eps)(3+2\sqrt{2}+\eps)$ approximate. Otherwise, notice that there exists an integer $k''\in (k',k]$ such that $\sopt_{k''}\leq (1+\eps)\sopt_{k}$ and $\sopt_{k''-1}\geq (1+\beta)\sopt_{k''}$ for $\beta\in \Omega(\eps/\Delta)=\Omega(\eps^3/\log n)$. For that value of $k''$ the corresponding instance is $\beta$-stable, hence the respective solution has cost at most $(5+\eps)\sopt_{k''}\leq (1+\eps)(5+\eps)\sopt_k$. The claim follows by rescaling $\eps$ by a constant factor. 
\end{proof}

It remains to describe how the above two theorems are obtained. The algorithm from Theorem \ref{thr:mainStable} is rather complex and its analysis is highly non-trivial (see Section \ref{sec:centerremoval}). However, it is a relatively easy adaptation of the $(2+\eps)$-approximation algorithm in \cite{CGLSS25stoc} for a similar notion of stable $k$-Median instances. The main difference is that we have to carefully use an approximate form of triangle inequality (see Lemmas \ref{lem:apxTriangleInequality2} and \ref{lem:apxTriangleInequality3}). This also explains why we get a higher approximation factor $5+\eps$.

The main contribution of this paper is our proof of Theorem \ref{thr:mainBicriteria}. In more detail, we consider the related facility location problem (with uniform opening costs). Recall that here, instead of a bound $k$ on the number of centers, we are given a uniform facility cost $f$. We are now allowed to open an arbitrary number of centers/facilities $S$, and the objective function is the total cost of the open facilities plus the squared distance from each client to the closest (open) facility in $S$, i.e., 
$$
\cost_{FL}(S):=f|S|+\sum_{j\in D}d^2(j,S)\,.
$$
We say that an algorithm for the above problem is LMP $\Gamma$-approximate if it produces a feasible solution $S$ such that
\begin{equation}
\Gamma\cdot f|S|+\sum_{j\in D}d^2(j,S)\leq \Gamma\cdot \sopt_{LP}(f).
\label{eq:LMP}
\end{equation}
where $\sopt_{LP}(f)\leq \sopt$ is the optimal cost of a standard LP relaxation for the problem (see Section \ref{sec:preliminaries}). 
In other words, the solution cost at most $\Gamma \cdot \sopt$ even if we increase the facility cost of $S$ by a factor $\Gamma$.

Recall that the current-best $(9+\eps)$-approximation for (Metric) $k$-Means by Ahmadian et al. \cite{AhmadianNSW20} builds upon a primal-dual LMP $9$-approximation for facility location (with squared metric connection costs), which is inspired by a classical primal-dual $3$-approximation by Jain and Vazirani \cite{JaV01} for the case of metric connection costs. Our main contribution is a \emph{greedy} LMP $(3+2\sqrt{2})$-approximation for facility location with squared metric connection costs which is a non-trivial (but relatively simple) variant of the classical greedy LMP $2$-approximation (JMMSV) for facility location with metric connection costs by Jain, Mahdian, Markakis, Saberi, and Vazirani \cite{JainMMSV03}. It is instructive to recall how JMMSV works in order to see the differences.

\paragraph{The JMMSV Greedy Algorithm.}
The algorithm has a variable $\alpha_j$ per client $j$ (initialized to $0$), a set $S$ of open facilities (initialized to $\emptyset$), and a set $A$ of \emph{active} clients (initialized to $D$ ). At each point of time, the \emph{bid} $bid(j,i)$ of client $j$ towards facility $i$ is $[\alpha_j-d(j,i)]^+$ if $j$ is active and $[d(j,S)-d(j,i)]^+$ otherwise\footnote{We let $[a]^+:=\max\{0,a\}$.}. The variables $\alpha_j$ of active clients are increased uniformly until one of the following events happens:
\begin{enumerate}[label=(\alph*)]
    \item For some client $j$ and $i\in S$, $\alpha_j\geq d(j,i)$. In that case $j$ is removed from $A$ and we say that $j$ is connected to $i$;
    \item For some (not open) facility $i\notin S$, one has $\sum_{j\in D}bid(j,i)=f$.\footnote{To be precise, this version of JMMSV establishes an equivalent notion of approximation, but not exactly the LMP 2-approximation defined in \Cref{eq:LMP}; the latter is achieved when we run the algorithm with scaled opening cost $\hat f = 2f$, which our actual algorithms do with different scaling factors. In this overview, let us ignore this subtlety and conflate these two notions of approximations while just using $f$.}
    In that case we open $i$, i.e., we add $i$ to $S$. Furthermore, each $j\in A$ with $\alpha_j\geq d(j,i)$ is removed from $A$ (and connected to $j$). Also, each inactive client $j\in D-A$ with $bid(j,i)>0$ is reconnected to $i$.
\end{enumerate}
The intuition behind the bids is as follows. On the one hand, the active clients $j$ (which are not yet connected to an open facility) are willing to pay the difference $\alpha_j-d(j,i)$ (if positive) to open $i$, while reserving $d(j,i)$ for their own connection cost to $i$. On the other hand, the inactive clients $j\in D-A$ which are already connected to the closest facility in $S$, are willing to offer $d(j,S)-d(j,i)$ (if positive) towards the opening of facility $i$: if then $i$ is actually opened, $j$ reconnects to $i$ which is closer (while altogether still spending $\alpha_j$ in total). This also motivates the term \emph{greedy}. 
Notice that at any point of time active clients $j$ satisfy $\alpha_j<{d}(j,S)$.

\paragraph{Counterexample for the Na\"ive Extension.}
\begin{figure}[H]
    \centering
    \begin{tikzpicture}[x=0.9cm, y=0.9cm, >=latex, every text node part/.style={align=center}]
        \node[fill=red, circle, minimum size=5pt, inner sep=0pt] (c1) at (0pt,0pt) {};
        \node[] () at (0pt, 10pt) {\small $1$};
        \node[fill=red, circle, minimum size=5pt, inner sep=0pt] (c2) at (50pt,0pt) {};
        \node[] () at (50pt, 10pt) {\small $2$};
        \node[fill=blue, minimum size=7pt, inner sep=0pt] (f1) at (25pt,0pt) {};
        \node[] () at (25pt, 10pt) {\small $i$};
        \node[fill=blue, minimum size=7pt, inner sep=0pt] (f2) at (250pt,0pt) {};
        \node[] () at (251.5pt, 10pt) {\small $i'$};
        \path[->, thick] (-30pt,-15pt) edge (290pt, -15pt);
        \path[] (0pt, -15pt) edge (0pt, -12pt);
        \node[] () at (-3pt, -23pt) {\small $-0.5$};
        \path[] (25pt, -15pt) edge (25pt, -12pt);
        \node[] () at (25pt, -23pt) {\small $0$};
        \path[] (50pt, -15pt) edge (50pt, -12pt);
        \node[] () at (50pt, -23pt) {\small $0.5$};
        \path[] (250pt, -15pt) edge (250pt, -12pt);
        \node[] () at (250pt, -23pt) {\small $\sqrt{f/2}$};
    \end{tikzpicture}
    \caption{\small Counterexample for the na\"ive variant of JMMSV algorithm. The example is on a line. 
    Red circles represent the key clients, while blue squares represent the key facilities. The facility opening cost is $f$.}
    \label{fig:counter-example}
\end{figure}

A na\"ive idea is to adapt the above algorithm to the case of squared distances by simply replacing $d(\cdot,\cdot)$ with $d^2(\cdot,\cdot)$. It is instructive to see why this attempt fails miserably (which also explains why prior work \cite{AhmadianNSW20,Cohen-AddadEMN22} use variants of the Jain-Vazirani algorithm \cite{JaV01} instead, despite the worse approximation factor). Consider the example in Figure \ref{fig:counter-example}, where we have two facilities $i$ and $i'$, and two clients $1$ and $2$, all on a line, with distances specified in the figure. The facility opening cost is $f$. Consider the following (hypothetical but valid) execution of the algorithm.

\begin{itemize}
\item Time $f/2+0.25-\eps$: facility $i'$ is open by client $2$ and other clients not in the figure. 
\begin{itemize}
\item At that time, client $1$ and $2$'s bid to $i$ was $f/2 - \eps$ each (so $i$ was very close to be opened), but after the opening of $i'$, client $2$ lowers its bid to $i$ to $(\sqrt{f/2}-0.5)^2 - 0.5^2= f/2-\sqrt{f/2}$. 
\end{itemize}
\item Time $f/2+\sqrt{f/2} + 0.25$: facility $i$ is open as the bid from client $1$ becomes $f/2+\sqrt{f/2} + 0.25 - 0.25 = f/2 + \sqrt{f/2}$. 
\item To prove an LMP $\Gamma$-approximation, the dual constraint corresponding to a solution where $1$ and $2$ are connected to $i$ (paying $f$ for opening and $2 \cdot 0.5^2 = 0.5$ for connecting) requires that $\alpha_1+\alpha_2 \leq f + 0.5\Gamma$. 
Our $\alpha_1+\alpha_2$ is $f + \sqrt{f/2} + 0.5-\eps$, which is only an LMP $\Omega(\sqrt{f})$-approximation. 
\end{itemize}
In this example, client $2$ contributes to the opening of some center far away from it, with a contribution much less than its $\alpha$ value. Or equivalently, the connection cost $d^2(2,i')$ is strictly but only slightly smaller than $\alpha_2$. This is the key (and essentially the only) reason why this na\"ive variant of JMMSV does not work; one can see that any other assumption between $\alpha_2$ and $d^2(2, i')$ in the above example (realized by moving $i'$ and clients not in the figure)
would lead to a good LMP approximation as follows.

\begin{itemize}
\item If $d^2(2,i')=\alpha_2$, 
then $2$'s bid to $i$ is $\alpha_2 - 0.25$ even after the opening of $i'$, so the fact that $i$ was open at time $\alpha_1-\eps$ (for infinitesimally small $\eps > 0$) implies $\alpha_1+\alpha_2 \leq f + 0.5$.
\item If $d^2(2,i')\leq (1-\delta)\alpha_2$ for some constant $\delta > 0$, we could simply use an approximate version of the triangle inequality to prove a good LMP approximation. 
For example, if $\delta=1/2$, we would have $\alpha_1 \leq d^2(1,i') \leq 2d^2(2,i') + 4(d^2(1,i)+d^2(2,i)) \leq \alpha_2 + 2$. Since $\alpha_2 \leq f/2+0.25$ in this example, this would give us $\alpha_1+\alpha_2 = f + O(1)$. 
\end{itemize}

\paragraph{Our Algorithm.}
    This inspired us to construct a different variant of the JMMSV algorithm in the following spirit: if a client can contribute to the opening of a facility, its $\alpha$ value must be much higher than its connection cost to the facility (e.g., at least twice as high as it). In more detail, our LMP algorithm uses the following different bidding strategy. Let $\gamma>1$ be a parameter to be fixed later. The bid of active clients is $[\alpha_j-\gamma d^2(j,i)]^+$. In particular notice that $j$ cannot contribute to the opening of $i$ if $\alpha_j<\gamma d^2(j,i)$ (while one might naturally expect it for $\alpha_j>d^2(j,i)$). However, whenever $\alpha_j\geq d^2(j,i)$ for some already open facility $i\in S$ and active clients $j$, we still connect $j$ to $i$ and make $j$ inactive. In the above case, if $\gamma d^2(j,i)>\alpha_j$,  
we say that $j$ is \emph{indirectly connected} to $i$ since $j$ did not contribute to the opening of $i$ with a positive bid. The inactive clients $j$ that are connected to an open facility $i$ whose opening they contributed to, are called \emph{directly connected}. The bids for directly and indirectly connected clients in the later steps are different: they are $[\alpha_j-\gamma d^2(j,i)]^+$ for the indirectly connected ones and $[\gamma d^2(j,S)-\gamma d^2(j,i)]^+$ for the directly connected ones. When an indirectly connected client contributes to the opening of a facility, it becomes directly connected. For the above process, we are able to show that the final solution $S$ satisfies:
$$
\sum_{j\in D}\alpha_j\geq f|S|+\sum_{j\in D}d^2(j,S),
$$
i.e., the cost of the solution is upper bounded by the sum of the variables $\alpha_j$. Furthermore, we can establish the following inequality showing that $\alpha_j$'s, after scaling, form a dual-feasible solution. 
$$
\sum_{j\in D} \left[\alpha_j
-
\left(\gamma+2+\frac{2}{\gamma-1}\right)d^2(j,i)\right]^+
\leq f,\quad \forall i\in F.
$$
The above equations together imply that the overall algorithm is an LMP $(\gamma+2+\frac{2}{\gamma-1})$-approximation for the considered facility location problem.
Fixing $\gamma=1+\sqrt{2}$ gives the claim. 

\paragraph{Opening $k$ Centers.}
With the above LMP approximation for facility location at hand, we can now derive the result in Theorem \ref{thr:mainBicriteria} following an approach again close to \cite{CGLSS25stoc}. A standard way to derive a $\rho_{kMeans}$-approximation algorithm for $k$-Means from an LMP $\rho_{FL}$-approximation for facility location with squared metric connection costs is to perform a binary search over the uniform facility cost $f$ so as to obtain a solution with $k_1<k$ open facilities for some facility cost $f_1$, and a solution with $k_2>k$ open facilities for some facility cost $f_2<f_1$, where the difference between $f_1$ and $f_2$ is very tiny. This is called a \emph{bi-point} solution. Then the two solutions are combined together to obtain a solution opening $k$ facilities. This combination is typically costly, hence leading to a factor $\rho_{kMeans}$ substantially larger than $\rho_{FL}$\footnote{An exception is the approach in \cite{AhmadianNSW20} which loses only a factor $(1+\eps)$. However their approach is tailored to the specific LMP algorithms considered in that work and hard to adapt to others, as the authors openly admit.}. We instead carefully modify the mentioned LMP $(3+2\sqrt{2})$-approximation for facility location (while increasing by $\eps$ the approximation factor) so that the variables $\alpha_j$ are increased in a logarithmic number of rounds only (with a small multiplicative increase at each round). Exploiting this fact and the walking-between-solutions framework in \cite{CGLSS25stoc}, we are able to construct a bi-point solution where $k_2\leq k+O(\log n/\eps^2)$. This leads to Theorem \ref{thr:mainBicriteria}. Again, the details are rather technical but they are similar in spirit to \cite{CGLSS25stoc}.

\section{Preliminaries}
\label{sec:preliminaries}

\paragraph{LP Relaxations and Basic Assumptions.}
Given $(D, F, d)$, the standard LP relaxations for $k$-Means and facility location with squared distances and uniform opening cost $f$ are as follows.

\begin{multicols}{2}
\begin{align*}
\min \quad &  \sum_{i\in F,j\in D}d^2(i,j)x_{i,j} & (LP_{km})\\
\mbox{s.t.} \quad &  \sum_{i\in F}x_{i,j}\geq 1 & \forall j\in D\\
& y_i - x_{i,j}\geq 0 & \forall j\in D,i\in F\\
&  \sum_{i\in F} y_i \leq k  \\
& x, y \geq 0.
\end{align*}
\columnbreak
\begin{align*}
\vspace{0.5em} \\
\min \quad & \sum_{i\in F,j\in D}d^2(i,j)x_{i,j} + f\cdot \sum_{i\in F}y_i & (LP_{FL}(f))\\
\mbox{s.t.} \quad & \sum_{i\in F}x_{i,j}\geq 1  & \forall j\in D \\
&  y_i - x_{i,j}\geq 0 & \forall j\in D,i\in F \\
& x, y \geq 0.
\end{align*}
\end{multicols}
\vspace{-1.5em}
Let $\sopt_{LP}(k)$ and $\sopt_{LP}(f)$ be the optimal values for $LP_{km}$ and $LP_{FL}(f)$ respectively. 
{Recall that} $[a]^+ := \max(a, 0)$. 
The dual for $LP_{FL}(f)$ is:
\begin{align*}
\max \quad & \sum_{j\in D}\alpha_j & (DP_{FL}(f))\\
\mbox{s.t.} \quad & \sum_{j\in D} [\alpha_j - d^2(i,j)]^+ \leq f & \forall i\in F \\
& \alpha \geq 0.
\end{align*}

Thanks to standard reductions, we can assume that  distances are integers in $[1,n^3/\eps]$ while loosing a factor $1+O(\eps)$ in the approximation.
\begin{lemma}
\label{lem:aspectratio}
For any constants $\eps > 0$ and $\alpha>1$, given a polynomial-time $\alpha$-approximation algorithm for $k$-Means on instances with distances in $\{1,\ldots,n^3/\eps\}$, there exists a polynomial-time $\alpha(1+O(\eps))$-approximation algorithm for $k$-Means on general instances. 
\end{lemma}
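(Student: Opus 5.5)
We reduce a general instance to one with integer distances in $\{1,\ldots,n^3/\eps\}$ in three steps: estimate $\sopt$ up to a polynomial factor, then clamp and round distances, then show the costs change only by a $1+O(\eps)$ factor.

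\textbf{Step 1: a coarse estimate of $\sopt$.} Run the $\apxLSkmeans$-approximation of Gupta and Tangwongsan \cite{LSGupta} to get a value $U$ with $\sopt\le U\le \apxLSkmeans\,\sopt$. If $U=0$, return that (optimal) solution. Otherwise, all relevant distances live in a window determined by $U$. A pair $(j,i)$ with $d^2(j,i)>U$ is never used by an optimal solution, and neither by any solution of cost at most $\alpha U$ if we cap at $\alpha U$ instead. A pair with $d^2(j,i)\le \eps U/n^{2}$ contributes negligibly, since the $n$ clients together contribute at most $\eps U/n\le \apxLSkmeans\,\eps\,\sopt/n$.

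\textbf{Step 2: modify the distances.} Set $L:=\sqrt{\eps U}/n$ as a length unit. Define $d'(u,v):=\min\{\lceil d(u,v)/L\rceil+1,\ n^3/\eps\}$ for $u\ne v$ (and $0$ on the diagonal). Capping a metric at a constant, adding a constant offset, and taking ceilings all preserve the triangle inequality, so $d'$ is still a metric. Its values are integers in $\{1,\ldots,n^3/\eps\}$.

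\textbf{Step 3: compare costs pair by pair.} For any pair with $d(u,v)\le (n^3/\eps-2)L$, we have $L\,d'(u,v)\in[d(u,v),\,d(u,v)+2L]$. Hence
\[
L^2 d'^2\le (1+\eps)\,d^2+O(L^2/\eps) \qquad\text{and}\qquad d^2\le L^2 d'^2 .
\]
Summing over the $n$ clients of a solution, the additive error is at most $nL^2/\eps=U/n\le O(\sopt/n)$. Now consider the capped pairs. Capped pairs have $L\,d'=n^2\sqrt{\eps U}/\eps\cdot(\ldots)$, which is much larger than $\sqrt{U}$. So $L^2 d'^2\gg \apxLSkmeans\,\sopt\,\alpha$, and no solution of $L^2$-scaled cost at most $\alpha(1+O(\eps))\sopt$ uses a capped pair. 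For such solutions the true cost is therefore at most the scaled cost. Conversely, the original optimum uses only pairs with $d\le\sqrt{U}\ll (n^3/\eps)L$, so its cost under $d'$ is at most $(1+O(\eps))\sopt/L^2$.

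\textbf{Conclusion.} Run the given algorithm on $d'$ and return its solution $S$. Then
\[
\cost_d(S)\le L^2\cost_{d'}(S)\le \alpha L^2\sopt'\le \alpha(1+O(\eps))\sopt .
\]
Here $\sopt'$ is the optimal cost under $d'$. The first inequality uses the fact that every assignment in $S$ costs at most $\alpha(1+O(\eps))\sopt$ in scaled units, which rules out capped pairs; the cap also prevents trouble in the other direction.

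\textbf{Main obstacle.} The delicate step is choosing $L$ and the cap so that three things hold together: the additive rounding error $nL^2/\eps$ is $O(\eps\,\sopt)$, the cap exceeds $\sqrt{\alpha\apxLSkmeans\,U}/L$, and the ratio between them is at most $n^3/\eps$. With $L=\sqrt{\eps U}/n$ the cap needs to be about $n/\eps\cdot\sqrt{\alpha\apxLSkmeans}$, which is well within $n^3/\eps$ for constant $\alpha$ and large $n$. The bound on the additive error is checked in Step 3.
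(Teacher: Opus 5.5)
Your argument is correct, but it estimates $\sopt$ differently from the paper.

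\textbf{The paper's route.} It uses no auxiliary algorithm. It guesses $M=\max_{j\in D}d(j,\opt)$ among the polynomially many client--facility distances, which gives $M^2\le\sopt\le nM^2$. It then rounds every distance at most $M$ up to a multiple of $M\eps/n$, sets all other distances to $n^3/\eps$, takes the metric closure, and keeps the best output over all guesses.

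\textbf{Your route.} You get a constant-factor bracket $U$ from local search. You then apply scaling by $L$, a ceiling, a $+1$ offset and a cap, and the result is directly a metric. This saves both the metric closure and the enumeration. You also actually carry out the verification the paper calls easy. An $\alpha$-approximate solution under $d'$ serves no client through a capped pair, so $d^2(j,S)\le L^2d'^2(j,i_j)$ for the $d'$-closest center $i_j$. Meanwhile $\opt$ only uses pairs with $d\le\sqrt U$, which is far below the cap.

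\textbf{What the paper's route buys.} It is self-contained, and that matters here. The paper justifies the polynomial running time of local search (\Cref{sec:localSearchAnalysis}) \emph{by} this very lemma. So citing plain local search to prove the lemma is circular. You should instead cite the standard variant that only accepts swaps improving the cost by a factor $1-\eps/\mathrm{poly}(n)$. That variant runs in time polynomial in the input size and gives $U\le(\apxLSkmeans+\eps)\sopt$. Alternatively, any polynomial-time $\mathrm{poly}(n)$-approximation, or the paper's guess of $M$, would also work, since the range $n^3/\eps$ leaves ample slack.

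\textbf{Harmless slips.}
\begin{itemize}
\item The capped scaled distance is $L\,n^3/\eps=n^2\sqrt{U/\eps}$; the expression in Step 3 is garbled.
\item The cap needed in units of $L$ is about $n\sqrt{\apxLSkmeans\alpha/\eps}$, not $\frac{n}{\eps}\sqrt{\apxLSkmeans\alpha}$. Both are far below $n^3/\eps$.
\item Your bounds need $n$ to be large with respect to $\alpha$ and $1/\eps$. Small instances should be handled by brute force, as the paper does.
\end{itemize}
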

\begin{proof}
Consider any input instance $(\clients{,} \facilities,d)$ of $k$-Means. Assume that $n:=|\clients|$ is large enough w.r.t. $\alpha$, otherwise the problem can be solved by brute force in polynomial time.  
We guess $M:=\max_{j\in \clients}d(j,\opt)$, where $\opt$ is some optimal $k$-Means solution, by trying all the polynomially-many possibilities. If $M=0$, the problem can be solved optimally in polynomial time, hence assume $M>0$. Next, for each $(i,j)\in \facilities\times \clients$, define  $d'(j,i)=d'(i,j)=\max\{1,\lceil\frac{d(i,j)}{M}\frac{n}{\eps}\rceil\}$ if $d(i,j)\leq M$. Set all the remaining $d'(a,b)$, $a\neq b$, to $\frac{n^3}{\eps}$. Finally, replace the $d'(i,j)$'s with the corresponding metric closure. We run the given $\alpha$-approximation algorithm on the $k$-Means instance $(D{,} F,d')$, hence obtaining a solution $S$. It is easy to check that $S$ is a good enough approximation for the input instance.  
\end{proof}

\paragraph{Triangle Inequalities for Squared Distances.}
Given three distances $d,d_1,d_2$ with $d\leq d_1+d_2$, the following lemma provides a useful upper bound on $d^2$ as a function of $d_1$ and $d_2$. The next lemma provides a similar upper bound in an analogous setting where $d\leq d_1+d_2+d_3$.
\begin{lemma}\label{lem:apxTriangleInequality2}
 Let $\gamma>1$ be any constant. For any $x,y$ we have $\gamma x^2+\frac{\gamma}{\gamma-1}y^2\geq (x+y)^2$.    
\end{lemma}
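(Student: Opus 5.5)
The plan is to move everything to one side and recognize the difference as a perfect square. Since $\gamma>1$, both coefficients $\gamma$ and $\frac{\gamma}{\gamma-1}$ are positive, so the left-hand side is a well-defined positive-definite quadratic form in $(x,y)$. It therefore suffices to show that
\[
\gamma x^2+\frac{\gamma}{\gamma-1}y^2-(x+y)^2 \;=\; (\gamma-1)x^2-2xy+\frac{1}{\gamma-1}y^2
\]
is nonnegative for all real $x,y$. The one computation to carry out is $\frac{\gamma}{\gamma-1}-1=\frac{1}{\gamma-1}$.

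Next I would complete the square. The right-hand side equals
\[
\Bigl(\sqrt{\gamma-1}\,x-\tfrac{1}{\sqrt{\gamma-1}}\,y\Bigr)^2 .
\]
Expanding this square gives $(\gamma-1)x^2-2xy+\frac{1}{\gamma-1}y^2$, which is exactly the expression above, and a square is always nonnegative.

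Alternatively, one can see the claim as Cauchy--Schwarz (or a weighted AM--GM / Young inequality). Write
\[
x+y=\sqrt{\gamma}\,x\cdot\frac{1}{\sqrt{\gamma}}+\sqrt{\tfrac{\gamma}{\gamma-1}}\,y\cdot\sqrt{\tfrac{\gamma-1}{\gamma}} .
\]
Applying Cauchy--Schwarz bounds $(x+y)^2$ by $\bigl(\gamma x^2+\frac{\gamma}{\gamma-1}y^2\bigr)$ times $\bigl(\frac1\gamma+\frac{\gamma-1}{\gamma}\bigr)$, and the second factor equals $1$.

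There is no real obstacle. The only point needing care is that $\gamma>1$, so that $\sqrt{\gamma-1}$ is real and nonzero. The inequality is tight exactly when $y=(\gamma-1)x$, which explains the choice of the coefficient $\frac{\gamma}{\gamma-1}$.
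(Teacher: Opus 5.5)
Your proof is correct and takes essentially the same route as the paper: both reduce the claim to nonnegativity of the binary quadratic form $(\gamma-1)x^2-2xy+\bigl(\frac{\gamma}{\gamma-1}-1\bigr)y^2$. The paper verifies this with the determinant criterion $(\gamma-1)(b-1)\geq 1$ for a general coefficient $b$ in place of $\frac{\gamma}{\gamma-1}$, which also shows $\frac{\gamma}{\gamma-1}$ is the smallest admissible choice; you instead write out the explicit square, which is the zero-determinant case $b=\frac{\gamma}{\gamma-1}$ and is consistent with your tightness remark.
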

\begin{proof}
Let us show for which values of $b$ the inequality $\gamma x^2+by^2\geq (x+y)^2$ holds. This is equivalent to requiring that the quadratic form $(\gamma-1) x^2+(b-1)y^2-2xy$ is non negative. This happens iff $\gamma-1\geq 0$ (which is satisfied by assumption) and $(\gamma-1)(b-1)-1\geq 0$, which is satisfied for $b\geq \frac{\gamma}{\gamma-1}$.
\end{proof}
\begin{lemma}\label{lem:apxTriangleInequality3}
 Let $\gamma>1$ be any constant. For any $x$, $y$, $z$, we have $\gamma x^2+\left(2+\frac{2}{\gamma-1} \right)(y^2+z^2)\geq (x+y+z)^2$.    
\end{lemma}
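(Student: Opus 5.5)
The plan is to reduce to Lemma~\ref{lem:apxTriangleInequality2} twice. First apply it to the pair $x$ and $(y+z)$:
\[
(x+y+z)^2 \leq \gamma x^2+\frac{\gamma}{\gamma-1}(y+z)^2 .
\]
Then bound $(y+z)^2\leq 2(y^2+z^2)$. This is the case $\gamma=2$ of Lemma~\ref{lem:apxTriangleInequality2}, or equivalently $(y-z)^2\geq 0$. Combining the two gives
\[
(x+y+z)^2\leq \gamma x^2+\frac{2\gamma}{\gamma-1}(y^2+z^2).
\]
It remains to check that $\frac{2\gamma}{\gamma-1}=2+\frac{2}{\gamma-1}$. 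This holds since $\frac{2\gamma}{\gamma-1}=\frac{2(\gamma-1)+2}{\gamma-1}$, so the claimed coefficient is exactly the one obtained, with no slack.

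No step is a real obstacle. The only choice is how to group the three terms: $x$ must be kept separate so that it retains coefficient $\gamma$. The symmetric split $y^2+z^2$ is then optimal for the pair $(y,z)$.

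As a sanity check, consider the alternative derivation via the quadratic form. One needs $(\gamma-1)x^2+(1+c)(y^2+z^2)-2xy-2xz-2yz\geq 0$ with $c=\frac{2}{\gamma-1}$. Minimizing over $x$ gives $x=\frac{y+z}{\gamma-1}$. The remaining form is $(1+c)(y^2+z^2)-2yz-\frac{(y+z)^2}{\gamma-1}$, which equals $(y-z)^2+\frac{(y-z)^2}{\gamma-1}\geq 0$. I would include only the short two-step argument.
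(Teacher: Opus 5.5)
Your proof is correct, but it is organized differently from the paper's. The paper does not reuse Lemma~\ref{lem:apxTriangleInequality2}. It splits the left-hand side as $x^2+y^2+z^2+\bigl(\frac{\gamma-1}{2}x^2+\frac{2}{\gamma-1}y^2\bigr)+\bigl(\frac{\gamma-1}{2}x^2+\frac{2}{\gamma-1}z^2\bigr)+(y^2+z^2)$. It then bounds the three brackets below by $2xy$, $2xz$ and $2yz$ using $a^2+b^2\geq 2ab$. You instead treat $y+z$ as a single variable, apply the two-variable lemma once, and pay a factor $2$ for $(y+z)^2\le 2(y^2+z^2)$.

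In sum-of-squares terms, your argument writes the difference between the two sides as $\frac{1}{\gamma-1}\bigl((\gamma-1)x-y-z\bigr)^2+\frac{\gamma}{\gamma-1}(y-z)^2$. The paper's writes it as $\frac{1}{2(\gamma-1)}\bigl[\bigl((\gamma-1)x-2y\bigr)^2+\bigl((\gamma-1)x-2z\bigr)^2\bigr]+(y-z)^2$. These expressions are equal, and both vanish exactly when $y=z=\frac{\gamma-1}{2}x$, so both chains are tight.

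Your version is shorter and explains the constant as the product $2\cdot\frac{\gamma}{\gamma-1}$. It also extends directly to more terms: $m$ terms $y_1,\dots,y_m$ get coefficient $\frac{m\gamma}{\gamma-1}$ by Cauchy--Schwarz. The paper's version is self-contained and shows which square absorbs each cross term. Your quadratic-form sanity check is also correct and confirms tightness, but as you say it is not needed.
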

\begin{proof}
The proof of this lemma consists of a rearrangement of the LHS and applying the inequality of arithmetic and geometric means: 
\begin{alignat*}{5}
\gamma\cdot x^2+\left(2+\frac{2}{\gamma-1}\right)(y^2+z^2) & =x^2+y^2+z^2&&+\left(\frac{\gamma-1}{2}x^2+\frac{2}{\gamma-1}y^2\right) \\
&&&+\left(\frac{\gamma-1}{2}x^2+\frac{2}{\gamma-1}z^2\right)+(y^2+z^2)\\
& \geq x^2+y^2+z^2&&+2xy+2xz+2yz 
\\
&=(x+y+z)^2&&. 
\end{alignat*}
{The inequality follows from the arithmetic-geometric mean inequality $a^2+b^2\geq 2ab$ applied three times.}
\end{proof}

\section{Greedy Algorithm}
\label{subsec:modifiedJMS}

We formally present the new greedy algorithm for facility location with squared distances outlined in \Cref{sec:overview}. 
Let $\gamma=1+\sqrt{2}$ and $\Gamma = \gamma+2+\frac{2}{\gamma-1}\approx 5.828$ be the desired approximation ratio.
Given an instance $(D, F, d, f)$ for facility location, let $\hat f=\Gamma f$. The following is our new greedy algorithm\footnote{{As standard for primal-dual-based algorithms, we describe the algorithm in a more intuitive continuous version. It is easy to convert it into an equivalent discrete algorithm.}} achieving an LMP $\Gamma$-approximation {(see Algorithm \ref{fig:algGreedy})}. 
\begin{figure}[h!]
\begin{center}
\begin{minipage}{1.0\textwidth}
\begin{mdframed}[hidealllines=true, backgroundcolor=gray!15]

\begin{algorithm}[\JMSalg] \ \\[0.2cm]

\textbf{{Initialization:}} Set $S\leftarrow \emptyset$ and $\alpha_j\leftarrow 0$ for every $j\in D$. Let $A \leftarrow D, IC \leftarrow \emptyset, DC \leftarrow \emptyset$\\

{While $A \neq \emptyset$:}\\[-0.5cm]
\begin{quote}
Increase the $\alpha$-value of every $j \in A$ uniformly, until the following holds:
\begin{quote}
    (1) There exists an unopened facility $i\not\in S$ such that
    $$\sum_{j\in A \cup IC}[\alpha_j - \gamma d^2(i, j)]^+ + \sum_{j\in DC}[\gamma d^2(j,S)-\gamma d^2(i,j)]^+ \geq \hat f.$$
    If such event occurs, open $i$, i.e. add it to $S$.
\end{quote}
\begin{quote}
    (2) Some $j \in A$ has $\alpha_j \geq d^2(j, S)$.
\end{quote}
\end{quote}
\begin{quote}
Update $A, IC, DC$ so that
\begin{itemize}
    \item $A=\{j\in D: \alpha_j < d^2(j, S)\}$.
    \item $IC=\{j\in D: d^2(j, S)\leq \alpha_j < \gamma d^2(j, S)\}$.
    \item $DC=\{j\in D: \gamma d^2(j, S) \leq \alpha_j\}$.
\end{itemize}
\end{quote}\label{fig:algGreedy}
\end{algorithm}
\end{mdframed}
\end{minipage}
\end{center}
\end{figure}

We say that the clients in $A$ are \emph{active}, that the clients in $IC$ are \emph{indirectly connected}, and that the clients in $DC$ are \emph{directly connected}. 
The following standard lemma proves that these $\alpha$ values are enough to {\em pay} the connection costs and the (augmented) opening costs. 

\begin{lemma}[Approximation Guarantee]
At the end of the execution of \JMSalg, we have $\sum_{j\in D}\alpha_j\geq \sum_{j\in D}d^2(j, S) + |S|\hat f$.
\label{lem:approx-original}
\end{lemma}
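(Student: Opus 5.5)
We prove the statement by maintaining an invariant along the execution of \JMSalg and checking that it survives every event. Define the potential
$$\Phi:=\sum_{j\in D}\alpha_j-\sum_{j\in D\setminus A}d^2(j,S)-|S|\hat f.$$
We show that $\Phi\geq 0$ holds at all times. Since $A=\emptyset$ at termination, this gives exactly the claim. Initially $\Phi=0$. We then check the three ways the state can change.

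\emph{Continuous growth of the $\alpha$-values.} Only the $\alpha_j$ with $j\in A$ grow. Such clients do not appear in the subtracted connection term, and $S$ is fixed, so $\Phi$ can only increase.

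\emph{Event (2), or any reclassification without opening.} Suppose a client $j$ leaves $A$ because $\alpha_j\geq d^2(j,S)$. The term $d^2(j,S)$ is then subtracted, but $\alpha_j\geq d^2(j,S)$ was already counted in $\sum_j\alpha_j$. To make this bookkeeping rigorous, I would strengthen the invariant to
$$\Phi':=\sum_{j\in D}\alpha_j-\sum_{j\in D\setminus A}d^2(j,S)-\sum_{j\in A}\alpha_j-|S|\hat f\;\geq\;0,$$
that is, $\sum_{j\notin A}\bigl(\alpha_j-d^2(j,S)\bigr)\geq |S|\hat f$. A client leaving $A$ contributes the new nonnegative term $\alpha_j-d^2(j,S)$. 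Moving a client between $IC$ and $DC$ changes nothing in the invariant.

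\emph{Event (1): opening a facility $i$.} This is the main step. We must show that the increase of $\sum_{j\notin A}(\alpha_j-d^2(j,S))$ caused by replacing $S$ with $S\cup\{i\}$ (including clients that leave $A$ at this moment) is at least $\hat f$. We have $\hat f\leq\sum_{j\in A\cup IC}[\alpha_j-\gamma d^2(i,j)]^++\sum_{j\in DC}[\gamma d^2(j,S)-\gamma d^2(i,j)]^+$, and we bound this term by term.
\begin{itemize}
\item For $j\in A$ with a positive bid, $j$ becomes connected. Its new contribution is $\alpha_j-d^2(j,i)\geq\alpha_j-\gamma d^2(j,i)$ because $\gamma>1$.
\item For $j\in IC$ with a positive bid, the contribution changes by $d^2(j,S)-d^2(j,i)$. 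Since $j\in IC$, we have $\gamma d^2(j,S)>\alpha_j$, so $\gamma(d^2(j,S)-d^2(j,i))\geq\alpha_j-\gamma d^2(j,i)$, which is the right inequality only up to a factor $\gamma$.
\item For $j\in DC$ the gain is $d^2(j,S)-d^2(j,i)$, while the bid is $\gamma$ times that.
\end{itemize}

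So the naive potential only recovers $1/\gamma$ of the bids from connected clients, and this is the main obstacle. The fix I would pursue is to charge connection costs in the potential as $\gamma d^2(j,S)$ for directly connected clients and as $\alpha_j$ (the full budget) for indirectly connected ones. Concretely, I would use the invariant
$$\sum_{j\notin A}\alpha_j\;\geq\;|S|\hat f+\sum_{j\in DC}\gamma d^2(j,S)+\sum_{j\in IC}\alpha_j,$$
which trivially implies the claim since $\gamma d^2\geq d^2$ and $\alpha_j\geq d^2(j,S)$ on $IC$. With this accounting, each bid is exactly the decrease of the corresponding charged term.
\begin{itemize}
\item An active client that joins $DC$ via $i$ has charge $\gamma d^2(j,i)=\alpha_j-\text{bid}$, so its slack equals its bid.
\item An $IC$ client with a positive bid moves from charge $\alpha_j$ to charge $\gamma d^2(j,i)$, freeing exactly its bid.
\item A $DC$ client frees $\gamma d^2(j,S)-\gamma d^2(j,i)$.
\end{itemize}
Summing, the freed amount is at least $\hat f$, which pays for the new facility. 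Finally, clients that become connected without bidding, or move from $A$ to $IC$, enter with charge $\alpha_j$ and zero slack. Clients whose $d(j,S)$ decreases without bidding only lower their charge. This preserves the invariant and completes the proof.
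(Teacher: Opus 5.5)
Your final invariant is correct: after cancelling $\sum_{j\in IC}\alpha_j$ from both sides it is exactly the paper's invariant $\sum_{j\in DC}\alpha_j\geq\sum_{j\in DC}\gamma d^2(j,S)+|S|\hat f$, and you maintain it with the same per-opening accounting (each client's bid is at most the charge it frees, and the opening condition then pays $\hat f$), so this is essentially the paper's proof. The preliminary potentials $\Phi$ and $\Phi'$ are detours that you can drop from the write-up.
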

\begin{proof}
    {It is sufficient to show that}, at the end of the execution, we have
    \begin{gather}
        \sum_{j\in DC}\alpha_j \geq \sum_{j\in DC}\gamma d^2(j, S) + \sum_{i\in S}\hat f\,.
        \label{ineq:approxgamma}
    \end{gather}
    Indeed, at the end of the execution, $A=\emptyset$, and for every $j\in IC$, $\alpha_j\geq d^2(j, S)$. Furthermore, $\gamma>1$, therefore, \eqref{ineq:approxgamma} implies the lemma. To this end, we prove that at any point in the execution of \JMSalg, \eqref{ineq:approxgamma} holds.
    
    The equality is initially true since $DC=\emptyset$ and $S = \emptyset$. Furthermore, since whenever the $\alpha$-value of a client $j$ reaches $d^2(j, S)$ it stops growing, no client is added to $DC$ outside of when a facility is opened.
    
    We now consider what happens when we open a facility $i$, i.e., add it to $S$. 
    Let $(\alpha,S, A, IC, DC)$ be the state right before opening $i$, and set $DC' = \{ j \in A\cup IC : \alpha_j \geq \gamma d^2(i, j) \}$, and $X=\{j\in DC:d^2(j, i)<d^2(j,S)\}$.
     The change of cost of the right-hand side of~\eqref{ineq:approxgamma} is at most 
    \begin{align*}
     \hat f &+   \sum_{j\in DC'} \gamma d^2(i,j) + \sum_{j \in X} \gamma (d^2(i,j) - d^2(j, S))\,.
    \end{align*}
    Since the algorithm decided to open $i$, we also have
    \begin{align*}
      \hat f &\leq \sum_{j\in DC'}(\alpha_j - \gamma d^2(i, j)) + \sum_{j\in X}(\gamma d^2(j,S)-\gamma d^2(i,j))\,.
    \end{align*}
    We thus get that the change of cost of the right-hand side is at most $\sum_{j\in DC'} \alpha_j$, which is the change of the left-hand side.
\end{proof}



Let $(\alpha_j)_{j\in D}$ be the final vector of $\alpha$ values. 
We following lemma {is the core of our analysis: it} implies that $\alpha / \Gamma$ is a feasible solution for $DP_{FL}(f)$. Combined with \Cref{lem:approx-original}, {it shows that} our solution $S$ satisfies 
\[
\sum_{j \in D} d^2(j, S) + \Gamma |S| f \leq \sum_{j \in D} \alpha_j \leq \Gamma \cdot \sopt_{LP}(f) \leq \Gamma \cdot \sopt_{FL}(f), 
\]
which finishes the proof of the LMP $\Gamma$-approximation.

\begin{lemma}[Dual Feasibility]
For every facility $i$, we have $\sum_{j\in D} [\alpha_j - \Gamma d^{2}(i,j)]^+ \leq \hat{f}$. 
\label{lem:dual-feasibility-original}
\end{lemma}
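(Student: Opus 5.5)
Fix a facility $i$ and order the clients with positive contribution by their final $\alpha$ values, $\alpha_1\le\alpha_2\le\dots\le\alpha_m$; the ordering is also the order in which they stopped growing. The plan follows the JMMSV analysis. For each $\ell$, consider the moment $t_\ell=\alpha_\ell$ just before client $\ell$ stops, i.e.\ infinitesimally before time $\alpha_\ell$. At that moment $i$ is either unopened, in which case the opening condition (1) for $i$ has not been triggered, or opened. Either way I want a lower bound, for every $h\ge \ell$, on the bid that client $h$ places on $i$ at time $t_\ell$, in terms of $\alpha_\ell$ and the distances $d(h,i)$, $d(\ell,i)$. Summing these bids over $h\ge \ell$ and using that they are at most $\hat f$ gives
\[
\sum_{h\ge \ell}\bigl[\alpha_\ell-\Gamma d^2(h,i)\bigr]^+\le \hat f .
\]
Taking $\ell$ to be the first index with positive term then yields the lemma, since $\alpha_h\ge\alpha_\ell$ need not be replaced by $\alpha_\ell$ only if I bound $\alpha_h$ directly. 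So I will instead prove $\sum_h[\alpha_h-\Gamma d^2(h,i)]^+\le \hat f$ by showing each client's bid at a suitable time is at least $\alpha_h-\Gamma d^2(h,i)$ or is accounted for.

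\textbf{Case analysis per client $h$.} Fix the critical time $t=\alpha_\ell$ for the smallest relevant $\ell$ (more precisely, I will choose $t$ as the last moment before $i$ opens, or the relevant $\alpha$ if $i$ is never opened). For $h$ still active at $t$, its bid is $[t-\gamma d^2(h,i)]^+$, and since $\gamma\le\Gamma$ this is fine. For $h$ already in $IC$, its bid is $[\alpha_h-\gamma d^2(h,i)]^+$, which is again fine. The problematic clients are those in $DC$ with $\gamma d^2(h,S)<\alpha_h$, where the bid is only $\gamma(d^2(h,S)-d^2(h,i))$. For such $h$ connected to some $i'\in S$, I use that $i'$ was opened, with $h$ a direct contributor, before time $t$. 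I then bound $d^2(\ell,i')$ via the triangle inequality $d(\ell,i')\le d(\ell,i)+d(h,i)+d(h,i')$ together with Lemma~\ref{lem:apxTriangleInequality3}:
\[
(d(h,i')+d(h,i)+d(\ell,i))^2\le \gamma d^2(h,i')+\Bigl(2+\tfrac{2}{\gamma-1}\Bigr)\bigl(d^2(h,i)+d^2(\ell,i)\bigr).
\]
Since $\ell$ was still active (or indirectly connected) at time $t$, I have $t\le d^2(\ell,S)\le d^2(\ell,i')$. This gives
\[
\gamma d^2(h,S)\ge t-\Bigl(2+\tfrac{2}{\gamma-1}\Bigr)\bigl(d^2(h,i)+d^2(\ell,i)\bigr),
\]
so $h$'s bid is at least $t-\Gamma d^2(h,i)-(\Gamma-\gamma)d^2(\ell,i)$. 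The cross term in $d^2(\ell,i)$ is the nuisance. The fix I would try is to run the argument with $\ell$ chosen as the first client in the order with $\alpha_\ell>\Gamma d^2(\ell,i)$, and to compare against $t=\alpha_\ell$ rather than $\alpha_h$. The final sum is then $\sum_h[\alpha_\ell-\Gamma d^2(h,i)]^+$, which requires the usual JMMSV trick of summing inequalities over all $\ell$, or an averaging argument, to replace $\alpha_\ell$ by $\alpha_h$.

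\textbf{Main obstacle.} The hard step is exactly this reconciliation. It has two parts. First, clients $h$ with $\alpha_h>t$ contribute only with their time-$t$ bid, not with $\alpha_h$. Second, the lemma asks for a single inequality with $\alpha_h$ rather than a family of inequalities indexed by $\ell$. My first attempt will be to take $t$ to be the time just before $i$ opens (or $\infty$, meaning the end, if it never opens), and to argue that every client $h$ with $\alpha_h>\Gamma d^2(h,i)$ and $\alpha_h>t$ would have been stopped by $i$ at time $\gamma d^2(h,i)$ or earlier. This pins $\alpha_h\le d^2(h,i)$ after $i$ opens, which contradicts positivity, so all relevant clients satisfy $\alpha_h\le t$ and their bids at $t$ can be compared with $\alpha_h$ directly. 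It remains to bound the DC case with $\ell=h$ itself, using $h$'s own freeze time, so that the cross term becomes $d^2(h,i)$ and is absorbed into $\Gamma d^2(h,i)$. I expect the delicate point to be when $h$'s own connection is to $i'$ and the chain must pass through another client whose $\alpha$ equals $\alpha_h$; this is where the monotonicity of bids in time and Lemma~\ref{lem:apxTriangleInequality2}, for the indirectly connected subcase, will be needed.
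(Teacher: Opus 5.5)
Your local bounds are the right ingredients and match the paper's. Active clients bid $[t-\gamma d^2(h,i)]^+$, indirectly connected clients bid with their own $\alpha_h$, and for a directly connected $h$ attached to $i'$, \Cref{lem:apxTriangleInequality3} together with $t\le d^2(\ell,i')$ gives a bid of at least $t-\Gamma d^2(h,i)-(\Gamma-\gamma)d^2(\ell,i)$. That last bound needs $\ell$ to be \emph{active} at time $t$; your parenthetical ``or indirectly connected'' is wrong, since such an $\ell$ satisfies $t\ge\alpha_\ell\ge d^2(\ell,S)$.

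The gap is in turning these bounds into the lemma. The single-time strategy fails for directly connected clients. That strategy compares each client's bid just before $i$ opens (or at the end) with $\alpha_h-\Gamma d^2(h,i)$. Suppose $h$ became directly connected to a facility $i'$ with $d(h,i')\le d(h,i)$. From then on its bid toward $i$ is $[\gamma d^2(h,S)-\gamma d^2(h,i)]^+=0$. Yet $\alpha_h$ is the time $i'$ opened, unrelated to $d(h,i')$, so $\alpha_h-\Gamma d^2(h,i)$ can be of order $\alpha_h$. Its contribution is visible only in bids made before $h$ stopped. Your fix ``$\ell=h$'' is vacuous for the same reason: $h\in DC$ only gives $\gamma d^2(h,S)\le\alpha_h$, which is an upper bound on its bid, not a lower one.

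What is missing is the combination step, which is the bulk of the paper's proof. For each $j\in D^*$ (sorted by $\alpha$), write the non-opening condition for $i$ at time $\alpha_j-\eps$:
\begin{itemize}
\item later clients bid $\alpha_j-\gamma d^2(j',i)$;
\item earlier indirectly connected or active clients bid with their own $\alpha_{j'}$;
\item earlier directly connected clients are bounded by your inequality with $\ell=j$.
\end{itemize}
This yields a lower-triangular system $(\mathrm{UB}_j)$. It has diagonal coefficient $s-j+1+|DC^j|$, off-diagonal coefficients $\ind(j'\in IC^j)\in\{0,1\}$, and a cross term $(\Gamma-\gamma)\sum_{j'\in DC^j}(d^2(j,i)+d^2(j',i))$.

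Next choose weights $\beta_j$ solving $\sum_{j'\ge j}\beta_{j'}A_{j'j}=1$, so the weighted sum of left-hand sides is exactly $\sum_j\alpha_j$. You then have to prove three things:
\begin{itemize}
\item $\beta_j\ge 0$ and $\sum_{j'>j}\beta_{j'}\le\frac{s-j}{s-j+|DC^j|}$, using $DC^{j'}\cap[j-1]\supseteq DC^j$ for $j'>j$;
\item consequently $\sum_j\beta_j\le 1$, which handles the $\hat f+\gamma\sum_j d^2(j,i)$ part;
\item combined with $\beta_j\le 1/(s-j+1+|DC^j|)$, the weighted cross terms total at most $\sum_j d^2(j,i)$.
\end{itemize}
That last estimate is exactly what makes $\Gamma=\gamma+2+\frac{2}{\gamma-1}$ suffice with no loss. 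Pointing to ``the usual JMMSV trick'' (a factor-revealing LP in the metric setting) or ``an averaging argument'' does not establish it.
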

\begin{proof}
Consider a facility $i$. Let $D^*=\{j\in D: \alpha_j> \Gamma d^2(i, j)\}$.
Since $\alpha_j \leq \Gamma d^{2}(i,j)$ for any $j\notin D^*$, 
it remains to show that 
\begin{align}
    \sum_{j\in D^*} \alpha_j \leq \hat f + \Gamma\cdot \sum_{j\in D^*} d^{2}(i,j)~. \label{eqn:dual-feasible-equiv}
\end{align}
Let $s=|D^*|$ be the size of $D^*$. Without loss of generality, we assume that $D^*=[s]$ and it is sorted in non-decreasing order of $\alpha_j$ values, namely, $\alpha_1\leq \ldots \leq \alpha_s$. 

For each $j\in D^*$, let $DC^j$ (resp., $IC^j$) be the set of clients $j'$ in $[j-1]$ such that $j'\in DC$ (resp., $j'\in IC\cup A$) at time $\alpha_j-\eps$, where the value of $\eps>0$ can be an arbitrary positive number such that $\alpha_j-\eps > \Gamma\cdot d^2(j,i)$ and such that no client becomes connected or no facility becomes open between $\alpha_j-\eps$ (inclusive) and $\alpha_j$ (exclusive). 
Similarly, let $S^j$ be the set $S$ at time $\alpha_j-\eps$.
Then, for any $j\in D^*$ and $j'\in DC^j$, the squared distance between $j$ and the facility $i'$ that $j'$ connects to at time $\alpha_j-\eps$ should be strictly larger than $\alpha_j-\eps$. 
This is because otherwise we will stop growing $\alpha_j$ at time $\alpha_j-\eps$. 
Since $\eps$ here can be arbitrarily small, we have $d^2(j,i') \geq \alpha_j$. 
Further, applying \cref{lem:apxTriangleInequality3}, we have
\begin{align}
    \forall j\in [s], j'\in DC^j, \quad \alpha_j &\leq d^2(j,i')  \leq \left(d(j,i) + d(j',i) + d(j',i')\right)^2
    \notag
    \\
    & \leq \gamma \cdot d^2(j',i') + \left(2+\frac{2}{\gamma-1}\right)\cdot (d^2(j,i) + d^2(j',i))
    \notag
    \\
    & {=} \gamma \cdot d^2(j',S^j) + \left(2+\frac{2}{\gamma-1}\right)\cdot (d^2(j,i) + d^2(j',i))\,.
    \label{eqn:bound-of-alpha_i}
\end{align}
{In the equality we used that $d(j',i')=d(j',S^j)$ by the definition of $i'$ and $S^j$.}

Note that $\alpha_j-\eps > \Gamma\cdot d^2(j,i) > \gamma \cdot d^2(j,i)$ for any $j\in D^*$.
At time $\alpha_j-\eps$, facility $i$ is unopened, because otherwise client $j$ should be directly connected to $i$.
Therefore, we have
\begin{align*}
    \sum_{j'\in DC^j} \left[\gamma d^2(j',S^i) - \gamma d^2(j',i)\right]^+
    + \sum_{j'\in IC^j} \left[\alpha_{j'} - \gamma d^2(j',i)\right]^+ 
    + \sum_{j\leq j'\leq s} \left[\alpha_{j} - \eps - \gamma d^2(j',i)\right]^+ < \hat f~.
\end{align*}
Again, since this $\eps$ can be arbitrarily small and $[v]^+\geq v$ for any value of $v$, for any $j\in D^*$, we have the following inequality:
\begin{align*}
    \sum_{j'\in DC^j} \left(\gamma d^2(j',S^i) - \gamma d^2(j',i)\right)
    + \sum_{j'\in IC^j} \left(\alpha_{j'} - \gamma d^2(j',i)\right) 
    + \sum_{j\leq j'\leq s} \left(\alpha_{j} - \gamma d^2(j',i)\right) \leq \hat f~.
\end{align*}
By applying \cref{eqn:bound-of-alpha_i}, we get
\begin{alignat*}{1}
    \sum_{j'\in DC^j} \left(\alpha_j - \left(2+\frac{2}{\gamma-1}\right) \cdot \left(d^2(j,i)+d^2(j',i)\right)  - \gamma d^2(j',i)\right) &+ \sum_{j'\in IC^j} (\alpha_{j'} - \gamma d^2(j',i)) 
    \\
    &+ \sum_{j\leq j'\leq s} (\alpha_j - \gamma d^2(j',i)) \leq \hat f~,
\end{alignat*}
and we can simplify it as (using the definition of $\Gamma := \gamma + 2 + \frac{2}{\gamma-1}$)
\begin{align}
    \label{eqn:lambda-j}
    & (s-j+1+|DC^j|) \cdot \alpha_j + \sum_{j'\in IC^j} \alpha_{j'} \leq \hat f + \gamma \cdot \sum_{j'\in D^*} d^2(j',i) + \left(\Gamma-\gamma\right) \cdot \sum_{j'\in DC^j} (d^2(j,i)+d^2(j',i)) ~. 
    \tag{$\text{UB}_j$}
\end{align}

Next, we show that all \cref{eqn:lambda-j} (for $j\in D^*$) together imply our objective \cref{eqn:dual-feasible-equiv}.
Let $A_{jj'}$ define the coefficient of $\alpha_{j'}$ in \cref{eqn:lambda-j}. That is, we have 
    \begin{align*}
        A_{jj'}=\begin{cases}
            \ind(j'\in IC^j) & \text{if } j'<j;\\
            s-j+1+|DC^j| & \text{if }j'=j;\\
            0  & \text{if }j'>j.
        \end{cases}
    \end{align*}
    We can rewrite \cref{eqn:lambda-j} as 
    \begin{align*}
        \label{eqn:lambda-j_bis}
        \sum_{j'\in [s]} A_{jj'} \alpha_{j'} \leq \hat f + \gamma \cdot \sum_{j'\in [s]} d^2(j',i)
        + \left(\Gamma-\gamma \right) \cdot \sum_{j'\in [j-1]} (1-A_{jj'})(d^2(j,i) + d^2(j',i)).\tag{$\text{UB}'_j$}
    \end{align*}
    Let $\beta_1,\ldots,\beta_s$ be some coefficients such that $\sum_{j\in [s]} \beta_j \cdot \sum_{j'\in [s]} A_{jj'}\alpha_{j'} = \sum_{j\in [s]} \alpha_j$, {for every possible value of the quantities $\alpha_j$}. {In other words, we impose}
    \begin{equation}\label{eqn:formulabetaj}    
{\forall j\in [s]: 1 = \sum_{j'\in [s]}\beta_{j'}A_{j'j}=\sum_{j'\geq j}\beta_{j'}A_{j'j},}
    \end{equation}
{where the second equality comes from the fact} that $A_{j'j}=0$ for any $j>j'$. Thus we can easily compute each $\beta_j$ one by one, using the values of $\beta_{j+1}, \dots, \beta_s$:
    \begin{align}
        \label{eqn:recurrence-for-beta}
        \beta_j = \frac{1}{A_{jj}} \cdot \left(1 - \sum_{j'>j} \beta_{j'} A_{j'j}\right)~.
    \end{align}   
{Let us show that} $\beta_{j}\geq 0$ {for all $j\in [s]$}. {Using \cref{eqn:recurrence-for-beta} and} since $A_{j'j}\in \{0,1\}$ for any $j'>j$, it suffices to show that $\sum_{j'>j} \beta_{j'}\leq 1$.
    Indeed, the following stronger fact holds.
    \begin{fact}
        \label{fact:ub-for-sum-beta}
        Fix any $j\in \{0\} \cup [s]$. We have $\sum_{j'>j} \beta_{j'} \leq \frac{s-j}{s-j+|DC^{j}|}$, where we define $DC^0:=\emptyset$.
    \end{fact}
    \begin{proof}
Using \cref{eqn:formulabetaj}, we get
\begin{align}
        \label{eqn:sum-of-A*beta}
        {s-j=\sum_{\ell>j}\sum_{j'\in [s]}\beta_{j'}A_{j'\ell}=\sum_{j'>j}\beta_{j'}\left(\sum_{\ell>j}A_{j'\ell}\right),}
\end{align}
{where in the last equality we used the fact that, for $j'\leq j<\ell$, $A_{j'\ell}=0$.}

    
    Recall the definition of $A_{j'\ell}$, where we have for any $j'>j$,
    \begin{align*}
        A_{j'j'} &= s-j'+1+|DC^{j'}| = s-j'+1 + \sum_{j<\ell<j'} \ind(\ell \in DC^{j'}) + \sum_{\ell\leq j} \ind(\ell \in DC^{j'})~,
        \\
        A_{j'\ell} &= \ind(\ell \in IC^{j'}) \hspace{10em} \forall j<\ell<j'~.
    \end{align*}
    For any $j<\ell<j'$, we have either $\ell\in DC^{j'}$ or $\ell \in IC^{j'}$ (note that $IC^{j'}$ includes {active}
    clients with index less than $j'$ at time $\alpha_{j'}-\eps$). {Thus, $\sum_{j<\ell<j'}(\ind(\ell \in DC^{j'})+\ind(\ell \in IC^{j'}))=(j'-1)-j$.}    
    Hence, for any $j'>j$, we have 
    \[
        \sum_{\ell>j} A_{j'\ell} = s-j'+1 + (j'-1)-j + \sum_{\ell\leq j} \ind(\ell\in DC^{j'}) \geq s-j+|DC^{j}| ~,
    \]
    where the inequality results from the fact that $DC^{j'} \cap[j-1] \supseteq DC^{j}$ for $j'>j$. 
    Applying this lower bound to \cref{eqn:sum-of-A*beta}, we get $s-j\geq (s-j+|DC^j|)\sum_{j'>j} \beta_{j'}$, which is equivalent to $\sum_{j'>j} \beta_{j'}\leq \frac{s-j}{s-j+|DC^j|}$. 
    \end{proof}

    Hence, we have $\beta_j\geq 0$ for any $j\in [s]$ according to our earlier discussions. Further, we have 
    \begin{align}
        & \sum_{j\in [s]} \alpha_j = \sum_{j\in [s]} \beta_j \cdot \left(\sum_{j'\in [s]} A_{jj'}\alpha_{j'}\right)
        \notag
        \\
        \overset{\cref{eqn:lambda-j_bis}}{\leq} & \left(\hat f + \gamma\cdot \sum_{j\in [s]} d^2(j,i)\right)  \left(\sum_{j\in [s]} \beta_j\right) + \left(\Gamma-\gamma\right) \left(\sum_{j\in [s]} \beta_j \cdot \sum_{j'\in [j-1]} (1-A_{jj'})(d^2(j,i) + d^2(j',i))\right)
        \notag
        \\
        \overset{\cref{fact:ub-for-sum-beta}}{\leq} & \left(\hat f + \gamma\cdot \sum_{j\in [s]} d^2(j,i)\right) + \left(\Gamma-\gamma\right) \left(\sum_{j\in [s]} \beta_j \cdot \sum_{j'\in [j-1]} (1-A_{jj'})(d^2(j,i) + d^2(j',i))\right)\ .
        \label{eqn:sum-of-alpha}
    \end{align}
    Finally, to finish the proof, we show that 
    \begin{align}
       \sum_{j\in [s]} \beta_j \cdot \sum_{j'\in [j-1]} (1-A_{jj'})(d^2(j,i) + d^2(j',i)) \leq \sum_{j\in [s]} d^2(j,i)~.
        \label{eqn:additional-term-in-sum-of-alpha}
    \end{align}
    Rearranging the LHS of the inequality, we get 
    \begin{align*}
        \text{LHS of \cref{eqn:additional-term-in-sum-of-alpha}} &= \sum_{j\in [s]} \sum_{j'\in [j-1]} \beta_j \cdot (1-A_{jj'}) \cdot d^2(j,i) + \sum_{j\in [s]} \sum_{j'\in [j-1]} \beta_{j}\cdot (1-A_{jj'})\cdot d^2(j',i)
        \\
        &= \sum_{j\in [s]} \left(\beta_j \cdot \sum_{j'\in [j-1]} (1-A_{jj'}) + \sum_{j'>j} \beta_{j'} \cdot (1-A_{j'j})\right) \cdot d^2(j,i)
        \\
        &\leq \sum_{j\in [s]} \left(\beta_j \cdot \sum_{j'\in [j-1]} (1-A_{jj'}) + \sum_{j'>j} \beta_{j'}\right) \cdot d^2(j,i)  
        \\
        &\overset{\cref{fact:ub-for-sum-beta}}{\leq} \sum_{j\in [s]} \left(\beta_j \cdot \sum_{j'\in [j-1]} (1-A_{jj'}) + \frac{s-j}{s-j+|DC^j|} \right) \cdot d^2(j,i)\ ,
    \end{align*}
    {where in the first inequality we used the fact that $A_{j'j}\in \{0,1\}$ for $j'>j$.}
    By \cref{eqn:recurrence-for-beta}, we have $\beta_j\leq \frac{1}{A_{jj}} = \frac1{s-j+1+|DC^j|}$.
    Note that, for any $j'<j$, $A_{jj'}=0$ if and only if $j'\notin IC^j$, i.e., $j'\in DC^j$. 
    Therefore, $\sum_{j'\in [j-1]} (1-A_{jj'}) = |DC^j|$. 
    We can further upper bound the above LHS by
    \begin{align*}
        \text{LHS of \cref{eqn:additional-term-in-sum-of-alpha}} &\leq \sum_{j\in [s]} \left(\frac{|DC^j|}{s-j+1+|DC^j|} + \frac{s-j}{s-j+|DC^j|} \right) \cdot d^2(j,i)
        \\
        &\leq \sum_{j\in [s]} \left(\frac{|DC^j|}{s-j+|DC^j|} + \frac{s-j}{s-j+|DC^j|} \right) \cdot d^2(j,i) = \sum_{j\in [s]} d^2 (j,i)\ .
    \end{align*}
    Putting \cref{eqn:sum-of-alpha,,eqn:additional-term-in-sum-of-alpha} together, we get 
    \begin{align*}
        \sum_{j\in [s]} \alpha_j \leq \hat f + \gamma \cdot \sum_{j\in [s]} d^2(j,i) + \left(\Gamma-\gamma\right) \cdot \sum_{j\in [s]} d^2(j,i) = \hat f + \Gamma \cdot \sum_{i\in [s]} d^2(j,i)~. & \qedhere
    \end{align*}
\end{proof}

\section{$5.83$-Approximation with $O(\log n/\eps^2)$ Extra Centers}
Let $\gamma=1+\sqrt{2}$ and $\Gamma = \gamma+2+\frac{2}{\gamma-1}\approx 5.828$. In this section, we present a polynomial-time algorithm that opens $k + O(\log n/ \eps^2)$ many centers and
returns a solution of cost at most $(\Gamma + \eps)\sopt$, proving \Cref{thr:mainBicriteria}.

This section is close to Section 3 of~\cite{CGLSS25stoc} that used the classical LMP 2-approximation~\cite{JainMMSV03} for a $(2+\eps)$-approximation for $k$-Median. Here, we start from the LMP $\Gamma$-approximation (with squared distances) presented in \Cref{subsec:modifiedJMS} instead.
\Cref{subsec:log_adaptive} presents a modified algorithm \logadaptalg that runs in $O(\log n / \eps^2)$ adaptive steps (that we call {\em phases}) and \Cref{subsec:log_adaptivity_analysis} analyzes the algorithm. 
Finally, building on additional setups in \Cref{subsec:robust_analysis}, \Cref{subsec:merging_algorithm} presents our final algorithm \mergealg achieving the guarantee of \Cref{thr:mainBicriteria}. 

\subsection{Log Adaptive Algorithm}
\label{subsec:log_adaptive}

We present the modified greedy algorithm that works in $O(\log n / \eps^2)$ phases. 
{At high level}, instead of considering the continuous time where $\alpha$ values of the active clients are increased at unit rate, {the idea is to} consider discrete time steps (that we call {\em phases}), where the $i$th phase $(i \geq 0)$ corresponds to time $(1+\eps^2)^i$. Then, since we assumed that the minimum nonzero distance is $1$ and the maximum distance is $\poly(n)$, the number of phases is $O(\log(n)/\eps^2)$.

We will ensure that the dual feasibility (\Cref{lem:dual-feasibility-original}) will be satisfied without any slack, which could be violated if $\alpha$ values of the active clients discretely jump from one value to a strictly larger value. In order to deal with this issue, \cite{CGLSS25stoc} uses two {ideas}, namely (1) scaling down the metric by a factor of $(1-\delta)$ with $\delta = 3\eps$ and (2) allowing clients extremely close to a facility $i$ to continuously increase their $\alpha$ values in order to open $i$. We use a similar definition adapted to our new greedy algorithm for squared distances and its analysis. {A key ingredient is the following technical definition of facilities that can be opened.}
\begin{definition}[openable]
\label{def:openable}
A facility $i$ is \emph{openable} with respect to the algorithm's state $(\alpha,S,\theta)$ if there are increased dual values $(\tau_j)_{j\in A}$ of active clients that satisfy the following conditions:
\begin{itemize}
    \item Only nearby clients are increased:
    $$\tau_j=\alpha_j, \text{ for every } j\in A\setminus B(i,\sqrt{\eps\theta}).$$

    \item Nearby clients are only increased slightly:
    $$\alpha_j\leq\tau_j\leq \min\{(1-\delta)d^2(j,S), (1+\eps^2)\theta\}\text{ for every } j\in A\cap B(i,\sqrt{\eps\theta}).$$

    \item Facility $i$ is paid for:
    $$\sum_{j\in A}[\tau_j - (1-\delta)\gamma d^2(i, j)]^+ + \sum_{j\in IC}[\alpha_j - (1-\delta)\gamma d^2(i, j)]^+ +(1-\delta)\sum_{j\in DC}[\gamma d^2(j,S)-\gamma d^2(i,j)]^+ \geq \hat f$$

    \item Dual feasibility: for every client $k\in A$ and facility $i_0$
    \begin{align*}
    \sum_{j\in DC}\left[\tau_k - \left(2+\frac{2}{\gamma-1}\right)\cdot (d^2(i_0,k) + d^2(i_0,j))-\gamma d^2(i_0,j)\right]^+ &+ \sum_{j\in IC} [\alpha_{j}-\gamma d^2(i_0,j)]^+ \\
    &+ \sum_{j\in A}[\tau_{j}-\gamma d^2(i_0,j)]^+\leq \hat f
    \end{align*}
\end{itemize}
\end{definition}
The openability can be easily checked by solving a linear program. 
\begin{lemma}
    There is a polynomial-time algorithm that, given the state of the algorithm and a facility $i$, either outputs $(\tau_j)_{j \in A}$ that satisfy all conditions of openability or certifies that $i$ is not openable.
\end{lemma}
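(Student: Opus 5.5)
Everything in the state $(\alpha,S,\theta)$ and the sets $A,IC,DC$ is fixed, so the only unknowns are the $\tau_j$ for $j\in A\cap B(i,\sqrt{\eps\theta})$. I will model the openability conditions as a linear feasibility problem in these variables, with a polynomial number of constraints, and solve it with a polynomial-time LP algorithm. The first two conditions are linear directly. For every $j\in A\setminus B(i,\sqrt{\eps\theta})$ I set $\tau_j=\alpha_j$, a constant. For every $j\in A\cap B(i,\sqrt{\eps\theta})$ I impose the box constraints $\alpha_j\le\tau_j\le\min\{(1-\delta)d^2(j,S),(1+\eps^2)\theta\}$, whose right-hand side is a known number.

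The dual-feasibility condition is an upper bound on a sum of terms $[\cdot]^+$, each of which is a convex piecewise-linear function of $\tau$. I will linearize it in the standard epigraph way. For each pair $(k,i_0)\in A\times F$ and each client $j$, I introduce an auxiliary variable $u^{k,i_0}_j$ with $u^{k,i_0}_j\ge 0$ and $u^{k,i_0}_j\ge(\text{the affine expression inside the bracket})$. I then require $\sum_j u^{k,i_0}_j\le\hat f$. The $IC$ terms are constants and need no variable. Because the bracket functions are convex, there are feasible $u$'s for a given $\tau$ exactly when the original inequality holds, since we can take each $u$ equal to its bracket. This gives $O(|A|\cdot|F|\cdot|D|)$ variables and constraints, which is polynomial.

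The facility-payment condition is the delicate step, because it is a \emph{lower} bound on a sum of $[\cdot]^+$ terms. Such a constraint is not convex in general, so the epigraph trick fails here. I expect this to be the main obstacle, and I plan to handle it with monotonicity. The payment expression is nondecreasing in every $\tau_j$, and the only other constraints on $\tau$ are the upper bounds from the dual-feasibility condition and the box constraints. So if any feasible $\tau$ pays for $i$, then any feasible $\tau$ that maximizes the payment also pays for it. I will therefore drop the payment constraint and solve the LP that maximizes $\sum_{j\in A}v_j$ subject to $v_j\le\tau_j-(1-\delta)\gamma d^2(i,j)$, $v_j\le\tau_j$, and $v_j\le\tau_j-(1-\delta)\gamma d^2(i,j)+M(\ldots)$. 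That is awkward, so instead I use the simpler device of replacing each $[x]^+$ by a variable $v_j\le\max(0,x)$. Since maximizing a convex function is not an LP, I will go further and observe the following. For a fixed $\tau$, the payment equals $\max_{T\subseteq A}\sum_{j\in T}(\tau_j-(1-\delta)\gamma d^2(i,j))$, attained at $T=\{j:\tau_j>(1-\delta)\gamma d^2(i,j)\}$. So I would like to fix a set $T$ and solve the LP with the linear payment constraint over $T$.

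Guessing $T$ is exponential, so I avoid it as follows. The feasible region for $\tau$ is a box intersected with the dual-feasibility constraints, all of which are monotone upward-closed upper bounds. I will then argue that the coordinate-wise choice, raising each $\tau_j$ one at a time to its maximum feasible value, also maximizes every $[\tau_j-c_j]^+$ simultaneously. If this holds, a single greedy or LP computation of a maximal $\tau$, which also maximizes $\sum_j\tau_j$, decides openability. The paper's phrase ``solving a linear program'' suggests that the intended formulation has a feasible region in which maximizing $\sum_{j}\tau_j$ is enough. Confirming that interplay is the step I would check most carefully. If it fails, the fallback is to note that the payment condition restricted to $T=A\cap B(i,\sqrt{\eps\theta})\cap\{\tau_j\text{ can exceed }c_j\}$ together with the constant far terms is linear, and to solve that LP.
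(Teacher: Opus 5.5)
Your handling of the first two conditions is fine. Your epigraph linearization of the dual-feasibility condition is also correct: auxiliary variables $u^{k,i_0}_j\ge 0$ and $u^{k,i_0}_j\ge$ the bracket argument, with $\sum_j u^{k,i_0}_j\le\hat f$. It is in fact more explicit than the paper's one-line remark. The gap is in the payment condition.

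Your main plan rests on the claim that raising the $\tau_j$ greedily yields a point that simultaneously maximizes every $[\tau_j-c_j]^+$. Equivalently, you assume that a maximal feasible $\tau$ maximizes $\sum_j\tau_j$. This is false in general, because the dual-feasibility constraints for different $i_0$ couple coordinates, so a coordinate-wise maximum need not exist. For example, take $\tau\in[0,1]^3$ with $\tau_1+\tau_2\le 1$ and $\tau_1+\tau_3\le 1$. Raising $\tau_1$ first gives the maximal point $(1,0,0)$ with sum $1$, while $(0,1,1)$ has sum $2$. Without extra structure, asking whether a sum of hinge functions can reach a threshold over a polytope is a convex-maximization problem, which is not an LP and is intractable in general. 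Your fallback, linearizing the payment over a set $T$, is sound because $x\le[x]^+$, but you give no reason it is complete. A feasible $\tau$ could have $\tau_j<c_j$ for some $j\in T$, and raising $\tau_j$ to $c_j$ can violate dual feasibility at a facility $i_0$ closer to $j$ than $i$. You flag this step yourself as unverified.

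The missing observation, which is exactly what the paper uses, is that the brackets on the variable clients are never inactive. Every active client has $\alpha_j=\theta$. At an opening, a client whose value is raised above $\theta$ immediately moves to $DC$, and Stage 2 sets the surviving active clients to the new $\theta$. So for $j\in A\cap B(i,\sqrt{\eps\theta})$ the box gives $\tau_j\ge\theta$. Meanwhile $(1-\delta)\gamma d^2(i,j)\le(1-\delta)\gamma\eps\theta<\theta$ for small $\eps$, for instance whenever $\gamma\eps<1$. Hence $[\tau_j-(1-\delta)\gamma d^2(i,j)]^+=\tau_j-(1-\delta)\gamma d^2(i,j)$ on the whole feasible box. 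All other terms in the payment condition are constants, so the condition is a single linear inequality in the true variables. Adding it to your box and epigraph constraints gives a polynomial-size LP feasibility problem, which proves the lemma. With this fact, your idea of maximizing $\sum_j\tau_j$ subject to the box and dual-feasibility constraints would also work, because on that region the payment equals $\sum_{j\in A\cap B(i,\sqrt{\eps\theta})}\tau_j$ plus a constant. The justification, however, is this positivity, not the monotonicity argument you proposed.
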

\begin{proof}
Fix $i$ and $(\alpha, S, A, IC, DC, \theta)$ and consider a linear program with variables $(\tau_j)_{j \in A}$. We claim that each of the four bullets in \Cref{def:openable} can be expressed as linear constraints. The first two bullets are immediate. Indeed, as $\tau_j = \alpha_j$ for every $j \in A - B(i, \eps \theta)$, we can treat them as constants and only have $(\tau_j)_{j \in A \cap B(i, \sqrt{\eps \theta})}$ as the true variables. 

Then, the third bullet, 
\begin{gather*}
    \sum_{j\in A} [\tau_j - (1-\delta)\gamma d^2(i,j)]^+ + \sum_{j\in IC\cup DC} [\min(\alpha_j, (1-\delta)\gamma d^2(j,S)) - (1-\delta)\gamma d^2(i,j)]^+ \geq \hat f\,
\end{gather*}
becomes a linear inequality in the true variables, since the only terms involving the true variables are $[\tau_j - (1 - \delta)\gamma d^2(i,j)]^+$ for $j \in A \cap B(i, \sqrt{\eps \theta})$, and for those $j$, 
$[\tau_j - (1 - \delta)\gamma d^2(i,j)]^+$ is indeed equal to $(\tau_j - (1 - \delta)\gamma d^2(i,j))$ without $[\cdot]^+$. 

Finally, the fourth bullet {can be similarly expressed as} a linear inequality in the true variables {as the third bullet}.
\end{proof}

\Cref{alg:logadaptive} is our log-adaptive algorithm. It works in $O(\log n/\eps^2)$ phases where, in the $i$th phase, every active client $j$ has $\alpha_j = (1+\eps^2)^{i}$ and the algorithm opens openable facilities in an arbitrary order by increasing $\alpha_j$'s to $\tau_j$'s. However, when $i$ is open, the only scenario $j$ with $\tau_j > \alpha_j$ is when $j \in B(i, \sqrt{\eps \theta})$, so $j$ {is} immediately put into $DC$ after $i$ is open.

\begin{figure}[h!]
\begin{center}
\begin{minipage}{1.0\textwidth}
\begin{mdframed}[hidealllines=true, backgroundcolor=gray!15]

\begin{algorithm}[\logadaptalg] \ \\[0.2cm]
\label{alg:logadaptive}
\noindent{\bf Initialization:} Set $\theta=1$, $S=\emptyset$, $A=D$, $DC=\emptyset$, $IC=\emptyset$, and $\alpha_j=\theta$ for every $j\in A=D$.\\

\noindent While $A\neq \emptyset$:

\begin{enumerate}
    \item While there is an unopened facility $i$ that is openable:
    \begin{itemize}
        \item Compute $(\tau_j)_{j\in A}$ so that $i$ is openable with $(\tau_j)_{j\in A}$.
        \item Add $i$ to $S$. Set $\alpha_j\leftarrow \tau_j$ for every $j\in A$.
        \item For every $j\in D$:
        \begin{itemize}
            \item If $\alpha_j\geq (1-\delta)\gamma d^2(j,S)$, move $j$ to $DC$.
            \item Otherwise, if $j\in A$ and $\alpha_j\geq (1-\delta)d^2(i,j)$, move $j$ to $IC$.
        \end{itemize}
    \end{itemize}

    \item Perform the following step and move to the next phase:
\begin{itemize}
\item For every remaining $j\in A$, set $\alpha_j\leftarrow \min\{(1+\eps^2)\theta, (1-\delta)d^2(j, S)\}$, and move it to $IC$ when $\alpha_j=(1-\delta)d^2(j,S)$. Update $\theta\leftarrow(1+\eps^2)\theta$.
\end{itemize}   
\end{enumerate}
\end{algorithm}
\end{mdframed}
\end{minipage}
\end{center}
\end{figure}

\subsection{Analysis}
\label{subsec:log_adaptivity_analysis}

In this subsection, we show that \Cref{alg:logadaptive} achieves an LMP ${\frac{\Gamma}{1-\delta}=}\Gamma + O(\eps)$ approximation. {This is a direct consequence of the following two lemmas, proved in \Cref{subsubsec:approximation-log} and \Cref{subsubsec:dual-feasibility-log}, resp.} Let $(\alpha^*_j)_{j \in D}$ be the final $\alpha$ values. 
\begin{lemma}
    We have $\sum_{j\in D}\alpha^*_j \geq \sum_{j\in D}(1-\delta)d^2(j,S) + \sum_{i\in S}\hat f$.
\label{lem:approximation-log}
\end{lemma}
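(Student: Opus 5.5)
I would mirror the proof of \Cref{lem:approx-original}, maintaining throughout the execution of \logadaptalg the invariant
\begin{equation*}
\sum_{j\in DC}\alpha_j \;\geq\; (1-\delta)\gamma\sum_{j\in DC} d^2(j,S) + |S|\,\hat f .
\end{equation*}
At termination $A=\emptyset$. Every client in $IC$ satisfies $\alpha_j\geq (1-\delta)d^2(j,S)$: it entered $IC$ with $\alpha_j\geq(1-\delta)d^2(j,\cdot)$ for some open facility, and $d^2(j,S)$ only decreases afterwards while $\alpha_j$ is frozen. Since $\gamma>1$, the invariant then gives the lemma by adding the $IC$ terms. Initially $DC=S=\emptyset$, so the invariant holds trivially.

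Next I check every operation that can change either side. In step 2 of a phase, only active clients change, and they move only to $IC$, not to $DC$, so the invariant is unaffected. Opening a facility $i$ with increased values $(\tau_j)$ is the main case. Let $DC'$ be the clients newly moved to $DC$, namely those in $A\cup IC$ with $\alpha_j\geq(1-\delta)\gamma d^2(j,S\cup\{i\})$. Let $X=\{j\in DC: d^2(i,j)<d^2(j,S)\}$. The right-hand side increases by at most
\[
\hat f+\sum_{j\in DC'}(1-\delta)\gamma d^2(j,S\cup\{i\})+\sum_{j\in X}(1-\delta)\gamma\bigl(d^2(i,j)-d^2(j,S)\bigr),
\]
while the left-hand side increases by $\sum_{j\in DC'}\alpha_j$, where the $\alpha_j=\tau_j$ are the new values. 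Clients in $DC$ keep their $\alpha$ unchanged.

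I then use the third bullet of \Cref{def:openable} (payment). Every client $j\in A\cup IC$ with a positive payment term satisfies $\alpha_j>(1-\delta)\gamma d^2(i,j)\geq(1-\delta)\gamma d^2(j,S\cup\{i\})$, so it lies in $DC'$. Its payment term is at most $\alpha_j-(1-\delta)\gamma d^2(j,S\cup\{i\})$. Here I take the connection to be to $i$ when that is its closest facility, and otherwise the bound still holds since $d^2(i,j)\geq d^2(j,S\cup\{i\})$. The $DC$ terms are exactly the $X$ terms with the $(1-\delta)$ factor. Hence
\[
\hat f\leq \sum_{j\in DC'}\bigl(\alpha_j-(1-\delta)\gamma d^2(j,S\cup\{i\})\bigr)+\sum_{j\in X}(1-\delta)\gamma\bigl(d^2(j,S)-d^2(i,j)\bigr).
\]
Substituting this bound for $\hat f$ makes the increase of the right-hand side at most $\sum_{j\in DC'}\alpha_j$, so the invariant is preserved.

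The main obstacle is the bookkeeping of which clients enter $DC$ at an opening. A client in $DC'$ with a zero payment term only helps: for it $\alpha_j\geq(1-\delta)\gamma d^2(j,S\cup\{i\})$, so adding it contributes a nonnegative amount to the slack. A client may also enter $DC$ because of a facility other than $i$, if $\alpha_j$ grew through $\tau_j$. This can only happen for $j$ in the ball around $i$, and the same nonnegative-contribution argument covers it. I will also note that after moving to $DC$, a client's $d^2(j,S)$ only decreases, which keeps the invariant in the direction we need.
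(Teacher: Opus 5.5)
Your proof is correct and follows the paper's argument. You use the same invariant $\sum_{j\in DC}\alpha_j \ge (1-\delta)\gamma\sum_{j\in DC}d^2(j,S)+|S|\hat f$, you preserve it at each opening via the payment bullet (noting Stage 2 adds nothing to $DC$), and you finish using $\gamma\ge 1$ together with the bound for clients in $IC$. The only cosmetic difference is that you define the newly directly connected clients by the algorithm's actual rule and bound them against $d^2(j,S\cup\{i\})$, whereas the paper uses the sets $A'$, $IC'$ cut out by thresholds on $d^2(i,j)$; your version absorbs zero-payment entrants automatically, so you never need to check that the new $DC$ clients are exactly $A'\cup IC'$.
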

\begin{lemma}
    For every facility $i$,
    $$\sum_{k\in D}[\alpha_k^* -\Gamma d^2(i, k)]^+ \leq \hat f.$$
\label{lem:dual-feasible-log}
\end{lemma}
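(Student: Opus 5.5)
The plan is to mirror the proof of \Cref{lem:dual-feasibility-original}, but to extract the per-client inequalities from the fourth (dual-feasibility) bullet of \Cref{def:openable}. In the continuous analysis they came from the fact that $i$ was not yet open.

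Fix $i$ and let $D^*=\{k:\alpha^*_k>\Gamma d^2(i,k)\}$. As before, it suffices to show $\sum_{k\in D^*}\alpha^*_k\le \hat f+\Gamma\sum_{k\in D^*}d^2(i,k)$. Sort $D^*=[s]$ by non-decreasing $\alpha^*$. For each $k\in D^*$, choose the reference moment to be the \emph{last} state at which $k$ was still active, namely just before $k$'s value became final. This is either the opening at which $k$'s $\tau_k$ was fixed (since $k$ is then moved out of $A$), or the phase step that froze it. At that moment define $DC^k$ and $IC^k$ (the latter including active clients) among $[k-1]$, and set $\tau_k=\alpha^*_k$.

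\textbf{Case 1: $k$ left $A$ during an opening.} The fourth bullet of openability, applied with $i_0=i$ and this $k$, gives
\[
\sum_{j\in DC}[\tau_k-(\Gamma-\gamma)(d^2(i,k)+d^2(i,j))-\gamma d^2(i,j)]^+ +\sum_{j\in IC}[\alpha_j-\gamma d^2(i,j)]^+ +\sum_{j\in A}[\tau_j-\gamma d^2(i,j)]^+\le \hat f .
\]
Every $j\ge k$ in $D^*$ is still active or is being raised, with current value at least $\alpha^*_k$ up to a $(1+\eps^2)$ factor (from the sorting). Dropping $[\cdot]^+$ then yields exactly the inequality \eqref{eqn:lambda-j}.

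\textbf{Case 2: $k$ was frozen in a phase step.} Here $\alpha^*_k=(1-\delta)d^2(k,S)$ or the phase simply ended. I would use the dual-feasibility bullet from the last opening performed before that step. If no opening occurred in the phase, I need an invariant instead: I would prove that the fourth-bullet inequality holds for every state at the start of each phase, for all $k\in A$ with $\tau=\alpha$. This follows because monotonicity in time only decreases the DC bids as $S$ grows, and the active values are capped by the non-openability of $i$.

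I expect this invariant to be the \textbf{main obstacle}. After the phase jump from $\theta$ to $(1+\eps^2)\theta$, nothing certifies the new values. I would try to show that if it failed, $i_0$ would be openable using the slack $\delta$: the $(1-\delta)$ scaling in the third bullet versus no scaling in the fourth absorbs the $(1+\eps^2)$ jump, and clients in $B(i_0,\sqrt{\eps\theta})$ absorb the region where the scaling gives no slack. That would contradict the inner loop having terminated.

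Once the per-client inequalities are established, the derivation of \eqref{eqn:bound-of-alpha_i} is no longer needed, since it is built into the fourth bullet. The remaining steps are to define $A_{jj'}$, solve for the $\beta_j$, and reuse \Cref{fact:ub-for-sum-beta} and the bound \eqref{eqn:additional-term-in-sum-of-alpha} verbatim. That argument relies only on $DC^{j'}\cap[j-1]\supseteq DC^j$, which holds because clients never leave $DC$. This concludes the bound with $\Gamma$.
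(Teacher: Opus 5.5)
Your reduction to per-client inequalities followed by the $\beta$-argument is the paper's route. Your proposed phase invariant is the paper's \Cref{clm:nooverbidding}, which is proved by exactly the minimal-counterexample openability argument you sketch. The genuine gap is in Case~1, in the claim that every $j\ge k$ of $D^*$ is still active at $k$'s departure with value at least $\alpha^*_k$ ``from the sorting''. Once values can jump inside a phase, the $\alpha^*$-order says nothing about departure order.

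Concretely, a client $j\in A\cap B(i',\sqrt{\eps\theta})$ raised to $\tau_j>\theta$ goes straight to $DC$. Nothing in your argument prevents it from leaving before a client $k$ of $D^*$ with $\alpha^*_k=\theta<\alpha^*_j$. Then $j>k$ but $j\in DC$ at $k$'s reference moment. The coefficient matrix is no longer lower triangular, and $DC^{j'}\cap[j-1]\supseteq DC^j$ fails: \Cref{fact:ub-for-sum-beta} needs reference moments monotone in your order, not merely that clients never leave $DC$. Symmetrically, if $k$ is raised while some $j\in D^*$ stays active at $\tau_j=\theta$, the fourth bullet only gives $[\theta-\gamma d^2(i,j)]^+$, not $\alpha^*_k-\gamma d^2(i,j)$. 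The $(1+\eps^2)$ loss you concede is not allowed, since the lemma asserts $\le\hat f$ with no slack, so $(\text{UB}_k)$ does not follow exactly from what you wrote.

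The missing idea is the paper's notion of $i$ being \emph{frozen} by an open $i_0$, meaning $d(i,i_0)\le(1+\sqrt{\Gamma\eps}+3\eps)\sqrt{\theta/\Gamma}$.
\begin{itemize}
\item If the opened facility does not freeze $i$, every client raised there satisfies $d^2(i,j)>(1+\eps)^2\theta/\Gamma\ge\alpha^*_j/\Gamma$, so it is not in $D^*$. Hence before freezing, every $D^*$ client leaving in Stage~1 has $\alpha^*=\theta$, and your bound is exact (\Cref{clm:notfrozen}).
\item At the freezing opening, \Cref{clm:frozenstay} shows every client of $D^*$ still active receives its final value $\tau_j$ there: one that stays active ends with $\alpha^*_j\le(1-\delta)d^2(i_0,j)<\Gamma d^2(i,j)$. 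So the fourth bullet of \Cref{def:openable}, applied with facility $i$, is exact for all of them, and no $D^*$ client remains active afterwards.
\end{itemize}
With this, and ties broken by departure time, the $\alpha^*$-order agrees with departure order and your final step goes through.

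Also state your invariant in the $k$-independent form, with directly connected bids $(1-\delta)\gamma d^2(j,S)-\gamma d^2(i_0,j)$, valid at every point of the execution. A client leaving in Stage~2 is no longer in $A$ at the start of the next phase, and the last opening's fourth bullet certifies only pre-increase values. The per-$k$ form then follows from \Cref{lem:apxTriangleInequality3} and $\alpha^*_k\le(1-\delta)d^2(k,S)$.
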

{In more detail, Lemma \ref{lem:approximation-log} shows that the dual values $\frac{\alpha^*_j}{1-\delta}$ can pay the cost of the solution with facility opening cost $\frac{\Gamma}{1-\delta}f$.} Lemma \ref{lem:dual-feasible-log} implies that $\alpha^*/\Gamma$ is a feasible solution for the {dual of the} facility location LP with facility cost $f$. {We conclude as desired that
$$
\frac{\Gamma}{1-\delta}\optlpfl(f)\geq \sum_{j\in D}d^2(j,S)+\frac{\Gamma}{1-\delta}|S|f.
$$}


\subsubsection{Approximation Guarantee}
\label{subsubsec:approximation-log}

\begin{proof}[Proof of Lemma \ref{lem:approximation-log}]

{It is sufficient to show that, at any point of the algorithm, one has}
    \begin{gather}
        \sum_{j\in DC} \alpha_j \geq \sum_{j\in DC}(1-\delta)\gamma d^2(j, S)  + \sum_{i\in S} \hat f\,.
        \label{eq:IH_approx_guarantee}
    \end{gather}
{Indeed, at the end of the algorithm all the clients are in $IC$ or $DC$. For the ones} in $IC$, one trivially has ${\alpha^*_j}\geq (1-\delta)d^2(j, S)$. {For the remaining ones it is sufficient to apply the above inequality and the fact} that $\gamma\geq 1$.

We prove \eqref{eq:IH_approx_guarantee} by induction on the steps of the algorithm. The {in}equality is initially true since $A= D$ (thus $IC\cup DC = \emptyset$) and $S = \emptyset$. {For the inductive step,} we next analyze each {one} of the two stages separately. 

    \paragraph{Stage 1.} Consider what happens when we open a facility $i$, i.e., {we} add it to $S$. 
    Let $(\alpha, S, A, IC, DC, \theta)$ be the state right before opening $i$, and let $A'$ and $IC'$ be the clients that are moved to $DC$ from $A$ and $IC$ respectively by the opening, {i.e.,} $A' = \{ j \in A : \tau_j \geq (1 - \delta)\gamma d^2(i, j) \}$ and $IC'=\{j\in IC:\alpha_j\geq (1-\delta)\gamma d^2(i, j)\}$. {Let also} $X=\{j\in DC:d^2(j, i)<d^2(j,S)\}$.
     The change of cost of the right-hand side of~\eqref{eq:IH_approx_guarantee} is at most 
    \begin{align*}
     \hat f &+   \sum_{j\in A'\cup IC'} (1-\delta) \gamma d^2(i,j) + \sum_{j \in X}  (1 - \delta)\gamma (d^2(i,j) - d^2(j, S))\,.
    \end{align*}
    Since $i$ is {paid for} (the third bullet of \Cref{def:openable}), we also have
    \begin{align*}
      \hat f &\leq \sum_{j\in A'}(\tau_j - (1-\delta)\gamma d^2(i, j)) + \sum_{j\in IC'} (\alpha_j - (1-\delta)\gamma d^2(i, j)) + (1-\delta)\sum_{j\in X}(\gamma d^2(j,S)-\gamma d^2(i,j))\,.
    \end{align*}
{In the above inequality we used the fact that the only positive terms in the left-hand side of the definition of paid for correspond to the sets $A'$, $IC'$, and $X$.}    
    We thus get that the change of cost of the right-hand side is at most $\sum_{j\in A'} \tau_j+\sum_{j\in IC'}\alpha_j$, which is the change of the left-hand side.
    
    \paragraph{Stage 2.}
    No facility is open at this stage, and no client is added to $DC$ either.
\end{proof}

\subsubsection{Dual Feasibility}
\label{subsubsec:dual-feasibility-log}
Here we prove \Cref{lem:dual-feasible-log}. First, we prove the following lemma, showing {intuitively} that no facility is overpaid during the course of our algorithm. 

\begin{claim}[No Over-Bidding]
\label{clm:nooverbidding}
    At any point in the algorithm, for every facility $i_0$,
    $$\sum_{j\in DC}[{(1-\delta)}\gamma d^2(j, S) - \gamma d^2(i_0,j)]^+ + \sum_{j\in A\cup IC}[\alpha_j - \gamma d^2(i_0,j)]^+\leq \hat f.$$
\end{claim}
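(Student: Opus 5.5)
We will prove the claim by induction over the steps of \logadaptalg, showing that the quantity
\[
\Phi(i_0):=\sum_{j\in DC}\big[(1-\delta)\gamma d^2(j,S)-\gamma d^2(i_0,j)\big]^+ + \sum_{j\in A\cup IC}\big[\alpha_j-\gamma d^2(i_0,j)\big]^+
\]
stays at most $\hat f$ for every facility $i_0$. Initially every $\alpha_j=1$ and all distances are at least $1$, so every term is $[1-\gamma d^2]^+=0$. Hence it suffices to check that the two kinds of updates (opening a facility in Stage 1, and the uniform increase in Stage 2) never increase $\Phi(i_0)$ beyond $\hat f$.

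First consider the status changes, with $\alpha$ and $S$ held fixed. Suppose a client $j$ moves from $A\cup IC$ to $DC$. This happens exactly when $\alpha_j\geq(1-\delta)\gamma d^2(j,S)$. Its term then changes from $[\alpha_j-\gamma d^2(i_0,j)]^+$ to $[(1-\delta)\gamma d^2(j,S)-\gamma d^2(i_0,j)]^+$, which is no larger. Moving from $A$ to $IC$ does not change the term at all. Adding a facility to $S$ only decreases $d^2(j,S)$, so it can only decrease the $DC$ terms. Consequently, the only events that can increase $\Phi(i_0)$ are increases of $\alpha_j$ for $j\in A$. These occur in two places: the replacement $\alpha_j\leftarrow\tau_j$ when a facility is opened, and the phase increase in Stage 2.

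For Stage 1 we use the fourth bullet (dual feasibility) of \Cref{def:openable} with facility $i_0$. We must show that this bullet dominates $\Phi(i_0)$ evaluated with $\alpha_j$ replaced by $\tau_j$ on $A$ and with the old $S$. The $A$ and $IC$ sums match exactly. For each $j\in DC$ we need
\[
(1-\delta)\gamma d^2(j,S)\le \tau_k-\Big(2+\tfrac{2}{\gamma-1}\Big)\big(d^2(i_0,k)+d^2(i_0,j)\big)
\]
for some suitable active $k$; the bullet holds for every $k\in A$, so we may choose $k$. This is the main obstacle. It is the discrete analogue of \eqref{eqn:bound-of-alpha_i}. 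We apply \cref{lem:apxTriangleInequality3} to the path from $k$ to $i_0$, to $j$, to the facility $i'\in S$ serving $j$, which gives
\[
d^2(k,S)\le\gamma d^2(j,S)+\Big(2+\tfrac{2}{\gamma-1}\Big)\big(d^2(k,i_0)+d^2(j,i_0)\big).
\]
Since $k$ is active, $\tau_k$ and $\alpha_k$ are close to $(1-\delta)d^2(k,S)$ only from below, so this inequality points the wrong way for a direct comparison. I would therefore split into cases. If no client of $A$ is near $i_0$, each $DC$ term with positive value is dominated directly. Otherwise I would choose $k$ to be the active client with the largest $\tau_k$, and use $\tau_k\ge\theta$ for every active client. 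The remaining step is to argue that a positive $DC$ term requires $(1-\delta)\gamma d^2(j,S)$ to be at most about $\tau_k$ minus the distance terms. If this fails, I would instead strengthen the induction hypothesis to include the bullet-four expression itself, maintained as an invariant, and derive the claim from it.

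For Stage 2, the new values $\min\{(1+\eps^2)\theta,(1-\delta)d^2(j,S)\}$ must also be controlled. Here we use that the inner loop ended, so no unopened facility is openable. Suppose raising all of $A$ violated the inequality at some $i_0$. Then I would exhibit $\tau$, restricted to clients near $i_0$, that makes some facility openable, which is a contradiction. Clients far from $i_0$ gain at most an $\eps^2\theta$ relative increase, and the $\delta$-slack in the $(1-\delta)\gamma$ terms should absorb that increase.
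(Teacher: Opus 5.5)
\textbf{Gap in Stage~1.} Your Stage~1 argument does not work. The fourth bullet of \Cref{def:openable} cannot bound the no-over-bidding sum, because the inequality between the two runs the other way. The $A$ and $IC$ parts coincide. For each $j\in DC$ served by $i_1\in S$, the bound $\tau_k\le(1-\delta)d^2(k,S)\le(1-\delta)d^2(k,i_1)$ together with \cref{lem:apxTriangleInequality3} gives $\tau_k-\left(2+\frac{2}{\gamma-1}\right)\left(d^2(i_0,k)+d^2(i_0,j)\right)\le(1-\delta)\gamma d^2(j,S)$ for \emph{every} active $k$. So every $DC$ term of bullet four is at most the corresponding term of the claim. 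Bullet four is the weaker statement, and no choice of $k$, case split, or strengthened hypothesis built from it can yield the claim.

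The gap is also strict in general. Take a client $j$ that has just entered $DC$ with $(1-\delta)\gamma d^2(j,S)=\theta$, and a facility $i_0$ with $\gamma d^2(i_0,j)<\theta$ but $\left(2+\frac{2}{\gamma-1}\right)d^2(i_0,j)>\eps^2\theta$. Since $\tau_k\le(1+\eps^2)\theta$, the claim term of $j$ is positive and strictly exceeds its bullet-four term for all $k$.

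\textbf{The missing observation.} Stage~1 needs no openability condition at all. If $\tau_j>\alpha_j=\theta$, then $j\in B(i',\sqrt{\eps\theta})$ for the newly opened $i'$. Hence $(1-\delta)\gamma d^2(j,S\cup\{i'\})\le\gamma\eps\theta\le\theta\le\tau_j$ for small $\eps$, so $j$ moves to $DC$ immediately. Its new term $[(1-\delta)\gamma d^2(j,S\cup\{i'\})-\gamma d^2(i_0,j)]^+$ is at most its old term $[\theta-\gamma d^2(i_0,j)]^+$. By analysing the raise $\alpha_j\leftarrow\tau_j$ separately from the status change, you lost exactly this cancellation. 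With it, Stage~1 never increases the left-hand side.

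\textbf{Gap in Stage~2.} You have the role of bullet four reversed: the claim is what lets you \emph{verify} bullet four, and that happens in Stage~2. Your sketch is silent exactly there. Bullets one to three go through as you indicate: for far clients, $\delta\gamma d^2(i,j)\ge 3\eps\cdot\eps\theta$ absorbs the $\eps^2\theta$ increase. Certifying bullet four, however, needs a minimality device.

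The paper takes the smallest $\tau'\in[\theta,(1+\eps^2)\theta]$ at which the sum, with $\alpha'_j=\min\{\tau',(1-\delta)d^2(j,S)\}$ on $A$, reaches $\hat f$ for some facility $i$. It checks that $i$ is unopened, and sets $\tau_j=\alpha'_j$ near $i$ and $\tau_j=\alpha_j$ elsewhere. Suppose bullet four failed for some $i_0,k$. The inequality above, now used in the right direction, together with $\alpha'_j\ge\tau_j$ on $A$, would make the claim's sum at $i_0$ exceed $\hat f$ already at $\tau'$, contradicting minimality. Without choosing a minimal violation, your Stage~2 contradiction does not close.
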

\begin{proof}
{Let us prove the claim by induction on the steps of the algorithm. The claim is initially true since all clients are in $A$ and the initial value $1$ of the variables $\alpha_j$ makes the left-hand side equal to $0$ (since we can impose w.l.o.g. that the minimum distance of a client to a facility is $1$ by \cref{lem:aspectratio}). For the inductive step, we consider the two stages separately:}

\paragraph{Stage 1:} Suppose that facility $i'$ is open, {and let} $(\tau_j)_{j\in A}$ {be the associated values}. {This implies that} (1) the $\alpha$-values of clients in $B(i', \sqrt{\eps\theta})\cap A$ are potentially increased, and (2) some clients might become direct{ly} connect{ed}. {Let us} see how the constraint is impacted by these changes. First, observe that a client that was directly connected before the opening of $i'$ can only lower its contribution. For a client $j$ that was active or indirectly connected and that now becomes direct{ly} connect{ed}, its contribution lowers from $[\alpha_j-\gamma d^2(i_0, j)]^+$ to $[{(1-\delta)}\gamma d^2(j,S) - \gamma d^2(i_0,j)]^+$, because $S$ includes $i'$.

\paragraph{Stage 2:} Assume towards contradiction that at the end of the phase, if we increase $\alpha_j$ of every $j\in A$ to $\min((1+\eps^2)\theta, (1-\delta)d^2(j,S))$, the claim is violated {by some facility $i$}. We select a "minimal" such counter-example in the following way: let $\tau'\leq (1+\eps^2)\theta$ be the smallest value so that $\alpha_j':=\min(\tau', (1-\delta)d^2(j,S))$ satisfies
$$\sum_{j\in A}[\alpha_j'-\gamma d^2(i, j)]^+ + \sum_{j\in IC}[\alpha_j - \gamma d^2(i,j)]^+ + \sum_{j\in DC}[(1-\delta)\gamma d^2(j,S)-\gamma d^2(i,j)]^+=\hat f,$$
for the considered $i$. We remark that $\tau'\geq \theta$ since the constraint is satisfied at the end of Phase 1. 
{We also observe that $i$ is not yet open. Indeed, otherwise}
there would be no client $j\in A$ for which $\alpha'_j-\gamma d^2(i,j)$ is strictly positive. Hence the increase in $\alpha$-values cannot cause $i$ to violate the constraint.

We now show that this $i$ with $\tau_j=\alpha_j'$ for $j\in A\cap B(i,\sqrt{\eps\theta})$ and $\tau_j=\alpha_j$ for $j\in A\setminus B(i,\sqrt{\eps\theta})$ satisfies the conditions of openability for $i$. In other words, $i$ is an openable facility (that was not yet opened), which contradicts the completion of the first stage. We verify the conditons of openability (\Cref{def:openable}) one-by-one. The first two bullets are satisfied by the definition of $(\tau_j)_{j\in A}$. For the third bullet, we first observe that, for any $j\in A\setminus B(i,\sqrt{\eps\theta})$, one has
\begin{equation}\label{eqn:clm:nooverbidding}
\alpha_j-(1-\delta)\gamma d^2(i,j)\geq (1+\eps^2)\alpha_j - \gamma d^2(i,j)\geq \alpha_j' - \gamma d^2(i,j)\ ,    
\end{equation}
where we use the facts that $d^2(i,j)\geq \eps\theta$, $\alpha_j=\theta$, and $\gamma\geq 1$. Then
\begin{align*}
    &\sum_{j\in A}[\tau_j - (1-\delta)\gamma d^2(i,j)]^+ + \sum_{j\in IC}[\alpha_j - (1-\delta)\gamma d^2(i,j)]^+ \\
    & \;\;\;\; + \sum_{j\in DC}[(1-\delta)\gamma d^2(j,S) - (1-\delta)\gamma d^2(i,j)]^+\\
    = & \sum_{j\in A\cap B(i,\sqrt{\eps\theta})}[\alpha_j' - (1-\delta)\gamma d^2(i,j)]^+ + \sum_{j\in A\setminus B(i,\sqrt{\eps\theta})}[\alpha_j - (1-\delta)\gamma d^2(i,j)]^+\\
     & \;\;\;\; + \sum_{j\in IC}[\alpha_j - (1-\delta)\gamma d^2(i,j)]^+ + \sum_{j\in DC}[(1-\delta)\gamma d^2(j,S)-(1-\delta)\gamma d^2(i,j)]^+\\
    \overset{\eqref{eqn:clm:nooverbidding}}{\geq} & \sum_{j\in A\cap B(i,\sqrt{\eps\theta})}[\alpha_j' - \gamma d^2(i,j)]^+ + \sum_{j\in (A\setminus B(i,\sqrt{\eps\theta}))}[\alpha'_j - \gamma d^2(i,j)]^+\\
    & \;\;\;\; + \sum_{j\in IC}[\alpha_j - \gamma d^2(i,j)]^+ + \sum_{j\in DC}[(1-\delta)\gamma d^2(j,S)-\gamma d^2(i,j)]^+\\
    \geq & \hat f
\end{align*}

It remains to verify the fourth bullet of openability. Suppose toward contradiction that there is a facility $i_0$ and $k\in A$ such that
\begin{align*}
    & \sum_{j\in DC}\left[\tau_k - \left(2+\frac{2}{\gamma-1}\right)\cdot (d^2(i_0,k) + d^2(i_0,j))-\gamma d^2(i_0,j)\right]^+ \\
    & \hspace{3cm}+ \sum_{j\in IC} [\alpha_{j}-\gamma d^2(i_0,j)]^+ + \sum_{j\in A}[\tau_{j}-\gamma d^2(i_0,j)]^+ > \hat f.
\end{align*}

Let $j\in DC$ and $i_1\in S$ such that $j$ connects to $i_1$. By \Cref{lem:apxTriangleInequality3},
\begin{equation}\label{eqn:clm:nooverbidding2}
\gamma d^2(i_1,j)+ \left(2+\frac{2}{\gamma-1}\right) (d^2(i_0,k) + d^2(i_0,j)) \geq (d(i_1,j)+d(i_0,k) + d(i_0,j))^2\ .  
\end{equation}
Then
\begin{align*}
    &(1-\delta)\gamma d^2(j, S) - \gamma d^2(i_0,j)\\
    &=(1-\delta)\gamma d^2(i_1,j) - \gamma d^2(i_0,j)\\
    & \overset{\eqref{eqn:clm:nooverbidding2}}{\geq} (1-\delta)\left((d(i_1,j) + d(i_0,k) + d(i_0,j))^2 - \left(2+\frac{2}{\gamma-1}\right)\cdot (d^2(i_0,k) + d^2(i_0,j))\right)- \gamma d^2(i_0,j)\\
    &\geq (1-\delta)d^2(i_1,k) - (1-\delta)\left(2+\frac{2}{\gamma-1}\right)\cdot (d^2(i_0,k) + d^2(i_0,j))- \gamma d^2(i_0,j)\\
    &\geq \tau_k - \left(2+\frac{2}{\gamma-1}\right)\cdot (d^2(i_0,k) + d^2(i_0,j))- \gamma d^2(i_0,j)\ ,
\end{align*}
{where in the last inequality we used} $\tau_k\leq (1-\delta)d^2(k,S)$.

Hence, we have:
\begin{align*}
&\sum_{j\in A}[\alpha_{j}'-\gamma d^2(i_0,j)]^+ + \sum_{j\in IC}[\alpha_{j} - \gamma d^2(i_0,j)]^+ + \sum_{j\in DC}[(1-\delta)\gamma d^2(j,S)-\gamma d^2(i_0,j)]^+\\
\geq & \sum_{j\in A}[\tau_{j}-\gamma d^2(i_0,j)]^+ + \sum_{j\in IC}[\alpha_{j} - \gamma d^2(i_0,j)]^+ \\
& \hspace{3cm}+ \sum_{j\in DC}\left[\tau_k - \left(2+\frac{2}{\gamma-1}\right)\cdot (d^2(i_0,k) + d^2(i_0,j))- \gamma d^2(i_0,j)\right]^+\\
> & \hat f
\end{align*}
which contradicts the minimality of $\tau'$.
\end{proof}

Let $(\alpha_j^*)_j$ be the final vector of $\alpha$-values. We say that {a} facility $i$ is frozen {(by $i_0$)} if there exists some open facility $i_0$ such that $d(i,i_0)\leq (1+\sqrt{\Gamma\eps}+3\eps)\sqrt{\theta/\Gamma}$. {Notice that a facility $i$ can be frozen by $i$ itself.} {We next focus on a specific facility $i$, with the goal of proving Lemma \ref{lem:dual-feasible-log}.}
Let $D^*=\{j:\alpha_j^*>\Gamma d^2(i,j)\}$.
Similarly to the proof of \Cref{lem:dual-feasibility-original}, we {define the set $D^*$ for $i$ and} assume w.l.o.g. that $D^*=[s]$
and that {$\alpha^*_j\leq \alpha^*_{j+1}$ for $j=1,\ldots,s-1$.}

Next, for each client $j\in [s]$:
\begin{itemize}
    \item If $j$ becomes inactive strictly before $i$ is frozen, we define $DC^j$ as the set of clients in $[j-1]$ that are directly connected at the beginning of the stage $j$ becomes inactive, $IC^j = [j-1]\setminus DC^j$, and $S^j$ as the set of opened facilities at the beginning of the stage $j$ becomes inactive.
    \item Else, we define $DC^j$ as the set of clients in $[j-1]$ that are directly connected right before $j$ becomes inactive, $IC^j = [j-1] \setminus DC^j$, and $S^j$ as the set of opened facilities right before $j$ becomes inactive.
\end{itemize}

\begin{claim}
\label{clm:notfrozen}
    For any $k\in D^*$ that becomes inactive strictly before $i$ becomes frozen, 
    \begin{align*}
    & \sum_{j\in {DC^k}}\left(\alpha_k^*-\left(2+\frac{2}{\gamma-1}\right)(d^2(i,k)+d^2(i,j))-\gamma d^2(i,j)\right)\\
    & \hspace{2cm}+\sum_{j\in {IC^k}}(\alpha_j^*-\gamma d^2(i,j)) 
    + \sum_{{k\leq j\leq s}}(\alpha_k^*-\gamma d^2(i,j))\leq \hat f.
    \end{align*}
\end{claim}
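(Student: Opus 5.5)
The plan is to apply \Cref{clm:nooverbidding} to facility $i$ at the beginning of the stage (phase) in which $k$ becomes inactive, after $k$'s $\alpha$-value has been raised to its final value. Each of the three groups of terms in the claim is then bounded by the corresponding term in that no-over-bidding inequality, with the $DC^k$ terms handled through \Cref{lem:apxTriangleInequality3} exactly as in the proof of \Cref{lem:dual-feasibility-original}.

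Fix the phase with parameter $\theta$ in which $k$ becomes inactive. Since $i$ is not frozen at the start of this stage, $i$ is unopened. Moreover, every open facility $i_0$ satisfies $d(i,i_0)> (1+\sqrt{\Gamma\eps}+3\eps)\sqrt{\theta/\Gamma}$.

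\textbf{Step 1: clients $j\ge k$.} I want each of them to contribute at least $\alpha_k^*-\gamma d^2(i,j)$. For $j\ge k$ we have $\alpha_j^*\ge\alpha_k^*$ and $j\in D^*$. If $j$ is still active at the start of the stage, its current value is $\theta$. One has $\alpha_k^*\le (1+\eps^2)\theta$, or slightly more if a ball-increase occurs; the frozen condition is meant to rule out $k$ being inside $B(i_0,\sqrt{\eps\theta})$ for a facility near $i$.

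The cleanest route is to apply \Cref{clm:nooverbidding} to the hypothetical state in which all active clients are raised to $\min\{\alpha_k^*,(1-\delta)d^2(j,S)\}$, as in the minimal counterexample argument of Stage 2. Since the claim bounds the sum from above by $\hat f$, the route is to show that this hypothetical state is the one reached by the algorithm up to when $k$ stops. Alternatively, I would use the same openability-contradiction argument of Stage 2. If $j\ge k$ became inactive before this stage, then $\alpha_j^*$ is already final and $\ge\alpha_k^*$, and its term is $[\alpha_j^*-\gamma d^2(i,j)]^+$ if it is in $IC$. If it is in $DC$, it should instead be treated through the $DC$ bound below with $\alpha_k^*$ in place of $\alpha_j^*$; since $\alpha_j^*\ge\alpha_k^*$ and $j\ge k$, I would order the argument so that such $j$ are not in $DC$ yet. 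If needed, I would bound them by the same triangle argument.

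\textbf{Step 2: clients in $IC^k$.} These contribute $[\alpha_j-\gamma d^2(i,j)]^+$ with current $\alpha_j\le\alpha_j^*$. The gap $\alpha^*_j-\alpha_j$ is the main obstacle. Indices $j<k$ have $\alpha^*_j\le \alpha^*_k$, and the ones still active are raised along with $k$ in the hypothetical raise, so their value is $\min\{\alpha_k^*,(1-\delta)d^2(j,S)\}$. I expect this to equal $\alpha_j^*$ up to the $(1-\delta)$ and $(1+\eps^2)$ slack, which is absorbed by the freezing margin.

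\textbf{Step 3: clients in $DC^k$.} For $j\in DC^k$ connected to $i_1\in S^k$, I would use $\tau_k\le(1-\delta)d^2(k,S)$ (here $\alpha_k^*\le(1-\delta)d^2(k,S^k)$ because $k$ is still active at the start of the stage). I then repeat the chain \eqref{eqn:clm:nooverbidding2}: $(1-\delta)\gamma d^2(j,S)-\gamma d^2(i,j)\ge \alpha_k^*-(2+\frac{2}{\gamma-1})(d^2(i,k)+d^2(i,j))-\gamma d^2(i,j)$.

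Summing the three groups and dropping the $[\cdot]^+$ gives the claim. The delicate point is Step 1/2: showing that $k$'s final value is realized within the same stage without violating \Cref{clm:nooverbidding}, which is where the non-frozen hypothesis is used. Concretely, it ensures that no ball-increase near $i$ pushes $\alpha_k^*$ above what the no-over-bidding bound certifies.
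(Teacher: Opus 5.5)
Your Step 3 is exactly what the paper does: the $DC^k$ terms are handled via \Cref{lem:apxTriangleInequality3} together with $\alpha_k^*\le(1-\delta)d^2(k,S^k)$. The rest of the proposal has a genuine gap. You never fix a state of the algorithm at which \Cref{clm:nooverbidding} is applied and at which each term of the claim is dominated by the corresponding term. ``The beginning of the stage, after $k$'s value has been raised to its final value'' is not a state of the algorithm. Your hypothetical raise of all active clients to $\min\{\alpha_k^*,(1-\delta)d^2(j,S)\}$ is certified by the minimal-counterexample argument only once Stage~1 of the phase has finished, i.e.\ when no openable facility is left. It says nothing when $k$ is deactivated by an opening in the middle of Stage~1, where an openable facility exists by definition. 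The paper therefore splits into two cases:
\begin{itemize}
\item if $k$ is deactivated by the end-of-phase raise, apply \Cref{clm:nooverbidding} right after that raise (still-active clients then sit at $(1+\eps^2)\theta\ge\alpha_k^*$);
\item if $k$ is deactivated by the opening of some $i'$, apply it right before $i'$ opens.
\end{itemize}

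For the second case, the missing idea is a concrete use of the non-frozen hypothesis. A client $j$ pushed above $\theta$ by the opening of a facility $i'$ that does not freeze $i$ lies in $B(i',\sqrt{\eps\theta})$. So $d(i,j)\ge d(i,i')-\sqrt{\eps\theta}>(1+3\eps)\sqrt{\theta/\Gamma}$, hence $\Gamma d^2(i,j)>(1+\eps^2)\theta\ge\alpha_j^*$ and $j\notin D^*$. Thus no client of $D^*$, not only $k$, is raised before $i$ is frozen. In particular $\alpha_k^*=\theta$, and every active client of $D^*$ has value exactly $\theta$ right before $i'$ opens. This gives the term-by-term domination: for $j\in IC^k$, either $\alpha_j=\alpha_j^*$ or $\alpha_j=\theta=\alpha_k^*\ge\alpha_j^*$; for active $j\ge k$, $\alpha_j=\theta=\alpha_k^*$. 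Your remark that freezing ``rules out $k$ being inside $B(i_0,\sqrt{\eps\theta})$ for a facility near $i$'' gestures at this but has it backwards. The point is that no open facility is near $i$, so every ball-increase happens far from $i$; and you apply it only to $k$, whereas it is needed for all of $D^*$.

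Your Step~2 also runs in the wrong direction. You need $\alpha_j^*\le\alpha_j$ at the chosen moment, not $\alpha_j\le\alpha_j^*$. There is also no slack to absorb anything: the right-hand side is exactly $\hat f$, and the freezing margin only serves to keep raised clients out of $D^*$. The needed inequality in fact holds exactly. In your hypothetical raise, an active $j<k$ gets $\min\{\alpha_k^*,(1-\delta)d^2(j,S)\}\ge\alpha_j^*$, because $\alpha_j^*\le\alpha_k^*$ and, since $S$ only grows, $\alpha_j^*\le(1-\delta)d^2(j,S)$.
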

\begin{proof}
{We distinguish the following cases} depending on the stage {when} $k$ becomes inactive.

\paragraph{Stage 1.} $k$ becomes inactive because of the opening of some ${i'}$. It suffices to handle the case that ${i'}$ does not freeze $i$. The opening of ${i'}$ might have strictly increased the $\alpha$-value of the clients in $B({i'},\sqrt{\eps\theta})$, but no such client $j$ will be in $D^*$, since they immediately become inactive while
\begin{align*}
d^2(i,j) &\geq \left(d(i,{i'}) - d({i'},j)\right)^2\\
&> \left((1+\sqrt{\Gamma\eps}+3\eps)\sqrt{\theta/\Gamma} -\sqrt{\eps\theta}\right)^2\\
&\geq (1+\eps)^2 \theta/\Gamma \\
&\geq \alpha_j^* /\Gamma.
\end{align*}

Applying this argument to every facility open in this phase (which did not freeze $i$), we can conclude that, right before ${i'}$ is open, every $j\in D^*$ has $\alpha_j\leq \theta$ and all active ones have $\alpha_j=\theta$. \Cref{clm:nooverbidding} applied right before ${i'}$ is open ensures that
$$
\sum_{j\in DC^k}[(1-\delta)\gamma d^2(j, S) - \gamma d^2({i'},j)]^+ + \sum_{j\in {IC^k}}[\alpha_j - \gamma d^2({i'},j)]^+ +\sum_{k\leq j\leq s}[\alpha_j - \gamma d^2({i'},j)]^+\leq \hat f.
$$

Consider $j\in DC^k$ that connects to {the} open facility $i_1$. One has
\begin{align*}
    \alpha_k^* &\leq (1-\delta)d^2(i_1,k) \leq (1-\delta)(d(i_1,j) +  d(i,k) + d(i,j))^2\\
    &\overset{\text{Lem. \ref{lem:apxTriangleInequality3}}}{\leq} (1-\delta)\gamma\cdot d^2(i_1,j) + (1-\delta)\left(2+\frac{2}{\gamma-1}\right)\cdot (d^2(i,k) + d^2(i,j))\\
    & {\leq} (1-\delta)\gamma d^2(j,S) + \left(2+\frac{2}{\gamma-1}\right)\cdot (d^2(i,k)+d^2(i,j))
\end{align*}

Hence,
\begin{align*}
& \sum_{j\in {DC^k}}\left(\alpha_k^*-\left(2+\frac{2}{\gamma-1}\right)\left(d^2(i,k)+d^2(i,j)\right)-\gamma d^2(i,j)\right)\\
&\hspace{2cm}+\sum_{j\in {IC^k}}(\alpha_j-\gamma d^2(i,j))
+ \sum_{{k\leq j\leq s}}(\alpha_j-\gamma d^2(i,j))\leq \hat f.
\end{align*}

The claim follows using that for any $j\in IC^k$, $\alpha_j^*{=} \alpha_j$, and that for any $j\in A$, $\alpha_k^*\leq \alpha_j$.

\paragraph{Stage 2.} Suppose that $k$ becomes inactive by the increase of the $\alpha$-values at the end of a phase. Let $A', IC', DC'$, and $S'$ denote the respective sets of clients and open facilities at the end of this phase (immediately after $k$ becomes inactive). Since Stage 2 does not open any facilities, no clients are moved to $DC$, meaning $S' = S^k$ and $DC^k \subseteq DC'$.

By \Cref{clm:nooverbidding}, at the end of this phase, the following bound holds for $i$:
$$ \sum_{j\in DC'}[(1-\delta)\gamma d^2(j, S') - \gamma d^2(i,j)]^+ + \sum_{j\in A'\cup IC'}[\alpha_j - \gamma d^2(i,j)]^+ \leq \hat f. $$

We bound the contribution of each client $j \in D^*$ in the target inequality of the claim by its corresponding term in the above bound. Because $x \leq [x]^+$ for any real $x$, we can drop the positive part brackets for the upper bound. We divide the clients into the three groups:

\begin{itemize}
    \item For $j \in DC^k$: Because $DC^k \subseteq DC'$ and $S^k = S'$, we apply the exact same triangle inequality argument as in Stage 1. This bounds its contribution by $(1-\delta)\gamma d^2(j, S') - \gamma d^2(i,j)$, which corresponds to the first sum of \Cref{clm:nooverbidding}.
    
    \item For $j \in IC^k$: Client $j$ belongs to $A' \cup IC'$ at the end of the phase. If $j$ was already inactive before this phase, its $\alpha$-value is permanently fixed by the algorithm, meaning $\alpha_j^* = \alpha_j$. If $j$ becomes inactive in this exact phase, it is moved to $IC'$, meaning its current value $\alpha_j$ is now its final value $\alpha_j^*$. If $j$ is still active ($j \in A'$), then $\alpha_j = (1+\eps^2)\theta$, and since $j \in [k-1]$, our tie-breaking rule dictates $\alpha_j^* \leq \alpha_k^* \leq (1+\eps^2)\theta = \alpha_j$. In all cases, $\alpha_j^* \leq \alpha_j$, so its contribution is bounded by the corresponding term in the second sum.
    
    \item For $k \leq j \leq s$: These clients become inactive at the same time as $k$ or later. If $j$ remains active ($j \in A'$), its value is exactly $\alpha_j = (1+\eps^2)\theta$; because $k$ became inactive in this phase, we know $\alpha_k^* \leq (1+\eps^2)\theta = \alpha_j$. Alternatively, if $j$ becomes inactive in the exact same phase as $k$, it is moved to $IC'$, meaning $\alpha_j$ has reached its final value $\alpha_j^*$. By our tie-breaking rule, since $k \leq j$, we have $\alpha_k^* \leq \alpha_j^* = \alpha_j$. In both scenarios, $\alpha_k^* \leq \alpha_j$. Since these clients belong to $A' \cup IC'$, their contribution is bounded by the second sum.
\end{itemize}

Combining these three cases term-by-term shows that the left-hand side of our desired inequality is bounded by the left-hand side of \Cref{clm:nooverbidding}, completing the proof.
\end{proof}

\begin{claim}
\label{clm:frozenstay}
    Assume that $i$ is frozen by $i_0$ {and let} $(\tau_j)_{j\in A}$ {be the values associated to the opening of $i_0$}, $A$ being the set of active clients right before $i_0$ is opened. Then, for every $j\in D^*\cap A$, $\alpha_j^*=\tau_j$.
\end{claim}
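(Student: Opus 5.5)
The plan is to show that every client $j\in D^*\cap A$ is made inactive by the opening of $i_0$ itself. After that, its $\alpha$-value is never touched again, so $\alpha^*_j=\tau_j$. The algorithm only changes $\alpha$-values of active clients: in Stage 1 through the $\tau$'s, and in Stage 2 through the phase increment. Once a client is moved to $IC$ or $DC$ its $\alpha$ stays frozen. So it suffices to prove that right after $i_0$ is opened, $j$ satisfies $\tau_j\geq (1-\delta)d^2(i_0,j)$ (so it moves to $IC$) or $\tau_j\geq(1-\delta)\gamma d^2(j,S\cup\{i_0\})$ (so it moves to $DC$).

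The key step is a distance estimate. Since $j\in D^*$, we have $\Gamma d^2(i,j)<\alpha^*_j$. Since $j$ is active at phase $\theta$ (the phase of opening $i_0$), and $\alpha$-values of active clients never exceed $(1+\eps^2)\theta$ until the next phase, one would like $\alpha_j^*\le(1+\eps^2)\theta$. That is circular, so I would argue by contradiction. Suppose $j$ stays active after $i_0$ opens. Then $\alpha_j$ keeps growing in later phases until $\alpha_j$ reaches $(1-\delta)d^2(j,S)\le(1-\delta)d^2(j,i_0)$. Hence $\alpha_j^*\le(1-\delta)d^2(j,i_0)$ (with $i_0\in S$ from then on). On the other hand, by the triangle inequality and the definition of frozen,
\[
d(j,i_0)\le d(j,i)+d(i,i_0)< \sqrt{\alpha_j^*/\Gamma}+(1+\sqrt{\Gamma\eps}+3\eps)\sqrt{\theta/\Gamma}.
\]
Because $j$ is still active at phase $\theta$, we also have $\alpha^*_j\ge\theta$, which gives $\sqrt{\theta/\Gamma}\le\sqrt{\alpha^*_j/\Gamma}$. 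Therefore $d^2(j,i_0)<(2+\sqrt{\Gamma\eps}+3\eps)^2\alpha^*_j/\Gamma$. Since $\Gamma\approx5.83>4$ and $\eps<1/6$, the factor $(2+O(\sqrt\eps))^2/\Gamma$ is strictly below $1-\delta$ for $\eps$ small. So $(1-\delta)d^2(j,i_0)>\alpha_j^*$ is impossible to combine with $\alpha^*_j\le(1-\delta)d^2(j,i_0)$ being reached only by growth; more precisely, it yields $d^2(j,i_0)<\alpha_j^*/(1-\delta)$, contradicting the stopping rule that $\alpha_j$ never exceeds $(1-\delta)d^2(j,S)$.

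The main obstacle is the quantitative slack. The constant $(2+\sqrt{\Gamma\eps}+3\eps)^2/\Gamma<1-\delta=1-3\eps$ requires $4/\Gamma<1$ with room to spare. This holds since $4/5.83\approx0.69$, but I expect to need $\eps$ below a small absolute constant, and I would check this arithmetic first. A second subtlety is the case $\alpha^*_j$ equal to $\tau_j$ exactly when $j$ moves at the opening. This is harmless, since the move itself is what we want. A third is that $j$ could instead be connected by another facility opened in the same Stage 1 after $i_0$ with $\tau$ not increasing $j$. That is also fine, because any later increase would require $j$ to still be active, and the argument above shows $j$ leaves $A$ at $i_0$'s opening. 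If the direct argument using $\alpha_j^*\ge\theta$ needs care, I would instead use $\tau_j\ge\theta$ and $\alpha_j^*\ge\tau_j$ monotonically.
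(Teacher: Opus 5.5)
Your proof is correct and is essentially the paper's argument. A client of $A$ that is not removed when $i_0$ opens ends with $\alpha_j^*\le(1-\delta)d^2(i_0,j)$, and the triangle inequality $d(j,i_0)\le d(j,i)+d(i,i_0)$ with the freezing radius then contradicts $\alpha_j^*>\Gamma d^2(i,j)$ because $\sqrt{\Gamma}>2$. What you actually need is $(1-\delta)(2+\sqrt{\Gamma\eps}+3\eps)^2<\Gamma$; this holds for $\eps$ small enough but not for all $\eps<1/6$, matching the paper's own ``$\eps$ small enough'' caveat.

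The only difference is bookkeeping. You normalize by $\alpha_j^*$ via $\theta\le\alpha_j^*$. The paper instead splits on $\tau_j$ versus $(1-\delta)d^2(i_0,j)$, normalizes by $d(i_0,j)$ via $\theta=\alpha_j\le\tau_j<(1-\delta)d^2(i_0,j)$, and concludes $j\notin D^*$ directly.
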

\begin{proof}
Consider $j\in A$. First, note that if $\tau_j\geq (1-\delta)d^2(i_0,j)$ then $j$ is removed from $A$ when $i_0$ is opened and so $\alpha_j^*=\tau_j$.

In the other case, when $\alpha_j\leq \tau_j<(1-\delta)d^2(i_0,j)$, we show that $j\not\in D^*$. Consider the (future) time right after $j$ is removed from the active clients. The value {$\alpha_j$ at that time} (which is equal to the final $\alpha_j^*$) cannot be strictly greater than $(1-\delta)d^2(i_0,j)$ whether it is increased in stage 1 or in stage 2, since $i_0\in S$. By the triangle inequality:
$$d(i, j)\geq d(i_0, j) - d(i_0, i) = \left(1-\frac{d(i_0, i)}{\sqrt{\theta}}\cdot \frac{\sqrt{\theta}}{d(i_0, j)}\right)d(i_0, j)$$
where $\theta$ is the value when $i_0$ was open. As $i_0$ freezes $i$, we have $d(i_0, i)\leq (1+\sqrt{\Gamma\eps}+3\eps)\sqrt{\theta/\Gamma}$, and we have $\theta=\alpha_j\leq (1-\delta)d^2(i_0, j)$. Plugging those bounds {in} the above inequality yields,
\begin{align*}
d(i, j) &\geq \left(1-\frac{(1+\sqrt{\Gamma\eps}+3\eps)}{\sqrt{\Gamma}}\cdot\sqrt{1-\delta}\right)d(i_0, j) \\ 
{\Rightarrow}  \quad  
 \sqrt{\Gamma}d(i, j) &\geq \left(\sqrt{\Gamma} - (1+\sqrt{\Gamma\eps}+3\eps)\sqrt{1-\delta}\right)d(i_0, j){> d(i_0,j)}\ ,
\end{align*}
{where the last inequality follows from} $\sqrt{\Gamma}>2$ {and assuming that $\eps$ is small enough}. Thus $\alpha_j^*\leq d^2(i_0, j) < \Gamma d^2(i, j)$, so $j$ cannot be in $D^*$.
\end{proof}

The {last} two claims together with the fourth bullet point of \Cref{def:openable} imply the following fact:

\begin{fact}
\label{fct:bidbound}
    For any $k \in D^*$,
    \begin{align*}
    & \sum_{j\in {DC^k}}\left(\alpha_k^*-\left(2+\frac{2}{\gamma-1}\right)(d^2(i,k)+d^2(i,j))-\gamma d^2(i,j)\right)\\
    &\hspace{2cm}+\sum_{j\in {IC^k}}(\alpha_j^*-\gamma d^2(i,j)) + \sum_{k\leq j{\leq s}}(\alpha_k^*-\gamma d^2(i,j))\leq \hat f.
    \end{align*}
\end{fact}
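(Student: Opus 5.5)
The plan is to split on whether $k\in D^*$ becomes inactive strictly before $i$ is frozen or not.

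\emph{Case 1: $k$ becomes inactive strictly before $i$ is frozen.} This is exactly the setting of \Cref{clm:notfrozen}, which gives the inequality verbatim. Nothing more needs to be done.

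\emph{Case 2: $i$ is already frozen when $k$ becomes inactive.} Let $i_0$ be the first opened facility that freezes $i$. Let $(\tau_j)_{j\in A}$ be the values computed when $i_0$ is opened, with $A$, $IC$, $DC$ denoting the sets right before that opening. The plan is to apply the fourth bullet of \Cref{def:openable} at the moment $i_0$ is opened, with client $k$ and facility $i$ playing the roles of ``$k$'' and ``$i_0$'' in that bullet. Its left-hand side has three sums over $DC$, $IC$ and $A$. Dropping $[\cdot]^+$ (using $x\le [x]^+$) and discarding the clients outside $D^*$ (their bracketed terms are nonnegative) leaves a lower bound by the corresponding sums restricted to $D^*=[s]$.

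The identifications needed are as follows.
\begin{itemize}
\item Since $k$ is still active when $i_0$ is opened (it becomes inactive only after $i$ is frozen), $k\in A\cap D^*$. By \Cref{clm:frozenstay}, $\tau_k=\alpha_k^*$.
\item For $j\in A\cap D^*$ we likewise get $\tau_j=\alpha_j^*$.
\item Clients in $IC$ already have frozen $\alpha$-values, so $\alpha_j=\alpha_j^*$.
\end{itemize}

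Next I match the sets. The relevant configuration is the state right before $k$ becomes inactive, which is either the opening of $i_0$ itself or a later event. Here I must check that $DC^k$, $IC^k$ and $\{k,\dots,s\}$ are dominated term by term.

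A client $j\in[k-1]$ lying in $DC$ at the opening of $i_0$ is in $DC^k$. However, $DC^k$ may be larger than that $DC$. This is the main obstacle. Clients that became directly connected between the opening of $i_0$ and the inactivation of $k$ appear in the $DC^k$ sum of the target inequality with term $\alpha_k^*-(\Gamma-\gamma)(\cdots)-\gamma d^2(i,j)$. In the openability bullet, by contrast, they appear with term $\tau_j-\gamma d^2(i,j)$ in the $A$-sum or $\alpha_j-\gamma d^2(i,j)$ in the $IC$-sum. I would resolve this through the identity $\alpha_j^*=\tau_j$ or $\alpha_j$. For $j<k$ in the ordering we have $\alpha_j^*\le \alpha_k^*$, and the $DC$-type term $\alpha_k^*-(\Gamma-\gamma)(d^2(i,k)+d^2(i,j))-\gamma d^2(i,j)$ is at most $\alpha_k^*-\gamma d^2(i,j)$. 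I therefore expect to need the right direction, namely that such $j$ actually were in $DC$ at the opening of $i_0$.

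The cleanest route I would try first uses the definition of $DC^k$ for the frozen case (``right before $k$ becomes inactive''). I would argue that after $i$ is frozen by $i_0$, any client of $D^*$ still active becomes inactive at the opening of $i_0$ itself. The supporting argument is the one in \Cref{clm:frozenstay}: either $\tau_j\ge(1-\delta)d^2(i_0,j)$, or $j\notin D^*$. Then the inactivation of $k$ happens exactly at the opening of $i_0$. Hence $DC^k=DC\cap[k-1]$, $IC^k=(IC\cup A)\cap[k-1]$, and $\{k,\dots,s\}\subseteq A$, since those clients are still active or just getting $\tau$.

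For $j\in A\cap[k-1]$, the $IC^k$-term $\alpha_j^*-\gamma d^2(i,j)$ equals $\tau_j-\gamma d^2(i,j)$, matching the bullet. For $j\ge k$ in $A$, $\tau_j=\alpha_j^*\ge\alpha_k^*$, so $\alpha_k^*-\gamma d^2(i,j)\le \tau_j-\gamma d^2(i,j)$. The $DC$ terms match exactly with $\tau_k=\alpha_k^*$. Summing gives the fact.
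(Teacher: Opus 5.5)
Your proposal is correct and follows essentially the same route as the paper: \Cref{clm:notfrozen} handles the non-frozen case, and otherwise you apply the fourth openability bullet at the opening of $i_0$ together with \Cref{clm:frozenstay}, which also forces $k$ to leave $A$ exactly at that opening, so that $DC^k=DC\cap[k-1]$. One caveat on $\{k,\dots,s\}\subseteq A$: it needs $D^*$ ordered by removal time from $A$ with ties broken by $\alpha^*$, as the paper states explicitly in the robust section, because under a pure $\alpha^*$-ordering a tied client $j>k$ removed earlier in the same phase could lie in $DC$.
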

\begin{proof}
    Consider the point in the execution when $k$ becomes inactive. If this occurs strictly before $i$ is frozen, then \Cref{clm:notfrozen} {directly implies the claim}.
    

    Else, $k$ becomes inactive when $i$ is frozen by the opening of some facility $i_0$: {let} $(\tau_j)_{j\in A^k}$ {be the values associated with the opening of $i_0$}. By the fourth bullet of \Cref{def:openable}, we have
    \begin{align*}    
    & \sum_{j\in DC^k}\left[\tau_k - \left(2+\frac{2}{\gamma-1}\right)\cdot (d^2(i,k) + d^2(i,j))-\gamma d^2(i,j)\right]^+\\
    &\hspace{2cm}+ \sum_{j\in IC^k} \left[\alpha_{j}-\gamma d^2(i,j)\right]^+ + \sum_{{k\leq j\leq s}}\left[\tau_{j}-\gamma d^2(i,j)\right]^+\leq \hat f.
    \end{align*}

    By \Cref{clm:frozenstay}, $\alpha_k^*= \tau_k$ and for every $j\in D^*$ {that is active at the considered time}, $\alpha_j^*= \tau_j$. Furthermore, for every $j\in IC^k$, $\alpha_j^*=\alpha_j$. {The claim follows} using that, for any $j\in D^*$ {that is active at the considered time}, $j \geq k$ implies $\alpha_j^*\geq \alpha_k^*$.
\end{proof}

We now are ready to prove Lemma \ref{lem:dual-feasible-log}. {The following proof is almost identical to the final part of the proof of Lemma \ref{lem:dual-feasibility-original}, the main difference being that we use \Cref{fct:bidbound} here.}

\begin{proof}[Proof of Lemma \ref{lem:dual-feasible-log}]

By definition of $D^*$, {the claim} is equivalent to
$$
\sum_{k\in D^*}\alpha_k^*\leq \hat f + \Gamma\cdot \sum_{k\in D^*}d^2(i, k).
$$

\Cref{fct:bidbound} implies that for any $k\in D^*$,
\begin{equation}
\label{eq-cnt}
\tag{$\text{UB}_k$}
(s-k+1 + |{DC^j}|)\cdot \alpha_k^*
+ \sum_{j\in {IC^k}}\alpha_{{j}}^*
\leq \hat f
+ \gamma\cdot \sum_{j\in [s]}d^2(i, j)
+ \left(2+\frac{2}{\gamma-1}\right)\cdot
\sum_{j\in {DC^k}} (d^2(i, k) + d^2(i, j))
\end{equation}

{The rest of the proof is identical to the final part of the proof of \Cref{lem:dual-feasibility-original}, hence we omit it}.
\end{proof}

\subsection{Walking Between Two Solutions: Setup} 
\label{subsec:robust_analysis}

Given the description of the \logadaptalg, we present our final algorithm \mergealg, resulting in a solution that opens $k+O(\log n{/\eps^2})$ centers by maintaining two partial solutions that lead to opening at least $k$ and at most $k$ centers respectively and {\em gradually merge them} into a single solution that opens exactly $k+O(\log n{/\eps^2})$ facilities, exploiting the $O(\log n{/\eps^2})$ adaptivity of \logadaptalg. 

Before detailing the merging procedure, which we call \mergealg, it is helpful to observe that we may generalize the analysis to make it more ``robust''. We introduce this more robust analysis in this subsection, where we also define the notation that will be used later to describe \mergealg. 

Starting from the two solutions obtained in the standard way whose facility costs differ by at most $\eta := 2^{-n}$, 
the two solutions we maintain will have an exponentially small difference $\eta$ in one of their parameters. 
(This parameter will be either the facility cost $f$ or some distance related to a {\em free facility} that will be introduced later.)
For this reason, when we merge the two solutions, it may occur that some facilities in one solution are almost completely paid for, rather than fully paid for. To address this, we generalize the definition of "openable" to "$\eta$-openable," where the only difference is in the third condition: we now require each facility that is opened to be paid $\hat{f} - \eta \gamma n$ instead of $\hat{f}$ (as in Definition~\ref{def:openable}). 
We also note that being openable only depends on the time $\theta=(1+\eps^2)^p$ and the sequence $\calH=(H_1, \dots , H_p)$
of the sets of facilities open by each stage 1 of the algorithm that was executed so far. Indeed, letting $S=\cup_{i\leq p}H_i$, we can obtain $A$, $IC$ and $DC$ as follows: $A=\{j\in D: \theta < (1-\delta)d^2(j, S)\}$, $IC=\{j\in D\setminus A:\theta<(1-\delta)\gamma d^2(j, S)\}$ and $DC=D\setminus (IC\cup A)$. Furthermore, the $\alpha$-values of the indirectly connected clients only depend on $\calH$.
This allows us to simplify notation and we have the following generalization of Definition~\ref{def:openable}.

\begin{definition}
    Consider the time $\theta$ and let $\calH$ be the sequence of sets of opened facilities. The variables $S$, $A$, $IC$, $DC$ and $\alpha$ are deduced from them. We say that a facility $i\in F$ is \emph{$\eta$-openable} (with respect to $\theta$ and $\calH$) if there are $(\tau_j)_{j\in A}$ of active clients that satisfy the following conditions.
    \begin{itemize}
        \item Only nearby clients are increased:
        \begin{gather*}
             \tau_j = \alpha_j \qquad \mbox{for every $j \in A - B(i, \sqrt{\eps \theta})$.}
        \end{gather*}  
        \item Nearby clients are only increased slightly:  
        \begin{gather*}
            \alpha_j \leq \tau_j \leq \min\{(1-\delta) d^2(j,S), (1+\eps^2)\theta\} \qquad \mbox{for every $j\in A \cap B(i, \sqrt{\eps \theta})$.}
        \end{gather*}
        \item Facility $i$ is paid for (up to the error parameter $\eta$): 
\begin{align*}
            & \sum_{j\in A}[\tau_j - (1-\delta)\gamma d^2(i, j)]^+ + \sum_{j\in IC}[\alpha_j - (1-\delta)\gamma d^2(i,j)]^+ \\
            &\quad +(1-\delta)\sum_{j\in DC}[\gamma d^2(j,S)-\gamma d^2(i,j)]^+ \geq \hat f - n \gamma \eta.
\end{align*}
        \item Dual feasibility: for every facility $i_0$ and $k \in A$,
        \begin{align*}
            & \sum_{j\in DC}\left[\tau_k - \left(2+\frac{2}{\gamma-1}\right)(d^2(i_0,k) + d^2(i_0,j))-\gamma d^2(i_0,j)\right]^+ + \sum_{j\in IC} [\alpha_{j}-\gamma d^2(i_0,j)]^+ \\
            &\quad + \sum_{j\in A}[\tau_{j}-\gamma d^2(i_0,j)]^+\leq \hat f.
        \end{align*}
    \end{itemize}
\label{def:robust_openable}
\end{definition}

We also introduce the concept of \emph{free facilities}. During the process of merging the two solutions, we introduce free facilities, whose opening costs are not necessarily paid. While we will introduce at most $O(\log n{/\eps^2})$ such free facilities, we do not restrict the number of free copies a regular facility may have. Let $\Sf \subseteq S$ denote the multiset of free facilities that have been opened.

Moreover, for each free copy $\tilde{i} \in \Sf$ of $i \in F$, we define a parameter $u(\tilde{i})$, which represents the distance between the freely opened facility $\tilde{i}$ and its original copy $i$. So for any point $x$ in the metric space (either a facility or a client), the distance is given by $d(\tilde{i}, x) = \sqrt{u(\tilde{i}) + d^2(i, x)}$. {We remark that the space remains metric after this extension. In particular, for any two points $x_1,x_2$ other than $\tilde{i}$, one has $d(x_1,x_2)\leq d(\tilde{i},x_1)+d(\tilde{i},x_2)$.}

We further introduce the following terminology: as mentioned, we refer to the facilities in $\Sf$ as \emph{free facilities}, the facilities in $F$ and $\Sr = S - \Sf$ as \emph{regular} facilities, and when we simply mention "facilities," we are referring to their union.

Finally, we will use the letters $i$ and $h$ to denote facilities. Specifically, we use $i \in F$ to denote a regular facility and use $\tilde{i}$ to denote a free copy of $i$. When referring to a facility that can be either free or regular, we use the letter $h$.

We now introduce the definition of valid sequences, which aims to capture the sequence of facilities that are opened during the while-loop of stage 1 in \logadaptalg.
Let us say $\calH'=(H_1',\dots,H_p')$ is a super-sequence of $\calH=(H_1,\dots,H_p)$ and denote it $\calH'\supseteq \calH$ if they have a common prefix of length $p-1$, i.e. for all $i<p$, $H_i'=H_i$, and $H_p'$ is a superset of $H_p$. Given two sequences $\calH_1=(H_1,\dots,H_r)$ and $\calH_2=(H_{r+1}, \dots,H_l)$, we define $\calH_1\oplus\calH_2=(H_1,\dots,H_l)$ as the concatenation of the two sequences.
\begin{definition} [$\eta$-valid sequence] 
    Consider the time $\theta$ and let $\calH$ be the sequence of sets of opened facilities. 
A sequence $\langle h_1, \ldots, h_\ell\rangle$ of facilities is $\eta$-valid (with respect to $\theta$ and $\calH$), if the following conditions hold.
\begin{itemize}
    \item Each $h_t$ is either free or $\eta$-openable with respect to the time $\theta$ and {some} super-sequence $\calH' \supseteq \calH \oplus (\langle h_1, \ldots, h_{t-1}\rangle)$ of the sequence of sets of opened facilities, where $\oplus$ is just the concatenation.
    \item  The sequence is maximal, i.e., there is no openable facility with respect to the time $\theta$ and  $\calH \oplus (\langle h_1, \ldots, h_\ell \rangle)$.
\end{itemize}
\label{def:valid_sequence}
\end{definition}
{The reader might wonder how it is possible to efficiently check whether a sequence is $\eta$-valid. We will not need to do that since during the execution of our algorithm a convenient $\calH'$ will be always available when needed.}


We remark that \logadaptalg generates $0$-valid sequences in each phase. Specifically, it generates sequences $\calH = (H_1, H_2, \ldots, H_L)$, where each $H_p$ is $0$-valid (with respect to $\theta = (1+\eps^2)^{p-1}$ and {$(H_1,\ldots,H_{p-1})$})
and contains only regular facilities. The set $S$ of opened facilities is such that $A = \emptyset$ at the final time $\theta$.

For a general sequence of {sets of facilities} $\calH = (H_1, H_2, \ldots, H_L)$, let us say it consists of $\eta$-valid sequences if, for every $p \in [L]$, $H_p$ is an $\eta$-valid sequence with respect to $\theta = (1+\eps^2)^{p - 1}$ and 
{$(H_1,\ldots,H_{p-1})$}. 
Furthermore, we say that $\calH$ is a \emph{solution} if the set $S=\cup_{q\leq p}H_q$ of opened facilities ensures that $A = \emptyset$ at the end, meaning that at the time $\theta = (1+\eps^2)^L$ of the $(L+1)$st phase, the set of clients $\{ j \mid \theta \geq (1-\delta) d^2(j,S) \}$ equals all clients in $D$. 
Otherwise, we say that $\calH$ is a partial solution.

The results in \Cref{subsec:log_adaptivity_analysis}, namely \Cref{lem:approximation-log} and \Cref{lem:dual-feasible-log},
can thus be stated as if $\calH$ is a solution of $0$-valid sequences consisting of regular facilities then the set $S$ of opened facilities satisfies
\[
    \sum_{j\in D} d^2(j, S) \leq \frac{\Gamma}{1-\delta} \cdot \left( \optlpfl(f) - f \cdot |S| \right)\,.
\]
We now allow {to open facilities that are paid up to an amount} $\hat f- \eta\gamma n$ instead of $\hat f$, and for free facilities {that can be opened without paying them}. This leads to the following generalization.
\begin{theorem}
Consider a solution $\calH$ of $\eta$-valid sequences and let $S$ be the opened facilities. Further, let $\Sf$ and  $\Sr = S - \Sf$ be the free {facilities} and regular facilities, respectively. Then  
\[
    \sum_{j\in D} d^2(j, S) \leq \frac{\Gamma}{1-\delta} \cdot \left( \optlpfl(f) - (f-\eta n) \cdot |\Sr| \right)\,.
\]
\label{thm:robust}
\end{theorem}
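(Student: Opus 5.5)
The plan is to rerun the two-part analysis of \Cref{subsec:log_adaptivity_analysis}: an approximation-guarantee part (the analogue of \Cref{lem:approximation-log}) and a dual-feasibility part (the analogue of \Cref{lem:dual-feasible-log}). The changes are bookkeeping for the $\eta$ slack and for free facilities. Throughout, $\alpha^*$ denotes the final $\alpha$-values determined by $\calH$, with $\tau_j$ written into $\alpha_j$ whenever a regular facility is opened through $\eta$-openability.

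\textbf{Approximation guarantee.} I will prove by induction over the openings and phases the invariant
\[
\sum_{j\in DC}\alpha_j \geq \sum_{j\in DC}(1-\delta)\gamma d^2(j,S) + |\Sr|(\hat f - n\gamma\eta).
\]
Opening a regular $\eta$-openable facility is handled exactly as in Stage 1 of \Cref{lem:approximation-log}. The only change is that the third bullet of \Cref{def:robust_openable} gives $\hat f - n\gamma\eta$ in place of $\hat f$.

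Opening a free facility $\tilde i$ adds nothing to the facility term. Clients that move to $DC$ either switch from $A\cup IC$, or are already in $DC$ and reduce $d^2(j,S)$. Take the sets $A',IC',X$ defined as before. A client $j$ in $A'\cup IC'$ satisfies $\alpha_j \ge (1-\delta)\gamma d^2(j,S\cup\{\tilde i\})$, so its left-hand gain covers its right-hand increase. A client in $X$ only decreases the right-hand side. Hence the invariant is preserved with no payment needed.

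Stage 2 moves nobody into $DC$. At the end, every client is in $IC\cup DC$, and $\gamma\ge1$ gives
\[
\sum_j\alpha^*_j \ge (1-\delta)\sum_j d^2(j,S) + |\Sr|(\hat f - n\gamma\eta).
\]

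\textbf{Dual feasibility.} Next I will show that $\sum_k[\alpha^*_k-\Gamma d^2(i,k)]^+\le \hat f$ for every regular $i\in F$, reusing \Cref{clm:nooverbidding}, \Cref{clm:notfrozen}, \Cref{clm:frozenstay} and \Cref{fct:bidbound}.

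\Cref{clm:nooverbidding} is proved by induction. Opening a free facility only lowers the contributions of clients in $DC$ and moves some clients into $DC$ with smaller bids. The Stage 2 contradiction argument uses maximality of the valid sequence (second bullet of \Cref{def:valid_sequence}): a minimal violating $\tau'$ would make $i$ openable, which maximality forbids.

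Where the proofs use the triangle inequality through a connected facility $i_1$, that facility may now be a free copy $\tilde i_1$. Since the extended space is still metric, \Cref{lem:apxTriangleInequality3} still applies to $d(\tilde i_1,k)\le d(\tilde i_1,j)+d(i,j)+d(i,k)$. Frozenness is defined only via regular openings with their $\tau$'s. A free opening changes no $\alpha$, so becoming inactive through a free opening falls under the Stage 1 case of \Cref{clm:notfrozen} with the claim applied just before.

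The algebra after \Cref{fct:bidbound} (the $\beta$ coefficients) is unchanged. So $\alpha^*/\Gamma$ is feasible for $DP_{FL}(f)$, giving $\sum_j\alpha^*_j\le\Gamma\,\optlpfl(f)$.

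\textbf{Combining.} The two parts give
\[
(1-\delta)\sum_j d^2(j,S)\le \Gamma\,\optlpfl(f)-|\Sr|(\Gamma f-n\gamma\eta).
\]
Since $\gamma\le\Gamma$, we have $\Gamma f - n\gamma\eta \ge \Gamma(f-n\eta)$. Dividing by $1-\delta$ gives the theorem.

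\textbf{Main obstacle.} The delicate step is re-verifying \Cref{clm:nooverbidding} and the frozen-facility claims when free facilities are interleaved. The $\tau$'s in the fourth bullet of $\eta$-openability are certified with respect to a super-sequence $\calH'$, not the actual $\calH$. I would first check that extra sets in $\calH'$ can only lower the bid terms, so feasibility certified under $\calH'$ transfers to $\calH$.
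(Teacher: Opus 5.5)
Your two-part structure, the treatment of free openings in the approximation invariant, your choice to count only regular openings as freezing, and the final combination via $\Gamma f-n\gamma\eta\ge\Gamma(f-n\eta)$ are all fine and match the paper in substance. The gap is in the step you yourself flag as the main obstacle.

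You plan to show that the extra facilities of the super-sequence $\calH'$ only lower the bid terms, and to conclude that the fourth bullet of \Cref{def:robust_openable}, certified for $\calH'$, transfers to the actual state of $\calH$. The first half is true, but it gives the opposite implication. The fourth bullet is an upper bound by $\hat f$, so knowing it for the smaller $\calH'$-terms says nothing about the $\calH$-terms. Concretely, let $A'\subseteq A$ be the clients active just before the freezing facility $i_0$ opens in $\calH'$ and in $\calH$, respectively. Take $j\in A\setminus A'$ with $j\in DC'$, and let $h$ be its closest facility in the set $S'$ opened in $\calH'$. Fix any $k\in A'$. In $\calH$ the term of $j$ is $[\tau_j-\gamma d^2(i,j)]^+\ge[\theta-\gamma d^2(i,j)]^+$. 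In $\calH'$ it is $[\tau_k-(2+\frac{2}{\gamma-1})(d^2(i,k)+d^2(i,j))-\gamma d^2(i,j)]^+$. By $\tau_k\le(1-\delta)d^2(k,h)$ and \Cref{lem:apxTriangleInequality3}, this is at most $[(1-\delta)\gamma d^2(j,S')-\gamma d^2(i,j)]^+\le[\theta-\gamma d^2(i,j)]^+$. So nothing forces the fourth bullet to hold in the actual state, and the frozen case of \Cref{fct:bidbound} cannot be rerun with $\calH$'s classification of $[k-1]$. Your monotonicity does go the right way for the third bullet, which is a lower bound. That transfer is what your approximation part tacitly uses, since \emph{paid for} is also certified only for $\calH'$.

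The missing idea, used in the paper, is not to transfer but to switch to $\calH'$'s frame for the raised clients. By \Cref{clm:robustfrozenstay}, the clients of $D^*$ that become inactive when $i$ is frozen are exactly $D^*\cap A$. Order them with $A\setminus A'$ before $A'$. This is only a tie-break: clients of $A\setminus A'$ are never raised and keep $\alpha^*=\theta$, while $\alpha^*_j=\tau_j\ge\theta$ on $A'$. For $k\in A'$ set $DC^k=DC'\cap[k-1]$ and $IC^k=(IC'\cup A')\cap[k-1]$. The fourth bullet for $\calH'$ then gives $(\mathrm{UB}_k)$ directly, once you check two things: clients of $IC'$ have the same $\alpha$-value in both executions, and every $j\ge k$ lies in $A'$ with $\alpha_j^*=\tau_j\ge\alpha_k^*$. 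The clients $k\in A\setminus A'$ are active in $\calH$ just before $i_0$ opens and have $\alpha^*_k=\theta$. For them it is safest to rerun the no-over-bidding argument of \Cref{clm:notfrozen} in the actual execution just before $i_0$ opens, with $\calH$'s own classification. This keeps the triangle-inequality step available, since $\theta<(1-\delta)d^2(k,S)$. The $\beta$-algebra after $(\mathrm{UB}_k)$ still applies. It only uses that $DC^k$ and $IC^k$ partition $[k-1]$ and that the sets $DC^k$ are nested along the order. Nestedness holds because $DC'$ contains every set of directly connected clients arising in $\calH$ before $i_0$ opens.
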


In the next subsection we will merge two solutions into one with $|\Sr| =k$ and $|\Sf| =O(\log n/\eps^2)$. To our help, we will have two basic routines $\completesol$ and $\completesequence$:
\begin{itemize}
    \item $\completesol_{f,u}(\calH)$ takes a partial solution $\calH$ as input and returns a solution by running \logadaptalg {from the while-loop of stage $1$, starting with the open facilities in $\calH$ and} using the opening cost $f$ and the distances $u$ for free facilities already opened.
    \item $\completesequence_{f,u}(\calH = (H_1, \ldots, H_{p-1}), H_p = \langle h_1, \ldots, h_\ell\rangle )$ where $\calH$ is  partial solution and $H_p$ is an $\eta$-valid sequence except that it may not be maximal. $\completesequence$ then {completes $H_p$ into a maximal sequence by adding the open facilities obtained by running} the while-loop of stage $1$ in \logadaptalg starting with $H_p$ (with respect to opening cost $f$ and distances $u$ for the free facilities already opened).
\end{itemize}
We note that all facilities added by \completesol and \completesequence are $0$-openable and regular. Additionally, for intuition, observe that, with the above notation, the set of facilities opened by $\completesol_{f,u}(\emptyset)$ is equivalent to that opened by \logadaptalg. (There is no free facility so $u$ is meaningless here.) 
To simplify notation, we only use $\completesequence$ when the partial solution $\calH$ is clear from the context, so we simply write $\completesequence_{(f,u)}(\langle h_1, \ldots, h_\ell \rangle)$. Furthermore, we omit $(f,u)$ when the parameters are clear from the context.

\subsection{\mergealg}
\label{subsec:merging_algorithm}

We now present our procedure \mergealg that {\em walks} between two solutions. We will maintain two solutions $(\calH, f, u)$ and $(\calH', f', u')$ whose numbers of open regular facilities {\em sandwich $k$}. (Formally, let us say $a$ and $b$ sandwich $k$ if $a<k<b$ or $b<k<a$.) We will gradually make them closer until one of them opens exactly $k$ regular facilities. 
We start from the two initial solutions obtained using the standard method. Note that the minimum nonzero distance is $1$ and the maximum squared distance is $\Maxdist = \poly(n)$, which implies that $\sopt \leq n\Maxdist$. 

\paragraph{Initialization.} Binary search on $f$ in the interval $[1/n^2, 4n\Maxdist]$ and run $\completesol_{f, u}(\emptyset)$ (with empty $u$).
When $f = 1/n^2$, every client $j \in D$ satisfies $d(j, S) = 0$ if $d(j, F) = 0$ and $d^2(j, S) \leq d^2(j, F) + 2/n^2$ otherwise, as each client can open its closest facility by itself when $\alpha _j = d^2(j, F) + 2/n^2$. If this solution opens at most $k$ centers, then it is already an $(1+2/n)$-approximation. On the other hand, when $f = 4n\Maxdist$, the algorithm will open exactly one facility, since the first facility is open when $\theta \geq 2\Maxdist$, so every client becomes immediately inactive when it is open. 
Therefore, the binary search yields two solutions $(\calH, f, u)$ and $(\calH', f', u)$ where $(\calH, f, u)$ opens less than $k$ facilities and $(\calH', f', u)$ opens more than $k$ facilities where ${f'} \leq {f}  \leq {f'} + \eta$.

\paragraph{General Step.}
\mergealg works on phase $p = 1, 2, \dots$ iteratively, and ensures that the two solutions are identical up to the $p$-th phase. Formally, \mergealg takes two solutions $(\calH, f, u)$ and $(\calH', f', u')$ as well as the current phase $p$ as input. The following promises on them will be satisfied whenever we call \mergealg. 
\begin{itemize}
    \item Both consist of $\eta$-valid sequences.
    \item The numbers of open regular facilities by the two solutions sandwich $k$. 
    \item Exactly one parameter, which we call the {\em difference parameter}, differs by at most $\eta$ between the two solutions, i.e, if $f\neq f'$ then $|f-f'| \leq \eta$ and $u(\tilde i) = u'(\tilde i)$ for every $\tilde i \in \Ff$; otherwise if $f=f'$ there is exactly one free facility $\tilde i\in \Ff$ so that $u(\tilde i) \neq u'(\tilde i)$ and $|u(\tilde i) - u'(\tilde i)| \leq \eta$. 
    \item $\calH = (H_1, H_2, \ldots, H_L)$ and $\calH' = (H'_1, H'_2, \ldots, H'_{L'})$ have a common prefix of length $p-1$, i.e., $H_i = H'_i$ for $i=1,2, \ldots, p-1$. Furthermore, the remaining sequences $H_p, \ldots, H_L$ and $H'_p,\ldots, H'_{L'}$ only contain regular facilities and are $0$-valid with respect to the parameters $f, u$ and $f', u'$, respectively. (This implies that $\calH$ and $\calH'$ open the same set of free facilities.)
\end{itemize}

Given the promises, the goal of \mergealg is to 
\begin{itemize}
\item[(1)] either find a solution of $\eta$-valid sequences that opens exactly $k$ regular facilities and $O(\log n{/\eps^2})$ free facilities, or
\item[(2)] produce two new solutions that satisfy the above promises for phase $p + 1$.
\end{itemize}
Since the maximum squared distance between any two points is $\Maxdist \leq \poly(n)$ and $f \leq 4n\Maxdist$, 
before $\theta$ reaches $6M$, at least one facility will open and every client will become inactive. 
Therefore, for some threshold $p^* = O(\log n / \eps^2)$, once $\calH$ and $\calH'$ agree on the first $p^*$ phases, they open exactly the same set of facilities which violates the promise that the numbers of open regular facilities sandwich $k$. Therefore, case (1) must happen for some $p \leq p^*$, which yields a solution of $\eta$-valid sequences that opens exactly $k$ regular facilities and $O(\log n{/\eps^2})$ free facilities. 
Combined with \Cref{thm:robust}, this proves \Cref{thr:mainBicriteria}. Indeed, we then have a solution consisting of open centers $S$ with $|S| = k  + O(\log(n)/\eps^2)$ {and $k$ regular facilities. Then,} by Theorem~\ref{thm:robust},
\begin{align*}
    \sum_{j\in D} d^2(j, S) &\leq \frac{\Gamma}{1-\delta} \cdot \left( \optlpfl(f) - (f-\eta n) \cdot k \right) \leq (\Gamma+O(\eps)) \sopt\,,
\end{align*}
where, for the last inequality, we used that $\eta = 2^{-n}$ and $\sopt\geq 1$ (since any distance is at least $1$). \Cref{thr:mainBicriteria} then follows by scaling $\eps$ appropriately.

In the remainder of this section, we present \mergealg for two solutions and a given phase $p$. It consists of three subroutines, presented in Section~\ref{sec:parameters_identical}~\ref{sec:common_prefix}, and~\ref{sec:extra_facilities} respectively. Each subroutine achieves (1) or (2) above (which ends the whole procedure or lets us move on to the next phase), or sets up the stage for the next subroutine. (The final third subroutine always achieves (1) or (2).)

\subsubsection{Making Parameters Identical}
\label{sec:parameters_identical}
By renaming,  we assume that if $f\neq f'$ is the difference parameter then $f' < f$, and if the difference parameter is $u(\tilde i) \neq u'(\tilde i)$ then $u(\tilde i) < u'(\tilde i)$.

\begin{claim}
    $H'_1, H'_2, \ldots, H'_p$ are $\eta$-valid sequences with respect to parameters $f, u$.
    \label{claim:1ststep_etavalid}
\end{claim}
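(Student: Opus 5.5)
The plan is to use the promises of \mergealg to dispose of the first $p-1$ sequences immediately, and then do a perturbation argument for $H'_p$ alone. Since $H'_q=H_q$ for $q<p$, and $\calH$ consists of $\eta$-valid sequences with respect to $f,u$ by the first promise, each of $H'_1,\ldots,H'_{p-1}$ is already $\eta$-valid with respect to $f,u$. So it remains to show that $H'_p$ is $\eta$-valid with respect to $f,u$, time $\theta=(1+\eps^2)^{p-1}$ and prefix $(H_1,\ldots,H_{p-1})$. By the last promise, $H'_p$ consists only of regular facilities and is $0$-valid with respect to $f',u'$. The proof therefore compares the two parameter settings bullet by bullet in \Cref{def:robust_openable}, and then checks maximality. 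I will split according to which parameter is the difference parameter.

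\emph{Case $f'<f\le f'+\eta$ (and $u=u'$).} The metric, hence $S,A,IC,DC$, the $\alpha$'s and the admissible $\tau$'s, is identical under both settings. Only $\hat f$ changes, and $\hat f'\le\hat f\le\hat f'+\Gamma\eta$. For each $h_t\in H'_p$, I reuse the same witness $\calH'$ and the same $(\tau_j)$.
\begin{itemize}
\item The first two bullets are parameter-free.
\item The payment is at least $\hat f'\ge \hat f-\Gamma\eta\ge\hat f-n\gamma\eta$, using $\Gamma\le n\gamma$ for $n\ge3$.
\item The dual-feasibility bound $\le\hat f'\le\hat f$ only gets easier.
\end{itemize}

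\emph{Case $f=f'$ and $u(\tilde i)<u'(\tilde i)\le u(\tilde i)+\eta$ for one free $\tilde i$ in the common prefix.} Here only squared distances to $\tilde i$ change, each by at most $\eta$, so every $d^2(j,S)$ changes by at most $\eta$. I will first check that the classification of clients into $A/IC/DC$ and the $\alpha$-values are unaffected, or that any change only helps. Strict inequalities in the definitions of the classes, together with $\eta=2^{-n}$ being below the granularity of all relevant quantities (integer distances and geometric $\theta$), should give this; otherwise I would argue the change is monotone in the favourable direction. Given that, the payment condition changes only through the $DC$ term $(1-\delta)\gamma[d^2(j,S)-d^2(i,j)]^+$. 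This term decreases by at most $\gamma\eta$ per client, so the total loss is at most $n\gamma\eta$, which is exactly the slack allowed in the third bullet. The dual-feasibility bullet involves $d^2(j,S)$ only through the cap $\tau_k\le(1-\delta)d^2(k,S)$ and the class memberships, so I expect it to transfer directly.

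\emph{Maximality, the main obstacle.} I must show that no facility is openable with respect to $f,u$ after $H'_p$. The difficulty is that dual feasibility under $f$ has the larger bound $\hat f$, so an openability witness under $f$ is not literally one under $f'$. I would argue by contradiction, mimicking Stage 2 of \Cref{clm:nooverbidding}. Suppose some $i$ is openable under $(f,u)$ with witness $\tau$. Lower the nearby $\tau_j$ uniformly down to the minimal level $\tau'$ at which the payment, evaluated under $(f',u')$, equals $\hat f'$. This is possible because the payment under $(f',u')$ is at least the one under $(f,u)$ up to the perturbations above, which only help in the $u$ case. At this minimal point, the contradiction argument of \Cref{clm:nooverbidding}, which applies since the state after the $0$-valid sequence satisfies no over-bidding under $(f',u')$, shows that the fourth bullet holds with bound $\hat f'$. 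Hence $i$ is openable under $(f',u')$, contradicting maximality of $H'_p$ there.

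The delicate points are the edge case where the payment is already at least $\hat f'$ with $\tau=\alpha$, and ensuring the lowered $\tau$ still satisfies the first two bullets. The second is automatic, since lowering stays within $[\alpha_j,\tau_j]$. The first I would handle by noting that then the no-over-bidding invariant is tight and the same argument applies directly.
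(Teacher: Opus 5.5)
\textbf{Gap: the free-facility case.} Your handling of $H'_1,\ldots,H'_{p-1}$ and of the case $f'<f$ is fine and matches the paper: you keep the same witness, the payment is at least $\hat f'\ge\hat f-\Gamma\eta\ge\hat f-n\gamma\eta$, and the fourth bullet only gets easier. The gap is in the case where $u(\tilde i)<u'(\tilde i)$ is the difference parameter, which is where essentially all of the paper's work lies. Your premise that the $A/IC/DC$ classification and the $\alpha$-values are unaffected is false. The value $u(\tilde i)$ is an arbitrary real produced by binary search, so squared distances to $\tilde i$ have no integer granularity. Moreover, the binary search is set up exactly so that the two completions differ. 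Concretely, Stage 2 sets $\alpha_j=(1-\delta)d^2(j,S)$ for clients becoming indirectly connected, so these values can differ by up to $(1-\delta)\eta$. A client can also be active in the old setting but indirectly connected in the new one, or indirectly connected in the old setting but directly connected in the new one. Your fallback, that the change is monotone in the favourable direction, is precisely what must be proved. Even with the classes fixed, your accounting is incomplete. The cap $\tau_k\le(1-\delta)d^2(k,S)$ shrinks, so the old witness must be truncated to $\tau_j=\min\{\tau'_j,(1-\delta)d^2(j,S')\}$, and this changes the $A$-terms of the payment, not only the $DC$-terms.

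\textbf{What is missing is the paper's class-by-class comparison.} Work with the same super-sequence in both settings.
\begin{enumerate}
\item Show $A\subseteq A'$, $DC'\subseteq DC$ and $A'\cap DC=\emptyset$. The last uses $\gamma>1$, $\theta\ge1$ and $\eta=2^{-n}$. Hence $D$ splits into $A$, $A'\cap IC$, $IC\cap IC'$, $IC'\cap DC$ and $DC'$.
\item For the third bullet, bound each client's loss by $\gamma\eta$ in every class. For instance, for $j\in IC'\cap DC$ one uses $\alpha'_j<(1-\delta)\gamma d'^2(j,S')\le(1-\delta)\gamma d^2(j,S')+\gamma\eta$.
\item Restrict the fourth bullet to regular $i_0$, since distances to $\tilde i$ shrink and a free $i_0$ could break it. Then show that each client's new contribution is at most its old one.
\item The nontrivial class is $IC'\cap DC$. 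There the new term is the $DC$-type term involving $\tau_k$, while the old term is $[\alpha'_j-\gamma d^2(i_0,j)]^+$. Comparing them needs \Cref{lem:apxTriangleInequality3} together with $\alpha'_j\ge\alpha_j\ge(1-\delta)\gamma d^2(h,j)$, where $h$ is $j$'s closest open facility, and $\tau_k\le(1-\delta)d^2(k,h)$. This gives $\alpha'_j\ge\tau_k-\bigl(2+\frac{2}{\gamma-1}\bigr)(d^2(i_0,k)+d^2(i_0,j))$.
\end{enumerate}
None of this appears in your sketch.

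\textbf{Maximality.} The paper's proof does not verify maximality at all, so you are right to flag it, and your strategy can be completed. As written, though, the level you lower to is defined by $i$'s payment rather than by an over-bidding sum. Invoking the argument of \Cref{clm:nooverbidding} therefore needs a second minimality. Take the first level at which some facility's no-over-bidding sum under $(f',u')$ reaches $\hat f'$. That facility is unopened, since open facilities receive nothing from active clients. It is then openable under $(f',u')$, using the $(1-\delta)$ slack for far clients as in Stage 2, which contradicts maximality. Your edge case needs no tightness: if $\tau=\alpha$ already pays $\hat f'$, the fourth bullet follows directly from the no-over-bidding invariant and the triangle inequality. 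Finally, in the $u$ case your claim that the payment under $(f',u')$ dominates the one under $(f,u)$ again rests on the missing class-by-class comparison.
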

\begin{proof}
Suppose first that $f\neq f'$ is the difference parameter.
As the first $p-1$ sequences are identical, we have that $H'_1, \ldots, H'_{p-1}$ are $\eta$-valid with respect to parameters $f, u$. We further have that $H'_p$ is $\eta$-valid with respect to parameters $f, u$ since it was $0$-valid with respect to parameters $f', u'$ by assumption and the only difference is that $f$ is slightly bigger than $f'$ by at most $\eta$.

Now assume $u(\tilde i) < u'(\tilde i)$ is the difference parameter. Then by the same arguments as above, we have $H'_1, \ldots, H'_{p-1}$ are $\eta$-valid with respect to $f,u$. Moreover, $H'_p$ was $0$-valid with respect to $f',u'$ by assumption and $u'(\tilde i) - \eta \leq u(\tilde i) \leq u'(\tilde i)$. 
{Next, we show that} $H'_p$ is $\eta$-valid with respect to $f,u$.

Let $H'_p = \langle h_1, \dots, h_{\ell} \rangle$ and consider $h_t$ for some $t \in [\ell]$. 
{With an abuse of notation, we will use $\calH$ to denote the sequence of sequences $(H'_1, \ldots, H'_{p-1})$ in the rest of the proof.}
{Note that $h_t$} is $0$-openable with respect to $f', u', \theta = (1 + \eps^2)^{p - 1}$ and some super-sequence $\calH' \supseteq \calH \oplus (\langle h_1, \dots, h_{t-1} \rangle)$. Let $S'$ and $S$ be the set of facilities open in $\calH'$ and $\calH$, respectively. 
{We note that $S$ can be viewed as the set of facilities opened at the beginning of the stage while $S'$ can be viewed as the set of facilities opened before $h_t$ with respect to the super-sequence $\calH'$.}

Let $(A', IC', DC', \{ \alpha'_j \}_{j \in IC'}, d'(\cdot,\cdot))$ be the state of the algorithm using $f', u'$ {with opened facilities $S'$,}
and $(A, IC, DC, \{ \alpha_j \}_{j \in IC'}, d(\cdot,\cdot))$ be the state using $f, u$ {with opened facilities $S'$.}
{We refer to these two scenarios as the old and new setting, resp.}
As the only difference is $u(\tilde i) \in [u'(\tilde i) - \eta, u'(\tilde i)]$, the distances satisfy $d^2(x,y) \le d'^2(x,y) \le d^2(x,y) + \eta$ {for any point $x,y$ in the space}. We have the following: 

\begin{itemize}
    \item $A \subseteq A'$ and $DC \supseteq DC'$. To see this, $j \in A$ implies $\theta < (1-\delta)d^2(j,{S'}) \le (1-\delta)d'^2(j,{S'})$. Hence $j \in A'$. Also, $j \in DC'$ {implies that} $\alpha'_j$ reached $(1-\delta)\gamma d'^2(j,{S'})$. Since $d^2(j,{S'}) \le d'^2(j,{S'})$, the threshold $(1-\delta)\gamma d^2(j,{S'})$ is reached no later in the new setting, so $j \in DC$. 
    \item $A' \cap DC = \emptyset$. {Indeed, assume by contradiction that} $j \in A' \cap DC$. Then $j \in A'$ implies $\theta < (1-\delta)d'^2(j,{S'}) \le (1-\delta)(d^2(j,{S'}) + \eta)$. Meanwhile, $j \in DC$ implies $(1-\delta)\gamma d^2(j,{S'}) \le \alpha_j \le \theta$. Combining these gives $\theta < \theta/\gamma + (1-\delta)\eta$, which is impossible since $\gamma > 1$, $\theta \ge 1$, and $\eta = 2^{-n}$. 
\end{itemize}
Thus, the sets $A, A' \cap IC, IC \cap IC', IC' \cap DC,$ and $DC'$ form a valid partition of $D$.

{Recall} that $h_t$ was $0$-openable with respect to the {old} setting with $(\tau'_j)_{j \in A'}$. Let 
$(\tau_j)_{j \in A}$ be defined as $\tau_j = \min(\tau'_j, (1-\delta)d^2(j, {S'}))$ for $j \in A$. 
Let us {show that} $h_t$ was $\eta$-openable with respect to the {super-sequence $\calH'$ and the} {new} setting 
with $(\tau_j)_{j \in A}$, {by} checking the conditions in \Cref{def:robust_openable} {one by one}.
\begin{enumerate}
    \item {The first bullet,} $\tau_j = \alpha_j$ for every $j \in A \setminus B(h_t, \sqrt{\eps \theta})$: For such $j$, we have $d'^2(h_t, j) \geq d^2(h_t, j) > \eps \theta$, so $j \in A' \setminus B'(h_t, \sqrt{\eps \theta})$. The validity in the {old} setting gives $\tau'_j = \alpha'_j$. Because $j \in A \subseteq A'$, $j$ is active in both settings, so $\alpha_j = \alpha'_j = \theta$, giving $\tau'_j = \theta$. Thus, $\tau_j = \min(\theta, (1-\delta)d^2(j, {S'}))$. Since $j \in A$, $(1-\delta)d^2(j, {S'}) > \theta$, so $\tau_j = \theta = \alpha_j$.
    
    \item {The second bullet,} $\alpha_j \leq \tau_j \leq \min\{(1-\delta)d^2(j,{S'}),(1+\eps^2)\theta\}$ for every $j \in A \cap B(h_t, \sqrt{\eps \theta})$: 
    Since $j \in A \subseteq A'$, validity in the {old} setting guarantees $\tau'_j \leq (1+\eps^2)\theta$. {Hence} by definition {of $\tau_j$}, $\tau_j \leq \min((1+\eps^2)\theta, (1-\delta)d^2(j,{S'}))$. For the lower bound, $j \in A$ implies $\alpha_j = \theta < (1-\delta)d^2(j,{S'})$, and $j \in A'$ implies $\tau'_j \geq \alpha'_j = \theta$. Thus, $\tau_j \geq \theta = \alpha_j$.
    
    \item {The third bullet,} 
    \begin{align*}
    \sum_{j\in A}[\tau_j - (1-\delta)\gamma d^2(h_t, j)]^+ + \sum_{j\in IC}[\alpha_j - (1-\delta)\gamma d^2(h_t, j)]^+ \\
    +(1-\delta)\sum_{j\in DC}[\gamma d^2(j,S)-\gamma d^2(h_t, j)]^+ \geq \hat f - n \gamma \eta ~:
    \end{align*}
    We argue that the contribution of each client $j$ to the left-hand side decreases by at most $\gamma \eta$ when switching {from the old to} the new setting. 
    \begin{itemize}
    \item For $j \in A {\subseteq A'}$, $\tau_j < \tau'_j$ implies $\tau_j = (1-\delta)d^2(j, {S'}) \geq (1-\delta)(d'^2(j, {S'}) - \eta) \geq \tau'_j - \eta$. Also $d^2(h_t,j) \le d'^2(h_t,j)$. Its contribution decreases by at most $\eta \le \gamma \eta$.
    \item For $j \in A' \cap IC$, its new contribution uses $\alpha_j$ and its old {one} uses $\tau'_j$. Since $j \in IC$, $\alpha_j {\geq} (1-\delta)d^2(j, {S'}) \geq (1-\delta)(d'^2(j, {S'}) - \eta) \geq \tau'_j - \eta$ (because $\tau'_j \le (1-\delta)d'^2(j, {S'})$ for all active clients $j \in A'$). Its contribution decreases by at most $\eta \le \gamma \eta$.
    \item For $j \in IC \cap IC'$, its old contribution uses $\alpha'_j$. Since $\alpha_j$ is determined either by the same phase time $\theta$ or by $(1-\delta)d^2(j,{S'})$ in Stage 2, and $d^2(j,{S'}) \ge d'^2(j,{S'}) - \eta$, we have $\alpha_j \ge \alpha'_j - \eta$. Its contribution decreases by at most $\eta \le \gamma \eta$.
    \item For $j \in IC' \cap DC$, its new contribution is $(1-\delta)[\gamma d^2(j,{S'}) - \gamma d^2(h_t, j)]^+$ and its old {one} is $[\alpha'_j - \gamma d'^2(h_t, j)]^+$. Because $j \in IC'$, $\alpha'_j < (1-\delta)\gamma d'^2(j,{S'})$. Thus $(1-\delta)\gamma d^2(j,{S'}) \geq (1-\delta)\gamma(d'^2(j,{S'}) - \eta) > \alpha'_j - \gamma \eta$. Its contribution decreases by at most $\gamma \eta$.
    \item For $j \in DC'$, since $d^2(x,y) \le d'^2(x,y) \le d^2(x,y) + \eta$ {for any $x,y$ in the metric space}, its contribution decreases by at most $(1-\delta)\gamma \eta \le \gamma \eta$.
    \end{itemize}
    Summing over at most $n$ clients gives a maximum decrease of $n \gamma \eta$. 
    The property follows.

    \item {The fourth bullet,} for every regular facility $i_0$ and $k \in A$,
    \begin{align}
    & \sum_{j\in DC}\left[\tau_k - \left(2+\frac{2}{\gamma-1}\right)\cdot (d^2(i_0,k) + d^2(i_0,j))-\gamma d^2(i_0,j)\right]^+\nonumber \\    
    &\quad + \sum_{j\in IC} \left[\alpha_{j}-\gamma d^2(i_0,j)\right]^+ + \sum_{j\in A}\left[\tau_{j}-\gamma d^2(i_0,j)\right]^+\leq \hat f~:\label{eqn:dualFeasibility:new}
    \end{align}
    Since $k \in A \subseteq A'$, the fourth condition of $\eta$-openability must hold for $k$ and the regular facility $i_0$ in the old setting. Because $i_0$ is regular, $d'(i_0, j) = d(i_0, j)$. {Thus one has}:
    \begin{align}
    \sum_{j\in DC'}\left[\tau_k' - \left(2+\frac{2}{\gamma-1}\right)\cdot (d^2(i_0,k) + d^2(i_0,j))-\gamma d^2(i_0,j)\right]^+ & +\nonumber \\ \sum_{j\in IC'} \left[\alpha_{j}'-\gamma d^2(i_0,j)\right]^+ + \sum_{j\in A'}\left[\tau_{j}'-\gamma d^2(i_0,j)\right]^+ &\leq \hat f.\label{eqn:dualFeasibility:old}
    \end{align}
    {It is sufficient to show that} the {(\emph{new})} contribution of each client $j$ {to the left-hand side of \eqref{eqn:dualFeasibility:new} is upper bounded by its (\emph{old}) contribution to the left-hand side of \eqref{eqn:dualFeasibility:old}.} Note $\tau_k \leq \tau'_k$ by definition of $\tau_k$.
    \begin{itemize}
        \item For $j \in DC'$: Its new contribution uses $\tau_k$ while {its} old {one} uses $\tau'_k$. Since $\tau_k \le \tau'_k$, the new contribution is at most the old {one}.
        \item For $j \in A$: Its new contribution uses $\tau_j$ {while its} old {one} uses $\tau'_j$. Since $\tau_j \le \tau'_j$, the new contribution is at most the old {one}.
        \item For $j \in IC \cap IC'$: Its new contribution uses $\alpha_j$  {while its} old {one} uses $\alpha'_j$. Since $d^2{(x,y)} \le d'^2{(x,y)}$ {for any $x,y$ in the metric space (and we are considering the same super-sequence, $\calH'$, in both the new and old settings)}, $j$ becomes indirectly connected no later in the new setting, giving $\alpha_j \le \alpha'_j$. {Thus the} new contribution is at most the old {one}.
        \item For $j \in A' \cap IC$: Its new contribution is $[\alpha_j - \gamma d^2(i_0, j)]^+$ {while its} old {one} is $[\tau'_j - \gamma d^2(i_0, j)]^+$. Since $j \in IC$ in the new setting but $j \in A'$ in the old setting, we have $\alpha_j \le \theta \le \tau'_j$.
        \item For $j \in IC' \cap DC$: Its new contribution is $\big[\tau_k - (2+\frac{2}{\gamma-1})\cdot(d^2(i_0,k)+d^2(i_0,j)) - \gamma d^2(i_0,j)\big]^+$ {while its} old {one} is $[\alpha'_j - \gamma d^2(i_0,j)]^+$. Let $h \in {S'}$ such that $d(j, {S'})=d(h, j)$. By \Cref{lem:apxTriangleInequality3}:
        \begin{equation}\label{claim:1ststep_etavalid:eqn1}
        \gamma d^{2}(h,j)+\left(2+\frac{2}{\gamma-1}\right)(d^2(i_0,k)+d^2(i_0,j))\geq (d(h,j)+d(i_0,k)+d(i_0,j))^2.    
        \end{equation}
        Since $k \in A$, $\tau_k \leq (1-\delta)d^2(k,{S'}) \leq (1-\delta)d^2(k,h)$. Since $j \in DC$ in the new setting, $\alpha_j \ge (1-\delta)\gamma d^2(j, {S'}) = (1-\delta)\gamma d^2(h, j)$. 
        {As discussed above in the case $j\in IC\cap IC'$,}
        client $j$ becomes indirectly connected no later in the new setting, ensuring $\alpha'_j \ge \alpha_j$. Thus:
        \begin{align*}
            \alpha_j' &\geq \alpha_j \geq (1-\delta)\gamma d^2(h, j)\\
            &\overset{\eqref{claim:1ststep_etavalid:eqn1}}{\geq} (1-\delta)\left( (d(h, j) + d(i_0, k) + d(i_0, j))^2 - \left(2+\frac{2}{\gamma-1}\right)\cdot (d^2(i_0, k) + d^2(i_0, j)) \right)\\
            &\geq (1-\delta) d^2(h, k) - \left(2+\frac{2}{\gamma-1}\right)\cdot (d^2(i_0, k) + d^2(i_0, j))\\
            &\geq \tau_k - \left(2+\frac{2}{\gamma-1}\right)\cdot (d^2(i_0, k) + d^2(i_0, j)).
        \end{align*}
        This implies $\alpha'_j - \gamma d^2(i_0,j) \ge \tau_k - \left(2+\frac{2}{\gamma-1}\right)\cdot (d^2(i_0, k) + d^2(i_0, j)) - \gamma d^2(i_0,j)$, so the new contribution {of $j$} is at most {its} old one.
    \end{itemize}
\end{enumerate}
    {This completes the proof that $h_t$ is $\eta$-openable with respect to the super-sequence $\calH'$ and the new setting $(\tau_j)_{j\in A}$. Because we choose arbitrary $t\in [\ell]$, it guarantees that $H'_p$ is $\eta$-valid with respect to $f,u$ according to our~\cref{def:valid_sequence}.}
    {The claim follows.}
\end{proof}

\paragraph{Outcome of this section of \mergealg.}
Let 
\[
    \calH'' =  \completesol_{(f,u)}(H'_1, \ldots, H'_p)\,,
\]
which is a valid call by the above claim.
If $\calH''$ opens $k$ regular facilities, we return this solution. Otherwise, 
as $(\calH, f,u)$ and $(\calH', f', u')$ sandwich $k$, we must have that either $(\calH, f, u)$ and $(\calH'', f, u)$, or $(\calH'', f, u)$ and $(\calH', f', u')$ sandwich $k$.
On the one hand, if $(\calH'', f, u)$ and $(\calH', f',u')$ sandwich $k$, then we made progress in that we now have two solutions with a common prefix of length $p$ (instead of $p-1$) and only one parameter differs by at most $\eta$. In this case, we are done with phase $p$ and can continue \mergealg with these two solutions and phase $p+1$.

On the other hand, if $(\calH, f,u)$ and $(\calH'', f, u)$ sandwich $k$, then these two solutions have the same parameters and their first $\eta$-valid sequences that differ are $H_p$ and $H'_p$. In this case, we rename $\calH''$ to $\calH'$ and proceed to the next step.

\subsubsection{Maximize Common Prefix of $H_p$ and $H'_p$}
\label{sec:common_prefix}
We are given two solutions $(\calH, f,u)$ and $(\calH', f, u)$ on the same parameters $f,u$ that sandwich $k$. Moreover, if we let $\calH = (H_1, \ldots, H_L)$ and $\calH' = (H'_1, \ldots, H'_{L'})$, then $H_i = H'_i$ for $i= 1, \ldots, p-1$ and $H_p \neq H'_p$. In addition $H_p$ and $H'_p$ only contain regular facilities. 

In this step of \mergealg, we increase the common prefix of  $H_p$ and $H'_p$ by ``slowly'' modifying $H'_p$. As $H_p$  only contain{s} regular facilities, we let $H_p = \langle i_1, i_2, \ldots, i_\ell \rangle$. Moreover, as the parameters $(f,u)$ are clear from the context, we omit them in the following when, e.g., calling the procedures \completesol and \completesequence. 

\begin{mdframed}[hidealllines=true, backgroundcolor=gray!15]
\vspace{-5mm}
\paragraph{Transforming $H'_p$  step-by-step so that $H_p$ becomes a prefix.}\ \\

\noindent Repeat the following until all of $H_p$ becomes a prefix of $H'_p$.\\

\begin{itemize}
        \item Let $q$ be the largest index so that $H'_p = \langle i_1, i_2, \ldots, i_q, h'_{q+1}, \ldots, h'_{\ell_1}\rangle$, i.e., $q$ equals the length of the common prefix $i_1, i_2, \ldots, i_q$ of $H_p$ and  $H'_p$.
            \item Update $H'_p$ by inserting a free copy of $i_{q+1}$, $\tilde i_{q+1}$, at the end of $H'_p$ with $u(\tilde i_{q+1}) = 0$. That is, we get the sequence
            \begin{gather*}
                H'_p = \langle i_1, i_2, \ldots, i_q, h'_{q+1}, \ldots, h'_{\ell_1}, \tilde{i}_{q+1} \rangle,.
            \end{gather*}

            \item
            For $t = q+1 , \ldots, \ell_1$,
            \begin{itemize}
                \item  Update $H'_p$ by first removing $h'_t$ and then completing the sequence by a call to \completesequence. More explicitly, if 
\begin{align*}
                  H'_p = &\langle i_1, \ldots, i_q, h'_t, h'_{t+1} \ldots, h'_{\ell_1}, \tilde i_{q+1}, h'_{\ell_1+1}, \ldots, h'_{\ell_2}\rangle\\
\intertext{then we delete $h'_t$ to obtain}
                    H''_p = &\langle i_1, \ldots, i_q,  h'_{t+1} \ldots, h'_{\ell_1}, \tilde i_{q+1}, h'_{\ell_1+1}, \ldots, h'_{\ell_2}\rangle\\
\intertext{and finally update $H'_p = \completesequence(H''_p)$ which will be of the form}
              &\langle i_1, \ldots, i_q, h'_{t+1}, \ldots, h'_{\ell_1}, \tilde i_{q+1}, h'_{\ell_1+1}, \ldots, h'_{\ell_2}, \ldots, h'_{\ell_3}\rangle\,,
\end{align*}
                where  $h'_{\ell_2+1},\ldots, h'_{\ell_3}$ are the facilities  added by \completesequence (which could be empty).
            \end{itemize}

            \item After removing $h'_{q+1}, \ldots, h'_{\ell_1}$, we have
            \begin{align*}
                H'_p &= \langle i_1, i_2, \ldots, i_q, \tilde{i}_{q+1}, h'_{x}, \ldots, h'_{y}\rangle\,.\\
            \intertext{We update  $H'_p$ by replacing $\tilde{i}_{q+1}$ with its regular copy $i_{q+1}$, i.e., we obtain}
                H'_p &= \langle i_1, i_2, \ldots, i_q, {i}_{q+1}, h'_{x}, \ldots, h'_{y}\rangle\,.
            \end{align*}
\end{itemize}
\end{mdframed}
We have the following claims.

\begin{claim}
    Throughout the updates to $H'_p$, $H'_1, \ldots, H'_p$ remain  $\eta$-valid sequences, and $H'_p$ contains at most one free facility.
    \label{claim:2ndstep_valid_free}
\end{claim}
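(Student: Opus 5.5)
The proof goes by induction over the elementary updates performed on $H'_p$ in the procedure. There are three kinds: (a) appending the free copy $\tilde i_{q+1}$ with $u(\tilde i_{q+1})=0$; (b) deleting some $h'_t$ and then calling \completesequence; (c) replacing $\tilde i_{q+1}$ by its regular copy $i_{q+1}$. Initially $H'_1,\ldots,H'_p$ are $\eta$-valid, since they are even $0$-valid with respect to the common parameters $f,u$, and $H'_p$ contains no free facility. Throughout, $H'_1,\ldots,H'_{p-1}$ are never modified. Because validity of these sequences refers only to earlier sequences, their $\eta$-validity is preserved automatically, and only $H'_p$ needs to be checked.

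For the count of free facilities, I argue that $H'_p$ holds a free facility only between an update of type (a) and the matching update of type (c). Type (a) adds exactly one free facility. Type (b) deletes a regular facility, since $h'_t$ with $t\le \ell_1$ lies before $\tilde i_{q+1}$. This is because, after each round, the only free facility is replaced by a regular one, and all facilities added by \completesequence are regular. Type (c) removes that free facility. Hence at most one free facility is present at any time.

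For $\eta$-validity of $H'_p$, I check the two conditions of \Cref{def:valid_sequence}: every element must be free or $\eta$-openable for a suitable super-sequence, and the sequence must be maximal.

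\emph{Type (a).} The appended element is free, so it trivially satisfies the first condition. The earlier elements keep their witnesses: each earlier $h_t$ was $\eta$-openable with respect to some super-sequence of the prefix preceding it, and that prefix is unchanged. Maximality is not required at this moment, because step (b) immediately runs \completesequence. Alternatively, maximality already holds: a free copy at squared distance $u=0$ from $i_{q+1}$ coincides metrically with $i_{q+1}$, so adding it can only shrink $A$. If an openable facility did exist, it would simply be added at the next \completesequence.

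\emph{Type (b).} This is the main obstacle. After deleting $h'_t$, each later element $h'_{t'}$ with $t'>t$ has a new prefix $\calH\oplus\langle i_1,\ldots,i_q,h'_{t+1},\ldots,h'_{t'-1}\rangle$. The key point is that the witnessing super-sequence in \Cref{def:valid_sequence} only has to be a superset of the current prefix in the last component. So I take as witness the old super-sequence $\calH'$, which contained $h'_t$ together with the old prefix. The old $\calH'$ is a super-sequence of the new prefix, since the new prefix is the old prefix minus $h'_t$. It therefore still certifies that $h'_{t'}$ is $\eta$-openable, because openability depends only on $\theta$ and that super-sequence. Elements before position $t$ are unaffected. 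The free element $\tilde i_{q+1}$ remains free. The elements appended by \completesequence are $0$-openable, hence $\eta$-openable, with respect to the actual prefix, and \completesequence guarantees maximality by construction.

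\emph{Type (c).} Replacing $\tilde i_{q+1}$ (with $u=0$) by $i_{q+1}$ leaves all distances unchanged, so the states $A,IC,DC,\alpha$ are unchanged and maximality is preserved. Later elements keep their witnesses with $\tilde i_{q+1}$ replaced by $i_{q+1}$, since the two are metrically identical. It remains to show that $i_{q+1}$ itself is $\eta$-openable with respect to some super-sequence of $\calH\oplus\langle i_1,\ldots,i_q\rangle$. For this I use that $i_{q+1}$ is the $(q+1)$-st element of the $\eta$-valid sequence $H_p$. Its witness there is a super-sequence of $\calH\oplus\langle i_1,\ldots,i_q\rangle$, which is exactly the prefix that $i_{q+1}$ now has in $H'_p$, because all of $h'_{q+1},\ldots,h'_{\ell_1}$ have been deleted. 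The same witness therefore works.

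The delicate points are the monotonicity argument in type (b) and checking carefully that openability depends only on the super-sequence chosen as witness, not on the prefix order.
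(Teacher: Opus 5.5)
Your proof follows the paper's argument closely. It uses the same three kinds of updates, reuses the old witnessing super-sequence after a deletion (its last component only has to contain the new, shorter prefix), takes the prefix $i_1,\ldots,i_q$ of $H_p$ as the witness for $i_{q+1}$ in update (c), and counts free facilities the same way. Two points need fixing.

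\textbf{The base case is misjustified.} At the start of this step, $H'_p$ is the sequence produced under the \emph{old} parameters $(f',u')$; the completion $\calH''$ in \Cref{sec:parameters_identical} keeps it as its $p$-th sequence. So in general it is not $0$-valid with respect to $(f,u)$. That it is $\eta$-valid with respect to $(f,u)$ is precisely \Cref{claim:1ststep_etavalid}, which is what the paper invokes here. Likewise, $H'_1,\ldots,H'_{p-1}$ may contain free facilities from earlier phases and are only $\eta$-valid, by the standing promise of \mergealg. They stay valid because this step never changes $f$ or any existing value $u(\cdot)$, not merely because validity refers only to earlier sequences.

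\textbf{In update (a) you cannot postpone maximality to the next \completesequence call.} There are three reasons:
\begin{itemize}
\item The claim asserts validity after every update.
\item The loop over $t=q+1,\ldots,\ell_1$ can be empty. If $\ell_1=q$, the next update is (c), which involves no \completesequence call.
\item Case~2 of the outcome of \Cref{sec:common_prefix} uses the sequence right after the insertion.
\end{itemize}
So your ``alternative'' has to be the actual argument: the inserted free copy is a metric duplicate of $i_{q+1}$, so opening it only shrinks $A$ and maximality is inherited. This is the same monotonicity the paper appeals to here and again in \Cref{sec:extra_facilities}.

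Be aware that this monotonicity is only sketched, both in your proof and in the paper. Shrinking $A$ clearly lowers every client's contribution to the payment condition of \Cref{def:robust_openable}. However, a client that moves from $IC$ to $DC$ changes its term in the dual-feasibility bullet in a way that is not obviously monotone.
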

\begin{proof}
    As we never update $f$ or the value $u(\cdot)$ of a free facility in $H'_1, \ldots, H'_{p-1}$, these sequences remain $\eta$-valid. 

    We continue to argue that $H'_p$ remains $\eta$-valid throughout its updates. By Claim~\ref{claim:1ststep_etavalid}, this is true before any updates to $H'_p$. We now analyze the three types of updates to $H'_p$. 
    \begin{itemize}
        \item Updating  $H'_p$ by inserting a free copy of $i_{q+1}$ maintains that it is $\eta$-valid as every regular facility stays $\eta$-openable in the sequence (they have the same prefix as before) and the sequence remains maximal.
        \item Updating $H'_p$ by removing $h'_t$ and calling \completesequence.  Removing a $h'_t$ can only decrease the prefix of regular facilities in the sequence, and so each regular facility still has a super-sequence of facilities for which it is $\eta$-openable. Finally, the call \completesequence ensures that the sequence is maximal.
        \item Updating $H'_p$ by replacing $\tilde i_{q+1}$ with its regular copy $i_{q+1}$. First note that, since $u(\tilde i_{q+1}) = 0$, this does not impact the following regular facilities are $\eta$-openable (with respect to a super-sequence). By the same argument, the sequence stays maximal (as $\tilde{i}_{q+1}$ with $u(\tilde i_{q+1}) = 0$ and  $i_{q+1}$ has the same impact on following facilities).  Finally, to see that $i_{q+1}$ is $\eta$-openable  (with respect to a potential super-sequence) notice that the prefix $i_1, \ldots, i_q$ is the same as that in $H_p$ which is an $\eta$-valid sequence. 
    \end{itemize}

    We complete the proof of the claim by arguing that $H'_p$ contains at most one free facility at any point of time. This follows from the facts that (1) $H'_p$ initially contains no free facility, (2) in one repetition, the free facility $\tilde i_{q+1}$ is inserted into $H'_p$, and (3) $\tilde i_{q+1}$ is replaced by its regular copy $i_{q+1}$ before continuing to the next repetition.
\end{proof}

\newcommand{\Hbefore}{\ensuremath{H^{\text{before}}_p}\xspace}
\newcommand{\Hafter}{\ensuremath{H^{\text{after}}_p}\xspace}

\paragraph{Outcome of this section of \mergealg.} We distinguish three cases.

\paragraph{Case 1.} If at some point $H'_p$ is such that $\completesol(H'_1, H'_2, \ldots, H'_p)$ opens exactly $k$ facilities, then we return that solution.

\paragraph{Case 2.} Otherwise,  suppose there is an update to $H'_p$ so that if we let \Hbefore and \Hafter denote $H'_p$ before and after this update, respectively, then \completesolx$(H_1, H_2, \ldots, \Hbefore)$ and \completesolx$(H_1, H_2, \ldots, \Hafter)$ sandwich $k$. There are three kinds of updates to $H'_p$. However, note that the update where $\tilde i_{q+1}$ is replaced by its regular copy $i_{q+1}$ cannot satisfy the conditions of this case. Indeed, as $u(\tilde i_{q+1}) = 0$, we have
\[
    \completesol(H_1, \ldots, \Hbefore) = \completesol(H_1, \ldots, \Hafter) 
\]
in this case, which contradicts that they sandwich $k$. 

We proceed to analyze the two other kind of updates.
If the update to $H'_p$ is to insert a free facility $\tilde{i}_{q+1}$ with $u(\tilde{i}_{q+1}) = 0$. Then
\begin{align*}
    \Hbefore&= \langle i_1, i_2, \ldots, i_q, h'_{q+1}, \ldots, h'_{\ell_1} \rangle\\
    \Hafter &= \langle i_1, i_2, \ldots, i_q, h'_{q+1}, \ldots, h'_{\ell_1}, \tilde{i}_{q+1} \rangle
\end{align*}
Note that the solution \completesolx$(H_1, H_2, \ldots, \Hbefore)$ is equivalent to \completesolx$(H_1, H_2, \ldots, \Hafter)$ if we set $u(\tilde{i}_{q+1}) = 10\Maxdist$ (except for the free facility $\tilde i_{q+1}$). We can thus do binary search on $u(\tilde{i}_{q+1})$ to find two values $u'(\tilde{i}_{q+1})$ and $u(\tilde{i}_{q+1})$ with $|u'(\tilde{i}_{q+1}) - u(\tilde{i}_{q+1})| \leq \eta$ so that the solutions
$\completesol_{(f,u')}(H_1, \ldots, \Hafter)$ and $\completesol_{(f,u)}(H_1, \ldots, \Hafter)$  sandwich $k$ (where $u$ and $u'$ are identical except for $\tilde{i}_{q+1}$). 
We have thus finished phase $p$, and we can restart \mergealg with the two solutions and phase $p+1$, since they have the first $p$, instead of $p-1$, $\eta$-valid sequences in common and exactly one difference parameter. 
Their maximality follows from that of $\Hbefore$. 

Finally, consider when the update to $H'_p$ is of the second type, i.e., $u(\tilde{i}_{q+1}) = 0$ and 
\begin{align*}
    \Hbefore  = 
                  &\langle i_1, \ldots, i_q, h'_t, h'_{t+1} \ldots, h'_{\ell_1}, \tilde i_{q+1}, h'_{\ell_1+1}, \ldots, h'_{\ell_2}\rangle\\
    \Hafter = &\langle i_1, \ldots, i_q, h'_{t+1}, \ldots, h'_{\ell_1}, \tilde i_{q+1}, h'_{\ell_1+1}, \ldots, h'_{\ell_2}, \ldots, h'_{\ell_3}\rangle
\end{align*}
In this case, $\completesol(H_1, \ldots, \Hbefore)$ and $\completesol(H_1, \ldots, \Hafter)$ are two solutions with the same parameters $f, u$, that sandwich $k$ and we proceed to the next step with the following properties of $\Hbefore$ and $\Hafter$:
\begin{itemize}
    \item Both sequences contain one free facility, $\tilde{i}_{q+1}$.
    \item $\Hbefore$ and $\Hafter$ are identical except that (1) $\Hbefore$ contains a regular facility $h'_t$ not present in $\Hafter$, and (2) $\Hafter$ contains a (potentially empty) suffix of regular facilities not present in $\Hbefore$.
\end{itemize}

\paragraph{Case 3.} None of the previous cases apply. Then,  completing the solution with the initial $H'_p$ and the final $H'_p$ cannot sandwich $k$. Since then (assuming the first case does not apply) there must be an update to $H'_p$ that results in two solutions that sandwich $k$, since the the completion of $H_p$ and the completion of the initial $H'_p$ sandwich $k$. The above implies that if we let $\Hbefore = H_p$ and $\Hafter$ equal the final $H'_p$ then \completesolx$(H_1, \ldots, \Hbefore)$ and \completesolx$(H_1, \ldots, \Hafter)$ sandwich $k$. Further note that since $\Hafter$ equals $H'_p$ after the final updates $\Hbefore$ appears as a prefix of $\Hafter$ and both sequences only contain regular facilities.   To summarize, 
in this case, \completesolx$(H_1, \ldots, \Hbefore)$ and $\completesol(H_1, \ldots, \Hafter)$ are two solutions with the same parameters $f, u$, that sandwich $k$ and we proceed to the next step with the following properties of $\Hbefore$ and $\Hafter$:
\begin{itemize}
    \item Both sequences contain no free facilities. 
    \item $\Hbefore$ and $\Hafter$ are identical except that $\Hafter$ contains a suffix of regular facilities not present in $\Hbefore$.
\end{itemize}

\subsubsection{Removing Extra Facilities}
\label{sec:extra_facilities}
The input to this step consists of two solutions $\completesol(H_1, \ldots, \Hbefore)$ and \completesolx$(H_1,\ldots, \Hafter)$ that have the same parameters $f, u$ and sandwich $k$. Moreover,  $\Hbefore$ and $\Hafter$ both contain at most one free facility and they are identical except for potential two differences: 
\begin{itemize}
\item $\Hbefore$ may contain a facility not present in $\Hafter$;
\item $\Hafter$ contains a suffix of regular facilities (which may be empty) that is not present in $\Hbefore$.
\end{itemize}

To simplify notation in this section, we let $i_*$ denote the regular facility in $\Hbefore$ that is not present in $\Hafter$ if it exists (so $i_*$ corresponds to $h'_t$ in the previous section). We further let $h_1, \ldots, h_\ell$ denote the suffix of $\Hafter$ not present in $\Hbefore$.

We now transform $\Hafter$ into $\Hbefore$ one change at a time:

\begin{mdframed}[hidealllines=true, backgroundcolor=gray!15]
\vspace{-5mm}
\paragraph{Generation of $\eta$-valid sequences from $\Hafter$ to $\Hbefore$.}\ \\
\begin{itemize}
    \item If $\Hbefore$ contains the regular facility $i_*$ not present in $\Hafter$, then let $\tilde{i}_*$ be a free copy of $i_*$ with $u(\tilde{i}_*) =0$ and obtain $H^{0}_p$ from $\Hafter$ by adding $\tilde{i}_*$ at the end. Otherwise, we let $H^0_p$ equal $\Hafter$. 
    \item For $r= 1, \ldots, \ell$, obtain $H^r_p$ from $H^{r-1}_p$ by removing $h_r$. 
\end{itemize}
\end{mdframed}
We remark that $H^\ell_p$ is identical to $\Hbefore$ since we removed the whole suffix $h_1, \ldots h_\ell$ present in $\Hafter$, except for potentially the facility $i_*$. That is the only difference in $\Hbefore$ and $H^\ell_p$ may be the placement of $\tilde{i}_*$ and that $\tilde{i}_*$ is a free copy of the regular facility $i_*$ present in $\Hbefore$. However, since $u(\tilde{i}_*) = 0$, we have that the two sequences $\Hbefore$ and $H^\ell_p$ have an identical impact on \completesol, which is a fact we use in the proof of Claim~\ref{claim:3rdstep_outcome}.

\begin{claim}
    For $r= 0, 1 \ldots \ell$, $H^r_p$ is an $\eta$-valid sequence with at most two free facilities.  
\end{claim}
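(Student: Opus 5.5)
We argue by induction on $r$. The free-facility count is immediate and handled first. Validity is checked separately for the two kinds of conditions in \Cref{def:valid_sequence}: every non-free facility must be $\eta$-openable with respect to some super-sequence, and the sequence must be maximal.

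\emph{Free facilities.} By the input guarantees of this step, $\Hafter$ contains at most one free facility, namely the $\tilde i_{q+1}$ inherited from Case 2 of \Cref{sec:common_prefix}. Constructing $H^0_p$ adds at most one more free facility, $\tilde i_*$. Each later step only deletes a regular facility $h_r$, so every $H^r_p$ has at most two free facilities.

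\emph{Openability for $r=0$.} Every regular facility of $H^0_p$ is a regular facility of $\Hafter$ and keeps the same prefix. \Cref{claim:2ndstep_valid_free} (or the Case 3 outcome) states that $\Hafter$ is $\eta$-valid. So each such facility is $\eta$-openable with respect to the same super-sequence that witnessed it in $\Hafter$. Appending the free facility $\tilde i_*$ at the end cannot break this, since free facilities need not be openable.

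\emph{Openability for the inductive step $r-1\to r$.} Removing $h_r$ leaves untouched every facility preceding $h_r$ in $H^{r-1}_p$. Each facility after $h_r$ is either free or one of $h_{r+1},\dots,h_\ell$, and its prefix in $H^r_p$ is obtained from its old prefix by deleting $h_r$. The old witnessing super-sequence $\calH'$ for such a facility contains $\calH \oplus (\text{prefix in } H^{r-1}_p)$. Since a super-sequence only needs to contain the prefix as a subset of its last set, $\calH'$ also contains the new, smaller prefix. So the same $\calH'$ (with the same $\tau$'s) still witnesses $\eta$-openability. This is exactly the argument used in the second bullet of \Cref{claim:2ndstep_valid_free}.

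\emph{Maximality.} This is the main obstacle, since deleting facilities could make some new facility openable. The plan is to use the fact that both the final set of opened facilities and the openability state depend only on the set $S$ of opened facilities (and $\theta$), not on their order. The set of facilities opened by $H^r_p$ satisfies
\[
\Hbefore\text{'s set} \;\subseteq\; \text{set of } H^r_p \;\subseteq\; \Hafter\text{'s set}\cup\{\tilde i_*\},
\]
where $\tilde i_*$ with $u(\tilde i_*)=0$ acts exactly like $i_*$. In particular $H^\ell_p$ has the same effect as $\Hbefore$, which is maximal.

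For intermediate $r$ I would first try to show that openability is monotone: if $i$ is not openable with respect to $S$, then it is not openable with respect to any superset of $S$. Adding facilities moves clients from $A$ toward $IC$ and $DC$ and lowers the caps $(1-\delta)d^2(j,S)$, so the payment in the third bullet cannot grow. Given this monotonicity, maximality of $\Hbefore$ would imply maximality of every $H^r_p$.

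If monotonicity fails, for instance because the fourth bullet behaves non-monotonically, the fallback is to restore maximality by a call to \completesequence after each removal. This adds only $0$-openable regular facilities, so it preserves $\eta$-validity and does not change the free-facility count. I expect the proof to rely on this being either unnecessary or harmless.
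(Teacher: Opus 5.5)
Your free-facility count and your openability argument match the paper's: prefixes only shrink, so each regular facility keeps its old witnessing super-sequence. Your plan for maximality is also the paper's one-line argument ("fewer active clients than after \Hbefore"): once $\tilde i_*$ is identified with $i_*$, the opened set of $H^r_p$ contains that of \Hbefore, so maximality should transfer. The gap is that you leave this step open, in two ways.

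First, the fallback is not admissible. The claim is about the specific sequences $H^r_p$ obtained by deletions only. \Cref{claim:3rdstep_outcome} and Case~3 rely on $H^r_p$ and $H^{r+1}_p$ differing exactly in $h_{r+1}$, and on $H^\ell_p$ acting like \Hbefore. Inserting \completesequence{} calls changes these objects.

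Second, your monotonicity sketch only treats the third bullet, and the fourth bullet does not transfer term by term. Let $S_b$ be the opened set after \Hbefore and $S_r\supseteq S_b$ the one after $H^r_p$. Take a client that is active w.r.t.\ $S_b$ but directly connected w.r.t.\ $S_r$. Its term switches from the proxy $[\tau'_k-(2+\frac{2}{\gamma-1})(d^2(i_0,k)+d^2(i_0,j))-\gamma d^2(i_0,j)]^+$ to $[\tau_j-\gamma d^2(i_0,j)]^+$ with $\tau_j\ge\theta$, which can be larger. New indices $k\in A(S_b)\setminus A(S_r)$ also appear. So ``the same $i$ stays non-openable'' is not the statement to aim for. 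You only need that \emph{some} facility would be openable w.r.t.\ $S_b$.

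That weaker statement follows from the minimal-counterexample device in Stage~2 of \Cref{clm:nooverbidding}.
\begin{enumerate}
\item Build a candidate witness. From a witness $\tau'$ for $i$ w.r.t.\ $S_r$, set $\tau_j=\tau'_j$ on $A(S_r)$ and $\tau_j=\theta$ on $A(S_b)\setminus A(S_r)$. The case analysis behind your third-bullet sketch gives bullets 1--3 for $i$ w.r.t.\ $S_b$; it uses $(1-\delta)\gamma d^2(j,S_r)\le\alpha_j$ for $j\in DC(S_r)$.
\item Interpolate. Set $\tau^{(t)}=\theta+t(\tau-\theta)$ for $t\in[0,1]$. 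At $t=0$ bullet 4 holds: by \Cref{lem:apxTriangleInequality3} and $\tau_k\le(1-\delta)d^2(k,S_b)$, each proxy term is at most $[(1-\delta)\gamma d^2(j,S_b)-\gamma d^2(i_0,j)]^+$. Hence the sum is bounded by the no-over-bidding quantity of \Cref{clm:robustnooverbidding}, which is at most $\hat f$.
\item If bullet 4 still holds at $t=1$, then $i$ is openable w.r.t.\ $S_b$.
\item Otherwise, take the largest $t^*$ at which bullet 4 holds, and a pair $(i_0,k)$ that is tight there. Keep $\tau^{(t^*)}$ only on $A(S_b)\cap B(i_0,\sqrt{\eps\theta})$ and set it to $\theta$ elsewhere.
\begin{itemize}
\item Paid for: outside that ball $\delta\gamma d^2(i_0,j)\ge\eps^2\theta$. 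So, as in \eqref{eqn:clm:nooverbidding}, every term of the tight sum is dominated by the corresponding term of bullet 3 for $i_0$, and $i_0$ is paid for.
\item Dual feasibility: bullet 4 survives, because its left-hand side is nondecreasing in $\tau$.
\item Not yet open: $i_0\notin S_b$, since for an open $i_0$ every term of that sum vanishes.
\end{itemize}
\end{enumerate}
In either case some facility is openable after \Hbefore, contradicting its maximality.
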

\begin{proof}
    The sequence $H^r_p$ has at most two free facilities because  $\Hafter$  has at most one free facility, and we potentially add one new free facility $\tilde i_*$.

    We now verify that $H^r_p$ is $\eta$-valid. For any regular facility $h'$ in the sequence, the prefix of $H^r_p$ before $h'$ is a subsequence of the prefix before $h'$ in $\Hafter$. Thus, as $h'$ was $\eta$-openable with respect to a super-sequence $\calH(h')$ in \Hafter, $h'$ must be $\eta$-openable with respect to the same super-sequence $\calH(h')$ in $H^r_p$. This ensures that the first property of Definition~\ref{def:valid_sequence} is satisfied. To verify maximality, notice that since we added $\tilde i_*$ with $u(\tilde i_*) =0$ in the end, we have that the set of active clients after opening up the facilities in $H^r_p$ is a subset of the active clients after opening up the facilities in $\Hbefore$. Hence, the maximality of $\Hbefore$ implies the maximality of $H^r_p$.
\end{proof}

\begin{claim}
    One of the following is true.
    \begin{itemize}
        \item  For one of the sequences $H^r_p$,  \completesolx$(H_1, H_2, \ldots, H^r_p)$ opens $k$ regular facilities.
        \item  $\completesol(H_1, H_2, \ldots, \Hafter)$ and $\completesol(H_1, H_2, \ldots, H^{0}_p)$ sandwich $k$.
        \item  \completesolx$(H_1, H_2, \ldots, H^r_p)$ and \completesolx$(H_1, H_2, \ldots, H^{r+1}_p)$ sandwich $k$ for {some}  $r\in \{0, 1, \ldots, \ell-1\}$. 
    \end{itemize}
    \label{claim:3rdstep_outcome}
\end{claim}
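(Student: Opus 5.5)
This is a discrete intermediate-value argument along the chain
\[
\completesolx(H_1,\ldots,\Hafter),\ \completesolx(H_1,\ldots,H^0_p),\ \completesolx(H_1,\ldots,H^1_p),\ \ldots,\ \completesolx(H_1,\ldots,H^\ell_p).
\]
Write $N(\cdot)$ for the number of regular facilities opened by the completed solution. The plan has two steps: identify the two endpoints of the chain, then scan consecutive pairs. All completions are legitimate calls, since each $H^r_p$ is $\eta$-valid by the previous claim.

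\textbf{Endpoints.} The left endpoint is $N(\completesolx(H_1,\ldots,\Hafter))$, which is one side of the given sandwich. For the right endpoint, I will use the remark made right after the generation procedure. The sequence $H^\ell_p$ coincides with $\Hbefore$ except possibly that the regular facility $i_*$ is replaced by its free copy $\tilde i_*$ with $u(\tilde i_*)=0$, placed at the end instead of at its original position.

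Since $u(\tilde i_*)=0$, both sequences yield the same set of open locations by the end of phase $p$. Hence they produce identical states $A$, $IC$, $DC$ and identical $\alpha$-values at the start of phase $p+1$. (The $\alpha$-values of clients that became indirectly connected are determined by the final $S$ in Stage 2.) The continuation by \completesol is therefore the same in both cases.

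It follows that the two completed solutions open the same facilities, up to the label of $i_*$. The only difference is that $i_*$ is counted as regular in one and as free in the other. I will treat this by counting $\tilde i_*$ together with the regular facilities, i.e.\ treating a free copy with $u=0$ like its regular original, so that the right endpoint equals $N(\completesolx(H_1,\ldots,\Hbefore))$. This is the other side of the sandwich. This accounting of $i_*$ is the step I expect to need the most care; if the counts differ by exactly one, I would instead absorb it by noting the free facility budget $O(\log n/\eps^2)$ tolerates it.

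\textbf{Scan.} With both endpoints in hand, walk along the chain. If some term has $N=k$, the first alternative holds; any such term with $r\ge 0$ is one of the $H^r_p$. Otherwise no term equals $k$. The endpoints lie on opposite sides of $k$, so there is a first index at which the sign of $N-k$ flips. That consecutive pair sandwiches $k$.

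If the flip occurs between $\Hafter$ and $H^0_p$, the second alternative holds. Otherwise it occurs between $H^r_p$ and $H^{r+1}_p$ for some $r\in\{0,\ldots,\ell-1\}$, giving the third alternative. When $i_*$ does not exist, $H^0_p=\Hafter$ and the second alternative is simply vacuous; the scan still applies.
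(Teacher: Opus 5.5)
Your chain argument is the same as the paper's: assume no $H^r_p$ completes to exactly $k$ regular facilities, identify the right end of the chain $\Hafter, H^0_p,\dots,H^\ell_p$ with $\Hbefore$ using $u(\tilde i_*)=0$, and take the first sign change of $N-k$. The step that fails is your accounting for $i_*$. You correctly note that when $i_*$ exists, the completions of $H^\ell_p$ and $\Hbefore$ open the same locations, but $i_*$ is regular in one and free (as $\tilde i_*$) in the other. So $N(H^\ell_p)=N(\Hbefore)-1$. (The regular $i_*$ is never reopened later: with $\tilde i_*$ at the same location, every term of its payment condition is $0$.)

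Neither of your remedies works. Counting $\tilde i_*$ as regular changes the quantity the claim is about. Every $H^r_p$ with $r\ge 0$ contains $\tilde i_*$, so all their counts shift by one. A sign change of the shifted count between $H^r_p$ and $H^{r+1}_p$ then only says those completions sandwich $k-1$ in the true count, and the first bullet would likewise be certified for $k-1$. Shifting only the last term makes the scan compare inconsistent quantities. The free-facility budget is irrelevant, because the claim is an exact statement about regular counts, not a cost bound.

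The repair is a one-line integrality argument. Since $\Hafter$ and $\Hbefore$ sandwich $k$, we have $N(\Hbefore)\neq k$.
\begin{itemize}
\item If $N(\Hbefore)<k$, then $N(H^\ell_p)<k$ as well.
\item If $N(\Hbefore)>k$, then $N(H^\ell_p)=N(\Hbefore)-1\ge k$. Either $N(H^\ell_p)=k$, which is the first bullet with $r=\ell$, or $N(H^\ell_p)>k$.
\end{itemize}
So whenever the first bullet fails, $N(H^\ell_p)$ lies strictly on the opposite side of $k$ from $N(\Hafter)$, and your scan goes through unchanged. The paper's proof simply asserts that the two completed solutions coincide and does not mention the relabeling. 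This integrality observation is what justifies that shortcut.
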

\begin{proof}
    Assume the first bullet point does not hold, i.e., the completion of no $H^r_p$ leads to the opening of $k$ regular facilities. Now notice that the second and third bullet points ask whether consecutive solutions sandwich $k$, starting with $\Hafter$ and ending with $H^{\ell}_p$.
    By definition, $H^{\ell}_p$
     equals  $\Hbefore$ except  potentially for the placement of $\tilde{i}_*$, {where}  $\tilde{i}_*$ is the free copy of $i_*$ present in $\Hbefore$. However, as  $u(\tilde{i}_*) = 0$, the two sequences have an identical impact when we complete the solution, which implies
    \[
    \completesol(H_1, \ldots, H^{\ell}_p) = \completesol(H_1, \ldots, \Hbefore).
    \]
    Thus, as $\completesol(H_1, \ldots, \Hafter)$ and $\completesol(H_1, \ldots, \Hbefore)$ sandwich $k$, either the second or third bullet point of the statement must be valid (since no solution opened exactly $k$ regular facilities).
\end{proof}

\paragraph{Outcome of this section of \mergealg.} We distinguish three cases based on the first bullet point that is true in the above claim.

\paragraph{Case 1.} If one of the sequences $H^r_p$ is such that $\completesol(H_1, H_2, \ldots, H^r_p)$ opens  $k$ regular facilities, then we return that solution.

\paragraph{Case 2.} Otherwise, if the solutions $\completesol(H_1, H_2, \ldots, \Hafter)$ and \completesolx$(H_1, H_2, \ldots, H^{0}_p)$ sandwich $k$, then $H^0_p$ is equal to $\Hafter$ except that $\tilde{i}_p$ was added in the end of $\Hafter$ with $u(\tilde{i}_p) = 0$. 
Note that the solution \completesolx$(H_1, H_2, \ldots, \Hafter)$ is equivalent to \completesolx$(H_1, H_2, \ldots, H^0_p)$ if we set $u(\tilde{i}_{q+1}) = 10\Maxdist$ (except the presence of $\tilde{i}_*$). We can thus do binary search on $u(\tilde{i}_*)$ to find two values $u'(\tilde{i}_*)$ and $u(\tilde{i}_*)$ with $|u'(\tilde{i}_*) - u(\tilde{i}_*)| \leq \eta$ so that the two solutions
$\completesol_{(f,u')}(H_1, \ldots, H^0_p)$ and $\completesol_{(f,u)}(H_1, \ldots, H^0_p)$ sandwich $k$  (where $u$ and $u'$ are identical except for $\tilde{i}_{q+1}$). 
We have thus finished phase $p$, and we can restart \mergealg with the two solutions and phase $p+1$, since they have the first $p$, instead of $p-1$, $\eta$-valid sequences in common and exactly one difference parameter.

\paragraph{Case 3.} Finally, consider the case when none of the above cases apply and (thus by the above claim)   there is an  $r\in \{0, 1, \ldots, \ell-1\}$ so that \completesolx$(H_1, H_2, \ldots, H^r_p)$ and \completesolx$(H_1, H_2, \ldots, H^{r+1}_p)$ sandwich $k$.
The only difference between $H^r_p$ and $H^{r+1}_p$ is that $H^r_p$ contains $h_{r+1}$ which is not present in $H^{r+1}$. Let $H_p$ be the sequence obtained by adding a free copy $\tilde i_{r+1}$ of $h_{r+1}$ at the end of $H^{r+1}$ (which clearly maintains that is $\eta$-valid), then 
\begin{align*}
    \completesolx(H_1, H_2, \ldots, H_p) &= \completesolx(H_1, H_2, \ldots, H^r_p) \mbox{ if $u(\tilde i_{r+1}) = 0$}\\
\intertext{and} 
    \completesolx(H_1, H_2, \ldots, H_p) &= \completesolx(H_1, H_2, \ldots, H^{r+1}_p) \mbox{ if $u(\tilde i_{r+1}) = 10\Maxdist$.}
\end{align*}
Hence, doing a binary search on $u(\tilde{i}_{r+1})$ as in the previous case finishes phase $p$. We can restart \mergealg with the two solutions and phase $p+1$, since they have the first $p$, instead of $p-1$, $\eta$-valid sequences in common and exactly one difference parameter.

\subsection{Analysis of the Robust Algorithm}

In this section, we prove \Cref{thm:robust}, showing that a solution $\calH$ of $\eta$-valid sequences yields an almost $\Gamma$-LMP approximation. The proof is almost identical to that for \Cref{alg:logadaptive}. There are only two minor changes:
\begin{enumerate}
    \item \Cref{def:robust_openable} of $\eta$-openability allows for a facility to be paid up to an $n\gamma\eta$-additive difference. This difference only appears in \Cref{lem:robustapprox} below, which is an analog of \Cref{lem:approximation-log}.

    \item \Cref{def:valid_sequence} of an $\eta$-valid sequence allows $h_t$ is $\eta$-openable with respect to a super sequence $\calH'\supseteq \calH\oplus(\langle h_1,\dots,h_{t-1}\rangle)$. This difference only appears in (the last part of) the proof of \Cref{lem:robustfeasible}, an analog of \Cref{lem:dual-feasible-log}.
\end{enumerate}
These two lemmas immediately imply \Cref{thm:robust}.

In order to use mostly the same analysis, one syntactic difference between the setting of \Cref{subsec:log_adaptivity_analysis} and the current setting that we have to reconcile is the the former analyzes the outcome of \Cref{alg:logadaptive} while the latter analyzes a solution $\calH=(H_1,\dots,H_L)$ of $\eta$-valid sequences. To use the same terminology from \Cref{subsec:log_adaptivity_analysis}, let us consider the {\em execution} of the solution $\calH$ where for $p=1,\dots,L$, we run the $p$-th phase according to the sequence $H_p$; in Stage 1, each facility in the sequence becomes open one-by-one with the corresponding $(\tau_j)_j$ values, and in Stage 2 at the end of the phase, $\theta\leftarrow (1+\eps^2)\theta$.

Then, an {\em atomic step} can still be defined in the same way as in \Cref{alg:logadaptive}; it corresponds to opening a facility in Stage 1, {possibly a free facility}, (and updating $\alpha$, $S$, $A$, $IC$, $DC$ accordingly) or increasing $\alpha$-values in Stage 2 simultaneously (and updating $IC$ accordingly). By ``at any point in the algorithm'', we mean any point in the algorithm's execution that is not in the middle of an atomic step.

Throughout the analysis, we let $\alpha_j^*$ be the final $\alpha$-value of client $j\in D$, and we let $S^*$ be the final set of opened facilities. We start by analyzing the approximation guarantee of the algorithm with respect to {$\alpha^*$}, and then we prove that $\alpha^*/\Gamma$ is a feasible solution to the dual.

\subsubsection{Approximation Guarantee}

{The following lemma is the analogue of Lemma \ref{lem:approximation-log}.}

\begin{lemma}
\label{lem:robustapprox}
    We have $\sum_{j\in D}\alpha_j^*\geq (1-\delta)\sum_{j\in D}d^2(j, S)+\sum_{i\in \Sr}(\hat f-n\gamma\eta)$.
\end{lemma}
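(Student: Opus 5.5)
We follow the proof of Lemma~\ref{lem:approximation-log}. By induction over the atomic steps of the execution of $\calH$, we prove the invariant
\begin{equation*}
\sum_{j\in DC}\alpha_j \;\geq\; \sum_{j\in DC}(1-\delta)\gamma\, d^2(j,S) + \sum_{i\in S\cap \Sr}(\hat f - n\gamma\eta).
\end{equation*}
At the end of the execution $A=\emptyset$. Every $j\in IC$ satisfies $\alpha^*_j\geq (1-\delta)d^2(j,S)$, and $\gamma\geq 1$. Hence the invariant gives the lemma, where $S$ is the final set of opened facilities, including the free ones. Free facilities appear on the right-hand side only through the distances $d^2(j,S)$, computed with the extended metric $d(\tilde i,x)=\sqrt{u(\tilde i)+d^2(i,x)}$.

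The invariant holds initially since $DC=S=\emptyset$. Stage~2 opens nothing and moves no client into $DC$, so both sides are unchanged. Three kinds of Stage~1 openings remain.

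\emph{Regular facility.} Suppose $i$ is opened with values $(\tau_j)$ witnessing $\eta$-openability with respect to some super-sequence $\calH'$. Copy the argument of Lemma~\ref{lem:approximation-log}. Let $A'$ and $IC'$ be the clients of $A$ and $IC$ that move to $DC$, and let $X$ be the clients of $DC$ that get closer. The right-hand side increases by at most
\begin{equation*}
(\hat f - n\gamma\eta) + \sum_{j\in A'\cup IC'}(1-\delta)\gamma d^2(i,j) + \sum_{j\in X}(1-\delta)\gamma\bigl(d^2(i,j)-d^2(j,S)\bigr).
\end{equation*}
The third bullet of Definition~\ref{def:robust_openable} lower bounds the total payment by $\hat f-n\gamma\eta$. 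So the right-hand side increases by at most $\sum_{A'}\tau_j+\sum_{IC'}\alpha_j$, which is exactly the increase of the left-hand side.

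\emph{Free facility.} No $\tau$ increase occurs: we take $\tau_j=\alpha_j$. Clients may move into $DC$, which adds $\alpha_j\geq(1-\delta)\gamma d^2(j,S)$ to the left-hand side and at most that to the right-hand side. Clients already in $DC$ can only decrease $d^2(j,S)$. No opening cost is charged. Hence the invariant is preserved.

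The main obstacle is the super-sequence subtlety. Openability of $h_t$ is certified with respect to a super-sequence $\calH'$ whose set $S'$ may strictly contain the currently open set $S$. The partition $A,IC,DC$ and the payment in the third bullet are therefore measured against $S'$, not against $S$ as in the actual execution. I would handle this as follows.
\begin{itemize}
\item Define the $\tau$-values used in the execution by capping, $\tau_j=\min(\tau'_j,(1-\delta)d^2(j,S))$, in the spirit of the proof of Claim~\ref{claim:1ststep_etavalid}.
\item Compare each client's contribution under $S$ with its contribution under $S'\supseteq S$. A larger open set makes clients inactive or directly connected earlier, and $d(j,S')\leq d(j,S)$. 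In every category, the contribution computed with respect to $S$ should then be at least the one computed with respect to $S'$. For directly connected clients this uses that $\gamma d^2(j,S)-\gamma d^2(i,j)$ is monotone in $d(j,S)$.
\item Conclude that the payment condition still holds for the actual execution state.
\end{itemize}
If this monotonicity fails in some category, I would instead run the induction along the execution of the super-sequence itself, ignoring the extra facilities. Facilities opened earlier only lower later payments, so this route should also close the argument.
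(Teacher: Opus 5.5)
Your proposal is correct and is essentially the paper's proof: the same invariant, free openings charging no opening cost, regular openings handled as in Lemma~\ref{lem:approximation-log} with $\hat f$ replaced by $\hat f-n\gamma\eta$, and Stage~2 trivial. The paper does not address the super-sequence issue for this lemma at all, so your monotonicity check is an addition beyond its proof; it goes through if you set $\tau_j=\theta$ for clients active w.r.t.\ $S$ but not w.r.t.\ $S'$ (your cap is vacuous, since $d(j,S)\ge d(j,S')$) and use that $\calH'$ agrees with $\calH$ on phases $1,\dots,p-1$, so each such client either has $\alpha'_j=\theta$ or lies within $\sqrt{\eps\theta}$ of $S'$.
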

\begin{proof}
{The proof is very similar to the proof of Lemma \ref{lem:approximation-log}. It is sufficient to show that, at any point of the algorithm, one has}
    \begin{gather}
        \sum_{j\in DC} \alpha_j \geq \sum_{j\in DC}(1-\delta)\gamma d^2(j, S)  + \sum_{i\in \Sr} (\hat f-n\gamma\eta)\,.
        \label{eq:robustIH_approx_guarantee}
    \end{gather}
{Indeed, at the end of the algorithm all the clients are in IC or DC. For the ones} in IC, we have $\alpha_j\geq (1-\delta)d^2(j, S)$. {For the remaining ones it is sufficient to apply the above inequality and the fact} that $\gamma\geq 1$.
    
    {We prove \eqref{eq:robustIH_approx_guarantee} by induction on the steps of the algorithm.} The equality is initially true since $A= D$ (thus $IC\cup DC = \emptyset$) and $S = \emptyset$. {For the inductive step,} we next analyze each one of the two stages separately. 

    \paragraph{Stage 1.} Consider what happens when we open a facility $h$, i.e., add it to $S$. 

    If $h$ is a free facility, the $\alpha$-values and $\Sr$ do not change, and the only change is {that} some client $j$ with $\alpha_j\geq (1-\delta)\gamma d^2(j, S)$ becomes {directly connected}. The amount of the increase of the left-hand side of \eqref{eq:robustIH_approx_guarantee} is at least that of the right-hand side.
    
    If $h$ is a regular facility $i$, {the argument is analogous to the proof of lemma \ref{lem:approximation-log} in the same case, the only difference being that $\hat{f}$ is replaced by $\hat{f}-n\gamma\eta$.}

    \paragraph{Stage 2.}
    No facility is open at this stage, and no client is added to $DC$ either.
\end{proof}

\subsubsection{Dual Feasibility}

We next prove {the following lemma, which is analogous to \Cref{lem:dual-feasible-log}.} 
\begin{lemma}
\label{lem:robustfeasible}
We have
\[
\sum_{k\in D}[\alpha_k^*-\Gamma d^2(i,k)]^+\leq \hat f.
\]
\end{lemma}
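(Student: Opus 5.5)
The plan is to replay the proof of \Cref{lem:dual-feasible-log} on the execution of the solution $\calH$ of $\eta$-valid sequences. Fix a facility $i$, which may be regular or a free copy (in the latter case all distances to $i$ are given by $d(\tilde i,x)=\sqrt{u(\tilde i)+d^2(i,x)}$, and the space remains metric, so \Cref{lem:apxTriangleInequality3} still applies). Let $D^*=\{k:\alpha^*_k>\Gamma d^2(i,k)\}=[s]$, sorted by $\alpha^*$. Define $DC^k$, $IC^k$, $S^k$ exactly as before, using the notion of $i$ being frozen by an open facility $h_0$ (regular or free). The goal is to establish the analogue of \Cref{fct:bidbound} for every $k\in D^*$. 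The purely algebraic second half of the proof of \Cref{lem:dual-feasibility-original} (the triangular system $A_{jj'}$, the coefficients $\beta_j\ge 0$, \Cref{fact:ub-for-sum-beta}, and the bound \eqref{eqn:additional-term-in-sum-of-alpha}) then finishes the proof verbatim, since it uses nothing about the algorithm.

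\textbf{Step 1: No over-bidding.} I would first reprove \Cref{clm:nooverbidding} for the execution of $\calH$.
\begin{itemize}
\item In Stage 1, opening a facility $h$ only lowers contributions, whether $h$ is regular or free: $d(j,S)$ decreases, and the clients with $\tau_j>\alpha_j$ immediately become directly connected. For free $h$, no $\alpha$ changes at all.
\item In Stage 2, I would argue by contradiction as before. Take the minimal $\tau'$ at which some facility $i$ becomes exactly paid. The constructed $\tau$ verifies the first three bullets of $0$-openability, hence also of $\eta$-openability, and the same triangle-inequality computation gives the fourth bullet. So $i$ is openable with respect to $\theta$ and $\calH\oplus H_p$, contradicting the maximality in \Cref{def:valid_sequence}.
\end{itemize}
If $i$ is a free facility that is unopened, the same argument applies to $i$ as a point of the extended metric. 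Even when openability is defined only for $i\in F$, a free $\tilde i$ with $u\ge 0$ has bids dominated by those of its regular copy $i$, so I can bound via $i$ instead.

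\textbf{Step 2: clients that become inactive before $i$ is frozen.} \Cref{clm:notfrozen} transfers essentially unchanged. Its proof uses only \Cref{clm:nooverbidding} at the relevant moment and the fact that clients whose $\alpha$ jumped near a non-freezing facility are not in $D^*$. Free facilities never raise $\alpha$-values, so they only help here.

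\textbf{Step 3: the frozen case, which I expect to be the main obstacle.} Suppose $k$ becomes inactive when some $h_0$ freezing $i$ is opened.
\begin{itemize}
\item If $h_0$ is free, no $\tau$ is involved. All active clients keep $\alpha_j=\theta$, and I would instead apply Step 1 right before $h_0$ opens, as in the Stage 1 case of \Cref{clm:notfrozen}.
\item If $h_0$ is regular, it is $\eta$-openable only with respect to some super-sequence $\calH'\supseteq\calH\oplus\langle h_1,\dots,h_{t-1}\rangle$, not the actual prefix. The fourth bullet therefore holds for the state $(A',IC',DC')$ induced by $\calH'$, which has more open facilities.
\end{itemize}
To handle the regular case, I would compare states exactly as in the fourth-bullet argument of \Cref{claim:1ststep_etavalid}. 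Extra open facilities give $A\supseteq A'$ and $DC\subseteq DC'$, and we have the real $\alpha\ge$ the $\calH'$-values. I would then show that each client's contribution in the real state to the Fact~\ref{fct:bidbound} inequality is at most its contribution in the $\calH'$-state:
\begin{itemize}
\item Clients in $DC$ are also in $DC'$, so their terms agree.
\item For clients in $DC'\setminus DC$, the fourth-bullet term $[\tau_k-(2+\frac{2}{\gamma-1})(\cdot)-\gamma d^2]^+$ dominates what we need only if it is at least the real term $\alpha_j-\gamma d^2(i,j)$. This is the wrong direction for a naive comparison, so I would instead use the real directly-connected-type triangle bound \eqref{eqn:bound-of-alpha_i} with $d(j,S^k)$.
\end{itemize}
This is the point where I would check that the direction works out, possibly by choosing the witnessing $\tau$ as the $\calH'$-one restricted by $\min(\cdot,(1-\delta)d^2(j,S))$ and reusing \Cref{clm:frozenstay}. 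Once \Cref{fct:bidbound} is obtained, the inequalities \eqref{eq-cnt} follow, and so does the lemma.
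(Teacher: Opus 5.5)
Your Steps 1 and 2 and the algebraic finish match the paper, which states that the no-over-bidding, not-frozen and frozen-stay claims carry over verbatim. However, the frozen case of Step 3 has a genuine gap, at exactly the point you left open.

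You keep $DC^k, IC^k$ defined from the real execution. But the only inequality involving the raised value $\alpha_k^*=\tau_k$ is the fourth bullet of $\eta$-openability, and that bullet is certified in the state $(A',IC',DC')$ of the super-sequence $\calH'$, not the real state. Consider $j\in DC'\setminus DC$ with $j<k$. Your real-classification inequality contains the term $\alpha_j^*-\gamma d^2(i,j)$. The fourth bullet only supplies $[\tau_k-(2+\frac{2}{\gamma-1})(d^2(i,k)+d^2(i,j))-\gamma d^2(i,j)]^+$. Already when $\alpha_j^*=\tau_k=\theta$ and $d(i,k)+d(i,j)>0$, the latter is strictly smaller.

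Invoking \eqref{eqn:bound-of-alpha_i} does not help. That bound converts the term of a client that is \emph{directly connected} in the state where it is applied into a $\tau_k$-type term. Here $j$ is indirectly connected or active in the real state, so there is nothing to convert. The restriction $\tau_j=\min(\tau'_j,(1-\delta)d^2(j,S))$ from \Cref{claim:1ststep_etavalid} does not help either: it worked there only because the comparison ran the other way, with the new state having more directly connected clients. Moreover, the fourth bullet quantifies only over $k\in A'$. The clients in $A\setminus A'$ are active in reality but inactive in $\calH'$, and they also become inactive at the freezing event. Your argument does not cover them at all.

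The missing idea is to change the bookkeeping rather than the inequality. For the clients that become inactive when $i$ is frozen by $i_0$, the paper sets $DC^k:=DC'\cap[k-1]$ and $IC^k:=(IC'\cup A')\cap[k-1]$, so earlier clients are classified according to the $\calH'$-state. It also orders these clients with $A\setminus A'$ first and $A'$ last, each group sorted by $\alpha^*$. With this choice:

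For $k\in A'$, the fourth bullet yields \Cref{fct:robustbidbound} directly. This uses $\alpha_j^*=\tau_j$ for $j\in A'\cap D^*$ (\Cref{clm:robustfrozenstay}), and the fact that $\alpha_j^*$ equals its $\calH'$-value for $j\in IC'$.

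For $k\in A\setminus A'$, one has $\alpha_k^*=\theta$, and the paper applies the no-over-bidding claim in the $\calH'$-state.

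The algebraic part survives this redefinition. It only uses $IC^k=[k-1]\setminus DC^k$ and the nesting $DC^j\subseteq DC^{j'}\cap[j-1]$ for $j<j'$. The nesting still holds because $\calH'$ has a superset of the open facilities, so $DC'$ contains every real directly connected set up to that moment.
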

The following three claims and their proofs are identical to \Cref{clm:nooverbidding}, \Cref{clm:notfrozen} and \Cref{clm:frozenstay}, without any modification {and using a similar notation}.

\begin{claim}[No Over-Bidding]
\label{clm:robustnooverbidding}
    At any point in the algorithm, for every facility $i_0$,
    $$\sum_{j\in DC}[{(1-\delta)}\gamma d^2(j, S) - \gamma d^2(i_0,j)]^+ + \sum_{j\in A\cup IC}[\alpha_j - \gamma d^2(i_0,j)]^+\leq \hat f.$$
\end{claim}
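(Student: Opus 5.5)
We prove \Cref{clm:robustnooverbidding} by induction over the atomic steps of the execution of the solution $\calH$, exactly as in \Cref{clm:nooverbidding}. The base case is the same as before. Initially every client is active with $\alpha_j=1$, and every distance is at least $1$ by \Cref{lem:aspectratio}. So each term $[\alpha_j-\gamma d^2(i_0,j)]^+$ vanishes and the left-hand side is $0\le \hat f$.

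\textbf{Stage 1: opening a facility $h$.} Here we distinguish whether $h$ is regular or free.
\begin{itemize}
\item If $h$ is regular and $\eta$-openable with values $(\tau_j)_{j\in A}$, the old argument carries over. A directly connected client only lowers its contribution, because $d(j,S)$ can only decrease. A client moved from $A\cup IC$ to $DC$ has $\alpha_j\ge (1-\delta)\gamma d^2(j,S)$, so its term drops from $[\alpha_j-\gamma d^2(i_0,j)]^+$ to $[(1-\delta)\gamma d^2(j,S)-\gamma d^2(i_0,j)]^+$.
\item The remaining issue is the increase $\alpha_j\leftarrow\tau_j$ for active clients. Following the structure of \Cref{clm:nooverbidding}, we control it through the fourth bullet of \Cref{def:robust_openable}, the dual-feasibility condition for every $i_0$. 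That bullet already bounds $\sum_{IC}+\sum_{A}[\tau_j-\gamma d^2(i_0,j)]^+$ plus a nonnegative $DC$ part by $\hat f$. After the opening, the $DC$ terms $[(1-\delta)\gamma d^2(j,S)-\gamma d^2(i_0,j)]^+$ are dominated via \Cref{lem:apxTriangleInequality3}, as in the fourth-bullet verification in \Cref{clm:nooverbidding}.
\item If $h$ is free, no $\alpha$-value changes and $S$ only grows, with distances to $\tilde h$ defined by $u$, so the extended space is still metric. Hence the monotonicity argument above applies verbatim.
\end{itemize}

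\textbf{Stage 2: raising active $\alpha$-values to $\min\{(1+\eps^2)\theta,(1-\delta)d^2(j,S)\}$.} We argue by contradiction with a minimal violating threshold $\tau'$ for some facility $i$. The contradiction we derive is that $i$ would be ($0$-, hence $\eta$-)openable with respect to $\theta$ and $\calH\oplus(H_p)$. The third bullet follows from inequality \eqref{eqn:clm:nooverbidding}, and the fourth from the same \Cref{lem:apxTriangleInequality3} computation together with minimality of $\tau'$. This contradicts the maximality clause of \Cref{def:valid_sequence}.

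\textbf{Main obstacle.} The delicate point is that maximality in \Cref{def:valid_sequence} is stated for ``openable'' facilities, with respect to $\calH\oplus(\langle h_1,\dots,h_\ell\rangle)$, and that free facilities enter $d(j,S)$ through $u$. I would check that the $\tau_j$ constructed in Stage 2 satisfy the openability conditions using the distances $d$ that include free copies. The only inequalities used are \Cref{lem:apxTriangleInequality3} and the triangle inequality, both valid in the extended metric, so the verification goes through unchanged. Regular candidates $i$ suffice, since the claim is needed for regular $i_0$ in the dual.
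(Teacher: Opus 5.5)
Your base case, your free-facility case and your Stage~2 argument (minimal violating $\tau'$, openability with respect to $\calH\oplus(H_p)$, contradiction with maximality) match the paper, which simply reuses the proof of \Cref{clm:nooverbidding}. The gap is in Stage~1, in how you handle the increase $\alpha_j\leftarrow\tau_j$. The fourth bullet of \Cref{def:robust_openable} cannot give the claim after the opening. Its $DC$-terms $[\tau_k-(2+\frac{2}{\gamma-1})(d^2(i_0,k)+d^2(i_0,j))-\gamma d^2(i_0,j)]^+$ are \emph{smaller} than the terms $[(1-\delta)\gamma d^2(j,S)-\gamma d^2(i_0,j)]^+$ you need to bound. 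The \Cref{lem:apxTriangleInequality3} computation in \Cref{clm:nooverbidding} proves exactly $\tau_k-(2+\frac{2}{\gamma-1})(d^2(i_0,k)+d^2(i_0,j))\le(1-\delta)\gamma d^2(j,S)$. So the no-over-bidding sum dominates the fourth-bullet sum: that direction derives openability \emph{from} no-over-bidding, not the reverse. Dropping the $DC$ part of the fourth bullet and bounding the old $DC$ clients by the induction hypothesis does not help either, since you would be adding two separate bounds of $\hat f$.

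The missing observation is that the fourth bullet is not needed in Stage~1 at all. By the first bullet of openability, $\tau_j>\alpha_j=\theta$ can only happen for $j\in A\cap B(h,\sqrt{\eps\theta})$. After $h$ is opened, such a $j$ has $d^2(j,S)\le\eps\theta$. Hence, for $\eps$ small, $\tau_j\ge\theta>(1-\delta)\gamma\eps\theta\ge(1-\delta)\gamma d^2(j,S)$, and $j$ is moved to $DC$. Its new term is $[(1-\delta)\gamma d^2(j,S)-\gamma d^2(i_0,j)]^+\le[\theta-\gamma d^2(i_0,j)]^+$, which is exactly its term before the opening.

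It follows that no increased value $\tau_j$ ever appears in the $A\cup IC$ sum. Every client's term is non-increasing across the opening, and the induction hypothesis closes Stage~1. The same point is missing from your first bullet: for a client moved to $DC$, you must compare against its pre-opening value $\alpha_j=\theta$, not against $\tau_j$, and that is where $d^2(j,S)\le\eps\theta$ is needed.

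Finally, rather than restricting to regular $i_0$, note that $d^2(\tilde i,j)=u(\tilde i)+d^2(i,j)\ge d^2(i,j)$. So the claim for a free copy $\tilde i$ follows from the claim for $i$.
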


As usual, fix an arbitrary facility $i\in F$. The setup in this paragraph is also identical to \Cref{subsubsec:dual-feasibility-log}. Let $D^*=\{j\in D:\alpha_j^*>\Gamma d^2(i, j)\}$ be those clients that contribute to the left-hand side. 
We also reuse the definitions of freezing and $IC^k, DC^k, A^k$ for {for $k\in D^*$ that becomes inactive strictly before $i$ becomes frozen}.

\begin{claim}
\label{clm:robustnotfrozen}
For any $k\in D^*$ that becomes inactive strictly before $i$ becomes frozen,
\begin{align*}
& \sum_{j\in {DC^k}}\left(\alpha_k^* - \left(2+\frac{2}{\gamma-1}\right)(d^2(i,k)+d^2(i,j))-\gamma d^2(i,j)\right)+\sum_{j\in {IC^k}}(\alpha_j^*-\gamma d^2(i,j))\\
&\quad +\sum_{{k\leq j\leq s}}(\alpha_k^*-\gamma d^2(i, j))\leq \hat f.
\end{align*}
\end{claim}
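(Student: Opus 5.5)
The plan is to replay the proof of \Cref{clm:notfrozen} almost verbatim inside the execution of the solution $\calH$. The only new ingredients are that some opened facilities may be free, with extended distances $d(\tilde i,x)=\sqrt{u(\tilde i)+d^2(i,x)}$, and that openability only holds up to $\eta$ and with respect to super-sequences. The key observation is that neither feature enters the argument. \Cref{clm:robustnooverbidding} only needs the fourth (dual-feasibility) bullet of \Cref{def:robust_openable}, which is unchanged from \Cref{def:openable}. The approximate triangle inequality of \Cref{lem:apxTriangleInequality3} applies in the extended metric, since the extended space is still metric.

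As in the original, I split according to the atomic step in which $k$ becomes inactive.

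\textbf{Stage 1.} Here $k$ becomes inactive when some facility $h$ is opened, free or regular, and $h$ does not freeze $i$. First I argue that every client of $B(h',\sqrt{\eps\theta})$ whose value was raised by an earlier opening $h'$ in this phase that does not freeze $i$ lies outside $D^*$. This uses the same computation $d^2(i,j)>(1+\eps)^2\theta/\Gamma\geq\alpha_j^*/\Gamma$. For a free $h'$, the $\tau$-values are not increased at all, so the case is trivial. Moreover, $d(i,h')\geq d(i,i')$ for the underlying regular copy $i'$, so the freezing distance bound only helps.

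Consequently, right before $h$ opens, the active members of $D^*$ have value exactly $\theta$, and the others have their final values. I then apply \Cref{clm:robustnooverbidding} at that moment with $i_0=i$. For each $j\in DC^k$ connected to $h_1\in S^k$, possibly free, I bound
\[
\alpha_k^*\leq(1-\delta)d^2(h_1,k)\leq(1-\delta)\gamma d^2(j,S^k)+\Big(2+\tfrac{2}{\gamma-1}\Big)\big(d^2(i,k)+d^2(i,j)\big)
\]
via \Cref{lem:apxTriangleInequality3} in the extended metric. The first inequality holds because $k$ was active with respect to $S^k$, so its final value is at most $(1-\delta)d^2(k,S^k)$. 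Dropping positive parts and using $\alpha_j^*=\alpha_j$ for $j\in IC^k$ and $\alpha_k^*\leq\alpha_j$ for active $j\geq k$ gives the claim.

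\textbf{Stage 2.} Stage 2 opens no facility, free or otherwise, so the term-by-term comparison with \Cref{clm:robustnooverbidding} at the end of the phase carries over unchanged, including the three-group split and the tie-breaking argument.

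The main obstacle is ensuring that \Cref{clm:robustnooverbidding} genuinely holds in the robust execution. Its Stage 2 step in \Cref{clm:nooverbidding} argued by contradiction that a violating facility would be openable, contradicting completion of Stage 1. Here, instead, the maximality in \Cref{def:valid_sequence} of each $H_p$ supplies this: a minimal violator would be ($0$-, hence $\eta$-)openable with respect to $\calH\oplus H_p$.

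I would also verify the Stage 1 step for free openings. A free facility changes no $\alpha$-values, and clients moving to $DC$ only lower their terms. A client's term in the $DC$ sum also cannot increase, since $d(j,S)$ only shrinks as facilities are added. With these checks done, the claim follows exactly as in \Cref{clm:notfrozen}.
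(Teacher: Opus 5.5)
Your plan follows the paper's route. The paper just says the proof is identical to that of \Cref{clm:notfrozen}, and your robustness checks are the right ones: free openings raise no $\alpha$-values, \Cref{lem:apxTriangleInequality3} still applies in the extended metric, and maximality in \Cref{def:valid_sequence} replaces ``completion of Stage~1'' in the Stage~2 step of \Cref{clm:robustnooverbidding}.

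One step of your Stage~1 case fails as written, and the paper's proof of \Cref{clm:notfrozen} has the same wording. You invoke \Cref{clm:robustnooverbidding} \emph{right before $h$ opens}, but $DC^k$, $IC^k$ and $S^k$ are fixed at the \emph{beginning of the stage}. A client $j\in IC^k$ that was active at the start of the phase may already have become directly connected through an earlier opening $h''$ of the same stage that does not freeze $i$. Nothing prevents this. Direct connection only needs $(1-\delta)\gamma d^2(j,h'')\le\theta$, non-freezing needs $d(i,h'')$ to exceed roughly $\sqrt{\theta/\Gamma}$, and $\gamma<\Gamma$ leaves room for both while keeping $j\in D^*$. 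At the moment you chose, such a $j$ contributes $[(1-\delta)\gamma d^2(j,S)-\gamma d^2(i,j)]^+$ to \Cref{clm:robustnooverbidding}. This can be strictly smaller than the term $\alpha_j^*-\gamma d^2(i,j)=\theta-\gamma d^2(i,j)$ you must dominate, so ``$\alpha_j^*=\alpha_j$ for $j\in IC^k$'' does not cover it. Likewise, for $j\in DC^k$ the no-over-bidding term at that moment involves $d(j,S)\le d(j,S^k)$, so your bound phrased with $S^k$ does not transfer.

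The repair is to apply \Cref{clm:robustnooverbidding} at the beginning of the stage, which is exactly the moment your $S^k$-based triangle bound fits.
\begin{itemize}
\item Every $j\in DC^k$ contributes $[(1-\delta)\gamma d^2(j,S^k)-\gamma d^2(i,j)]^+$, which dominates its term in the claim by your bound.
\item Every $j\in IC^k$ lies in $A\cup IC$ with current value equal to $\alpha_j^*$. If it is indirectly connected, its value is already fixed. If it is active, its value is $\theta$, and this stays final: $j$ leaves $A$ no later than $k$, hence through a non-freezing or free opening, and such an opening never raises a client of $D^*$.
\item Every $j\geq k$ is still active, with value $\theta=\alpha_k^*$.
\end{itemize}
Your analysis of the state right before $h$ opens is then needed only to certify $\alpha_k^*=\theta$ and the equalities above. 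With this change the argument is complete. Your Stage~2 case is fine as stated, since Stage~2 moves nobody into $DC$.
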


\begin{claim}
\label{clm:robustfrozenstay}
Assume that $i$ is frozen by $i_0$ {and let} $(\tau_j)_{j\in A}$ {be the values associated with the opening of $i_0$}. Then, for every $j\in D^*\cap A$, $\alpha_j^*=\tau_j$.
\end{claim}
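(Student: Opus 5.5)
The plan is to follow the proof of \Cref{clm:frozenstay} verbatim, in the setting of an execution of a solution $\calH$ of $\eta$-valid sequences, and to check that no step uses a property that fails there. Here $i_0$ may be regular or free, and the facilities opened later may themselves be free.

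Fix $j\in D^*\cap A$, where $A$ is the set of active clients right before $i_0$ is opened. Let $\theta$ be the current phase value, so $\alpha_j=\theta$. We split into two cases according to how $\tau_j$ compares with $(1-\delta)d^2(i_0,j)$.

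\textbf{Case $\tau_j\geq (1-\delta)d^2(i_0,j)$.} After opening $i_0$ the execution sets $\alpha_j\leftarrow\tau_j$. Since $i_0\in S$, we get $\tau_j\geq (1-\delta)d^2(j,S)$ once $i_0$ is added, so $j$ leaves $A$ immediately. Hence $\alpha_j^*=\tau_j$. This uses only the update rule of the execution, which is the same as in \Cref{alg:logadaptive}.

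\textbf{Case $\alpha_j\leq\tau_j<(1-\delta)d^2(i_0,j)$.} We show $j\notin D^*$, a contradiction.
\begin{itemize}
\item Since $i_0\in S$ from now on, whenever $j$ later becomes inactive (in Stage 1 through some $\tau$-increase, or in Stage 2 through the cap $\min\{(1+\eps^2)\theta',(1-\delta)d^2(j,S)\}$), its final value satisfies $\alpha_j^*\leq(1-\delta)d^2(j,S)\leq(1-\delta)d^2(i_0,j)$. The second bullet of \Cref{def:robust_openable} enforces the cap $(1-\delta)d^2(j,S)$ for every later opening, regular or free, so this holds throughout.
\item Freezing gives $d(i,i_0)\leq(1+\sqrt{\Gamma\eps}+3\eps)\sqrt{\theta/\Gamma}$. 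Combined with $\theta\leq(1-\delta)d^2(i_0,j)$ and the triangle inequality $d(i,j)\geq d(i_0,j)-d(i_0,i)$, this yields $\sqrt\Gamma\, d(i,j)>d(i_0,j)$ for small $\eps$, since $\sqrt\Gamma>2$.
\item Therefore $\alpha_j^*\leq d^2(i_0,j)<\Gamma d^2(i,j)$, so $j\notin D^*$.
\end{itemize}

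The main point to check is that the triangle inequality still holds when $i_0$ is a free copy $\tilde i'$ with $d(\tilde i',x)=\sqrt{u(\tilde i')+d^2(i',x)}$. The paper remarks that the extended space remains metric, in particular $d(x_1,x_2)\leq d(\tilde i,x_1)+d(\tilde i,x_2)$ for points other than $\tilde i$. Applying this with $x_1=i$ and $x_2=j$ gives $d(i_0,j)\leq d(i_0,i)+d(i,j)$, which is exactly what is used.

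We must also confirm that if $i_0$ is free, it comes with admissible values $(\tau_j)$. A free facility is opened without a paid-for condition, but the execution defines its $\tau_j=\alpha_j$ (no increase), so the argument only gets easier. Finally, the super-sequence flexibility in \Cref{def:valid_sequence} plays no role here, because the argument concerns only the actual execution after $i_0$ opens. With these checks the proof carries over unchanged.
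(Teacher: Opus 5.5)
Your proposal is correct and follows the paper's argument: the paper proves this claim by declaring that the proof of \Cref{clm:frozenstay} carries over verbatim, and your two-case analysis reproduces that proof. One slip in your extra check for a free $i_0$: the paper's inequality $d(x_1,x_2)\le d(\tilde i,x_1)+d(\tilde i,x_2)$ with $x_1=i$, $x_2=j$ upper-bounds $d(i,j)$, not $d(i_0,j)$, so it does not give what you need; the inequality you actually use, $d(\tilde i',j)\le d(\tilde i',i)+d(i,j)$, is a different triangle inequality, which follows from $d(i',j)\le d(i',i)+d(i,j)$ together with $\sqrt{u+(a+b)^2}\le\sqrt{u+a^2}+b$ (or directly from the paper's general remark that the extended space is metric).
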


The remaining part of the proof needs some changes from \Cref{subsubsec:dual-feasibility-log}, due to the fact that a facility may be openable with respect to a super-sequence instead of a sequence that the algorithm executes.
Let $i_0$ be the facility that freezes $i$, which opens in (stage 1) of phase $p$. Let $\theta = (1+\eps^2)^p$ be the $\theta$ value in phase $p$.
Let $\mathcal{H'} \supseteq \mathcal{H}$ be the super-sequence with respect to which $i_0$ is openable. As in \Cref{subsec:log_adaptivity_analysis}, we order the clients in $D^*$ according to the time they are removed from $A$ (in the algorithm) and break ties according to $\alpha^*$-values in increasing order. 
The only additional rule is for the clients who are in $A$ right before $i_0$ is open; let us call them $A$. They are, by \Cref{clm:robustfrozenstay}, the last group of clients in $D^*$. 
Let $A' \subseteq A$ be the clients who are active right before $i_0$ is open {according to} the supersequence $\mathcal{H'}$. Then for the order within $A$, we place 
$A \setminus A'$ first and $A'$ after, and within each group we sort by the $\alpha^*_j$ values. 
Similarly, let $IC'$ and $DC'$ be the set of indirectly and directly connected clients right before $i_0$ is open in the supersequence $\mathcal{H}'$, and $I'=A\setminus A'$.
{For $k\in D^*$ that becomes inactive when $i$ becomes frozen, we define $DC^k$ by $DC'\cap [k-1]$, and $IC^k$ by $(IC'\cup A')\cap [k-1]$.}
We prove the following analog of \Cref{fct:bidbound}.

\begin{fact}
\label{fct:robustbidbound}
    For any $k \in D^*$:
    \begin{align*}
    & \sum_{j\in DC^k}\left(\alpha_k^*-\left(2+\frac{2}{\gamma-1}\right)(d^2(i,k)+d^2(i,j))-\gamma d^2(i,j)\right)\\
    &\quad +\sum_{j\in IC^k}(\alpha_j^*-\gamma d^2(i,j)) + \sum_{k\leq j{\leq s}}(\alpha_k^*-\gamma d^2(i,j))\leq \hat f.
    \end{align*}
\end{fact}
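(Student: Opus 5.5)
We split according to when $k$ becomes inactive, as in the proof of \Cref{fct:bidbound}. If $k$ becomes inactive strictly before $i$ is frozen, the statement is exactly \Cref{clm:robustnotfrozen}, since $DC^k$, $IC^k$ are then defined as before. So the work is the case where $k\in A$, i.e.\ $k$ is active right before $i_0$ (the facility freezing $i$) is opened. Let $(\tau_j)_{j\in A'}$ be the values witnessing $\eta$-openability of $i_0$ with respect to the super-sequence $\calH'$. In that state the client sets are $A'$, $IC'$, $DC'$.

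\textbf{Sub-case $k\in A'$.} Apply the fourth bullet of \Cref{def:robust_openable} with facility $i$ and client $k$:
\[
\sum_{j\in DC'}[\tau_k-\cdots]^+ +\sum_{j\in IC'}[\alpha'_j-\gamma d^2(i,j)]^+ +\sum_{j\in A'}[\tau_j-\gamma d^2(i,j)]^+\le \hat f .
\]
Here $\alpha'$ are the IC values in the $\calH'$ state. We drop the $[\cdot]^+$ and restrict to $j\in D^*$. Then we compare term by term.

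\emph{Clients of $A'$.} By \Cref{clm:robustfrozenstay} (applied to the execution, where $A'\subseteq A$ and the $\tau$'s are those used), $\alpha_j^*=\tau_j$ for $j\in D^*\cap A'$, and in particular $\alpha_k^*=\tau_k$. For $j\in A'$ with $j\ge k$, the ordering (within $A'$ sorted by $\alpha^*$) gives $\alpha^*_k\le\alpha^*_j=\tau_j$. For $j\in A'$ with $j<k$, which lie in $IC^k$ by definition, the term is $\alpha_j^*=\tau_j$ exactly.

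\emph{Clients of $IC'$.} For $j\in IC'\cap D^*$ we need $\alpha_j^*\le\alpha'_j$. Since $\calH'$ extends the executed prefix, clients inactive in $\calH'$ before $i_0$ acquired their final value $\alpha'_j$ from facilities that are actually opened or from the phase-$p$ prefix. I will argue that $\alpha^*_j\le\alpha'_j$. If $j$ was inactive already in the execution, its value is the same. If not, i.e.\ $j\in A\setminus A' = I'$, then $\alpha_j^*=\tau_j^{\mathrm{exec}}$ lies in $[\theta,(1+\eps^2)\theta]$ near $i_0$. Meanwhile $\alpha'_j\ge\theta$, and $j$ is placed before every client of $A'$. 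Here I would reuse the argument of \Cref{clm:frozenstay}: a client of $I'$ that is in $D^*$ must have $\alpha_j^*\le\alpha'_j$, else it is far from $i$.

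\emph{Clients of $DC'$.} These appear only with $j<k$, since $k\in A'$ is ordered after $I'$ and all earlier groups. They contribute the first sum with exactly the same expression.

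\emph{The range $k\le j\le s$.} This range consists only of $A'$-clients, and these are handled above.

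\textbf{Sub-case $k\in I'=A\setminus A'$.} Then $k$ is inactive in $\calH'$ before $i_0$. It is not covered by the fourth bullet, since $k\notin A'$. Instead I would apply \Cref{clm:robustnooverbidding} in the $\calH'$ state right before $i_0$. Here $\calH'$ is a legitimate state of an execution prefix, so no over-bidding holds there as well, being an invariant of openable steps. Then I repeat the Stage-1 argument of \Cref{clm:notfrozen}: for $j\in DC'$ connected to $i_1$, use $\alpha_k^*\le(1-\delta)d^2(i_1,k)$ and \Cref{lem:apxTriangleInequality3} to bound the first sum. The IC and active terms are handled with the ordering, since $\alpha_k^*\le\alpha'_j$ or $\tau_j$ for later $j$. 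The claim $\alpha_k^*\le(1-\delta)d^2(k,S')$ follows because $k$'s final value is capped by distance to $S'\cap S^*$ or, if not, $k\notin D^*$ by the freezing distance estimate.

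\textbf{Main obstacle.} The hardest step is reconciling the execution values $\alpha^*$ with the $\calH'$-state values, i.e.\ showing $\alpha^*_j\le\alpha'_j$ (or $\le\tau_j$) for every $j\in D^*$ whose status differs between the two states. This is the reason for the special ordering $I'$ before $A'$. I would try first to prove a monotonicity lemma: since $\calH'$ opens a superset of the facilities opened in the execution before $i_0$, every client's value in $\calH'$ is at most its execution value. This would have to be combined with the $D^*$ distance bound from \Cref{clm:frozenstay} to rule out bad clients.
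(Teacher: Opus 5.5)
Your case split matches the paper's: the not-yet-frozen case via \Cref{clm:robustnotfrozen}; $k\in A'$ via the fourth bullet of \Cref{def:robust_openable} for the pair $(i,k)$, \Cref{clm:robustfrozenstay} and the ordering; and $k\in I'$ via no over-bidding in the $\calH'$ state. The problem is how you match execution values to $\calH'$-state values. In the subcase $k\in A'$ you need $\alpha^*_j\le\alpha'_j$ for $j\in IC'$. The monotonicity lemma you plan ($\calH'$-values are at most execution values) gives the opposite inequality. Your fallback for $j\in I'$ also fails: it says a raised client of $D^*$ must be ``far from $i$'' as in \Cref{clm:frozenstay}. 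That estimate only applies to clients still active after $i_0$ opens. A client raised at $i_0$ lies in $B(i_0,\sqrt{\eps\theta})$, hence within roughly $\sqrt{\theta/\Gamma}$ of $i$, and may well be in $D^*$. The missing observation, which the paper uses, is simpler. The witness $(\tau_j)$ of $i_0$ is indexed by $A'$, so clients of $I'$ are not raised and keep $\alpha^*_j=\theta$. A client raised by an earlier facility $h$ of phase $p$ has $d^2(j,h)\le\eps\theta$ with $h\in S'$, so it lies in $DC'$, not $IC'$. Together with the common prefix of phases $1,\dots,p-1$, this gives $\alpha^*_j=\alpha'_j$ for all $j\in IC'$, and $\alpha^*_k=\theta$ for $k\in I'$.

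In the subcase $k\in I'$, your inequality $\alpha^*_k\le(1-\delta)d^2(k,S')$ is backwards: $k\notin A'$ means exactly $(1-\delta)d^2(k,S')\le\theta=\alpha^*_k$. The triangle step of \Cref{clm:notfrozen} needs $\alpha^*_k\le(1-\delta)d^2(k,i_1)$ for the facility $i_1$ serving $j\in DC'\cap[k-1]$. This holds when $i_1$ is already open in the execution before $i_0$, since $k$ is active there. It fails when $i_1\in S'$ is opened only in the super-sequence. For example, let $j$ be indirectly connected in the execution from an earlier phase, and let both $j$ and $k$ lie close to such an $i_1$ and to $i$. Then $j$'s term $\alpha^*_k-(2+\frac{2}{\gamma-1})(d^2(i,k)+d^2(i,j))-\gamma d^2(i,j)$ in the target is close to $\theta$. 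Its term $[(1-\delta)\gamma d^2(j,S')-\gamma d^2(i,j)]^+$ in \Cref{clm:robustnooverbidding} is close to $0$, so no term-by-term comparison can work. The paper's treatment of this subcase is only a brief sketch (``conclude using \Cref{lem:apxTriangleInequality3}'') and does not address these clients either. You would need a genuinely new argument for the clients of $DC'$ served by facilities of $S'$ not yet opened in the execution, not a transcription of \Cref{clm:notfrozen}.
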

\begin{proof}
    Let $k\in D^*$. Consider the point in the execution when $k$ becomes inactive. If this occurs strictly before $i$ is frozen, then by \Cref{clm:robustnotfrozen}, we have
    \begin{align*}
    & \sum_{j\in {DC^k}}\left(\alpha_k^*-\left(2+\frac{2}{\gamma-1}\right)(d^2(i,k)+d^2(i, j)) - \gamma d^2(i,j)\right)\\
    &\quad + \sum_{j\in {IC^k}}(\alpha_j^* - \gamma d^2(i, j)) + \sum_{{k\leq j\leq s}}(\alpha_k^*-\gamma d^2(i, j)) \leq \hat f.
    \end{align*}
    This is precisely the claim.

    Else, $k$ becomes inactive when $i$ is frozen by the opening of some facility $i_0$: {let} $(\tau_j)_{j\in A^k}$ {be the values associated with the opening of $i_0$}. {For simplicity, we use $\Gamma'=2+\frac{2}{\gamma-1}$ in the rest of this proof.}
    
    If $k\in D^*\cap A'$, by the fourth bullet of $\eta$-openability, we have
    \[
    \sum_{j\in DC'}\left[\tau_k - 
    {\Gamma'}
    \cdot (d^2(i,k) + d^2(i,j))-\gamma d^2(i,j)\right]^+ + \sum_{j\in IC'} [\alpha_{j}-\gamma d^2(i,j)]^+ + \sum_{j\in A'}[\tau_{j}-\gamma d^2(i,j)]^+\leq \hat f.
    \]
    By \Cref{clm:robustfrozenstay}, $\alpha_j^*=\tau_j$ for every $j\in D^*\cap {A'}$ {(and thus for $j=k$)}. 
    Furthermore, for $j \in IC'$, even when $\alpha_j$ values above are determined by the super-sequence $\mathcal{H'}$ instead of $\mathcal{H}$, the $\alpha_j$ values are the same in both executions; these sequences are identical until the $(p-1)$th phase, so any $j$ that became inactive in phase $(p-1)$ or earlier has the same $\alpha_j$ values. For $j$ that becomes inactive in (stage 1 of) phase $p$ in $\mathcal{H'}$, we have $\alpha_j = \theta$ in $\mathcal{H'}$, and we also have $\alpha^*_j = \theta$ as well, because if we had $\tau_j > \theta$ at any opening of a facility in phase $p$, it implies $j \in DC'$, which is contradiction.
    Hence
    \[
    \sum_{j\in DC'}\left[\alpha_k^* - 
    {\Gamma'}
    \cdot (d^2(i,k) + d^2(i,j))-\gamma d^2(i,j)\right]^+ + \sum_{j\in IC'} [\alpha_{j}^*-\gamma d^2(i,j)]^+ + \sum_{j\in A'} [\alpha_{j}^*-\gamma d^2(i,j)]^+ \leq \hat f.
    \]
    By our ordering, (1) $j \in DC'$ and $k \in A'$ imply $j < k$, and (2) $k \in A'$ and $j \geq k$ imply $\alpha^*_j \geq \alpha^*_k$. Therefore, the above inequality implies what we want:
    \[
    \sum_{j<k:j\in DC'}\left(\alpha_k^* - 
    {\Gamma'} \cdot (d^2(i,k) + d^2(i,j))-\gamma d^2(i,j)\right) + \sum_{j<k:j\not\in DC'} (\alpha_{j}^*-\gamma d^2(i,j)) + \sum_{j\geq k}(\alpha_{k}^*-\gamma d^2(i,j))\leq \hat f.
    \]

    If $k\in D^*\cap I'$, then $\alpha^*_k = \theta$ since, with respect to $\mathcal{H'}$, $k$ was inactive right before opening $i_0$.
    \Cref{clm:robustnooverbidding} applied to the execution with respect to $\calH'$ right before $i_0$ was opened ensures that
    \[
    \sum_{j\in DC'}[\gamma d^2(j,S) - \gamma d^2(i_0,j)]^+ + \sum_{j\in A'\cup IC'}[\alpha_j - \gamma d^2(i_0,j)]^+\leq \hat f.
    \]
    We conclude, using \Cref{lem:apxTriangleInequality3}, \Cref{clm:robustfrozenstay} and that for any $j\in A'\cap D^*$, $j \geq k$ implies $\alpha_j^*\geq \alpha_k^*$.
\end{proof}

We are now ready to prove \Cref{lem:robustfeasible}.
\begin{proof}[Proof of \Cref{lem:robustfeasible}]
By definition of $D^*$, this is equivalent to proving that
$$\sum_{k\in D^*}\alpha_k^*\leq \hat f + \Gamma\cdot \sum_{k\in D^*}d^2(i, k).$$

\Cref{fct:robustbidbound} implies that for any $k\in D^*$,
\begin{equation}
\label{eq-cnt}
\tag{$\text{UB}_k$}
(s-k+1 + |{DC^k}|)\cdot \alpha_k^*
+ \sum_{j\in {IC^k}}\alpha_j^*
\leq \hat f
+ \gamma\cdot \sum_{j\in [s]}d^2(i, j)
+ \left(2+\frac{2}{\gamma-1}\right)\cdot
\sum_{j\in {DC^k}} (d^2(i, k) + d^2(i, j))
\end{equation}
{The rest of the proof is identical to the final part of the proof of \Cref{lem:dual-feasibility-original}, hence we omit it}.
\end{proof}

{We conclude with the proof of \Cref{thm:robust}.}
\begin{proof}[Proof of \Cref{thm:robust}]
By \Cref{lem:robustapprox}, the dual values $\frac{\alpha^*_j}{1-\delta}$ can pay the cost of the solution with facility opening cost $\frac{\Gamma f-n\gamma \eta}{1-\delta}{\geq \frac{\Gamma}{1-\delta}(f-n\eta)}$ {for the regular facilities}. Lemma \ref{lem:robustfeasible} implies that $\alpha^*/\Gamma$ is a feasible solution to the dual of the facility location LP with facility cost $f$. We conclude that
$$
\frac{\Gamma}{1-\delta}\sopt_{LP}(f)\geq \sum_{j\in D}d^2(j,S)+|{S_{reg}}|\frac{\Gamma}{1-\delta}(f-n\eta).
$$
\end{proof}

\section{$(5+O(\sqrt{\eps}))$-Approximation for  $(\frac{\zeta}{\log n})$-{Stable} Instances}
\label{sec:centerremoval}
In this section, we give a $(5+O(\sqrt{\eps}))$-approximation algorithm for 
$(\frac{\zeta}{\log n})$-stable $k$-means instances. 
\begin{restatable}[]{theorem}{stableapprox}
For any constants $\eps, \zeta>0$, there exists a randomized polynomial-time algorithm that, given a $(\zeta/\log n)$-stable $k$-means instance,  returns a solution of cost at most $(5+O(\sqrt{\eps})\sopt$ with high probability. 
    \label{thm:mainadditivecenters}
\end{restatable}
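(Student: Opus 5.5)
The plan is to adapt the $(2+\eps)$-approximation for $\zeta/\log n$-stable $k$-Median instances of \cite{CGLSS25stoc}. Wherever that analysis uses the triangle inequality, we replace it by the approximate triangle inequalities for squared distances, Lemma~\ref{lem:apxTriangleInequality2} and Lemma~\ref{lem:apxTriangleInequality3}, with parameters tuned to yield the constant $5$.

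\textbf{Step 1: structure of stable instances.} First we use stability, $\sopt_{k-1}\geq(1+\zeta/\log n)\sopt_k$, to show that $\opt$ has no cheap cluster to delete. Closing any single center $c^*\in\opt$ and reassigning its cluster $C^*$ to the next-closest optimal center must cost at least $\frac{\zeta}{\log n}\sopt$. Using Lemma~\ref{lem:apxTriangleInequality2}, this gives that every optimal cluster has a large ``merge cost'' compared to its own connection cost. From this we identify, for each cluster, a core of clients within radius roughly $\sqrt{\eps}$ times the cluster's average radius from its center. By Markov's inequality, the core carries a constant fraction of the cluster.

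\textbf{Step 2: seeding.} We start from an $O(1)$-approximate solution, e.g.\ the $\apxLSkmeans$-approximation of \cite{LSGupta}. We then run the randomized procedure of \cite{CGLSS25stoc}: repeatedly remove centers that are (approximately) redundant, and $D^2$-sample new centers. The aim is that, with high probability, every optimal cluster whose removal would hurt receives a center inside its core. Since each cluster is ``expensive'' relative to $\sopt/\log n$, a polynomial number of $D^2$-samples hits every core, and stability rules out two optimal clusters sharing one solution center cheaply.

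\textbf{Step 3: charging.} Once each optimal center $c^*$ has an algorithm center $c$ within distance $O(\sqrt{\eps})$ times the cluster scale, we bound the cost of each client $j\in C^*$ via $d(j,c)\leq d(j,c^*)+d(c^*,c)$. Squaring with Lemma~\ref{lem:apxTriangleInequality2} and $\gamma=1+O(\sqrt\eps)$ gives $(1+O(\sqrt{\eps}))$ times the optimal cost for matched clusters.

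The factor $5$ arises from clusters that are only indirectly served. For these, the client is charged along a path $j\to c^*\to c'^*\to c$, where $c'^*$ is a nearby optimal center. Lemma~\ref{lem:apxTriangleInequality3} with $\gamma=1$ in the limit gives $(x+y+z)^2\leq x^2+2y^2+2z^2+\ldots$. After balancing, this should be $(1+2+2)$ times the opt contribution, i.e.\ $5$. Summing, and absorbing the $\log n$ additive-center overhead via stability, should give $(5+O(\sqrt\eps))\sopt$.

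\textbf{Main obstacle.} The main difficulty is Step 3: making the loose $k$-Median argument $d\leq d_1+d_2+d_3$ lose only a factor $5$ instead of $9$ after squaring. This requires showing that the middle-hop term is charged only once per client, with the reassignment cost bounded by the optimal cost of the neighboring cluster rather than its square-scaled radius. The first attempt will be a careful Cauchy--Schwarz-weighted version of Lemma~\ref{lem:apxTriangleInequality3}, choosing the weights per cluster.
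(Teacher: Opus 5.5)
Your Step~3, which is where the constant $5$ is supposed to come from, rests on a false inequality. The coefficient $2+\frac{2}{\gamma-1}$ in Lemma~\ref{lem:apxTriangleInequality3} diverges as $\gamma\to1$, so it has no $\gamma=1$ limit of the form you use. Indeed, $(x+y+z)^2\le x^2+2y^2+2z^2$ already fails at $x=y=z=1$.

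More generally, plugging $(x,y,z)=(1/A,1/B,1/C)$ shows that $(x+y+z)^2\le Ax^2+By^2+Cz^2$ for all $x,y,z\ge 0$ forces $\frac1A+\frac1B+\frac1C\le 1$, hence $A+B+C\ge 9$. Aggregating over a cluster with Cauchy--Schwarz does not help either, since all clients at the same distance is the worst case. So charging a client along a single three-hop path, with each hop bounded by an opt-scale quantity, costs a factor $9$, not $5$. This is exactly the paper's bound for the dummy solution $S-S_0\cup\dummyset$ (Lemma~\ref{lemma:S0properties}, path $p\to c^*\to\ell\to\delta$ with $\gamma=3$).

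Your premise that each relevant optimal center receives an algorithm center within $O(\sqrt\eps)$ times the cluster scale is also not what $D^2$-sampling provides. A hit gives a \emph{client} $\ell\in C^*_{\avg}$ plus a guessed radius $\rho\approx\avg_{C^*,\opt}$. That is only a cluster-scale ball known to contain $c^*$. (In Step~1, Markov's inequality likewise says nothing about the fraction of clients within $\sqrt\eps$ times the average radius.)

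What is missing is a way to decide, simultaneously and ending with exactly $k$ centers, which $|\ho|\le\log n/\eps^3$ centers $S_0$ of the local-search solution $S$ to close and which facility to open in each guessed ball. In the paper, the supporting facts are:
\begin{itemize}
\item Single-swap local optimality of $S$, which your plan never invokes, gives cost $(1+O(\sqrt\eps))\opt$ on pure clusters (Lemma~\ref{lem:purecost}).
\item Stability enters only through the bound of $\log n/\eps^3$ on non-pure clusters (Lemma~\ref{lem:numnonpure}).
\item Only $O_\eps(\log n)$ samples are drawn, so all their subsets can be enumerated.
\end{itemize}

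The mechanism is then as follows. The mixed solutions $M_O=S-S_0\cup\ho$ (expensive clients pay $\opt_p$) and $M_D=S-S_0\cup\dummyset$ (they pay about $9\opt$ in aggregate) both serve the remaining clients within roughly $\sum r(p)\le 4\sum\opt_p+O(\sqrt\eps)\sopt$ (Lemma~\ref{lemma:convenience_upper_bound}). Hence $\clcost(M_O,\mu_O)+\clcost(M_D,\mu_D)\le(10+O(\sqrt\eps))\sopt$. After guessing the expensive part $\calQ$ of $S_0$, a surrogate $\bcalU$ for its cheap part, and $\calR_0$, the remaining choice becomes maximizing the monotone submodular $f(X)=g(\emptyset)-g(X)$ over the partition matroid of balls. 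A $\frac12$-approximation yields $g(X)\le\frac12\bigl(g(\emptyset)+g(\ho)\bigr)\le\frac12\bigl(\clcost(M'_D,\mu'_D)+\clcost(M'_O,\mu'_O)\bigr)$. So the $5$ is $\frac{1+9}{2}$, an average of two mixed solutions, not a single triangle-inequality chain.

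Finally, stability cannot ``absorb'' extra centers. The condition $\sopt_{k-1}\ge(1+\beta)\sopt_k$ says closing centers is expensive, and the algorithm for stable instances must output exactly $k$ centers.
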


To simplify notation, we let $\eps$ be the minimum of $\eps$ and $\zeta$ in the above theorem and further assume that $\eps<1/12$ is sufficiently small. With this notation, we present a $(5+O(\sqrt{\eps})$-approximation algorithm for $\eps/\log(n)$-stable instances. This implies the above theorem as any $\zeta/\log(n)$-stable instance is also $\zeta'/\log(n)$-stable with $\zeta' \leq \zeta$.

The algorithm and the proof of the above theorem follow several steps where we iteratively simplify the instance by adding more structure. Finally, we obtain enough structure to solve the problem by maximizing a submodular function subject to a partition matroid constraint. We have not optimized constants in favor of simplifying the description, and throughout this section we let $O_\eps(\log n))$ denote $O(\log(n)/\eps^{O(1)})$. An overview of this section and the algorithm is as follows:
\begin{mdframed}[hidealllines=true, backgroundcolor=gray!15]
\vspace{-5mm}
\paragraph{Overview of a $(5+O(\sqrt{\eps})$-Approximation for $k$-means on $(\eps/\log n)$-
Stable Instances $k, \clients, \facilities, \dist$.}\ \\
\begin{enumerate}
    \item We start, in \Cref{sec:localSearchAnalysis}, by running a standard Local Search on $(k, \clients, \facilities, \dist)$ to obtain a set of centers $S$. It is well-known that $S$ is a constant-factor approximation. Moreover, for a stable instance, we show that $S$ ``correctly'' implicitly identifies all but $O_\eps(\log n)$ centers of $\opt$. Specifically, this ensures that all but $O_\eps(\log n)$ clusters have the same cost as in $\opt$ up to a $(1+O(\sqrt{\eps}))$ factor. 
    \item In \Cref{sec:dsampleproc}, we use a technique inspired from the 
    classic $k$-means++ algorithm and its
    so called $D^2$-sampling procedure to
    sample clients proportional to their cost in solution $S$, and we show that if we take  $O_\eps(\log n)$ samples $W$ then with good probability we ``hit'' all but an insignificant fraction of clusters of high cost in $S$. 
    \item In \Cref{sec:ballguesses}, we then, guess a subset of the  sampled clients $W$ so that each hit cluster has exactly one client, which we refer to as the leader of that cluster. In addition, for each client $\ell$ that is a leader, we guess the (approximate) distance from $\ell$ to the optimal center in the optimal cluster it hits. This defines a set $\calB$ of balls with the guarantee that the optimal solution opens one center in each of these balls. Moreover, if we let $\ho$ be the optimal centers corresponding to these balls, we prove in \Cref{sec:dummycenters} that there is a swap of centers $M_O = S-S_0  \cup \ho$ that removes $S_0$ and adds the optimal centers $\ho$ so that the cost of the solution $M_O$  is good and in particular a $(4+O(\sqrt{\eps}))$-approximation.  
    \item The remaining steps are then focused on approximating this swap, i.e., intuitively to that of finding $S_0$ and $\ho$:  
   \begin{enumerate}
    \item We first find the set $\calQ$ of expensive centers of $S_0$ so that the {clusters of} centers $S_0-\calQ$ ha{ve} total cost $O(\eps \sopt)$ in \Cref{sec:removalofExpensive}. This allows us to think of the {clusters of} centers in $S_0 - \calQ$ as contracted, i.e., as single weighted points, and in particular, that all clients of these clusters are assigned to the same center when their center in $S_0$ is removed. 
     We then, in \Cref{sec:consistentlyassigning}, use this structure to partition $S_0 - \calQ$ into two sets $\calU$ and $\calR$ where  $\calU$ are those centers that are reassigned to other centers in $S - S_0$ and $\calR$ are those centers that are assigned to $\ho$. 

    \item The set $\calU$ is then approximated with a set $\bcalU$ (with close to identical properties) in \Cref{sec:removalofCheap}.
    \item At this stage, we have thus ``guessed'' all centers $\calQ \cup \bcalU$ to remove from $S$ except for those in $\calR$. This is achieved in \Cref{sec:submodularopt} where we reduce the problem of both finding $\calR$ and an approximate version of $\ho$ to that of submodular function optimization subject to a partition matroid (since we should open one center per ball in $\calB$). 
    \end{enumerate}
\end{enumerate}

\end{mdframed}

We remark that in each of Steps 3, 4a, 4b, and 4c, we will make guesses by enumerating polynomially $n^{\eps^{-O(1)}}$ many potential choices.  In the following, we describe the algorithm and its analysis by presenting each of these steps. When presenting the next step, we analyze the branch of the algorithm that took the successful guesses up to that point. The algorithm outputs the solution of the smallest cost among the $n^{\eps^{-O(1)}}$ constructed solutions obtained by enumerating all possible guesses (see \Cref{sec:stable:everythingtogether} for a more formal treatment where we put everything together).
 In particular, the final solution will have cost at most the solution that is output in the analyzed branch of ``correct'' guesses. 

We end this section by introducing basic definitions and concepts, followed by the definition of matroids and submodular functions. We also state the known result, an  $(1-1/e)$-approximation algorithm for maximizing monotone submodular functions subject to a matroid constraint, that we use in the last step (\Cref{sec:submodularopt}).


\paragraph{Definitions of leaders.} 
We use the following definitions throughout this section. Given a set of centers $A\subseteq \facilities$, we let $\clcost(A):=\sum_{p\in \clients}\dist^2(p,A)$ denote the cost of the corresponding clustering (where each client $p$ is assigned to the closest center in $A$ and pays the corresponding distance). Sometimes we will consider a (possibly suboptimal) assignment $\mu:\clients \rightarrow A$ of clients to centers in $A$, and let $\clcost(A,\mu):=\sum_{p\in \clients}\dist^2(p,\mu(p))$ be the corresponding cost. Let $\opt$ be a fixed optimum solution for the $k$-means instance, and let $\sopt = \clcost(\opt)$ be its cost. 
For a client $p\in \clients$, let $\opt_p:=\dist^2(p,\opt)$ 
be the cost paid by $p$ in $\opt$.

Let $C^*$ be an $\opt$ cluster with center $c^*\in \opt$. We let $\avg_{C^*, \opt}$ be the squared distance from $c^*$ to the $\eps |C^*|$th closest client of $C^*$ to $c^*$.
We further let $C^*_\avg$ be the set of clients of $C^*$ at a squared distance at most
$\avg_{C^*, \opt}$ from $c^*$, we refer to these clients
as \emph{leaders} for $C^*$. So $\avg_{C^*, \opt}$ is the \emph{maximum squared distance} from a leader to its center $c^*$. Also note that we have, $|C^*_{avg}|\geq {\eps}|C^*|$. 
We thus have that $\avg_{C^*, \opt}$ is upper bounded by $\frac{1}{1-\eps}$ times the average cost $\frac{1}{|C^*|} \sum_{p\in C^*} \dist^2(p, c^*)$, and at the same time $C^*_{avg}$ contains a constant fraction of the clients. 

\paragraph{Partition matroids.}
In \Cref{sec:submodularopt}, we formulate a submodular function that we then maximize over a partition matroid using known results. We define those concepts and also state the result that we will use. First, recall the definition of a matroid and partition matroid. 
A \emph{matroid} is a tuple $(E, \mathcal{I})$ defined on a ground set $E$ with a family of independent sets $\mathcal{I}$ that satisfy: (1) if $A \in \mathcal{I}$ and $A' \subseteq A$, then $A' \in \mathcal{I}$; and (2) if $A, B \in \mathcal{I}$ and $|A| < |B|$, then there exists an element $e \in B - A$ such that $A \cup \{e\} \in \mathcal{I}$.
  In the special case of  a \emph{partition matroid}, the ground set $E$ is partitioned into disjoint subsets $E_1, E_2, \dots, E_k$, and there are non-negative integers $r_1, r_2, \dots, r_k$ such that:
\begin{itemize}
    \item A subset $I \subseteq E$ is independent if and only if $|I \cap E_i| \leq r_i$ for each $i = 1, 2, \dots, k$.
\end{itemize}
In other words, each subset $E_i$ has a capacity limit $r_i$, and an independent set $I$ cannot contain more than $r_i$ elements from $E_i$.
To work with matroids (to get a running time that is polynomial in the size of the ground set $E$), we often use \emph{independence queries}, which allow us to determine if a given subset $A \subseteq E$ is independent (i.e., whether $A \in \mathcal{I}$). Specifically, an independence query on a subset $A$ returns {true} if $A$ is independent and {false} otherwise.

\paragraph{Submodular functions.} Having defined matroids, we proceed to define non-negative monotone submodular functions. For a finite ground set $E$, a set function $f: 2^E \to \mathbb{R}$ is \emph{submodular} if it satisfies the diminishing returns property: for every pair of sets $A \subseteq B \subseteq E$ and any element $e \in E - B$, it holds that
\[
f(A \cup \{e\}) - f(A) \geq f(B \cup \{e\}) - f(B).
\]
Intuitively, this means that adding an element $e$ to a smaller set $A$ provides at least as much additional value as adding $e$ to a larger set $B$.
In addition, $f$ is called \emph{monotone} if for any pair of sets $A \subseteq B \subseteq E$, we have $f(A) \leq f(B)$. In other words, adding elements to a set does not decrease the function value. Finally, $f$ is \emph{non-negative} if $f(A) \geq 0$ for all $A \subseteq E$. 

A celebrated result~\cite{CalinescuCPV11} gives a polynomial-time $(1-1/e)$-approximation algorithm for the problem of maximizing a non-negative monotone submodular function over a matroid constraint. While the original result was a randomized algorithm, a recent striking result~\cite{BuchbinderFeldman24}, building upon the work of~\cite{FilmusW14}, gives a deterministic algorithm with the same guarantee. We remark that polynomial time here refers to a running time that is polynomial in the size of the ground set $E$.   To summarize, they show 
the following theorem.
\begin{theorem}
    Let $f: 2^E \to \mathbb{R}$ be a non-negative monotone submodular function that we can evaluate in polynomial time, and let $(E, \mathcal{I})$ be a matroid on the same groundset, for which we can answer independence queries in polynomial time. Then, for every $\zeta>0$, there is a polynomial-time algorithm that outputs a set $X\subseteq E$ with $X \in \calI$ so that
    \[
       f(X) \geq (1-1/e - \zeta) \cdot \max_{X^*\in \calI} f(X^*)\,.  
    \]
    \label{thm:submodularmatroidoptimization}
\end{theorem}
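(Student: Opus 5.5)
This statement is a known result, so the plan is to reduce it to the cited works and explain how to instantiate them; I will not reprove them from scratch. The natural route is the continuous greedy framework of \cite{CalinescuCPV11}. Consider the multilinear extension $F:[0,1]^E\to\mathbb{R}$, $F(x)=\mathbb{E}[f(R(x))]$, where $R(x)$ contains each $e$ independently with probability $x_e$. Let $P(\mathcal{I})$ be the matroid polytope. Since independence queries take polynomial time, linear functions over $P(\mathcal{I})$ can be maximized by the greedy algorithm, so we have a polynomial-time linear optimization oracle.

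The steps, in order, are as follows.
\begin{enumerate}
\item Run continuous greedy. Start at $x(0)=0$. At each time $t\in[0,1]$, discretized into $\mathrm{poly}(|E|)/\zeta$ steps, move in the direction $v(t)=\arg\max_{v\in P(\mathcal{I})}\langle v,\nabla F(x(t))\rangle$. The gradient is estimated by sampling, with $\mathrm{poly}(|E|,1/\zeta)$ samples per coordinate.
\item Use monotonicity and submodularity (concavity of $F$ along nonnegative directions) to show
\[
\tfrac{d}{dt}F(x(t))\ \ge\ \mathrm{OPT}-F(x(t))
\]
up to small errors. Integrating gives $F(x(1))\ge(1-1/e-\zeta/2)\,\mathrm{OPT}$ with high probability.
\item Round $x(1)\in P(\mathcal{I})$ to an independent set $X$ with $f(X)\ge F(x(1))$ in expectation, using pipage rounding or swap rounding. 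Both preserve membership in the matroid base/independence polytope and do not decrease $F$, because $F$ is convex along directions $e_a-e_b$.
\end{enumerate}
Repeating the procedure and keeping the best output boosts the guarantee to hold with high probability.

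The main obstacle is controlling the estimation errors in the discretized continuous greedy, since gradient values come from sampling. Doing so requires Chernoff bounds and a step size small enough that the accumulated loss is at most $\zeta\,\mathrm{OPT}$. One subtlety is that these errors are additive in the gradient values, which are of order $\max_e f(\{e\})\le\mathrm{OPT}$, so the additive loss is indeed proportional to $\mathrm{OPT}$. For the deterministic version one would instead invoke \cite{BuchbinderFeldman24} (building on \cite{FilmusW14}), which replaces sampling by a deterministic non-oblivious local search / extended greedy achieving the same ratio.

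For the application in \Cref{sec:submodularopt}, the matroid is a partition matroid, so independence queries are trivial. Hence the theorem applies directly once the submodular function is shown to be polynomial-time evaluable.
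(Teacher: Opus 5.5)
Your proposal matches the paper, which likewise gives no proof of its own and simply invokes the continuous greedy with rounding of \cite{CalinescuCPV11}, together with \cite{BuchbinderFeldman24} (building on \cite{FilmusW14}) for the deterministic guarantee the theorem literally states. Your sketch of the continuous greedy analysis, the rounding step, and the repetition that turns the in-expectation guarantee into a high-probability one is an accurate, if informal, summary of those cited arguments.
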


\subsection{Properties of a Locally Optimal Solution}
\label{sec:localSearchAnalysis}
Let $S$ be a {locally optimal} solution output by 
the following standard local search algorithm. We remark that its running time is polynomial since by \Cref{lem:aspectratio} the cost of a solution is a polynomially bounded integer and, at each improving step, the cost decreases by at least one.

\begin{mdframed}[hidealllines=true, backgroundcolor=gray!15]
\vspace{-5mm}
\paragraph{Classic Local Search Procedure for $k$-Means.}\ \\
\begin{enumerate}
    \item $S \gets $ Arbitrary solution for $k$-Means.
    \item While there exists a solution $S'$ such that $|S \Delta S'|\le 2$ and $\clcost(S') <\clcost(S)$:~~
    \begin{enumerate}
        \item $S \gets S'$.
    \end{enumerate}
    \item Output $S$.
\end{enumerate}
\end{mdframed}
Local search achieves a $\apxLSkmeans$ approximation for
$k$-Means, see \cite{LSGupta}.
\begin{lemma}
    $\clcost(S) \leq \apxLSkmeans\cdot \sopt$.
    \label{lemma:localsearchis5approximation}
\end{lemma}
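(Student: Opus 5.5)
This is the classical single-swap local search guarantee of Kanungo et al.\ and Gupta--Tangwongsan. Since the paper cites \cite{LSGupta} for it, I would mostly invoke that result. For completeness, here is the standard swap-based argument.

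Fix the local optimum $S$ and the optimum $\opt$. For $o\in\opt$, let $\phi(o)\in S$ be its closest center in $S$ and let $N_S(s)$ denote the cluster of $s$ in $S$.

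\emph{Swap pairs.} I would construct a family of swap pairs $(s,o)$ with $s\in S$, $o\in\opt$, with three properties: every $o\in\opt$ appears in exactly one pair; every $s$ appears in at most two pairs; and if $s$ appears at all, then $\phi^{-1}(s)\subseteq\{o\}$. The construction is the standard one. Centers $s$ capturing exactly one $o$ are paired with it. The $o$'s mapped to centers capturing $\ge 2$ optimal centers are paired with centers capturing none. Counting shows enough of the latter exist, and each is used at most twice.

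\emph{Local optimality per pair.} For each pair, local optimality gives $\clcost(S-s+o)-\clcost(S)\ge 0$. I would upper bound the left side by a specific reassignment. Clients of $N_{\opt}(o)$ go to $o$, contributing $\opt_p-\clcost_p$. Each client $p\in N_S(s)\setminus N_{\opt}(o)$ goes to $\phi(o_p)$, where $o_p$ is its optimal center. This target differs from $s$ by the capture property, and its cost is at most $(d(p,o_p)+d(o_p,\phi(o_p)))^2\le (2d(p,o_p)+d(p,S))^2$, using $d(o_p,\phi(o_p))\le d(o_p,s_p)$.

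\emph{Summation and the squared-triangle step.} Summing over all pairs, each $o$ is counted once and each $s$ at most twice. This yields
\[
0\le \sopt-\clcost(S)+2\sum_p\big[(2\sqrt{\opt_p}+\sqrt{\clcost_p})^2-\clcost_p\big].
\]
The crude Young's inequality is too weak here. Instead I would expand the square and apply Cauchy--Schwarz globally: write $\sum_p\sqrt{\opt_p\clcost_p}\le\sqrt{\sopt\cdot\clcost(S)}$. This gives $0\le 9\sopt-\clcost(S)+8\sqrt{\sopt\,\clcost(S)}$, that is, $x^2-8x-9\le 0$ for $x=\sqrt{\clcost(S)/\sopt}$. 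Hence $x\le 9$ and $\clcost(S)\le 81\sopt$.

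Obtaining the sharper constant $\apxLSkmeans=25$ is the main obstacle. It requires the tighter accounting of \cite{LSGupta,KanungoMNPSW04}: clients of $N_S(s)\cap N_{\opt}(o)$ are not double-counted, and the reassignment bound uses $\clcost_p$ once rather than twice. With this accounting one arrives at $x^2-4x-5\le 0$, so $x\le 5$ and the ratio is $25$. I would carry out this refined summation carefully, or simply defer to \cite{LSGupta} for the constant, since only an $O(1)$ bound is used later.
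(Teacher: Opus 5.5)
There is a gap in the step from $81$ to $25$. Your self-contained argument is correct as far as it goes. The capture-based pairing, the rerouting of $N_S(s)\setminus N_{\opt}(o)$ to $\phi(o_p)$, and the global Cauchy--Schwarz step do give $x^2-8x-9\le 0$, i.e.\ $\clcost(S)\le 81\,\sopt$.

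The claimed refinement to $x^2-4x-5\le 0$ does not work as described. That inequality requires halving the entire rerouting contribution $2\sum_p\bigl(4\opt_p+4\sqrt{\opt_p S_p}\bigr)$, where $S_p=d^2(p,S)$. Neither of the two changes you name does this:
\begin{itemize}
\item Clients of $N_S(s)\cap N_{\opt}(o)$ are already excluded from the rerouted set in your own bound. Moreover, a center $s$ that captures no optimal center may be paired with two centers $o,o'$ such that $N_S(s)\cap N_{\opt}(o)=N_S(s)\cap N_{\opt}(o')=\emptyset$. In that case all of $N_S(s)$ is genuinely rerouted twice.
\item In $(2\sqrt{\opt_p}+\sqrt{S_p})^2-S_p=4\opt_p+4\sqrt{\opt_pS_p}$ the $S_p$ term has already cancelled, so there is nothing further to save there.
\end{itemize}
The factor $2$ is the swap multiplicity. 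In the standard analyses it is reduced only by mechanisms that are unavailable here:
\begin{itemize}
\item With $p$-swaps the multiplicity is $1+1/p\to 1$. Your inequality then becomes $x^2-4x-5\le 0$ in the limit $p\to\infty$.
\item For single swaps in Euclidean space, one can use the centroid identity of \cite{KanungoMNPSW04}: $\sum_{q\in N_{\opt}(o)}d^2(q,x)=\sum_{q\in N_{\opt}(o)}d^2(q,o)+|N_{\opt}(o)|\,d^2(o,x)$, valid when $o$ is the centroid of $N_{\opt}(o)$. It bounds the rerouting cost of each optimal cluster by $\sum_q(2\opt_q+2\sqrt{\opt_qS_q}+S_q)$ instead of your $\sum_q(4\opt_q+4\sqrt{\opt_qS_q}+S_q)$.
\end{itemize}
The local search here uses single swaps in a general metric, so neither mechanism applies, and your sketch does not reach $25$.

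Consequently, in your proposal the constant $25$ rests entirely on the citation. That is also all the paper does: it states the lemma with a reference to \cite{LSGupta} and gives no proof. Everything downstream only needs $\clcost(S)=O(\sopt)$. Your proven bound of $81$ would suffice after adjusting constants, for example the sample size $s^*$ and the step in the proof of \Cref{lem:probcost} where $\sum_p(6\opt_p+3S_p)$ is bounded by $81\,\sopt$ using the constant $25$. So you should either present the $81$ argument as a weaker but self-contained bound, or cite for $25$ without claiming the refined summation.
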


In the remaining part of the algorithm and its analysis we fix $S$, and 
 recall that we fixed the reference optimal solution $\opt$. We distinguish different types of clusters of $\opt$ and analyze their properties in the next two sections. We let $S_p:=\dist^2(p,S)$ be the cost paid
by $p$ in $S$ and recall that $\opt_p$ denotes the cost $p\in \clients$ pays in $\opt$.

\subsection{Pure Clusters}
\label{sec:pure}
 We say that a cluster $C^*$ of $\opt$ is \emph{pure} 
(w.r.t. $S$) if there exists a cluster $C'$ of $S$ such that $|C' \Delta C^*| \le {3\eps} \min(|C^*|,|C'|)$. In which
case we also say that $C'$ is pure w.r.t. $\opt$. We say that $C'$ and $C^*$ are associated.
The following lemma follows from the fact that the input is $\beta$-stable
for $\beta := \eps/\log n$ and that the solution $S$  is a $\apxLSkmeans$-approximation by \Cref{lemma:localsearchis5approximation}. Given a client or center $a$ and a distance $r$, we let $B(a,r)$ be the set of clients and centers at distance at most $r$ from $a$. This lemma is originally proved in~\cite{Cohen-AddadS17}, we include a proof for completeness.
\begin{lemma}[Restatement from~\cite{Cohen-AddadS17}, Lemma IV.4]
\label{lem:numnonpure}
The number $k_{imp}$ of clusters of $\opt$  that are not pure w.r.t. $S$ (and thus the number of clusters of $S$ which are not pure w.r.t $\opt$)
    is at most {$\frac{\log n}{\eps^3}$}. 
\end{lemma}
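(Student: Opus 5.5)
We argue by contradiction. Suppose more than $\frac{\log n}{\eps^3}$ clusters of $\opt$ are not pure with respect to $S$. We use these impure clusters to build, from $\opt$, a solution with $k-1$ centers. Stability forces its cost to be at least $(1+\beta)\sopt$, where $\beta=\eps/\log n$. Comparing this with the local optimality of $S$, summed over all impure clusters, will yield the contradiction. We follow the approach of \cite{Cohen-AddadS17}, adapted to squared distances through Lemma \ref{lem:apxTriangleInequality2}.

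\textbf{Step 1: each impure cluster can be removed only at a significant price.} Fix an impure cluster $C^*$ with center $c^*$, and consider the solution $\opt-\{c^*\}$. By $\beta$-stability, $\clcost(\opt-\{c^*\})\ge\sopt_{k-1}\ge(1+\beta)\sopt$. Hence reassigning the clients of $C^*$ to their next-best centers costs at least $\beta\sopt$ beyond their current cost.

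\textbf{Step 2: use local optimality of $S$ against $\opt$ through a swap pairing.} We use the standard pairing of local search analyses. Each $c^*\in\opt$ is mapped to the center $s\in S$ that captures it, i.e.\ the center of $S$ closest to $c^*$. We then consider single swaps $S-\{s\}+\{c^*\}$, choosing $s$ to capture no center of $\opt$ or exactly one. Local optimality gives $\clcost(S-s+c^*)\ge\clcost(S)$. We expand this inequality with the usual reassignment: clients of $C^*$ go to $c^*$, and the other clients of $s$'s cluster go to the $S$-center that captures $\opt$'s center of that client. Lemma \ref{lem:apxTriangleInequality2} then bounds the squared reassignment distance by $(1+\eps)$ times one term plus $O(1/\eps)$ times another. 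Summing over the swaps gives the standard per-cluster accounting, with total charges bounded by $O(\clcost(S)+\sopt)/\eps=O(\sopt/\eps)$ by Lemma \ref{lemma:localsearchis5approximation}.

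\textbf{Step 3: impurity yields a large saving.} This is the main obstacle. Take an impure $C^*$. For every cluster $C'$ of $S$, either a $3\eps$ fraction of $C^*$ lies outside $C'$, or $C'$ has a $3\eps$ fraction outside $C^*$. We must show that this mismatch, combined with the $\beta\sopt$ price from Step 1, forces each impure cluster to contribute roughly $\eps^2\beta\sopt$ to an inequality whose total is at most $O(\sopt/\eps)$. Concretely, I would try to show that a swap $S-s+c^*$, with $s$ associated to $C^*$, saves roughly $\eps^2$ times the cost needed to serve $C^*$ in $\opt-\{c^*\}$, minus the local charges from Step 2. Summing over the more than $\frac{\log n}{\eps^3}$ impure clusters would then give a total of $\frac{\log n}{\eps^3}\cdot\eps^2\beta\sopt=\sopt$ against an $O(\eps\sopt)$ error. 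I expect the $\eps$-powers to need careful tuning, since the precise inequality is delicate. Since the statement comes from \cite{Cohen-AddadS17}, I would follow their Lemma IV.4 argument closely, replacing each triangle inequality by Lemma \ref{lem:apxTriangleInequality2} with $\gamma=1+\eps$.

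Finally, purity is a symmetric matching relation between clusters of $\opt$ and clusters of $S$. Its associations are injective because $3\eps<1/2$. So the number of impure clusters of $S$ equals that of $\opt$, which gives the parenthetical claim.
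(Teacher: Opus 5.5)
Your proposal has a genuine gap. Step 3, which you flag as the main obstacle, is the whole content of the lemma, and it is never supplied. Steps 1 and 2 only restate stability and the standard swap analysis; neither converts impurity into a quantified cost. The bookkeeping also does not close. Your target contribution of $\eps^2\beta\sopt$ per impure cluster, summed over $\frac{\log n}{\eps^3}$ clusters, gives only $\sopt$. Step 2 bounds the opposing charges by $O(\sopt/\eps)$ (you later write $O(\eps\sopt)$ instead), so no contradiction follows. More fundamentally, the summed inequalities $\clcost(S-s+c^*)\ge\clcost(S)$ net gains and losses globally. Nothing in your sketch isolates a definite amount of cost attributable to each impure cluster.

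The missing idea is to use stability per cluster to obtain geometric separation. After that, you only need $\clcost(S)\le 25\sopt$ and no local optimality at all; the paper notes local optimality is used only in the next lemma.

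\begin{enumerate}
\item First discard the at most $\frac{\log n}{5\eps^3}$ clusters of $\opt$ of cost at least $5\eps^3\sopt/\log n$.
\item For each remaining center $o$, let $d_o$ be its distance to the nearest other center of $\opt$. Closing $o$ and sending $C^*_o$ there costs at least $\beta\sopt$ extra. With Lemma~\ref{lem:apxTriangleInequality2} ($\gamma=2$), this gives $|C^*_o|\,d_o^2\ge\beta\sopt/3$. Since $C^*_o$ is cheap, a $(1-\eps)$-fraction of it lies in $B(o,\sqrt{15\eps}\,d_o)$.
\item Now peel off bad clusters by charging disjoint portions of $\clcost(S)$:
\begin{itemize}
\item If no center of $S$ lies within $\sqrt{40\eps}\,d_o$ of $o$, each of those inner clients pays at least $5\eps d_o^2$ in $S$. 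That is $\Omega(\eps^2\sopt/\log n)$ per cluster, so only $O(\log n/\eps^2)$ such clusters exist.
\item The balls $B(o,d_o/3)$ are pairwise disjoint and $|S|=|\opt|=k$. By pigeonhole, the number of remaining balls containing two centers of $S$ is at most the number of clusters already discarded.
\item For the rest, a unique $c(o)\in S$ serves all of $B(o,d_o/4)$. If $c(o)$ also serves more than $\eps|C^*_o|$ clients outside that ball, each is at distance at least $d_o/5$ from $c(o)$. This again charges $\Omega(\eps^2\sopt/\log n)$ to $S$.
\end{itemize}
\item Every surviving cluster satisfies $|C^*\Delta C'|\le 2\eps|C^*|\le 3\eps\min\{|C^*|,|C'|\}$, so it is pure. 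The total number discarded is at most $\frac{\log n}{\eps^3}$.
\end{enumerate}

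Your closing remark, that the association is injective for $3\eps<1/2$ so the impure counts on both sides agree, is correct.
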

\begin{proof} We let $C^*_o$ denote the optimal cluster for each optimal center $o\in \opt$ and we let $C_c$ denote the cluster in $S$ corresponding to center $c\in S$. 
     We refine  $\opt$  in steps so that the final set only consists of centers of pure clusters. 

    We start by removing expensive clusters in $\opt$. Specifically, let 
    \[
        \widetilde{\opt} = \{ o \in \opt \mid \sum_{p\in C^*_o} \dist^2(p, o) \geq  \frac{5\eps^3}{\log n}\sopt\}\,.
    \]
    Clearly $|\widetilde{\opt}|\leq \frac{\log n}{5\eps^3}$. Let $\opt_1:=\opt-\widetilde{\opt}$.

     For the next step,  for every $o \in \opt_1$, let $\pi(o)$ be the closest center in $\opt - \{o\}$ to $o$, and let $d_o = d(o, \opt - \{o\}) = d(o, \pi(o))$ be the respective distance.  As the instance is $\beta{=\frac{\eps}{\log n}}$ stable we have for every $o\in \opt_1$:
    \begin{gather*}
      |C^*_o| \dist^2_o =  |C^*_o| \dist^2(o, \pi(o)) \geq  \sum_{p\in C^*_o} \left(\frac{1}{2}\dist^2(p, \pi(o)) - \dist^2(p,o)\right)  \geq \frac{1}{2}\beta {\sopt}-\frac{5\eps^3}{\log n}\sopt\geq \frac{1}{3}\beta\sopt, 
    \end{gather*}
    where the first inequality holds by \Cref{lem:apxTriangleInequality2} with $\gamma=2$. 
    Therefore, as the cluster $C^*_o$ has cost less than $5\eps^3 \sopt/\log n$ for $o\in \opt_1$, we have 
    \[
        |C^*_o \cap B(o, d_o\sqrt{15\eps})| \geq (1-\eps) |C^*_o|\,. 
    \]
    Indeed, otherwise, we would have the contradiction
    \[
         \sum_{p \in C^*_o} \dist^2(p, o) \geq |C^*_o - B(o, d_o\sqrt{15\eps})|\cdot 15\eps\cdot d^2_o > |C^*_o| 15\eps^2 d^2_o  \geq 5\eps^2\beta \sopt = 5\eps^3 \frac{\sopt}{\log n}\,.
    \]
    Now let $\widetilde{\opt}_1$ be the subset of $\opt_1$ so that for every $o\in \widetilde{\opt}_1$, we have $B(o, d_o\sqrt{40\eps}) \cap S = \emptyset$. We claim that $|\widetilde{\opt}_1| \leq \frac{\log n}{5\eps^3}$. 
    Indeed, for each $o\in \widetilde{\opt}_1$, the cost of each point $p\in C^*_o \cap B(o, d_o\sqrt{15\eps})$ in the solution $S$ would be at least $\frac{40\eps}{2}d^2_o-15\eps\cdot d^2_o=5\eps \cdot d^2_o$ by \Cref{lem:apxTriangleInequality2} with $\gamma=2$. Hence the total cost of the points in $C^*_o \cap B(o, d_o\sqrt{15\eps})$ in the solution $S$ would be at least $5\eps 
    d^2_o (1-\eps) |C^*_o| 
 \geq 5(1-\eps)\frac{\eps^2}{3} \frac{\sopt}{\log n}$. Therefore, using the fact that local search gives a $\apxLSkmeans$-approximate solution, we must have $|\widetilde{\opt}_1| \leq \frac{3\cdot \apxLSkmeans \log n}{5(1-\eps)\eps^2} \leq  \frac{\log n}{5\eps^3}$.

    We define $\opt_2 = \opt_1 - \widetilde{\opt}_1$ and so now we have $B(o, d_o\sqrt{40\eps}) \cap S \neq \emptyset$ for every $o\in \opt_2$. Now let $\widetilde{\opt}_2$ be the subset of  $\opt_2$ so that  every $o\in \widetilde{\opt}_2$ is such that $|B(o, d_o/3) \cap S| > 1$. By the pigeon hole principle we must have that $|\widetilde{\opt}_2| \leq |\opt| - |\opt_2| =|\widetilde{\opt}|+|\widetilde{\opt}_1| \leq \frac{2\log n}{5\eps^3} $.     
    
    Let $\opt_3:=\opt_2-\widetilde{\opt}_2$. For each center $o\in \opt_3$ there is thus a center $c(o)\in S$ such that $d(o, c(o)) \leq d_o\sqrt{40\eps}$ and any other center $c'\in S$ with $c'\neq c(o)$  satisfies $d(o, c') \geq d_o/3$. It follows that all points in $B(o,  d_o/4)$ are assigned to $c(o)$ {in $S$}. In other words, at least a  $(1-\eps)$ fraction of the points of $C^*_o$ are assigned to the cluster with center $c(o)$ in $S$. Now let $\widetilde{\opt}_3$ be those centers in $\opt_3$ so that for every $o \in \widetilde{\opt}_3$ the associated center $c(o) \in S$ is assigned more than $\eps |C^*|$ points from outside the ball $B(o, d_o/4)$. The cost of the cluster of $c(o)$ in $S$ is then at least
\[
    \left(\frac{d_o}{5}\right)^2 \eps |C^*| \geq \frac{\eps^2}{75 \log n} \sopt.
\]
It follows that $\widetilde{\opt}_3$ has cardinality at most $75\cdot \apxLSkmeans \frac{\log n}{\eps^2} \leq \frac{\log n}{5\eps^3}$. Our final set is $\opt_4:=\opt_3-\widetilde{\opt}_3$.

Now for each center in $c \in \opt_4$, {with associated cluster $C^*$}, we have that it  together with its associated center $c(o) \in S$, {with associated cluster $C'$}, satisfies the following:

\begin{itemize}
    \item All clients in $B(o,  d_o/4)$, i.e. at least $(1-\eps) |C^*|$ many clients, are both assigned to $o$ and $c(o)$ in $\opt$ and $S$, respectively. Notice that this implies $\min\{|C^*|,|C'|\}\geq (1-\eps)|C^*|$.
    \item The additional clients assigned to $o$ and not to $c(o)$ is at most $\eps |C^*|$.
    \item The additional clients assigned to $c(o)$ and not $o$ is at most $\eps |C^*|$.
\end{itemize}
Notice that $|C^*\Delta C'|\leq 2\eps |C^*|\leq 3\eps\min\{|C^*|,|C'|\}$, in particular $C^*$ is pure. It follows that all centers in $\opt_4$ correspond to pure clusters. Furthermore, $\opt_4$ was obtained by removing at most $\frac{\log n}{\eps^3}$ centers. The claim follows.   
\end{proof}

Let $C^*$ be a pure cluster of $\opt$ and $C'$ be the associated cluster of $S$. Let
$c^*$ be the center of $C^*$ and ${c'}$ be the center of ${C'}$. We define $t(c^*) := {c'}$.
For any client
$p \in C^*$, let $r(p) := \dist^2(p, {c'}) = \dist^2(p, t(c^*))$.
We have the following lemma which is the only lemma
that requires that the solution $S$ is obtained via the local search (the proof of the previous lemma only used that it was a constant-factor approximation). In words, it says that $S$ approximates the connection cost of clients belonging to pure clusters of the optimal solution almost perfectly.

\begin{lemma}[Adapted from~\cite{Cohen-AddadS17}]
\label{lem:purecost}
    Let $\clientspure \subseteq \clients$ be the subset of clients that belong to pure clusters  of the optimal solution $\opt$. We have
    \begin{gather*}
        \sum_{p\in \clientspure} r(p) \leq \sum_{p\in \clientspure} \opt_p + O(\sqrt{\eps} \cdot \sopt)\,.
    \end{gather*}
\end{lemma}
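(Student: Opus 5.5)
The plan is a local-search swap argument. Since $S$ is locally optimal, swapping $t(c^*)$ out and $c^*$ in cannot decrease the cost for any pure cluster $C^*$. Summing these non-improvement inequalities over all pure clusters should show that $S$ charges the clients of $\clientspure$ essentially what $\opt$ charges them, up to error terms that we then show are $O(\sqrt{\eps}\,\sopt)$.

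\textbf{Step 1: the swaps.} For each pure cluster $C^*$ with center $c^*$ and associated $S$-cluster $C'$ with center $c'=t(c^*)$, consider the swap $S'=S-\{c'\}+\{c^*\}$. Since $|S\Delta S'|\le 2$, local optimality gives $\clcost(S')\ge\clcost(S)$. Purity makes the association injective, so distinct pure clusters give distinct swaps. To bound $\clcost(S')$, use this assignment:
\begin{itemize}
\item clients of $C^*$ go to $c^*$ and pay $\opt_p$;
\item clients of $C'\setminus C^*$ are reassigned elsewhere;
\item all other clients keep their center in $S$.
\end{itemize}
This yields
\[
\sum_{p\in C^*} r(p)\;\le\;\sum_{p\in C^*}\opt_p+\sum_{p\in C'\setminus C^*}\bigl(\text{reassign}(p)-S_p\bigr),
\]
where for $p\in C^*\setminus C'$ I use $S_p\le r(p)$ on the left-hand side, which is in the correct direction.

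\textbf{Step 2: reassigning $C'\setminus C^*$.} A client $p\in C'\setminus C^*$ belongs to some other optimal cluster with center $o$; send it to $c^*$, which is still open. By Lemma~\ref{lem:apxTriangleInequality2} with $\gamma=1+\sqrt{\eps}$,
\[
d^2(p,c^*)\le(1+\sqrt\eps)\,d^2(p,c')+O(1/\sqrt\eps)\,d^2(c',c^*).
\]
Averaging over the at least $(1-3\eps)|C^*|$ clients of $C^*\cap C'$ gives
\[
d^2(c',c^*)\le \frac{2}{|C^*\cap C'|}\sum_{q\in C^*\cap C'}\bigl(r(q)+\opt_q\bigr).
\]
The reassignment cost of $p$ is therefore $\sqrt\eps\, S_p$ plus $O(1/\sqrt\eps)$ times the average cost of the cluster. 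There are at most $3\eps|C^*|$ such clients, so the total error is
\[
\sqrt\eps\sum_{p\in C'}S_p+O\bigl(\eps/\sqrt\eps\bigr)\sum_{q\in C^*}\bigl(r(q)+\opt_q\bigr)=O(\sqrt\eps)\sum_{q\in C'\cup C^*}\bigl(S_q+r(q)+\opt_q\bigr).
\]

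\textbf{Step 3: summing.} Summing over the pure clusters, the error terms total $O(\sqrt\eps)\bigl(\clcost(S)+\sopt+\sum_{p\in\clientspure}r(p)\bigr)$. The sums of $S_q$ are over disjoint clusters $C'$, so they are bounded by $\clcost(S)\le\apxLSkmeans\cdot\sopt$ via Lemma~\ref{lemma:localsearchis5approximation}. I then move the $O(\sqrt\eps)\sum r(p)$ term to the left-hand side, which is valid for small $\eps$.

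\textbf{Main obstacle.} The step I expect to be hardest is bounding $\sum_{p\in\clientspure}r(p)$ itself. The quantity $r(p)$ for $p\in C^*\setminus C'$ is not controlled by $S_p$, so these terms cannot simply be absorbed using the $\clcost(S)$ bound. The first thing I would try is to bound $r(p)\le 2\opt_p+2d^2(c^*,c')$ and then use the averaged bound on $d^2(c^*,c')$ from Step 2. Since there are only $3\eps|C^*|$ such clients, this contributes $O(\eps)$ times the cluster's costs, which is acceptable.
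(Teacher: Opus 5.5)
Your swap and the reassignment of $C'\setminus C^*$ to $c^*$ in Step 2 are the same as in the paper. There is, however, a genuine gap in Step 1. With your assignment, local optimality gives
\[
\sum_{p\in C^*} S_p \;\le\; \sum_{p\in C^*}\opt_p+\sum_{p\in C'\setminus C^*}\bigl(\text{reassign}(p)-S_p\bigr).
\]
For $p\in C^*\setminus C'$ you have $S_p\le r(p)$. Replacing $S_p$ by the larger $r(p)$ on the small side of an \emph{upper} bound makes the inequality stronger, so it does not follow; this is the wrong direction. The swap only controls $r(p)$ on $C^*\cap C'$, where $r(p)=S_p$.

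Your ``main obstacle'' fix, $r(p)\le 2\opt_p+2d^2(c^*,c')$, does not close this gap. The count $|C^*\setminus C'|\le 3\eps|C^*|$ makes only the $d^2(c^*,c')$ part small, because that part is an average-type quantity. The $2\opt_p$ part leaves an excess $\sum_{p\in C^*\setminus C'}\opt_p$, which is not $O(\sqrt{\eps}\,\sopt)$ in general: the few clients of $C^*$ that $S$ assigns elsewhere can be exactly the far-away ones that carry most of the cluster's cost. The crude bound is adequate for absorbing an $O(\sqrt{\eps})\sum r(p)$ error term, but that is not where the problem lies. In addition, because your swap already sends $C^*\setminus C'$ to $c^*$, adding any separate bound on their $r(p)$ would count their $\opt_p$ twice.

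The paper fixes this in two steps. First, in the swap, leave the clients of $C^*\setminus C'$ at their own centers in $S$. These centers stay open since they are not $c'$, so the cost change for these clients is at most $0$. Local optimality then gives
\[
\sum_{p\in C^*\cap C'} r(p)=\sum_{p\in C^*\cap C'} S_p\le\sum_{p\in C^*\cap C'}\opt_p+(\text{error}).
\]
Second, bound $C^*\setminus C'$ directly. For every $q\in C^*\cap C'$ we have $d(p,c')\le\sqrt{\opt_p}+\sqrt{\opt_q}+\sqrt{S_q}$. Apply \Cref{lem:apxTriangleInequality3} with $\gamma=1+\sqrt{\eps}$ and average over $q$:
\[
r(p)\le(1+\sqrt{\eps})\opt_p+\frac{O(1/\sqrt{\eps})}{|C^*\cap C'|}\sum_{q\in C^*\cap C'}(\opt_q+S_q).
\]
Summing over the at most $3\eps|C^*|$ such $p$ gives $(1+\sqrt{\eps})\sum_{p\in C^*\setminus C'}\opt_p+O(\sqrt{\eps})\sum_{q\in C^*}(\opt_q+S_q)$.

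The essential point is a coefficient $1+O(\sqrt{\eps})$ on $\opt_p$, not $2$. With this, every error term involves only $S_q$ and $\opt_q$, so you never need to move $\sum r(p)$ to the left-hand side. With Step 1 corrected, your factor-$2$ bound would still yield $\sum_{p\in\clientspure}r(p)\le 2\sum_{p\in\clientspure}\opt_p+O(\sqrt{\eps}\,\sopt)$. That would suffice for the later use in \Cref{lemma:convenience_upper_bound}, but it does not prove the lemma as stated.
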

\begin{proof}
Recall that $S_p = \dist^2(p,S)$. Consider a pair of pure clusters $C^*$ and $C'$ with the same notation as above. We prove the following:
     \begin{enumerate}
         \item $\sum_{p \in C^* \cap C'} r(p) = \sum_{p \in C^* \cap C'} \dist^2(p,S) \le  \sum_{p \in C^* \cap C'} \left((1+12\sqrt{\eps}) \opt_p + 12\sqrt{\eps} S_p\right)+\sqrt{\eps}\sum_{p\in C'}S_p$; and
         \item  
         $\sum_{p \in C^* {-} C'} r(p) {=} \sum_{p \in C^*{-} C'} \dist^2(p, c') \le \sum_{p \in C^*{-} C'} (1+\sqrt{\eps})\opt_p + 12\sqrt{\eps}\sum_{p \in C^*} (\opt_p + {S_p})$.
     \end{enumerate}
     The statement of the lemma then follows by summing up the above bounds and using that $\clcost(S) \leq \apxLSkmeans \cdot {\sopt}$ by \Cref{lemma:localsearchis5approximation}.

For proving the first bullet, consider the swap that swaps in $c^*$ and swaps out the center $c'$. The cost change for the clients in $C^* \cap C'$ is exactly $\sum_{p \in C^*\cap C'} (\opt_p - S_p)$.
For any $p \not\in C^*$, if $p$ is not served by ${c'}$ in $S$ the cost is unchanged. Otherwise the cost of 
$p \in C' - C^*$ in the new 
solution $S - \{{c'}\} \cup \{c^*\}$ 
is, by \Cref{lem:apxTriangleInequality3} with $\gamma=1+\sqrt{\eps}$, at most $(1+\sqrt{\eps})S_{p} + \frac{2+2/\sqrt{\eps}}{|C^*\cap C'|} \sum_{p' \in C^*{\cap C'}} (S_{p'} + \opt_{p'})$.
Summing over all clients $p \in C'-C^*$, this is at most 
$\sum_{p \in C'-C^*} (1+\sqrt{\eps})S_{{p}} + \frac{3}{\sqrt{\eps}}\frac{|C'-C^*|}{|C^*\cap C'|} \sum_{p' \in C^*{\cap C'}} (S_{p'} + \opt_{p'})$. Thus, the cost difference $\clcost(S- \{c'\} \cup \{c^*\}) - \clcost(S)$
for these clients is at most 
$$
\sum_{p\in C'-C^*}\sqrt{\eps}S_p+\frac{3}{\sqrt{\eps}}\frac{|C'-C^*|}{|C^*\cap C'|} \sum_{p' \in C^*{\cap C'}} (S_{p'} + \opt_{p'})\leq \sum_{p\in C'}\sqrt{\eps}S_p+ 12\sqrt{\eps} \sum_{p' \in C^*{\cap C'}} (S_{p'} + \opt_{p'}),$$
where in the inequality we used the fact that $C^*$ is pure, {hence $\frac{|C'-C^*|}{|C^*\cap C'|}\leq \frac{{3\eps}|C^*|}{(1-\eps)|C^*|}\leq 4\eps$.} The first bullet follows since, by local optimality, $\clcost(S- \{c'\} \cup \{c^*\}) - \clcost(S) \ge {0}$. 

We turn to the second bullet. {By triangle inequality, for each $p\in C^*-C'$ and each $p'\in C^*\cap C'$, one has $\dist(p,c')\leq \dist(p,c^*)+\dist(p',c^*)+\dist(p',c')=\sqrt{\opt_p}+\sqrt{\opt_{p'}}+\sqrt{S_{p'}}$. Hence,} by \Cref{lem:apxTriangleInequality3} with $\gamma=1+\sqrt{\eps}$ {and by averaging over $p'\in C^*\cap C'$}, one has 
$$
\dist^2(p,c')\leq (1+\sqrt{\eps})\opt_p + \frac{3/\sqrt{\eps}}{|C^* \cap C'|} \sum_{p' \in C^* {\cap} C'} (S_{p'} + \opt_{p'}).
$$
Summing up over all clients
$p \in C^* - C'$, we have that
\begin{align*}
\sum_{p \in C^*{-} C'} \dist^2(p, c') & \le  \sum_{p \in C^*- C'} (1+\sqrt{\eps})\opt_p + \frac{3}{\sqrt{\eps}}\frac{|C^*{-} C'|}{|C^*\cap C'|} \sum_{p' \in C^*{\cap}C'} (\opt_{p'} + S_{p'})\\
& \leq  \sum_{p \in C^*- C'} (1+\sqrt{\eps})\opt_p + 12\sqrt{\eps} \sum_{p' \in C^*} (\opt_{p'} + S_{p'}),
\end{align*}
where the second inequality follows from $\frac{|C^*{-} C'|}{|C^*\cap C'|}\leq {4\eps}$ similarly to the first bullet. The second bullet follows.
\end{proof}

\subsection{Cheap Clusters and $D$-Sample Process to Hit Expensive Clusters}
\label{sec:dsampleproc}
In the previous section, we showed that the cost of pure clusters in $\opt$ is close to to their optimal cost in our local search solution $S$. Our goal in this section is to ``hit'' those non-pure clusters of $\opt$
 that have a high cost in $S$.
We say that a non-pure cluster $C^*$ of $\opt$ with center $c^*$ is \emph{basic-cheap} if the total cost in 
$S$ of the clients in $C^*_\avg$ is less than $\eps^{5}\sopt/\log n$ or 
if there exists a center of $S$ at squared distance at most
{$(1+\eps)\avg_{C^*,\opt} + \eps \gamma_{C^*}$} from $c^*$, where $\gamma_{C^*} := \frac{1}{|C^*|} \sum_{p\in C^*} (\opt_p + S_p)$; in the latter case we also say that $C^*$ is \emph{covered} by $S$.

The following procedure,  that aims to ``hit'' all clusters of $\opt$ that are non-pure and non-basic-cheap,  is inspired by the classic $k$-means++ algorithm and its
    so called $D^2$-sampling procedure to
    sample clients proportional to their cost in solution $S$. 
\begin{mdframed}[hidealllines=true, backgroundcolor=gray!15]
\vspace{-5mm}
\paragraph{$s$-$D^2$-Sample Process. }\ \\
\begin{enumerate}
    \item Let $W$ be the set obtained by taking $s$ independent samples of clients where, in each sample, client  $p$ is sampled with probability $\frac{\dist^2(p, S)}{\clcost(S)}$.
    \item Output $W$. 
\end{enumerate}
\end{mdframed}
Our algorithm obtains a set $W$ of clients by sampling $s^*$ using the $s^*$-$D^2$-sample process, where
\[
    s^* = \apxLSkmeans \cdot \frac{\log n}{\eps^{5}}\ln\left( \frac{{81}}{\eps^{2}}\right)\,.
\]
We have the following lemma that says that we hit all non-pure and non-basic-cheap clusters except for a set $V$ of clusters that have an insignificant cost in $S$.

For a {non-pure} cluster $C^*$ of $\opt$ with center $c^*\in \opt$, we let $t(c^*)$ be the closest center in $S$ to $c^*$, and for any $p\in C^*$, we let {the replacement cost of $p$ be} $r(p):=\dist^2(p,t(c^*))$. 
\begin{lemma}
\label{lem:probcost}
Consider the random set $W$ of clients obtained from the $s^*$-$D^2$-sample process.  Let $V$ be the set of clusters $C^*$ of $\opt$ that are non-pure and non-basic-cheap  such that $C^*_\avg  \cap W = \emptyset$. 
With probability at least $1-\eps$,
\begin{align}\sum_{C^* \in V} \sum_{p \in C^*} {r(p)} \le \eps \cdot \sopt\,.
\label{eq:probcost}
\end{align}
\end{lemma}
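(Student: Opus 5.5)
Fix a non-pure, non-basic-cheap cluster $C^*$ with center $c^*$. The plan has three steps: bound the total replacement cost $\sum_{p\in C^*} r(p)$ by a constant multiple of the $S$-cost of the leaders $C^*_\avg$ plus a negligible term; lower bound the probability that $W$ hits $C^*_\avg$; and sum the resulting expected costs over all clusters before applying Markov's inequality.

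\emph{Step 1 (deterministic cost comparison).} For every $p\in C^*$ and every leader $q\in C^*_\avg$, the triangle inequality gives $d(p,t(c^*))\le d(p,c^*)+d(c^*,q)+d(q,S)$, because $t(c^*)$ is the center of $S$ closest to $c^*$. By \Cref{lem:apxTriangleInequality3} with $\gamma=2$, this yields $r(p)\le 2\opt_p+4(\avg_{C^*,\opt}+S_q)$. Averaging over $q\in C^*_\avg$ and using $|C^*_\avg|\ge\eps|C^*|$ and $\avg_{C^*,\opt}\le\frac{1}{1-\eps}\cdot\frac{1}{|C^*|}\sum_{p\in C^*}\opt_p$, I get
\[
\sum_{p\in C^*} r(p)\le O(1)\sum_{p\in C^*}\opt_p+\frac{4}{\eps}\sum_{q\in C^*_\avg}S_q .
\]
The $\opt_p$ term is not small. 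To handle it I will use that $C^*$ is not covered: $d^2(c^*,S)>(1+\eps)\avg_{C^*,\opt}+\eps\gamma_{C^*}\ge\eps\gamma_{C^*}$. Every leader $q$ satisfies $d(q,S)\ge d(c^*,S)-\sqrt{\avg_{C^*,\opt}}$, and the $(1+\eps)$ slack converts this into $S_q\ge\Omega(\eps^2)\,d^2(c^*,S)$; I expect $\Omega(\eps^2)$ or so, and any fixed power of $\eps$ is absorbed into the $\eps$-power budget of $s^*$. Hence $\sum_{q\in C^*_\avg}S_q\ge \eps^{O(1)}|C^*|\gamma_{C^*}=\eps^{O(1)}\sum_{p\in C^*}(\opt_p+S_p)$. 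Combining, $\sum_{p\in C^*}r(p)\le \eps^{-c}\,\mathrm{cost}_S(C^*_\avg)$ for a small constant $c$ (I would target $c\le 2$). This conversion is the main obstacle: I must get the $\eps$-exponent small enough to match $\ln(81/\eps^2)$ in $s^*$ and the final target $\eps\,\sopt$. If it does not fit, I would instead split clusters according to whether $\sum_{p\in C^*}\opt_p$ dominates and bound the first class directly by the optimal cost.

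\emph{Step 2 (hitting probability).} Write $x_{C^*}=\mathrm{cost}_S(C^*_\avg)$. Since $C^*$ is non-basic-cheap, $x_{C^*}\ge\eps^5\sopt/\log n$. Each sample lands in $C^*_\avg$ with probability $x_{C^*}/\clcost(S)\ge x_{C^*}/(\apxLSkmeans\,\sopt)$ by \Cref{lemma:localsearchis5approximation}. So
\[
\Pr[C^*\in V]\le\exp\!\left(-s^*x_{C^*}/(\apxLSkmeans\,\sopt)\right)\le\left(\frac{\eps^2}{81}\right)^{x_{C^*}\log n/(\eps^5\sopt)}\le\frac{\eps^2}{81}.
\]

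\emph{Step 3 (expectation and Markov).} By linearity,
\[
\mathbb E\Big[\sum_{C^*\in V}\sum_{p\in C^*}r(p)\Big]\le\sum_{C^*}\eps^{-c}\,x_{C^*}\,\frac{\eps^2}{81}\le\frac{\eps^{2-c}}{81}\,\clcost(S)\le\frac{\apxLSkmeans\,\eps^{2-c}}{81}\,\sopt,
\]
which is at most $\eps^2\sopt$ provided the constants from Step 1 are tight enough. If they are not, I would use the stronger decay $(\eps^2/81)^{x_{C^*}\log n/(\eps^5\sopt)}$ for clusters with large $x_{C^*}$. Markov's inequality then gives failure probability at most $\eps$, as required.
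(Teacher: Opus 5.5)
Your proposal has a genuine gap, and it starts in Step~1. With the given $s^*$, the only uniform guarantee is $\Pr[C^*\in V]\le\eps^2/81$: clusters whose leader cost $x_{C^*}$ sits near the threshold $\eps^5\sopt/\log n$ get no extra decay. So to reach $\mathbb{E}[\cdot]\le\eps^2\sopt$, you need a deterministic bound $\sum_{C^*}\sum_{p\in C^*}r(p)\le 81\,\sopt$ with no negative powers of $\eps$. Charging $\sum_{p\in C^*}r(p)$ to the leaders' $S$-cost cannot give this. Your first inequality already carries $\frac{4}{\eps}x_{C^*}$, because leaders are only an $\eps$-fraction of $C^*$. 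The not-covered conversion then adds further negative powers of $\eps$; in the worst case the ratio $\sum_{p\in C^*}r(p)/x_{C^*}$ really is of order $\eps^{-3}$. For any $c\ge1$, your final bound $\frac{25}{81}\eps^{2-c}\sopt$ is at least of order $\eps\,\sopt$, so Markov at threshold $\eps\,\sopt$ gives only a constant failure probability. For your stated target $c=2$, the expectation bound is even of order $\sopt$. The stronger-decay fallback does not rescue this, because near-threshold clusters dominate and you neither bound their number nor remove the $\eps^{-c}$ loss.

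The missing idea is to avoid the leaders entirely and bound $r(p)$ pointwise. Since $t(c^*)$ is the center of $S$ closest to $c^*$, every $p\in C^*$ satisfies $d(c^*,t(c^*))=d(c^*,S)\le d(c^*,p)+d(p,S)$. Hence $\sqrt{r(p)}\le 2\sqrt{\opt_p}+\sqrt{S_p}$, and by \Cref{lem:apxTriangleInequality3} with $\gamma=3$, $r(p)\le 6\opt_p+3S_p$. This needs neither the leaders nor the not-covered condition. Your worry that the $\opt_p$ term is not small is misplaced, since this term is multiplied by the miss probability. Summed over all clients it gives $\sum_p(6\opt_p+3S_p)\le 6\sopt+3\cdot\apxLSkmeans\,\sopt=81\,\sopt$, which is exactly what the constant $81$ in $s^*$ is calibrated for. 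Linearity then gives $\mathbb{E}[\cdot]\le\frac{\eps^2}{81}\cdot 81\,\sopt=\eps^2\sopt$, and Markov finishes.

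Alternatively, you can keep your first Step~1 inequality and drop the conversion. The $O(1)\sum_{p\in C^*}\opt_p$ part then contributes about $\frac{6\eps^2}{81}\sopt$. For the $\frac{4}{\eps}x_{C^*}$ part, your decay bound gives $x_{C^*}\Pr[C^*\in V]\le\frac{\eps^7\sopt}{81\log n}$ per cluster. Combined with the bound of at most $\log n/\eps^3$ non-pure clusters from \Cref{lem:numnonpure}, this part contributes at most $\frac{4\eps^3}{81}\sopt$. That count of non-pure clusters is the ingredient your fallback is missing.
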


\begin{proof}
Recall that $S_p = \dist^2(p,S)$. Consider a non-pure and non-basic-cheap cluster $C^*$ of $\opt$.
We say that a sampled point $p\in W$ \emph{hits} $C^*$ if $p\in C^*_{avg}$ (i.e., $p$ is a leader of $C^*$). 
As $C^*$ is not basic-cheap, we have $\sum_{p\in C^*_{avg}}S_p\geq \eps^{5} {\sopt}/\log n$. Thus the probability that a sampled $p\in W$ hits $C^*$ is at least $\frac{\sum_{p\in C^*}S_p}{\clcost(S)}\geq \frac{\eps^{5}}{{\apxLSkmeans} \log n}$, where we used that $\clcost(S) \leq \apxLSkmeans \sopt$ (\Cref{lemma:localsearchis5approximation}). Thus the probability that $C^*$ is not hit by any point in $W$ is at most $(1-\frac{\eps^{5}}{{\apxLSkmeans} \log n})^{s^*}$ which by the selection of $s^*$ is at most  $\eps^{2}/{81}$. {For a point $p\in C^*$ which is assigned to $c'$ in $S$, by triangle inequality one has
\begin{align*}
\sqrt{r(p)} & =\dist(p,t(c^*))\leq \dist(p,c^*)+\dist(c^*,t(c^*))\leq \dist(p,c^*)+\dist(c^*,c') \\
& \leq \dist(p,c^*)+\dist(p,c^*)+\dist(p,c') =2\sqrt{\opt_p}+\sqrt{S_p}.  
\end{align*}
Thus, by applying Lemma \ref{lem:apxTriangleInequality3} with $\gamma=3$, one gets
$$
r(p)\leq 6\opt_p+3S_p.
$$
}
Therefore, by linearity of expectation,
$$
E[\sum_{C^*\in V}\sum_{p\in C^*}{r(p)}] \leq \sum_{p\in \clients} {(6\opt_p+3S_p)}\cdot \frac{\eps^{2}}{{81}} \leq \eps^{2}\cdot \sopt\,,
$$
where we again used that $\clcost(S) \leq \apxLSkmeans \sopt$ by \Cref{lemma:localsearchis5approximation}.
The total cost of the clusters in $V$ is thus at most $ \eps\cdot {\sopt}$ with probability at least $1-\eps$ by Markov's inequality. 
\end{proof}

\paragraph{Successful $W$.}
We say that the set of clients $W$ obtained from the $s^*$-$D^2$-sample process is a 
\emph{successful} sample  if~\eqref{eq:probcost} holds.
We define the following two types of $\opt$ clusters. We say that 
a non-pure cluster $C^*$ of $\opt$ with center $c^*$ is \emph{cheap} if it is basic-cheap 
or is in $V$; a  non-pure cluster is 
\emph{expensive} otherwise.  In other words, a cluster $C^*$ of the optimal solution is expensive if it is a non-pure and non-basic-cheap cluster such that $C^*_{avg} \cap W \neq \emptyset$. 

Assuming that $W$ is successful, we have a good bound on the cost of the clients assigned to cheap centers in the optimal solution. 
\begin{lemma}
\label{lem:cheapcost}
    Let $\clientscheap \subseteq \clients$ be the subset of clients that belong to cheap clusters of the optimal solution $\opt$. If $W$ is a successful sample, 
    \begin{gather*}
        \sum_{p\in \clientscheap} r(p) \leq \sum_{p\in \clientscheap} 4\,\opt_p + O(\sqrt{\eps} \cdot \sopt)\,.
    \end{gather*}
\end{lemma}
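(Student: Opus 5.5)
The plan is to split the cheap clusters into three types and bound $\sum_{p\in C^*} r(p)$ for each, where $r(p)=\dist^2(p,t(c^*))$ and $t(c^*)$ is the closest center of $S$ to $c^*$. The types are: (a) basic-cheap clusters covered by $S$; (b) basic-cheap clusters with $\sum_{p\in C^*_\avg} S_p<\eps^5\sopt/\log n$; (c) clusters in $V$. Type (c) is immediate: since $W$ is successful, \eqref{eq:probcost} bounds their total replacement cost by $\eps\sopt$.

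For type (a), there is a center $c\in S$ with $\dist^2(c^*,c)\le (1+\eps)\avg_{C^*,\opt}+\eps\gamma_{C^*}$. Since $t(c^*)$ is the closest center, the same bound holds for $\dist^2(c^*,t(c^*))$. For each $p\in C^*$, apply \Cref{lem:apxTriangleInequality2} with $\gamma=1+\sqrt\eps$:
\[
r(p)\le(1+\sqrt\eps)\opt_p+O(1/\sqrt\eps)\,\dist^2(c^*,t(c^*)).
\]
This loses a $1/\sqrt\eps$ factor on the center term, which is too much. Instead I would use $\gamma=2$, giving $r(p)\le 2\opt_p+2\dist^2(c^*,t(c^*))$. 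Then use $\avg_{C^*,\opt}\le\frac1{1-\eps}\cdot\frac1{|C^*|}\sum_{p\in C^*}\opt_p$. Summing over $C^*$ gives
\[
2\sum_{p\in C^*}\opt_p+2(1+O(\eps))\sum_{p\in C^*}\opt_p+2\eps\sum_{p\in C^*}(\opt_p+S_p)\le 4\sum_{p\in C^*}\opt_p+O(\eps)\sum_{p\in C^*}(\opt_p+S_p).
\]
Summing over all clusters and using \Cref{lemma:localsearchis5approximation} to control $\sum_p S_p$ gives $4\opt_p$ plus $O(\eps\sopt)$. This is where the constant $4$ comes from.

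Type (b) is the main obstacle, since there is no control over where $c^*$ sits relative to $S$. The only information is that the leaders are cheap in $S$. The key step is to route through a leader: for $p\in C^*$ and any leader $q\in C^*_\avg$,
\[
\dist(p,t(c^*))\le \dist(p,c^*)+\dist(c^*,S)\le \sqrt{\opt_p}+\sqrt{\avg_{C^*,\opt}}+\sqrt{S_q}.
\]
Apply \Cref{lem:apxTriangleInequality3} with $\gamma=1+\sqrt\eps$ to the first term and average over $q\in C^*_\avg$ (recall $|C^*_\avg|\ge\eps|C^*|$). This gives
\[
r(p)\le (1+\sqrt\eps)\opt_p+O(1/\sqrt\eps)\Big(\avg_{C^*,\opt}+\tfrac{1}{\eps|C^*|}\textstyle\sum_{q\in C^*_\avg}S_q\Big).
\]
The $\avg$ term is again too lossy with this $\gamma$. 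To fix it, I would treat the $\avg$ term with weight $2$ as in type (a): take the splitting $\dist\le(\sqrt{\opt_p}+\sqrt{\avg})+\sqrt{S_q}$ and apply \Cref{lem:apxTriangleInequality2} with $\gamma=1+\sqrt\eps$ on the outer sum, then with $2$ on the inner sum. The main term becomes $(1+\sqrt\eps)\cdot 2(\opt_p+\avg)$, which sums to about $4\sum_{p\in C^*}\opt_p$. The leftover term is $O(1/\sqrt\eps)\cdot\frac{1}{\eps}\sum_{q\in C^*_\avg}S_q$ per client, so $O(\eps^{-3/2})\sum_{q\in C^*_\avg}S_q$ per cluster, which is less than $O(\eps^{7/2})\sopt/\log n$. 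This has to be summed over the type (b) clusters, of which there are at most $k_{imp}\le\log n/\eps^3$ by \Cref{lem:numnonpure}, since cheap clusters are non-pure. The total is therefore $O(\sqrt\eps\,\sopt)$, matching the claimed error term. Adding the three types gives the lemma.
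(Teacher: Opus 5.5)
Your proposal is correct and follows essentially the same route as the paper. You use the same split into three groups. Covered basic-cheap clusters are handled via \Cref{lem:apxTriangleInequality2} with $\gamma=2$. Uncovered basic-cheap clusters are routed through a leader, averaged over $C^*_{\avg}$, and their additive error is summed over at most $\log n/\eps^3$ non-pure clusters. Clusters in $V$ are bounded by the successful-sample guarantee. The only difference is cosmetic: you bound $\dist^2(c^*,q)\le\avg_{C^*,\opt}$ pointwise and apply \Cref{lem:apxTriangleInequality2} twice, whereas the paper averages $\opt_{p'}$ over leaders and applies \Cref{lem:apxTriangleInequality3} with $\gamma=1+2/\sqrt{\eps}$. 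Both yield the same $(2+O(\sqrt{\eps}))(\opt_p+\avg_{C^*,\opt})+O(\eps^{-1/2})S_q$ bound.
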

\begin{proof}
We claim that for any cluster $C^*$ of $\opt$ that is {basic-}cheap, and   any $p \in C^*$:
    \begin{enumerate}
    \item If $C^*$ is not covered, then
      $r(p) = \dist^2(p, t(c^*)) \le (2+\sqrt{\eps})(\opt_p + \avg_{C^*,\opt}) + \frac{3}{\eps^{1.5} |C^*|} \sum_{p \in C^*_{\avg}}  S_p$. 
     \item If $C^*$ is covered, then
        $r(p) \le 2\opt_p + 2(1+\eps) \avg_{C^*,\opt} + 2\eps \gamma_{C^*}$.
    \end{enumerate}
If $C^*$ is covered, the bound 2. follows immediately by \Cref{lem:apxTriangleInequality2} with $\gamma=2$. 
Otherwise, observe that for any such cluster and for any $p'\in C^*_{\avg}$, one has $\sqrt{r(p)}=d(p,t(c^*))\leq \dist(p,c^*)+d(c^*,p')+d(p',t(c^*))=\sqrt{\opt_p}+\sqrt{\opt_{p'}}+\sqrt{S_{p'}}$. Thus applying \Cref{lem:apxTriangleInequality3} with $\gamma=1+\frac{2}{\sqrt{\eps}}\leq \frac{3}{\sqrt{\eps}}$, one gets $d^2(p,t(c^*))\leq (2+\sqrt{\eps})(\opt_p+\opt_{p'})+\frac{3}{\sqrt{\eps}}S_{p'}$. Averaging over $p'\in C^*_{\avg}$, we obtain
\begin{align*}
d^2(p,t(c^*)) & \leq (2+\sqrt{\eps})\opt_p+\frac{2+\sqrt{\eps}}{|C^*_{\avg}|}\sum_{p'\in C^*_{\avg}}\opt_{p'}+\frac{3}{\sqrt{\eps}|C^*_{\avg}|}\sum_{p'\in C^*_{\avg}}S_{p'}\\
& \leq (2+\sqrt{\eps})(\opt_p+\avg_{C^*,\opt})+\frac{3}{\eps^{1.5}|{C^*}|}\sum_{p'\in C^*}S_{p'},
\end{align*}
where in the second inequality we used $|C^*_{\avg}| \ge \eps |C^*|$. The bound 1. follows.

Now, to prove the above lemma, we sum up the cost of all the clients of cheap clusters $C^*$. For a covered cluster $C^*$, {by 2.} we have  
    \begin{align*}
        \sum_{p\in C^*} r(p)& \leq 2\sum_{p\in C^*} \left(\opt_p + (1+\eps)\avg_{C^*, \opt} + \eps \gamma_{C^*}\right)\\
        &\leq 2\left( 1+ (1+2\eps) (1+\eps)\right) \sum_{p\in C^*} \opt_p + 2\eps \sum_{p\in C^*} (\opt_p + S_p)\,,
    \end{align*}
    where we used the definition of $\gamma_{C^*} = \frac{1}{|C^*|} \sum_{p\in C^*} (\opt_p + S_p)$ and that $|C^*| \cdot \avg_{C^*, \opt} \leq (1+2\eps) \sum_{p\in C^*} \opt_p$, which holds because $C^*$ has at least $(1-\eps) |C^*|$ clients $p\in C^*$ so that $\opt_p \geq \avg_{C^*, \opt}$.
    For a basic-cheap cluster $C^*$ {which is not covered}, {by 1. one has}
    \begin{align*}
        \sum_{p\in C^*} r(p) & \leq \sum_{p\in C^*} \left((2+\sqrt{\eps})(\opt_p + \avg_{C^*,\opt}) + \frac{3}{\eps^{1.5} |C^*|} \sum_{p \in C^*_{\avg}}  S_p\right)\\
        &\leq (4+O(\sqrt{\eps})) \sum_{p\in C^*} \opt_p +  \frac{3\eps^{3.5}\sopt}{\log n}\,,
    \end{align*}
    where we again used that $|C^*| \cdot \avg_{C^*, \opt} \leq (1+2\eps) \sum_{p\in C^*} \opt_p$ and we additionally used that, by the definition of basic-cheap clusters, $\sum_{p\in C^*_{avg}} S_p \leq \eps^{5} \sopt/\log n$.

    Finally, we have that if we sum up all clusters in $V$, i.e., clusters that are cheap but {not basic-cheap}, then \Cref{lem:probcost} says
    \begin{gather*}
        \sum_{C^* \in V} \sum_{p \in C^*} {r(p)} \le \eps \cdot \sopt\,.
    \end{gather*}
    The statement of the lemma now follows by summing up the above bounds for all cheap clusters and using that $\clcost(S) \leq \apxLSkmeans\cdot \sopt$ by \Cref{lemma:localsearchis5approximation} and the property that there are at most $\log(n)/\eps^3$ basic-cheap clusters. {The latter claim} is true because \Cref{lem:numnonpure} says that there are at most $\log(n)/\eps^3$ non-pure clusters and basic-cheap clusters are non-pure clusters by definition.
\end{proof}

Given $S$ and $W$, the set of clusters of the optimal solution $\opt$ is thus partitioned into pure clusters, cheap clusters, and expensive clusters. We let $\ho$ be the centers of $\opt$ that are expensive. We further define the set $\clientsexpensive \subseteq \clients$  to be the subset of clients that belong to the expensive clusters in $\opt$. We have thus partitioned $D$ into sets $\clientspure, \clientscheap$, and $\clientsexpensive$ depending on the type of optimal cluster they belong to. \Cref{lem:purecost} bounds the cost of the clients in $\clientspure$ in $S$ and \Cref{lem:cheapcost} bounds the cost of the clients in $\clientscheap$. For future reference, we summarize these two lemmas  (by weakening the upper bound for clients in $\clientspure$).
\begin{lemma}
Assuming  $W$ is a successful sample, 
    \[
    \sum_{p\in D - \clientsexpensive} r(p) \leq \sum_{p \in D - \clientsexpensive} {4} \opt_p + O({\sqrt{\eps}} \cdot \sopt)\,.
    \]
    \label{lemma:convenience_upper_bound}
\end{lemma}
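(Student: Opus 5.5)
The lemma follows by combining \Cref{lem:purecost} and \Cref{lem:cheapcost}, since $D-\clientsexpensive$ is the disjoint union of $\clientspure$ and $\clientscheap$. The clusters of $\opt$ are partitioned into pure, cheap and expensive clusters, and the client sets $\clientspure$, $\clientscheap$, $\clientsexpensive$ are defined from this partition. So I would write
\[
\sum_{p\in D-\clientsexpensive} r(p)=\sum_{p\in\clientspure} r(p)+\sum_{p\in\clientscheap} r(p)
\]
and bound each term separately.

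For the pure part, I would invoke \Cref{lem:purecost}, which gives $\sum_{p\in\clientspure} r(p)\le \sum_{p\in\clientspure}\opt_p+O(\sqrt{\eps}\,\sopt)$. Since $\opt_p\ge 0$, this is at most $\sum_{p\in\clientspure}4\opt_p+O(\sqrt{\eps}\,\sopt)$; this is the weakening mentioned in the text. For the cheap part, I would assume $W$ is successful and invoke \Cref{lem:cheapcost}, which gives $\sum_{p\in\clientscheap} r(p)\le \sum_{p\in\clientscheap}4\opt_p+O(\sqrt{\eps}\,\sopt)$. Adding the two inequalities, the two $O(\sqrt{\eps}\,\sopt)$ error terms combine into a single $O(\sqrt{\eps}\,\sopt)$ term, which yields the claim.

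The only point that needs care, and the main potential obstacle, is that $r(p)$ is defined differently for the two kinds of clusters. For a pure cluster, $t(c^*)$ is the center of the associated cluster of $S$; for a non-pure cluster, it is the closest center of $S$ to $c^*$. I would check that \Cref{lem:purecost} and \Cref{lem:cheapcost} each use the matching definition, so that their sum is exactly the sum of $r(p)$ over $D-\clientsexpensive$. I would also check that the two client sets are disjoint, which holds because every client lies in exactly one cluster of $\opt$. Once this consistency is confirmed, no further computation is needed.
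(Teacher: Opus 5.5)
Your proposal is correct and is essentially the paper's argument: sum \Cref{lem:purecost} (weakened from factor $1$ to factor $4$ using $\opt_p \ge 0$) and \Cref{lem:cheapcost} over the disjoint sets $\clientspure$ and $\clientscheap$, whose union is $D-\clientsexpensive$. Your consistency check on the two definitions of $r(p)$ is also right, since each lemma uses the definition matching its cluster type.
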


It follows that the connection cost in solution $S$ of clients in $\clients - \clientsexpensive$ is {roughly} within a factor ${4}$ of their cost in the optimal solution. The remaining part of the algorithm is thus devoted to modifying $S$ to obtain a small connection cost of the clients in $\clientsexpensive$ without significantly increasing the cost of the other clients.

\subsection{Identifying Balls of Expensive Clusters}
\label{sec:ballguesses}

By the definition of expensive clusters, each expensive cluster $C^*$ of $\opt$ satisfies $C^*\cap W \neq \emptyset$ where $W$ is the sample obtained by running the sampling procedure of the last subsection. Here, our goal is to guess a subset $\calB$ of balls so that for each expensive cluster $C^*$ there is a ball $B(\ell, {\sqrt{\rho}}) \in \calB$ so that $\ell$, which we refer to as a leader, is in  $C^*_{avg}$ and $\rho$ is a very close approximation to $\avg_{C^*, \opt}$. This guarantees that the center $c^* \in \ho$ of cluster $C^*$ is in $B(\ell, {\sqrt{\rho}})$ and that no other center in $B(\ell, {\sqrt{\rho}})$ is "too" far from $c^*$ since the radius of the ball is $\approx {\sqrt{\avg_{C^*, \opt}}}$. The balls $\calB$ will then be used when we approximate the centers in $\ho$ via submodular function maximization subject to a partition matroid constraint (the balls will correspond to the partitions of the matroid). 
As the algorithm does not know the set of expensive clusters, it enumerates all subsets of $W$, and for each guessed point, it enumerates a constant number of radii.

\begin{mdframed}[hidealllines=true, backgroundcolor=gray!15]
\vspace{-5mm}
\paragraph{Ball Procedure. 
}\ \\

\begin{enumerate}
\item $\calL_\texttt{bal} \gets \emptyset$. 
 \item For each subset $\{p_1, p_2, \ldots, p_q\} \subseteq W$:
 \begin{enumerate}
     \item For all integers $i_1, i_2, \ldots, i_q$ such that
     \[\eps^3  {S_{p_j}} \le (1+\eps^3)^{i_j} \le {S_{p_j}}/\eps^3
     \qquad \mbox{for $j=1, 2,\ldots, q$}\]
     add $\calB = \left\{B(p_1, {\sqrt{(1+\eps^3)^{i_1}}}), B(p_2, {\sqrt{(1+\eps^3)^{i_2}}}), \ldots, B(p_q, {\sqrt{(1+\eps^3)^{i_q}}}\right\}$ to $\calL_{\texttt{bal}}$.
 \end{enumerate}
\end{enumerate}
\end{mdframed}

We show that the ball procedure runs in polynomial time and that  one of the subsets $\calB$ of balls that the procedure constructs is a correct guess: every expensive cluster $C^*$ has a leader, and $\avg_{C^*, \opt}$ is guessed approximately correct.
We say that a set of balls $\calB$ is a \emph{valid} set of balls if it satisfies the following: For each expensive cluster $C^*$ of $\opt$,  we have a ball $B(\ell, {\sqrt{\rho}}) \in \calB$ such that $\ell\in C^*_{avg}$ is a leader and $$\avg_{C^*,\opt} \leq \rho \leq  \avg_{C^*,\opt} + \eps \gamma_{C^*}, $$
where we recall that $\gamma_{C^*} = \frac{1}{|C^*|} \sum_{p \in C^*} (\opt_p + S_p)$.

\begin{lemma}
\label{lem:ballguesses} 
The ball procedure runs in polynomial time $n^{\eps^{-O(1)}}$ and produces a collection $\calL_{\texttt{bal}}$ of sets of balls such that {at least} one set $\calB\in \calL_{\texttt{bal}}$ is a valid set of balls.
\end{lemma}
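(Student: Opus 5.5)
The proof has two parts: bounding the running time, and showing that at least one enumerated set of balls is valid.

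\emph{Running time.} By the choice of $s^*$, we have $|W|\le s^* = O(\log n/\eps^5\cdot\log(1/\eps))$. Hence the number of subsets of $W$ is $2^{s^*}=n^{\eps^{-O(1)}}$. For each chosen point $p_j$, the admissible exponents $i_j$ satisfy $\eps^3 S_{p_j}\le(1+\eps^3)^{i_j}\le S_{p_j}/\eps^3$. So there are $O(\log_{1+\eps^3}(1/\eps^6))=O(\eps^{-3}\log(1/\eps))$ choices per point, a constant $c_\eps$. Over $q\le s^*$ points this gives $c_\eps^{s^*}=2^{O(\log n\cdot \eps^{-O(1)})}=n^{\eps^{-O(1)}}$ combinations. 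Writing down each $\calB$ takes polynomial time, so the total running time is $n^{\eps^{-O(1)}}$.

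\emph{Validity.} Fix the branch where the chosen subset contains exactly one leader $\ell_{C^*}\in C^*_\avg\cap W$ per expensive cluster $C^*$. Such leaders exist by the definition of expensive clusters, and they are distinct because the sets $C^*_\avg$ are disjoint. We must show that some admissible power $\rho=(1+\eps^3)^{i}$ lies in $[\avg_{C^*,\opt},\avg_{C^*,\opt}+\eps\gamma_{C^*}]$. Take $\rho$ to be the smallest power of $1+\eps^3$ that is at least $\avg_{C^*,\opt}$, so $\rho\le(1+\eps^3)\avg_{C^*,\opt}$. We check three things.

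First, $\eps^3\avg_{C^*,\opt}\le\eps\gamma_{C^*}$. This follows from the earlier observation $|C^*|\avg_{C^*,\opt}\le(1+2\eps)\sum_{p\in C^*}\opt_p$, which gives $\avg_{C^*,\opt}\le(1+2\eps)\gamma_{C^*}$. Hence the upper bound $\rho\le \avg_{C^*,\opt}+\eps\gamma_{C^*}$ holds.

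Second, the lower window $\eps^3 S_\ell\le\rho$. It suffices to show $S_\ell\le \rho/\eps^3$. Here we use that $C^*$ is not covered, i.e. no center of $S$ lies within squared distance $(1+\eps)\avg_{C^*,\opt}+\eps\gamma_{C^*}$ of $c^*$. A direct upper bound on $S_\ell$ is needed, and I would try to get it from the cost of $\ell$'s cluster. This is the main obstacle: one must relate the single value $S_\ell$ to $\avg_{C^*,\opt}$. If this fails, a possible escape is that $\rho$ need not be tied to $S_\ell$ from below. In that case the natural fix would be to enumerate radii relative to $\gamma_{C^*}$-type quantities, but we are constrained to the procedure as stated.

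Third, the upper window $\rho\le S_\ell/\eps^3$. Since $\ell\in C^*_\avg$, we have $d^2(\ell,c^*)\le\avg_{C^*,\opt}$. Because $C^*$ is not covered, $d(c^*,S)^2>(1+\eps)\avg_{C^*,\opt}$. By Lemma~\ref{lem:apxTriangleInequality2}, $d^2(c^*,S)\le \gamma d^2(\ell,c^*)+\frac{\gamma}{\gamma-1}S_\ell$. Choosing $\gamma=1+\eps/2$ gives $(1+\eps)\avg_{C^*,\opt}<(1+\eps/2)\avg_{C^*,\opt}+O(1/\eps)S_\ell$. Hence $S_\ell=\Omega(\eps^2)\avg_{C^*,\opt}\ge\eps^3\rho$ for small $\eps$.

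For the second check, if $S_\ell$ is huge compared to $\avg_{C^*,\opt}$, I would instead choose $\rho$ as the smallest admissible power, $\approx\eps^3S_\ell$. This choice is still at least $\avg_{C^*,\opt}$. For the upper bound $\rho\le\avg_{C^*,\opt}+\eps\gamma_{C^*}$, I would use $\gamma_{C^*}\ge \frac1{|C^*|}\sum_{p\in C^*}S_p$ together with leader-based sampling considerations to control $\eps^3 S_\ell$. Making this last step rigorous is where I expect the real difficulty.
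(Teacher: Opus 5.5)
Your running-time bound is correct and matches the paper's. Your first check ($\eps^3\avg_{C^*,\opt}\le\eps\gamma_{C^*}$ via $\avg_{C^*,\opt}\le(1+2\eps)\gamma_{C^*}$) is also correct. So is your third check: non-coverage gives $S_\ell=\Omega(\eps^2)\avg_{C^*,\opt}$, which is what that step actually yields and is enough given the $1/\eps^3$ window.

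However, the validity part has a genuine gap, which you flag yourself. Taking $\rho$ to be the smallest power of $1+\eps^3$ above $\avg_{C^*,\opt}$ cannot work in general. Nothing bounds $S_\ell$ by $O(\avg_{C^*,\opt}/\eps^3)$: the cluster can be very tight around $c^*$ while $S$ is far away. In that case this $\rho$ falls below the enumerated window $[\eps^3S_\ell,S_\ell/\eps^3]$, so your ``second check'' is trying to prove something false. The right choice is the one you reach in your last paragraph: the smallest power that is at least $\max\{\avg_{C^*,\opt},\eps^3S_\ell\}$. In the case $\avg_{C^*,\opt}<\eps^3S_\ell$, you then need $\eps^3S_\ell\lesssim\eps\gamma_{C^*}$, and this is exactly the step you leave open.

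The missing ingredient is a deterministic upper bound on $S_\ell$ in terms of $\gamma_{C^*}$. It needs no sampling considerations and holds for every leader $\ell\in C^*_\avg$. For any $p\in C^*$, let $c''$ be the center serving $p$ in $S$. Then $d(\ell,c'')\le d(\ell,c^*)+d(c^*,p)+d(p,c'')$, so by \Cref{lem:apxTriangleInequality3} with $\gamma=3$ we get $S_\ell\le d^2(\ell,c'')\le 3d^2(\ell,c^*)+3\opt_p+3S_p$. Averaging over $p\in C^*$ and using $d^2(\ell,c^*)\le\avg_{C^*,\opt}\le(1+2\eps)\gamma_{C^*}$ gives $S_\ell\le(6+6\eps)\gamma_{C^*}$.

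With this bound, the case $\avg_{C^*,\opt}<\eps^3S_\ell$ closes: $\avg_{C^*,\opt}<\rho\le(1+\eps^3)\eps^3S_\ell\le\eps\gamma_{C^*}$ for small $\eps$, and both window constraints hold by construction. In the complementary case $\avg_{C^*,\opt}\ge\eps^3S_\ell$, your first and third checks already finish the argument. Adding this inequality and making the case split explicit completes your proof, and it then essentially coincides with the paper's.
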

\begin{proof}

We start by showing that the running time is bounded by $n^{\eps^{-O(1)}}$. Indeed, we have $|W| = s^*$, and so the number of subsets of $W$ is $2^{s^*}= n^{\eps^{-O(1)}}$ since $s^* = {\apxLSkmeans} \cdot \frac{\log n}{\eps^{5}}\ln\left( \frac{{\apxLSkmeans}}{\eps^{2}}\right)$. Moreover, for a fixed subset $\{p_1, p_2, \ldots, p_q\}$, we consider a constant number  $\alpha = \log_{1+\eps^3}(1/\eps^6)$ of balls (i.e., values of $i_j$) for each $p_j$. So the total number of balls considered added for each subset $\{p_1, p_2, \ldots, p_q\}$ is $\alpha^{q}$ which is $n^{\eps^{-O(1)}}$ since $q \leq |W| \leq s^*$.

We now turn our attention to showing that there is a set of balls $\calB \in \calL_{\texttt{bal}}$ that satisfies the properties of the lemma.
Let $C^*(1), \ldots, C^*(|\ho|)$ be the expensive clusters of \opt. 
By definition, we have $C^*_{avg}(j) \cap W\neq \emptyset$ for each such cluster. Now arbitrarily fix one client $\ell_j\in C^*_{avg}(j) \cap W$  for $j=1,2, \ldots, |\ho|$, which we refer to the leader of $C^*(j)$. Since the algorithm enumerates over all subsets
    of $W$, there exists an execution of the algorithm where
    the subset is the set $\{\ell_1, \ldots, \ell_{|\ho|}\}$ of leaders for all the expensive clusters.
    We consider the set  $$\calB = \left\{B(\ell_1, {\sqrt{(1+\eps^3)^{i_1}}}), B(\ell_2, {\sqrt{(1+\eps^3)^{i_2}}}), \ldots, B(\ell_{|\ho|}, {\sqrt{(1+\eps^3)^{i_{|\ho|}}}})\right\}$$ of balls, where $i_j$ is the smallest integer so that $(1+\eps^3)^{i_j} \geq \max\{\avg_{C^*, \opt}, \eps^3 S_{\ell_j}\}$. 

    It remains to prove that, for each leader $\ell_j$, we have $(1+\eps^3)^{i_j} \leq S_{\ell_j}/\eps^3$  and $\avg_{C^*(j),\opt} \leq (1+\eps^3)^{i_j}\leq  \avg_{C^*(j),\opt} + \eps \gamma_{C^*(j)}$. The first property guarantees that  $\calB \in \calL_{\texttt{bal}}$ and the second property is the bounds on $\rho$ for $\calB$ to be a valid set of balls.

     We now verify these two properties for each leader $\ell_j$. To simplify notation, we let $\ell = \ell_j$, $C^* = C^*(j)$ and $\rho = (1+\eps^3)^{i_j}$.
    We have
    ${d^2}(\ell, S) = S_{\ell} > \eps \avg_{C^*,{\opt}}$ since otherwise
    $C^*$ would be covered by $S$, which would contradict that $C^*$ is an expensive cluster. By the selection of $i_j$,  $\rho = (1+\eps^3)^{i_j}\leq (1+\eps^3) \max\{\avg_{C^*, \opt}, \eps^3 S_\ell\} \leq (1+\eps^3) S_\ell/\eps$ and thus $(1+\eps^3)^{i_j} \leq S_\ell/\eps^3$ as required. 
    
    We further claim that $\rho \in [\avg_{C^*,\opt},  \avg_{C^*,\opt} + {\eps^2} S_\ell)$. 
    Assume first that $\avg_{C^*,\opt}\geq \eps^3 S_\ell$. Then $\rho = (1+\eps^3)^{i_j}\geq\avg_{C^*,\opt}$, for the smallest possible integer $i_j$ which satisfies this condition. The claim then follows  since $\rho \leq (1+\eps^3)\avg_{C^*,\opt}\leq \avg_{C^*,\opt}+\eps^2 S_\ell$ as required (where we used that $\avg_{C^*, \opt} \leq S_\ell/\eps$). In the complementary case, namely $\avg_{C^*,\opt}< \eps^3 S_\ell$, one has that $i_j$ was selected to be the smallest integer so that $(1+\eps^3)^{i_j} \geq \eps^3 S_\ell$. Moreover,  the interval $[\avg_{C^*,\opt},  \avg_{C^*,\opt} + \eps^2 S_\ell)$ contains the interval $[\eps^3 S_\ell,\eps^2 S_{\ell}]$, and the latter interval contains at least one power of $(1+\eps^3)$. We thus have
    \[
        \avg_{C^*, \opt} \leq \rho \leq \avg_{C^*, \opt} + \eps^2 S_{\ell}\, .
    \]
    
    We conclude the proof by showing $S_\ell \le ({6}+{6}\eps)\gamma_{C^*}{\leq \gamma_{C^*}/\eps}$. Let $\ell$ be assigned to $c'$ in $S$, and consider any $p\in C^*$ which is assigned to $c''$ in $S$. Then, {by \Cref{lem:apxTriangleInequality3} with $\gamma=3$,} one has $S_\ell={d^2}(\ell,c')\leq {d^2}(\ell,c'')\leq {3d^2}(\ell,c^*)+{3d^2}(c^*,p)+{3d^2}(p,c'')={3d^2}(\ell,c^*)+{3}\opt_p+{3}S_p$. Hence, averaging over $C^*$, one gets $S_\ell \le {3d^2}(\ell, c^*) + \frac{{3}}{|C^*|} \sum_{p \in C^*} (S_p + \opt_p)$. Since $\ell$ is a leader, i.e., $\ell \in C^*_{avg}$, ${d^2}(\ell,c^*)\leq \avg_{C^*,\opt}\leq  \frac{1+2\eps}{|C^*|}\sum_{p \in C^*}\opt_p$, where we used that there is at least $(1-\eps)|C^*|$ clients $p\in C^*$ so that ${d^2}(p, c^*) \geq \avg_{C^*, \opt}$. Therefore, $S_\ell \leq ({6}+{6}\eps) \gamma_{C^*}$, which, as aforementioned, concludes the proof of the lemma. 
\end{proof}

We thus have that the ball procedure produces a family $\calL_{\texttt{bal}}$ such that {at least one} set $\calB \in \calL_{\texttt{bal}}$ satisfies the conditions of the lemma.  While our algorithm will try all possible sets in $\calL_{\texttt{bal}}$ (since it does not know which one is valid), we do our analysis  by considering the run of the algorithm when it selects this set  $\calB$ of valid balls.

\subsection{\emph{Dummy} Centers and Mixed Solutions $M_O$ and $M_D$}
\label{sec:dummycenters}

    Next, the algorithm creates, for each ball  $B(\ell, {\sqrt{\rho}}) \in \calB$ with center $\ell$ and
    radius ${\sqrt{\rho}}$, 
    a \emph{dummy} center $\delta$. The center $\delta$ is at distance ${\sqrt{\rho}}$ from $\ell$ and and at distance 
   ${\sqrt{\rho}} + d(\ell, p)$ from any other input point $p$.

    Let $\Lambda$ be the set of dummy centers, so $|\Lambda| = |\calB|$. Recall that  $\clientsexpensive$ is the set of clients that are in an expensive cluster of $\opt$, i.e., served by a center
in $\ho$ in the optimal solution.

\begin{lemma}
\label{lem:structSminusS0}
Suppose the sample $W$ is successful, and that the set of balls $\calB$ is valid, then there exists a set of centers $S_0$ of $S$ that satisfies the following properties:
  \begin{enumerate}
  \item $|S_0| = |\ho| \le \frac{\log n}{\eps^3}$, and;
  \item $\forall c \in S_0$, there is no $c^* \in \opt-\ho$ such that $t(c^*) = c$, and;
  \item $\clcost(S - S_0 \cup \dummyset) \le
  {9} \sum_{p \in \clientsexpensive}\opt_p + \sum_{p \in \clients {-} \clientsexpensive}
  r(p) +
  O(\eps \cdot{\sopt})) \le ({9}+O({\sqrt{\eps}}))\sopt$.
  \end{enumerate}
  \label{lemma:S0properties}
\end{lemma}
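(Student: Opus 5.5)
We construct $S_0$ by a matching-style argument between expensive optimal centers and local-search centers, in the spirit of the analogous lemma in \cite{CGLSS25stoc}. Let $T=\{t(c^*): c^*\in \opt-\ho\}$ be the set of centers of $S$ that are used as replacement centers by pure or cheap clusters. We first check that $|S - T|\ge |\ho|$. This holds because $|S|=|\opt|=k$ and $|T|\le |\opt-\ho|$. We then let $S_0$ be any $|\ho|$ centers of $S-T$. Property 2 holds by construction. For property 1, note that $\ho$ consists of non-pure clusters, so \Cref{lem:numnonpure} gives $|\ho|\le \log n/\eps^3$.

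For property 3 we bound $\clcost(S-S_0\cup\dummyset)$ via an explicit assignment rather than closest-center assignment. Each client $p\in \clients-\clientsexpensive$ in cluster $C^*$ with center $c^*\in\opt-\ho$ is sent to $t(c^*)$, which survives because $t(c^*)\in T\subseteq S-S_0$. This contributes exactly $\sum_{p\in \clients-\clientsexpensive} r(p)$. Each client $p\in\clientsexpensive$ in expensive cluster $C^*$ is sent to the dummy center $\delta$ of the valid ball $B(\ell,\sqrt{\rho})$ associated with $C^*$.

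By the definition of dummy distances and the triangle inequality through $c^*$,
\[
d(p,\delta)=\sqrt{\rho}+d(\ell,p)\le \sqrt{\rho}+\sqrt{\opt_p}+\sqrt{\avg_{C^*,\opt}},
\]
using $d(\ell,c^*)^2\le\avg_{C^*,\opt}$ since $\ell\in C^*_{avg}$. Applying \Cref{lem:apxTriangleInequality3} with $\gamma=3$ gives $d^2(p,\delta)\le 3\opt_p+3\rho+3\avg_{C^*,\opt}$. Validity of $\calB$ gives $\rho\le\avg_{C^*,\opt}+\eps\gamma_{C^*}$. We also have $|C^*|\avg_{C^*,\opt}\le(1+2\eps)\sum_{p\in C^*}\opt_p$, as in the proof of \Cref{lem:cheapcost}. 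Summing over $C^*$ then yields
\[
\sum_{p\in C^*} d^2(p,\delta)\le (9+O(\eps))\sum_{p\in C^*}\opt_p+3\eps\sum_{p\in C^*}(\opt_p+S_p).
\]
Summing over expensive clusters, and using $\clcost(S)\le \apxLSkmeans\sopt$ from \Cref{lemma:localsearchis5approximation}, gives the first inequality of property 3 with an $O(\eps\sopt)$ error. The final bound $({9}+O(\sqrt{\eps}))\sopt$ follows from \Cref{lemma:convenience_upper_bound}, which bounds the $r(p)$ terms by $4\opt_p+O(\sqrt\eps\,\sopt)\le 9\opt_p+O(\sqrt\eps\,\sopt)$.

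The main obstacle is the counting step $|S-T|\ge|\ho|$, together with making sure $t(\cdot)$ is well defined for both pure and non-pure clusters of $\opt-\ho$. For pure clusters $t(c^*)$ is the associated cluster center, and for cheap ones it is the closest center of $S$. The map may be non-injective, but that only helps, since it makes $T$ smaller. One must also check that using a suboptimal assignment is legitimate, and it is, because closest-center assignment can only reduce the cost.
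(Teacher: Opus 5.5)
Your proposal is correct and follows essentially the same route as the paper. The paper first uses a pigeonhole count to show that at least $|\ho|$ centers of $S$ are not of the form $t(c^*)$ for $c^*\in\opt-\ho$, and gets $|\ho|\le\log n/\eps^3$ from \Cref{lem:numnonpure}. It then assigns non-expensive clients to $t(c^*)$ and expensive clients to their dummy center, bounds the latter via \Cref{lem:apxTriangleInequality3} with $\gamma=3$ and the validity of $\calB$, and concludes with \Cref{lemma:convenience_upper_bound}. Your version is, if anything, slightly more explicit than the paper's in taking exactly $|\ho|$ centers from $S-T$, which is what property 1 requires.
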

\begin{proof}
By definition, the centers in $\hat{\opt}$ are not pure, hence $|\hat{\opt}|\leq \frac{\log n}{\eps^3}$ by \Cref{lem:numnonpure}.
Furthermore, since there are $|\opt - \ho|$ centers of $\opt$ not
in $\ho$, there are (at least) $|\ho|$ centers $c$ of $S$ such that there is no $c^* \in \opt-\ho$ such that $t(c^*) = c$, and we let $S_0$
refer to these centers: the first 2 bullets follow immediately.

Let us turn to the last bullet. The bound on the cost 
of the clients that are not in $\clientsexpensive$ 
follows from the second bullet: for 
each client $p$ in a cluster of the optimum solution
whose center is $c^* \in \opt-\ho$, we have
that $t(c^*) \in S-S_0$ and so its cost
is at most $r(p)$.
We finally consider the points in $\clientsexpensive$.
Since the balls are valid, \Cref{lem:ballguesses} implies that 
for each expensive cluster $C^*$ of $\opt$, there is a 
ball $B(\ell,{\sqrt{\rho}})$  where $\rho \in 
[\avg_{C^*,\opt},  \avg_{C^*,\opt} + \eps \gamma_{C^*})$ and 
$\ell \in C^*_\avg$. Therefore, if we let $c^*$ be the center of $C^*$ 
{and using \Cref{lem:apxTriangleInequality3} with $\gamma=3$},
\begin{align*}
\sum_{p\in C^*} {d^2}(p, \dummyset) & \leq \sum_{p\in C^*} \left({3}d^2(p, c^*) + {3}d^2(c^*, \ell) +  {3}d^2(\ell, \dummyset) \right)\\
&\leq\sum_{p\in C^*} \left({3}\opt_p  + {3}\avg_{C^*,\opt}+ {3}\rho \right)\\
     & \leq \sum_{p\in C^*} \left({3}\opt_p + {6}\avg_{C^*,\opt} + {3}\eps \gamma_{C^*}) \right) \\
    & = \sum_{p\in C^*} {3}\opt_p + |C^*|({6}\avg_{C^*,\opt} + {3}\eps \gamma_{C^*})\\
    & \le  ({9}+O(\eps)) \sum_{p\in C^*} \opt_p
    + O(\eps \sum_{p \in C^*} {S_p}).
\end{align*}
The lemma follows from the fact
 $\clcost(S) \leq {\apxLSkmeans} \cdot {\sopt}$ (\Cref{lemma:localsearchis5approximation}) and by
invoking \Cref{lemma:convenience_upper_bound} to bound the sum
of the $r(p)$ values (where we use the assumption that the sample $W$ is successful).
\end{proof}

\paragraph{Definition of the mixed solutions $M_O$ and $M_D$.}
Next, let us analyze
the cost of the swap $(\ho, S_0)$. Namely, of the solution $M_O = S - S_0 \cup 
\hat{\opt}$, where $M$ in $M_O$ stands for  ``mixed solution''  and the subscript $O$ stands for that the mixed solution  is obtained by swapping in some elements from $\opt$  (and removing $S_0$). We will also analyze the ``dummy'' version of $M_O$ where instead of swapping in $\ho$, we swap in the dummy centers ${\dummyset}$. We denote that solution by $M_D = S - S_0 \cup \dummyset$, where subscript $D$ stands for ``dummy''.  
\begin{lemma}
    We have  $d^2(p, \ho) \leq \opt_p$ if $p\in \clientsexpensive$ and $d^2(p, S- S_0)\leq r(p)$ if $p\in\clients - \clientsexpensive$.
    \label{lemma:costsMO}
\end{lemma}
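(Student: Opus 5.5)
The first inequality follows directly from the definitions. Take $p\in\clientsexpensive$. Then $p$ lies in an expensive cluster $C^*$ of $\opt$ whose center $c^*$ belongs to $\ho$. Since $\opt_p=\dist^2(p,\opt)=\dist^2(p,c^*)$, we get $\dist^2(p,\ho)\le \dist^2(p,c^*)=\opt_p$, because $c^*\in\ho$.

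For the second inequality, take $p\in\clients-\clientsexpensive$. The cluster $C^*$ of $\opt$ containing $p$ is then either pure or cheap, and its center $c^*$ lies in $\opt-\ho$. In both cases $r(p)$ has the form $\dist^2(p,t(c^*))$ for a center $t(c^*)\in S$.
\begin{itemize}
\item If $C^*$ is pure, $t(c^*)$ is the center of the associated cluster of $S$.
\item If $C^*$ is non-pure (hence cheap), $t(c^*)$ is the center of $S$ closest to $c^*$.
\end{itemize}
It therefore suffices to show that $t(c^*)\in S-S_0$. This is exactly the second property of \Cref{lemma:S0properties}: no $c\in S_0$ satisfies $t(c^*)=c$ for any $c^*\in\opt-\ho$. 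Hence $\dist^2(p,S-S_0)\le \dist^2(p,t(c^*))=r(p)$.

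The only point that needs care is the bookkeeping. First, $t(\cdot)$ and $r(\cdot)$ were defined separately for pure and non-pure clusters, so one must check that \Cref{lemma:S0properties} refers to the same map $t$ on all of $\opt-\ho$; in its proof, $S_0$ is chosen as centers not in the image $t(\opt-\ho)$, with $t$ defined by cases, so this holds. Second, one must confirm that $\clientsexpensive$ consists exactly of the clients served by $\ho$, so that clients outside it are served by $\opt-\ho$.
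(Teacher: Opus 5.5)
Your proposal is correct and is essentially the paper's proof. The first bound holds because $p$'s serving center in $\opt$ lies in $\ho$, and the second follows from the second property of \Cref{lemma:S0properties}, which places $t(c^*)$ in $S-S_0$ for every $c^*\in\opt-\ho$; your bookkeeping remarks about the case-defined map $t$ are accurate.
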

\begin{proof}

  We have $d^2(p, \ho) \leq \opt_p$ when $p\in \clientsexpensive$ because the center serving $p$ in the optimal solution is in $\ho$. Consider the remaining case and let $p\in \clients - \clientsexpensive$ be a client whose serving center in $\opt$ is say $c^*$. 
  By the second condition on $S_0$ in \Cref{lemma:S0properties}, the center $t(c^*)$ is in $S - S_0$  and
so  $d^2(p, S- S_0) \leq d^2(p, t(c^*)) = r(p)$.  
\end{proof}

Notice that $M_O$ contains $S- S_0$ and $\ho$. Therefore, if we sum up the above bounds for all clients we get
\begin{align*}
    \clcost(M_O) & \leq \sum_{p\in \clientsexpensive} \opt_p  + \sum_{p\in \clients {-} \clientsexpensive} r(p)\,, 
\end{align*}
which by \Cref{lemma:convenience_upper_bound} is at most $\sum_{p\in \clientsexpensive} \opt_p  + \sum_{p\in \clients {-}\clientsexpensive} {4} \opt_p + O({\sqrt{\eps}}) \cdot \sopt$ (assuming that that the sample $W$ was successful). By swapping $S_0$ with $\ho$ we thus get a $1$-approximation on the clients  $\clientsexpensive$ and a ${4}$-approximation of the remaining clients (plus a small error term). Our  goal in the next sections will thus be to approximate this swap. The next few steps will be aimed at simplifying (or guessing parts of) $S_0$. The last step will then approximate $\ho$ and the remaining part of $S_0$ by submodular function maximization subject to a partition matroid constraint. For that part, it will be helpful to have a bound on the dummy solution as well; specifically, a modified version of it (see \Cref{lemma:costboundofMOandMDwithassignments} and \Cref{claim:Mprimebounds}). For intuition of those statements, let us here say that one can show 
\begin{align*}
        \clcost(M_D) & \leq \sum_{p\in \clientsexpensive} {9}\,\opt_p  + \sum_{p\in \clients - \clientsexpensive} r(p)  + O({\sqrt{\eps}}) \cdot \sopt\,\,
\end{align*}
by observing that only the clients $\clientsexpensive$ have a different connection cost in $M_D$ than in $M_O$. 
Moreover, each such client $p\in \clientsexpensive$ that was previously assigned to a center $c^* \in \ho$, and belongs to cluster $C^*$ of $\opt$ {with leader $\ell$}, has an associated dummy center $\delta$ at {squared} distance 
$$
{d^2(p,\delta)\leq 3d^2(p, c^*)+3d^2(c^*,\ell)+3d^2(\ell,\delta)\leq} {3}d^2(p, c^*) +  {6}\avg_{C^*, \opt} + {3}\eps \gamma_{C^*}
$$ 
(assuming the set $\calB$ of balls is valid {and using \Cref{lem:apxTriangleInequality3} with $\gamma=3$}). Simplifications then give the stated inequality as $|C^*| \avg_{C^*, \opt} \leq (1+2\eps) \sum_{p\in C^*} \opt_p$ and $|C^*| \eps \gamma_{C^*} = \eps \sum_{p\in C^*} ( \opt_p + S_p)$.

\subsection{Removal of Expensive Centers of $S_0$}
\label{sec:removalofExpensive}

Thus, it remains to show that our algorithm can yield a good approximation to
the gain that the swap $(\ho, S_0)$ would provide. Of course, it will not 
necessarily be optimum, but we can show it
is enough to conclude 
the proof of our theorem.
We focus on the iteration of the procedure
that produces valid ball guesses. In particular, at this point given that $W$ is successful and the ball guesses are valid, the number of balls in our ball guess is equal to $|S_0|=|\ho|{=|\Lambda|}\leq \frac{\log n}{\eps^3}$. Our algorithm then
makes use of the following procedure that takes as input the local search solution $S$ and the set $\dummyset$ of dummy centers.
\begin{mdframed}[hidealllines=true, backgroundcolor=gray!15]
\vspace{-5mm}
\paragraph{$expRem(S,\dummyset)$}\ \\
\begin{algorithmic}[1]
\State $\calLexp\leftarrow \emptyset$
\For{$(\frac{{20}}{\eps})^{|{\Lambda}|+1}\ln n$ many iterations}
\State $\calQ\leftarrow \emptyset$
\For{$|{\Lambda}|+1$ many iterations}
\State With probability 1/2 do the following:
\State Consider the clustering induced by $\dummyset\cup S-\calQ$ and, for each $c\in S-\calQ$, let $\clcost(c)$ be the cost of the cluster associated with $c$ in the considered solution
\State Sample one $c\in S-\calQ$ with probability 
$\frac{\clcost(c)}{\sum_{c'\in S-\calQ}\clcost(c')}$ 
\State Set $\calQ\leftarrow \calQ\cup \{c\}$
\EndFor
\State $\calLexp\leftarrow \calLexp\cup \{\calQ\}$ 
\EndFor
\State \textbf{return} $\calLexp$
\end{algorithmic}
\end{mdframed}
The goal of the above procedure is to guess a subset $\calQ$ of all ``costly'' centers in $S_0$ (the set $\calLexp$ contains all guesses). Specifically, if we consider the solution $(S - \calQ) \cup \dummyset$, then we wish that the cost of the clusters corresponding to the remaining centers in $S_0 - \calQ$ is at most $\eps \cdot {\sopt}$. Formally, the clusters corresponding to $S_0 - \calQ$ 
 in $S - \calQ \cup \dummyset$ consists of all the clients whose closest center in $S - \calQ \cup \dummyset$ is from $S_0 - \calQ$. If we let $\clients' \subseteq \clients$ be those clients then the total cost in solution  $S- \calQ \cup \Lambda$ of the clusters {with centers in} $S_0 - \calQ$ is defined as 
 \[
 \sum_{p\in \clients'} d^2(p, S- \calQ \cup \Lambda) = \sum_{p\in \clients'} d^2(p, S_0 - \calQ)\,,
 \]
 where the equality holds because of the definition $\clients'$.

    \begin{lemma}
    \label{lem:successguessprocess}
{$expRem()$ runs} in $n^{1/\eps^{O(1)}}$-time and produces a collection $\calLexp$ of at most $n^{1/\eps^{O(1)}}$ subsets of $S$ such that with probability at least $1-1/n$ there exists $\calQ\in \calLexp$ satisfying
    \begin{enumerate}
    \item $\calQ \subseteq S_0$, and
    \item The total cost in the solution $S - Q \cup \dummyset$ of the clusters with centers {in} $S_0 - \calQ$ is at most $\eps \cdot \sopt$.
    \end{enumerate}
\end{lemma}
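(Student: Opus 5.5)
The plan is to show that a single outer iteration of $expRem$ succeeds with probability at least $(\eps/20)^{|\Lambda|+1}$. With $(20/\eps)^{|\Lambda|+1}\ln n$ independent outer iterations, the failure probability is then at most $(1-(\eps/20)^{|\Lambda|+1})^{(20/\eps)^{|\Lambda|+1}\ln n}\le 1/n$.

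The running time and the size of $\calLexp$ follow from $|\Lambda|=|\ho|\le \log n/\eps^3$, by \Cref{lemma:S0properties} and the validity of $\calB$. This gives $(20/\eps)^{|\Lambda|+1}\ln n = n^{O(\log(1/\eps)/\eps^3)}=n^{1/\eps^{O(1)}}$. Each inner iteration needs only polynomial work: computing the clustering induced by $\dummyset\cup S-\calQ$ and sampling.

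For one outer iteration, we follow a fixed ``good'' event step by step, tracking the current set $\calQ\subseteq S_0$. At the start of an inner iteration we split into two cases.

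\textbf{Case (a).} The clusters of $S_0-\calQ$ in $S-\calQ\cup\dummyset$ already have total cost at most $\eps\,\sopt$. The good event then asks the coin to come up ``skip''. This happens with probability $1/2$, and $\calQ$ stays unchanged.

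\textbf{Case (b).} The cost of these clusters exceeds $\eps\,\sopt$. The good event asks the coin to come up ``sample'' (probability $1/2$) and the sampled center to lie in $S_0-\calQ$. That center is chosen with probability $\sum_{c\in S_0-\calQ}\clcost(c)/\sum_{c\in S-\calQ}\clcost(c)$. The numerator exceeds $\eps\,\sopt$. The denominator is $\clcost(S-\calQ\cup\dummyset)\le \clcost(S-S_0\cup\dummyset)$, using $\calQ\subseteq S_0$ and monotonicity of cost under adding centers. By \Cref{lemma:S0properties} this is at most $(9+O(\sqrt\eps))\sopt\le 10\,\sopt$. So this step succeeds with probability at least $\eps/10\cdot 1/2=\eps/20$.

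Each successful case-(b) step strictly enlarges $\calQ$ inside $S_0$. Hence at most $|S_0|=|\Lambda|$ such steps can occur. After at most $|\Lambda|$ of them, either $\calQ=S_0$, in which case the residual cost is $0\le\eps\,\sopt$, or case (a) has been reached earlier. Once case (a) holds it persists, because $\calQ$ no longer changes. Over the $|\Lambda|+1$ inner iterations, every step of the good event has probability at least $\min(1/2,\eps/20)=\eps/20$. The event therefore occurs with probability at least $(\eps/20)^{|\Lambda|+1}$. When it occurs, the final $\calQ$ satisfies both properties of the lemma.

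The main obstacle is the denominator bound in case (b). It requires that removing a subset $\calQ\subseteq S_0$ can only leave the cost at most that of removing all of $S_0$, and then invoking bullet~3 of \Cref{lemma:S0properties}. This is where we must condition on $W$ being successful and $\calB$ being valid.
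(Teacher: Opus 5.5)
Your proposal is correct and follows the paper's proof essentially step for step. It uses the same per-inner-iteration success event: skip with probability $1/2$ once the residual cost is at most $\eps\,\sopt$, and otherwise sample inside $S_0-\calQ$ with probability at least $\eps/20$ via $\clcost(\dummyset\cup S-\calQ)\le\clcost(\dummyset\cup S-S_0)\le 10\,\sopt$. It also uses the same observation that at most $|S_0|$ sampling steps can succeed, and the same amplification over $(20/\eps)^{|\Lambda|+1}\ln n$ independent outer iterations. (A small nit: the sampling denominator $\sum_{c\in S-\calQ}\clcost(c)$ is at most, not equal to, $\clcost(S-\calQ\cup\dummyset)$, since clients served by dummy centers are excluded; only the inequality is needed.)
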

\begin{proof}
The claim on the running time and on the size of $\calLexp$ is trivially satisfied since $|\Lambda|=|S_0|\leq \frac{\log n}{\eps^3}$. 

It remains to show that with probability at least $1-\frac{1}{n}$ at least one set $\calQ\in \calLexp$ satisfies the desired properties. Let us consider some iteration of the external for loop, and let $\calQ$ be the corresponding set. We say that the $j$-th iteration of the corresponding inner for loop is successful if the following happens:
\begin{enumerate}
    \item If condition (2) of the lemma is satisfied considering the current value of $\calQ$, the event from line 5 does not happen (hence in particular $\calQ$ is not updated in this $j$-th iteration).
    \item Otherwise, the event from line 5 happens and furthermore the sampled $c$ belongs to $S_0$.
\end{enumerate}
Let $A_j$ denote the event that the considered $j$-th iteration is successful.
Let us condition on the event $A_{<j}$ that the previous iterations $A_1,\ldots,A_{j-1}$ are successful. In particular, one has $\calQ\subseteq S_0$ at the beginning of the $j$-th iteration. 
Let $B_j$ denote the event that condition (2) is satisfied at the beginning of the iteration. Then trivially $Pr[A_j|B_j,A_{<j}]\geq 1/2$. Suppose next that $B_j$ does not hold. In that case, with probability $1/2$, the procedure samples a center $c$. Since by assumption $\sum_{c\in S_0-\calQ}\clcost(c)>\eps \sopt$, the probability that the procedure samples some $c\in S_0-\calQ$ is at least
$$
\frac{\eps\sopt}{\sum_{c'\in S-\calQ}\clcost(c')}\geq \frac{\eps \sopt}{\clcost(\dummyset\cup S-\calQ)}\geq \frac{\eps \sopt}{\clcost(\dummyset\cup S-S_0)}\geq \frac{\eps}{{10}},
$$
where in the last inequality we used the fact that $\clcost(\dummyset\cup S-S_0)\leq {10}\cdot\sopt$ by \Cref{lem:structSminusS0}.

Altogether $Pr[A_j|\overline{B}_j,A_{<j}]\geq \eps/{20}$. Thus $Pr[A_j|A_{<j}]\geq \eps/{20}$. Chaining the obtained inequalities we obtain that all the events $A_1,\ldots,A_{|S_0|+1}$ are simultaneously true with probability at least $(\eps/{20})^{|S_0|+1}$. Notice that when the latter event happens, the corresponding $\calQ$ satisfies all the properties. Indeed, $\calQ\subseteq S_0$ and furthermore the event $B_j$ must happen for some $j\leq |S_0|+1$ (hence for the next iterations if any) since otherwise $\calQ$ would contain more than $|S_0|$ elements.

Thus, the probability that the overall procedure fails is at most 
$(1-(\frac{\eps}{{20}})^{|S_0|+1})^{({20}/\eps)^{|S_0|+1}\ln n}\leq \frac{1}{n}$.
\end{proof}

\subsection{Consistently Assigning Clients in Mixed Solutions and the sets  $\calU$ and $\calR$}
\label{sec:consistentlyassigning}
While our algorithm tries all possible $\calQ \in \calLexp$, we focus  on the execution path of a set $\calQ$ of centers satisfying the properties of \Cref{lem:successguessprocess}, i.e., 
\begin{enumerate}
    \item 
      $\calQ \subseteq S_0$, and
      \item  the cost in the solution $S - \calQ \cup \dummyset$ of the clusters with centers in $S_0 - \calQ$ is at most $\eps \cdot \sopt$.
\end{enumerate}
    We define the solution $S_{\calQ} := S- \calQ \cup \dummyset$ and for a center $c\in S_{\calQ}$ we let $S_{\calQ}(c)$ be the subset of clients closest to $c$ (i.e., assigned to $c$) in the solution $S_{\calQ}$. The second property above allows us to consider each cluster $S_{\calQ}(c)$, $c\in S_0 - \calQ$, as contracted by paying a small extra cost of $\eps \cdot \sopt$. In particular, this allows us to give a ``consistent''  assignment $\mu_O$ of clients in the mixed solutions. Here, consistent means that all clients in $S_Q(c)$, for $c\in S_0 - \calQ$, are assigned to the same center in the mixed solutions where all of $S_0$ is removed.  
    
    \paragraph{Definition of $\mu_O$ and $\mu_D$.}
    We modify how he clients are assigned in the solution $M_O$ to obtain the assignment $\mu_O$.
    For every $c \in S_0 - \calQ$, all the clients in $S_{\calQ}(c)$ are reassigned to the same 
    center of $M_O$ as follows. Let $c_1$ be the 
    cluster of $\ho$ that is the closest to 
    $c$ and let $c_2$ be the center of $M_O-\ho = S - S_0$ that
    is the closest to $c$. The clients of $S_{\calQ}(c)$ are all assigned to either $c_1$ or $c_2$:
    \begin{itemize}
        \item If $d^2(c,c_1) \le d^2(c,c_2)/5$, assign
        $S_{\calQ}(c)$ to $c_1$; $S_{\calQ}(c)$ is 
        called a \emph{type-1} cluster.
        \item Otherwise ($d^2(c,c_1) > d^2(c,c_2)/5$), assign
        $S_{\calQ}(c)$ to $c_2$; $S_{\calQ}(c)$ is 
        called a \emph{type-2} cluster.
    \end{itemize}
    Let  $T_1$ be the set of clients in a type-1 cluster and $T_2$ the set of clients in a type-2
    cluster. The remaining clients in $\clientsexpensive - (T_1 \cup T_2)$ are assigned to their closest centers in $\ho$, and the remaining clients in $(\clients - \clientsexpensive)- (T_1 \cup T_2)$ are assigned to their closest centers in $S- S_0$. {Notice that, even neglecting the extra cost due to the consistent assignment, the above assignment of clients in $T_2$ is suboptimal when $d(c,c_2)>d(c,c_1)>d(c,c_2)/5$. The motivation for this reassignment is technical. More specifically, it will be used in Lemma \ref{lemma:costboundofMOandMDwithassignments} to have a better upper bound on the cost of clients in $C^*\cap T_1$ for a non-expensive cluster $C^*$ (case b2). This leads to a larger upper bound on the cost of clients in $C^*\cap T_2$ for an expensive cluster $C^*$ (case a3), which is however tolerable.}

    We further obtain an assignment $\mu_D$ of clients in $M_D$ by modifying $\mu_O$ as follows:  each client $p$ with $\mu_O(p) \in \ho$ is assigned to the associated dummy center by $\mu_D$. So for each client $p$ we have $\mu_D(p) = \mu_O(p) $ unless $\mu_O(p)\in \ho$ in which case $\mu_D(p)$ equals the dummy center associated with $\mu_O(p)$.

    \paragraph{Sets $\calR$ and $\calU$.} By the definition of type-1 and type-2 cluster, we can split the
    set of centers of $S_0 - \calQ$ into two groups $\calR$ and $\calU$. Let $\calR$
    be the set of centers of $S_0- \calQ$ whose set of clients is completely assigned
    to a center of $\ho$ in the assignment $\mu_O$, and let $\calU = S_0 - \calQ- \calR$ be the remaining ones that are assigned to centers in $S- S_0$.

    We end this section by analyzing the costs of the assignments $\mu_O$ and $\mu_D$. {Recall that} we use the notation $\clcost(M_O,\mu_O)$ to denote the cost of $M_O$ equipped with assignment $\mu_O$ and, similarly $\clcost(M_D, \mu_D)$ for the cost of $M_D$ equipped with assignment $\mu_D$. We say that the selection of $(W, \calB, \calQ)$ is successful, if the sample $W$ selected in \Cref{sec:dsampleproc} is successful, the set of balls $\calB$ from \Cref{sec:ballguesses} is valid, and $\calQ$ selected in this section satisfies the properties of \Cref{lem:successguessprocess}.
\begin{lemma}
    If the selection of $(W, \calB, \calQ)$ is successful,
    \[
        \clcost(M_O, \mu_O) + \clcost(M_D, \mu_D) \leq 10\cdot \sopt + O(\sqrt{\eps} \cdot \sopt)\,. 
    \]
    \label{lemma:costboundofMOandMDwithassignments}
\end{lemma}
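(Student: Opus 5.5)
We bound the two costs client by client and then sum. The clients split into the clusters $S_{\calQ}(c)$ with $c\in S_0-\calQ$, that is, the sets $T_1\cup T_2$, and the remaining clients, whose assignment under $\mu_O$ is the natural one from \Cref{lemma:costsMO}. For a client $p\notin T_1\cup T_2$ we use \Cref{lemma:costsMO}. If $p\in\clientsexpensive$, then $\mu_O(p)\in\ho$ has cost at most $\opt_p$, and $\mu_D(p)$ is the dummy center, with cost at most $3\opt_p+6\avg_{C^*,\opt}+3\eps\gamma_{C^*}$. Summing over the expensive cluster $C^*$ gives a combined cost of at most $(1+9+O(\eps))\opt_p$ per client on average, plus $O(\eps)\sum_{p\in C^*}(\opt_p+S_p)$. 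If $p\notin\clientsexpensive$, both assignments go to $S-S_0$ with cost at most $r(p)$, so the combined cost is $2r(p)$. By \Cref{lemma:convenience_upper_bound} this is at most $8\opt_p$ on aggregate, up to $O(\sqrt\eps)\sopt$. Thus the untouched clients contribute at most $10\,\sopt+O(\sqrt\eps)\sopt$, with expensive clients charged about $10\opt_p$ and the others about $8\opt_p$.

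For the contracted clusters, each $p\in S_{\calQ}(c)$ is rerouted through $c$. By \Cref{lem:apxTriangleInequality2} with $\gamma=1+\sqrt\eps$ we get $d^2(p,\mu(p))\le(1+\sqrt\eps)d^2(c,\mu(p))+O(1/\sqrt\eps)d^2(p,c)$. By property 2 of \Cref{lem:successguessprocess}, the terms $d^2(p,c)$ sum to at most $\eps\sopt$, so the error is $O(\sqrt\eps)\sopt$. It therefore suffices to compare $d^2(c,\mu_O(p))+d^2(c,\mu_D(p))$ with the client's own budget, namely $\opt_p$, $r(p)$, and its dummy distance, again via the approximate triangle inequality $d(c,\cdot)\le d(c,p)+d(p,\cdot)$.

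I would split these clients into four cases: (a) $C^*$ is expensive or (b) not, crossed with type-1 or type-2. In case b1 ($T_1$, non-expensive), $\mu_O$ sends $p$ to $c_1\in\ho$ and $\mu_D$ to its dummy. Since $d^2(c,c_1)\le d^2(c,c_2)/5\le d^2(c,t(c^*))/5$, and $t(c^*)\in S-S_0$ by \Cref{lemma:S0properties}, we have $d^2(c,c_1)\lesssim r(p)/5$. The dummy of $c_1$ is within squared distance roughly $\rho\le d^2$-scale of $c_1$'s cluster, so we need a bound on the dummy distance in terms of $d(c,c_1)$. Both costs should then come out to at most a constant multiple of $r(p)$ summing to within $8\opt_p$. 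In case a1 ($T_1$, expensive) the client simply behaves like an untouched expensive client, up to the routing through $c$. In case a2/a3 ($T_2$, expensive), both assignments go to $c_2$, with $d^2(c,c_2)<5d^2(c,c_1)\le5d^2(c,o_p)$, where $o_p\in\ho$ is $p$'s optimal center. This gives a combined cost of at most $2\cdot5\,\opt_p=10\opt_p$ up to error, which matches the budget exactly, explaining the constant $5$. In case b2 ($T_2$, non-expensive) the assignment goes to $c_2$, which is closer than $t(c^*)$, so the combined cost is at most $2r(p)$ as before.

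The main obstacle is case b1: controlling the dummy center's distance for a non-expensive client sent to $c_1$. The dummy is at squared distance about $\rho\approx\avg$ of $c_1$'s cluster from its leader. For this to be bounded by $r(p)$, I would try to argue that the leader lies within $\sqrt{\rho}$ of $c_1$, so that the dummy is within $2\sqrt\rho$ of $c_1$. I would then either bound $\rho$ using the factor-$5$ slack against $d^2(c,c_2)$, or charge the excess to the expensive cluster's own budget of $9\opt$ via averaging. Tuning the constants so that the total stays at $8\opt_p+$ slack for these clients is where I expect the real work to be.
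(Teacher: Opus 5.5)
Your architecture matches the paper's, and your treatment of untouched clients and of both type-2 cases is sound. You route each $p\in S_{\calQ}(c)$ through $c$ via \Cref{lem:apxTriangleInequality2} with $\gamma=1+\sqrt{\eps}$, then split expensive/non-expensive against type-1/type-2.

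The gap is the one you flagged: you have no bound on $d^2(c,\delta)$, where $\delta$ is the dummy attached to $c_1\in\ho$, and neither of your suggested fixes works. The factor-$5$ rule only upper-bounds $d(c,c_1)$. The radius $\rho$ is determined by the cluster $C_1^*$ of $c_1$, and by itself nothing relates it to $d(c,c_2)$. Charging to $C_1^*$ also fails. Its budget of $10\opt_p$ per client is already consumed by its own untouched clients ($\opt_p$ plus roughly $9\opt_p$ after averaging the $6\avg$ term), and arbitrarily many type-1 clients from other clusters may be routed to $c_1$.

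The missing idea is that $c_1$ is the center of an expensive cluster. Such a cluster is by definition not basic-cheap, hence not \emph{covered} by $S$. Since $c\in S_0-\calQ\subseteq S$, this gives $d^2(c,c_1)>(1+\eps)\avg_{C_1^*,\opt}+\eps\gamma_{C_1^*}\ge\rho$. Validity of $\calB$ gives $d^2(c_1,\ell)\le\avg_{C_1^*,\opt}$, and \Cref{lem:apxTriangleInequality3} with $\gamma=3$ then yields $d^2(c,\delta)\le 3(d^2(c,c_1)+\avg_{C_1^*,\opt}+\rho)\le 9\,d^2(c,c_1)$. Your observation $d(c_1,\delta)\le 2\sqrt{\rho}$ gives the same bound once $\rho<d^2(c,c_1)$ is known. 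For a non-expensive type-1 client the combined cost is then at most $(1+9)\,d^2(c,c_1)\le 2\,d^2(c,c_2)\le 2\,d^2(c,t(c^*))$, up to $(1+O(\sqrt{\eps}))$ factors and $O(\eps^{-1/2})\,d^2(p,S_{\calQ})$ terms. This is about $2r(p)$, which is exactly why the threshold is $5$.

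The same fact is needed in your type-1 expensive case, which you dismiss as behaving like an untouched expensive client; that is not right. There $c_1$ is the center of $\ho$ closest to $c$, not necessarily $p$'s own optimal center $o_p$. So $\mu_D(p)$ is the dummy of a different cluster, and the averaging of the $6\avg_{C^*,\opt}$ term over $p$'s cluster $C^*$ does not apply. Instead use $d^2(c,\delta)\le 9\,d^2(c,c_1)$ together with $d^2(c,c_1)\le d^2(c,o_p)\le (1+\sqrt{\eps})\opt_p+\frac{2}{\sqrt{\eps}}d^2(p,c)$. This gives a combined cost of $(10+O(\sqrt{\eps}))\opt_p+O(\eps^{-1/2})\,d^2(p,S_{\calQ})$, which is the paper's case a2.
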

\begin{proof}
Throughout the proof of the lemma, we will repeatedly upper bound $d^2(p,\mu_D(p))$ in terms of $d^2(p,\mu_O(p))$ using the following bounds:
\begin{claim}\label{clm:costboundofMOandMDwithassignments}
Consider a client $p\in S_{\calQ}(c)$ with $\mu_O(p) \neq \mu_D(p)$ and so $\mu_O(p)\in \ho$ is the center of an expensive cluster $C^*$ of $\opt$. Then
\begin{enumerate}
\item $d^2(p, \mu_D(p))  \leq 3d^2(p, \mu_O(p)) + 6  ( \avg_{C^*, \opt} + \eps \gamma_{C^*})$, and
\item $d^2(p,\mu_D(p)) \leq \frac{2}{\sqrt{\eps}}d^2(p, S_\calQ) + 9(1+\sqrt{\eps})d^2(c, \mu_O(p))$.
\end{enumerate}
\end{claim}
\begin{proof}[Proof of Claim \ref{clm:costboundofMOandMDwithassignments}]
Let $\mu_O(p)  = c^*$. By the assumption that $\calB$ is a valid set of balls, we have by \Cref{lem:ballguesses} that there is a ball $B(\ell, \sqrt{\rho}) \in \calB$ such that $c^* \in B(\ell, \sqrt{\rho})$ and 
\[
    \avg_{C^*, \opt} \leq \rho \leq \avg_{C^*, \opt} + \eps \gamma_{C^*}\,.
\]
Thus, if we let $\delta \in \Lambda$ be the dummy center associated with this ball (at a distance $\sqrt{\rho}$ from $\ell$), then $\mu_D(p) = \delta$ and, by \Cref{lem:apxTriangleInequality3} with $\gamma=3$, 
\[
    d^2(p, \mu_D(p)) \leq 3 d^2(p, c^*) + 3 d^2(c^*, \ell) + 3 d^2(\ell, \delta) \leq 3 d^2(p, \mu_O(p)) + 6  ( \avg_{C^*, \opt} + \eps \gamma_{C^*})\,.
\]
For the second bound, one has:
\begin{align*}
 \notag   d^2(p, \mu_D(p)) & \leq \frac{2}{\sqrt{\eps}}d^2(p, c)  + (1+\sqrt{\eps})(d(c, c^*) + d(c^*, \ell) + d(\ell, \delta))^2 \\
\notag    &\leq \frac{2}{\sqrt{\eps}}d^2(p, S_\calQ) + 3(1+\sqrt{\eps})(d^2(c, c^*) + 2( \avg_{C^*, \opt} + \eps \gamma_{C^*})) \\
    &\leq \frac{2}{\sqrt{\eps}}d^2(p, S_\calQ) + 9(1+\sqrt{\eps}) d^2(c, c^*)\,.
    \label{eq:lastinequality}
\end{align*}
In the first inequality above we used \Cref{lem:apxTriangleInequality3} with $\gamma=1+\sqrt{\eps}$ (hence $\frac{\gamma}{\gamma-1}\leq \frac{2}{\sqrt{\eps}}$), in the second inequality above we used \Cref{lem:apxTriangleInequality3} with $\gamma=3$, and the last inequality above follows from the fact that we must have $d^2(c, c^*) > (1+\eps) \avg_{C^*, \opt} + \eps \gamma_{C^*}$ since otherwise we would have the contradiction that $c^*\in \ho$ is covered by $S$; recall that $\ho$ only contains expensive centers of $\opt$. 
\end{proof}
Having proven the above claim, we prove the lemma by distinguishing between expensive and non-expensive clusters $C^*$ of $\opt$. 

\paragraph{Case a: $C^*$ is an expensive cluster of $\opt$.} Let $c^*\in \ho$ be the center of $C^*$. Let $p\in C^*$ be a client in $C^*$ and let $c$ be its closest center in $S_\calQ$, i.e.,   $p\in S_\calQ(c)$.  We bound $d^2(p, \mu_O(p)) + d^2(p, \mu_D(p))$  by considering three cases:
\begin{description}
    \item[Case a1: $p\in C^* - (T_1 \cup T_2)$.]  By definition, we have  $\mu_O(p)$ is the closest center in $\ho$. Moreover, $\mu_O(p)$ equals the cluster center $c^*$ of $C^*$ since $p$ is closest to $c^*$ among all centers in $\opt \supseteq \ho$. Claim \ref{clm:costboundofMOandMDwithassignments}.1 together with \Cref{lemma:costsMO} thus gives us,
    \[
        d^2(p,\mu_O(p)) + d^2(p, \mu_D(p)) \leq 4\opt_p +  6  ( \avg_{C^*, \opt} + \eps \gamma_{C^*})\,.
    \]
\item[Case a2: $p\in C^* \cap T_1$.]  By definition of type-1 clusters,  $\mu_{O}(p) = c_1$ is the closest center in $\ho$ to $c$. Now, by Claim \ref{clm:costboundofMOandMDwithassignments}.2,
\begin{align*}
d^2(p, \mu_O(p))  + d^2(p, \mu_D(p)) & \leq d^2(p,\mu_O(p)) + \frac{2}{\sqrt{\eps}} d^2(p, S_{\calQ}) + 9(1+\sqrt{\eps}) d^2(c, c_1) \\
& \leq \frac{2}{\sqrt{\eps}}d^2(p, S_\calQ)+(1+\sqrt{\eps})d^2(c,c_1) + 9(1+\sqrt{\eps}) d^2(c, c_1)\\
&  \leq \frac{2}{\sqrt{\eps}}d^2(p, S_\calQ)+10(1+\sqrt{\eps})(\frac{2}{\sqrt{\eps}}d^2(p, S_\calQ)+(1+\sqrt{\eps})d^2(p,c_1))\\
& \leq \frac{23}{\sqrt{\eps}}d^2(p, S_\calQ)+(10+O(\sqrt{\eps}))\opt_p\,.
\end{align*}
where the second and third inequalities above uses \Cref{lem:apxTriangleInequality2} with $\gamma=1+\sqrt{\eps}$.
\item[Case a3: $p\in C^*\cap T_2$.] By the definition of type-2 clusters, $\mu_O(p) = c_2$ is the closest center to $c$ in $S- S_0$, and $d^2(c, c_2) \leq 5 d^2(c, \ho)$. So, by \Cref{lem:apxTriangleInequality2} with $\gamma=1+\sqrt{\eps}$, $d^2(c, c_2) \leq 5(\frac{2}{\sqrt{\eps}} d^2(c,p) + (1+\sqrt{\eps})\opt_p)$ where we additionally used that $d^2(p, \ho) \leq \opt_p$ by \Cref{lemma:costsMO}. As $\mu_O(p) = \mu_D(p)$ in this case, and using again \Cref{lem:apxTriangleInequality2} with $\gamma=1+\sqrt{\eps}$,
\begin{align*}
{d^2}(p, \mu_O(p))  + d^2(p, \mu_D(p)) & = 2d^2(p,\mu_O(p))  \\
& \leq 2(\frac{2}{\sqrt{\eps}}d^2(p, c) + (1+\sqrt{\eps})d^2(c,c_2)) \\
& \leq 2(\frac{2}{\sqrt{\eps}}d^2(p, c) + (1+\sqrt{\eps})5(\frac{2}{\sqrt{\eps}}d^2(c,p) + (1+\sqrt{\eps})\opt_p)) \\
& \leq \frac{25}{\sqrt{\eps}}d^2(p, S_\calQ) + (10+O(\sqrt{\eps})) \opt_p\,.
\end{align*}
\end{description}
By the above bounds, we have that \begin{align*}
& \sum_{p\in C^*} (d^2(p, \mu_O(p)) + d^2(p, \mu_D(p)))\\
\leq & \sum_{p\in C^* - (T_1 \cup T_2)} \left(4\opt_p +  6(\avg_{C^*, \opt} + \eps \gamma_{C^*})\right) + \sum_{p\in C^*\cap (T_1 \cup T_2)}\left((10+O(\sqrt{\eps}))\opt_p + \frac{25}{\sqrt{\eps}} d^2(p, S_\calQ)\right)\,.  
\end{align*}
We complete the analysis of the expensive clusters by upper bounding the terms $6\eps\gamma_{C^*}$ and $6\avg_{C^*, \opt}$. By definition we have $|C^* - (T_1 \cup T_2)| \cdot 6\eps \gamma_{C^*}\leq  6\eps \sum_{p\in C^*} (\opt_p + S_p)$. For the other term, we claim that 
\[
    6 |C^* - (T_1 \cup T_2)|\cdot \avg_{C^*, \opt} \leq \sum_{p\in C^* - (T_1 \cup T_2)} 6\opt_p + 7\eps \cdot \sum_{p\in C^*} \opt_p\,. 
\]
This holds because there are at least $(1-\eps)|C^*|$ clients  in $C^*$ such that $\opt_p \geq \avg_{C^*, \opt}$, which implies 
\begin{align*}
    \sum_{p\in C^* - (T_1 \cup T_2)} 6\opt_p + 7\eps \cdot \sum_{p\in C^*} \opt_p &\geq 6(|C^* - (T_1 \cup T_2)|-\eps|C^*|)\cdot \avg_{C^*, \opt} + 7\eps(1-\eps)|C^*|\cdot \avg_{C^*, \opt} \\
    & \geq 6 |C^*- (T_1 \cup T_2)|\cdot \avg_{C^*, \opt}\,. 
\end{align*}
In summary, for an expensive cluster $C^*$ we have that
\begin{align}
& \sum_{p\in C^*} (d^2(p, \mu_O(p)) + d^2(p, \mu_D(p)))\nonumber\\
\leq &
\sum_{p\in C^* - (T_1 \cup T_2)} 10\opt_p  + \sum_{p\in C^*\cap (T_1 \cup T_2)}(10\opt_p + \frac{25}{\sqrt{\eps}} d^2(p, S_\calQ)) + O(\sqrt{\eps})\sum_{p\in C^*} (\opt_p  + S_p)
\label{eq:expensive_cluster_bound}
\end{align}
.

\paragraph{Case b: $C^*$ is a non-expensive cluster of $\opt$.} We proceed with a similar analysis. Indeed, consider a client $p\in C^*$  and let $c$ be its closest center in $S_\calQ$, i.e.,   $p\in S_\calQ(c)$.  We again bound $d^2(p, \mu_O(p)) + d^2(p, \mu_D(p))$  by considering three cases:

\begin{description}
    \item[Case b1: $p\in C^* - (T_1 \cup T_2)$.] By the definition of $\mu_O$, we have that $\mu_O(p)$ is the closest center in $S - S_0$ to $p$. So $\mu_D(p) = \mu_O(p)$ and  $d^2(p,\mu_O(p)) + d^2(p, \mu_D(p)) \leq 2r(p)$, by \Cref{lemma:costsMO}. 
\item[Case b2: $p\in C^* \cap T_1$.]  By the definition of type-1 clusters,  $\mu_{O}(p) = c_1$ is the closest center in $\ho$ to $c$. Now, by Claim \ref{clm:costboundofMOandMDwithassignments}.2
 and \Cref{lem:apxTriangleInequality2} with $\gamma=1+\sqrt{\eps}$, 
\begin{align*}
d^2(p, \mu_O(p))  + d^2(p, \mu_D(p)) & \leq d^2(p,\mu_O(p)) + \frac{2}{\sqrt{\eps}}d^2(p, S_{\calQ}) + 9(1+\sqrt{\eps}) d^2(c, c_1) \\
& \leq \frac{2}{\sqrt{\eps}}d^2(p, S_{\calQ})+(1+\sqrt{\eps})d^2(c, c_1)+\frac{2}{\sqrt{\eps}}d^2(p, S_{\calQ}) + 9(1+\sqrt{\eps}) d^2(c, c_1)\\
& \leq \frac{4}{\sqrt{\eps}} d^2(p, S_\calQ) + 10(1+\sqrt{\eps}) d^2(c,c_1)\,.
\end{align*}
We proceed by upper bounding $d^2(c, c_1)$. 
{Here we critically use the fact that, by the definition of $T_1$, one has} $d^2(c, c_1) \leq d^2(c, c_2)/5$ where $c_2$ is the closest center to $c$ in $S- S_0$. Let $c_3$ be the closest center to $p$ in $S - S_0$. Then, using \Cref{lem:apxTriangleInequality2} with $\gamma=1+\sqrt{\eps}$, we obtain that 
$$
d^2(c, c_1) \leq \frac{1}{5}d^2(c, c_2)\leq \frac{1}{5}d^2(c,c_3) \leq \frac{2}{5\sqrt{\eps}}d^2(p,S_\calQ) + \frac{1+\sqrt{\eps}}{5}d^2(p, c_3).
$$ 
Moreover, we have $d^2(p, c_3) = d^2(p, S- S_0)\leq r(p)$ by \Cref{lemma:costsMO} and so
$
    d^2(c,c_1)  \le \frac{1}{\sqrt{\eps}}d^2(p, S_\calQ) + \frac{1+\sqrt{\eps}}{5}r(p)$, which gives us the bound
    \begin{align*}
    d^2(p, \mu_O(p))  + d^2(p, \mu_D(p)) \leq \frac{15}{\sqrt{\eps}} d^2(p, S_\calQ) + (2+O(\sqrt{\eps})) r(p)\,.
    \end{align*}
\item[Case b3: $p\in C^*\cap T_2$.] By the definition of type-2 clusters, $\mu_O(p) = c_2$ is the closest center to $c$ in $S- S_0$, and so $\mu_O(p) = \mu_D(p)$. Now by \Cref{lemma:costsMO} and applying twice \Cref{lem:apxTriangleInequality2} with $\gamma=1+\sqrt{\eps}$,
\begin{align*}
d^2(p, \mu_O(p)) & \leq \frac{2}{\sqrt{\eps}}d^2(p,c) + (1+\sqrt{\eps})d^2(c, c_2)\\
& \leq \frac{2}{\sqrt{\eps}}d^2(p,c ) + (1+\sqrt{\eps})\left(\frac{2}{\sqrt{\eps}}d^2(p,c )+(1+\sqrt{\eps})d^2(p,c_2)\right)\\
& \leq \frac{5}{\sqrt{\eps}}d^2(p,c ) + (1+\sqrt{\eps})^2 d^2(p, S- S_0)\\
& \leq \frac{5}{\sqrt{\eps}}d^2(p, S_\calQ) + (1+O(\sqrt{\eps}))r(p).
\end{align*} 
So we get the bound
\begin{align*}
    d^2(p,\mu_O(p)) + d^2(p, \mu_D(p)) {=2 d^2(p,\mu_O(p))}\leq  \frac{10}{\sqrt{\eps}}d^2(p, S_\calQ) + (2+O(\sqrt{\eps})) r(p)\,.
\end{align*}
\end{description}
Summing up the above bounds, yields
\[\sum_{p\in C^*} (d^2(p, \mu_O(p)) + d^2(p, \mu_D(p))) \leq 
\sum_{p\in C^*} (2+O(\sqrt{\eps})) r(p) + \sum_{p\in C^* \cap (T_1 \cup T_2)} \frac{15}{\sqrt{\eps}} d^2(p, S_\calQ)
\]
for a non-expensive cluster $C^*$ of $\opt$. 
If we sum  up the above inequality for all non-expensive clusters and the bound~\eqref{eq:expensive_cluster_bound} for expensive clusters of $\opt$ we thus get
\[
\clcost(M_O, \mu_O) + \clcost(M_D, \mu_D) \leq \sum_{p\in \clientsexpensive} 10 \opt_p + \sum_{p\in \clients- \clientsexpensive} (2+O(\sqrt{\eps})) r(p) + O(\sqrt{\eps} \cdot \sopt)\,,
\]
where we used that $\sum_{c \in S_0 - \calQ} \sum_{p \in S_{\calQ}(c)} d^2(p, S_\calQ) = \sum_{p\in T_1 \cup T_2} d^2(p, S_\calQ) \leq \eps \cdot \sopt$ (\Cref{lem:successguessprocess}) and $\clcost(S) \leq \apxLSkmeans \sopt$ (\Cref{lemma:localsearchis5approximation}) to bound the error term $O(\sqrt{\eps} \cdot \sopt)$.

The proof is now concluded by \Cref{lemma:convenience_upper_bound}, which says that $\sum_{p\in \clients - \clientsexpensive} r(p) \leq \sum_{p\in \clients - \clientsexpensive} 4\opt_p + O(\sqrt{\eps} \cdot \sopt)$ if the sample $W$ is successful. 
\end{proof}

\subsection{Removal of Cheap Centers in $S_0$}
\label{sec:removalofCheap}

At this stage we have defined a clustering $S_{\calQ} = S - \calQ \cup \dummyset$, where $\dummyset$ is the set of dummy centers. Recall that the mixed solution $M_O$ equals $S-S_0 \cup \ho$. So $S_\calQ - \Lambda$ represents progress compared to $S$ in that we have already removed a subset $\calQ$ of $S_0$. Furthermore, with respect to the assignment of clients $\mu_O$, we have that the remaining centers $S_0 - Q$, which we want to remove, are divided into two sets $\calR$ and $\calU$. The clusters with centers in $\calR$ are completely assigned (by $\mu_O$) to centers in $\ho$, and every cluster with center in $\calU$ is completely assigned to a center in $S - S_0$. 
The task of this section is to make further progress by guessing the centers in $\calU$ and their reassignment to centers in $S- S_0$. This turns out to be a difficult task, and we will instead ``approximately'' guess a set $\bcalU$ and reassignment $\tmu$ of the clients of those clusters. Specifically, we will output a solution $S_{\calQ \cup \bcalU} = S_{\calQ} - \bcalU$ (obtained by removing $\bcalU$ from $S_{\calQ}$) and an assignment of clients $\tmu$ to centers in $S_\calQ-\bcalU$. This is done through the following lemma.   {Recall that for $c\in S_{\calQ}$, $S_{\calQ}(c)$ is the set of clients assigned to $c$ in $S_\calQ$, i.e., closest to $c$ among all centers in $S_\calQ$.}    
\begin{lemma}\label{lem:successcheapremove}
Given $S$ and $\calQ$ as described above, there is a polynomial-time algorithm that produces a collection $\calL_{cheap}$ of subsets $\bcalU$ of $S-\calQ$, and for each such $\bcalU$ and assignment $\tmu$ of clients to centers in $S_{\calQ \cup \bcalU} = S_\calQ-\bcalU$, such that at least one such pair $(\bcalU,\tmu)$ satisfies the following properties:
\begin{enumerate}
    \item $|\bcalU| = |\calU|$.
    \item $\bcalU \cap \calR = \emptyset$, i.e., $\bcalU$ does not contain any center of $S_0 - \calQ$ whose set of clients is completely assigned
    to a center of $\ho$ in the assignment $\mu_O$.
    \item $\tmu$ satisfies: \begin{enumerate}
        \item For every center $c\in S_{\calQ} - \bcalU$, we have $\tmu(p)= c$ for every $p\in S_{\calQ}(c)$.
        \item For a center $c\in \bcalU$, the clients in $S_\calQ(c)$ are reassigned to a center $c' \in S- \calQ - \bcalU - \calR$, i.e., $\tmu(p) = c'$ for every $p\in S_{\calQ}(c)$.
       \item The cost increase of the reassignment $\tmu$ of clients previously assigned to $\bcalU$ compared to that  of the reassignment $\mu_O$ of clients previously assigned to $\calU$ is bounded by $O(\eps \cdot \sopt)$:
       \begin{gather*}
        \sum_{c \in \calU \cup \bcalU}  \sum_{p\in S_\calQ(c)} \left( d^2(p, \tmu(p)) - d^2(p, \mu_O(p)) \right) = O ({\sqrt{\eps}} \cdot \sopt)\,. 
        \end{gather*}
    \end{enumerate}
\end{enumerate}
\end{lemma}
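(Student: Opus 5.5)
The plan follows the strategy of the analogous step in \cite{CGLSS25stoc}: contract clusters, guess a few parameters, then choose $\bcalU$ greedily by a ``removal cost'' and pay for the difference by an exchange argument.

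\textbf{Contraction and removal cost.} For each candidate $c\in S-\calQ$ and each target $c'\in S-\calQ-\{c\}$, the cost of sending all of $S_\calQ(c)$ to $c'$ is, by \Cref{lem:apxTriangleInequality2} with $\gamma=1+\sqrt{\eps}$,
\[
\sum_{p\in S_\calQ(c)} d^2(p,c') \;\le\; (1+\sqrt{\eps})\,|S_\calQ(c)|\,d^2(c,c') + \tfrac{2}{\sqrt{\eps}}\sum_{p\in S_\calQ(c)} d^2(p,S_\calQ),
\]
with a matching lower bound of the same shape. For $c\in\calU\subseteq S_0-\calQ$, property 2 of \Cref{lem:successguessprocess} bounds the total of the second terms by $O(\eps\,\sopt)$, so after multiplying by $2/\sqrt{\eps}$ the error is $O(\sqrt{\eps}\,\sopt)$. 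For $c\in\calU$ I would compare against the real assignment, not against the contracted proxy. By definition of $\calU$ (type-2 clusters), $\mu_O$ sends $S_\calQ(c)$ to $c_2$, the nearest center to $c$ in $S-S_0$. So I define the weight $w(c):=|S_\calQ(c)|\,d^2(c,\nu(c))$, where $\nu(c)$ is the nearest allowed target. Up to the $O(\sqrt{\eps}\,\sopt)$ slack, $\sum_{c\in\calU} w(c)$ is at most the $\mu_O$-cost of the clients of $\calU$.

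\textbf{Guessing and greedy choice.} Since $\calU\cup\calR\subseteq S_0$ and $|S_0|\le \log n/\eps^3$ by \Cref{lem:structSminusS0}, the algorithm guesses three things:
\begin{itemize}
\item the integer $|\calU|$;
\item a threshold $\tau$, an $(1+\eps)$-approximation of $\max_{c\in\calU} w(c)$, from polynomially many powers;
\item the centers of $\calU$ with weight at least $\eps\tau/|S_0|$, via a sampling loop exactly as in $expRem$, proportional to $w$, which succeeds with probability $(\eps/20)^{O(1)}$ per step.
\end{itemize}
After this, the remaining part of $\calU$ has weights in a range where any $|\calU|$ centers of weight at most $\eps\tau/|S_0|$ are interchangeable up to $O(\eps\tau)$ in total. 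The rest of $\bcalU$ is then filled with the cheapest remaining centers. The exchange argument pairs each $c\in\bcalU-\calU$ with a distinct $c''\in\calU-\bcalU$ satisfying $w(c)\le w(c'')+\eps\tau/|S_0|$. Summing over the pairs gives property 3c, with the total excess $O(\eps\tau)=O(\eps\,\sopt)$ because $\tau\lesssim \clcost(S)$. Properties 1 and 3a hold by construction, and property 3b holds by taking $\tmu$ to send each $S_\calQ(c)$, $c\in\bcalU$, to $\nu(c)$.

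\textbf{Main obstacle: validity of targets and avoiding $\calR$.} I expect the hardest part to be making sure the targets $\nu(c)$ are valid and that $\bcalU$ avoids $\calR$ (property 2). The difficulty is that $\calR$, and the final set of removed centers, is not known in advance.

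For the targets, I would restrict $\nu(c)$ to be a center that is itself not a removal candidate. To make this consistent, I would guess, for each weight class, a bounded number of ``hub'' centers, namely the $c_2$'s used by $\mu_O$ for the heavy clusters. Light clusters would be routed via the triangle inequality to the hub of their nearest neighbor, paying a constant factor only on their already-small weight.

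For $\calR$, I would argue that centers of $\calR$ are type-1, i.e.\ $d^2(c,c_1)\le d^2(c,c_2)/5$. Hence either their weight exceeds that of the cluster's own $\opt$-based cost (so they are never among the cheapest choices once the heavy $\calU$-centers are fixed), or they can be charged to $\opt_p$ for the corresponding expensive clusters. Failing that, I would guess $\calR\cap\{\text{heavy}\}$ by the same sampling trick and exclude it. This is the step I am least sure of; the fallback is to allow $\bcalU\cap\calR\ne\emptyset$ and show the loss is absorbed in $O(\sqrt{\eps}\,\sopt)$.
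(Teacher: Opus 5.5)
There is a genuine gap, and it sits exactly at the step you flag as the main obstacle. Your exchange idea itself is right: pair $\bcalU-\calU$ with $\calU-\bcalU$, and use contraction on the $\calU$ side. The problem is that a cheapest-first choice of $\bcalU$ can go wrong in three ways.
\begin{enumerate}
\item It can pick a center of $\calR$, violating Property 2.
\item It can pick a center of $\calX$, i.e.\ the $\mu_O$-target $c_2$ of some $c''\in\calU$. Then the nearest surviving target of $c''$ may be far from $c_2$, and the exchange no longer compares with the $\mu_O$-cost.
\item It can send clients to a center of $\calR$, violating 3b.
\end{enumerate}
Your claim that $\calR$-centers are never among the cheapest does not hold. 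Being type-1 only compares $d(c,\ho)$ with $d(c,S-S_0)$, and a center of $\calR$ can be arbitrarily close to another center of $S_0-\calQ$, so its removal weight can be tiny. The hub construction is too vague to handle the second failure mode. The fallback of allowing $\bcalU\cap\calR\neq\emptyset$ contradicts Property 2, which the submodular step relies on: it needs $\calR\subseteq\calP$.

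The missing idea is that $\calR$, $\calX$ and the relevant part of $\calU$ need not be known in advance; they can be discovered on demand by bounded-depth branching. The paper guesses only $|\calU|,|\calR|,|\calX|$. It then repeatedly takes the candidate $c\notin\calQ\cup\calR'\cup\calX'\cup\calN\cup\bcalU$ minimizing the true cost $R(c)=\sum_{p\in S_\calQ(c)}d^2(p,next(c))$. Here $next(c)$ is the nearest center outside $\calQ\cup\calU'\cup\calR'\cup\bcalU\cup\{c\}$. The procedure branches as follows:
\begin{itemize}
\item if $c\in\calX$ or $c\in\calR$, exclude $c$ as a candidate;
\item if $next(c)\in\calR$ or $next(c)\in\calU$, exclude $next(c)$ as a target;
\item otherwise, commit $c$ to $\bcalU$ and $next(c)$ to $\calN$.
\end{itemize}
Each branch increases $|\calU'|+|\calR'|+|\calX'|+|\bcalU|$, so the depth is at most $2\ell=O(\log n/\eps^3)$ and there are $4^{2\ell}=n^{O(1/\eps^3)}$ leaves.

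On the correct branch, the invariants $\bcalU\cap(\calX\cup\calR\cup\calN)=\emptyset$ and $\calN\cap(\calU\cup\calR)=\emptyset$ give Properties 2 and 3b. For $c\in\bcalU-\calU$ paired with $c''\in\calU-\bcalU$, one has $R(c)\le R(c'')$. This holds because $c''$ was still a candidate when $c$ was chosen, and its $\mu_O$-target (in $\calX$) was still admissible. The contracted bound on $R(c'')$ then yields 3c. No thresholds, weight classes or sampling are needed, and the procedure is deterministic.

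Two further steps of your plan also fail as written.
\begin{itemize}
\item \emph{The weight proxy.} $w(c)=|S_\calQ(c)|\,d^2(c,\nu(c))$ approximates the real reassignment cost only for contracted clusters. A center $c\in\bcalU-\calU$ typically lies outside $S_0$, where Property 2 of \Cref{lem:successguessprocess} says nothing. For such $c$, the real increase can be of order $w(c)+\sqrt{w(c)\sum_{p\in S_\calQ(c)}d^2(p,c)}$, so comparing weights does not give 3c. The greedy order must use the true cost.
\item \emph{The sampling step.} Sampling proportionally to $w$ has no analogue of the bound that makes $expRem$ work. The normalizer $\sum_{c\in S-\calQ}w(c)$ is a sum of single-center removal costs and need not be $O(\sopt)$. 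Many centers outside $\calU$ may also have weight comparable to $\tau$, so the per-step success probability need not be $\eps^{O(1)}$.
\end{itemize}
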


In words, the above properties of $\tmu$ say that we maintain the assignment of clients whose closest center remains the same, and the clients associated to the removed centers $\bcalU$ are reassigned to centers not in $\calR$ at no higher cost (up to $O({\sqrt{\eps}} \cdot \sopt)$) than the cost of the reassignment of $\calU$ in the modified solution $M_O$.

Let $\calX$ be the centers in $S-S_0$ to which the points $p\in S_{\calQ}(c)$ with $c\in \calU$ are assigned according to $\mu_O$. Notice that $|\calX|\leq |\calU|$ and $\calX \cap (\calQ\cup \calU\cup \calR)=\emptyset$.
We assume that the values of $|\calU|$, $|\calR|$, and $|\calX|$ are known. This can be achieved by trying, for each such value, all the integers between $0$ and $|S_0|\leq \frac{\log n}{\eps^3}$. Let $\ell:=|\calU|+|\calR|+|\calX|\leq 2|\calU|+|\calR|\leq 2|S_0 - \calQ|$. Then we run the recursive procedure $cheapRem()$ described in the box with input $(\calU',\calR',\calX',\calN,\bcalU)=(\emptyset,\emptyset,\emptyset,\emptyset,\emptyset)$. {We assume that $cheapRem()$ has access to the quantities $S$, $\calQ$, $|\calU|$, $|\calR|$, $|\calX|$ and $\ell$, as well as to a} global variable $\calL_{cheap}$, which is initialized to $\emptyset$. The intuition for the parameters is as follows. Intuitively $\calU'$, $\calR'$ and $\calX'$ are subsets of $\calU$, $\calR$ and $\calX$, resp., that we have already identified, while $\bcalU$ is the current value of the set $\bcalU$ under construction. Intuitively, $\calN$ are the centers to which the points $p\in S_{\calQ}(c)$ with $c\in \bcalU$ are reassigned, namely $\tmu(p)\in \calN$. At the end of the {root call}, $\calL_{cheap}$ contains the desired collection of sets $\bcalU$ as in \Cref{lem:successcheapremove}. Whenever some $\bcalU$ is added to $\calL_{cheap}$, we define a corresponding $\tmu$ as follows: For each $p\in S_{\calQ}(c)$, $\tmu(p)=c$ if $c\in S_{\calQ}-\bcalU$. Otherwise, i.e., if $c\in \bcalU$, $\tmu(p)=next(c)\in \calN$, where $next(c)$ is defined in the recursive call when $c$ is added to $\bcalU$.
\begin{mdframed}[hidealllines=true, backgroundcolor=gray!15]
\vspace{-5mm}
\paragraph{$cheapRem(\calU',\calR',\calX',\calN,\bcalU)$}\ \\
\begin{algorithmic}[1]
\If{$|S-\calQ|\leq 4\ell$}\label{alg:cheapRem:fewCenters}
\For{All $\bcalU\subseteq S-\calQ$ of size $|\calU|$ and $\bcalR\subseteq S-\calQ-\bcalU$ of size $|\calR|$}
\State Add $\bcalU$ to $\calL_{cheap}$ and, for each $c\in \bcalU$, set $next(c)$ to the closest center to $c$ in $S-\calQ-\bcalU-\bcalR$\label{alg:cheapRem:optimalGuess}
\EndFor
\State \textbf{halt}
\EndIf
\If{$|\bcalU|=|\calU|$}\label{alg:cheapRem:termination}
\State Add $\bcalU$ to $\calL_{cheap}$ and \textbf{halt}
\EndIf
\State Let $c\in S-\calQ-\calR'-\calX'-\calN-\bcalU$ minimize $R(c):=\sum_{p\in S_{\calQ}(c)}d^2(p,next(c))$, where $next(c)$ is the closest center to $c$ in $S-\calQ-\calU'-\calR'-\bcalU-\{c\}$.\label{alg:cheapRem:selectc}
\If{$c\notin \calU'$ and $|\calX'|<|\calX|$} $cheapRem(\calU',\calR',\calX'\cup \{c\},\calN,\bcalU)$\label{alg:cheapRem:discoveredX}
\EndIf
\If{$c\notin \calU'$ and $|\calR'|<|\calR|$} $cheapRem(\calU',\calR'\cup \{c\},\calX',\calN,\bcalU)$\label{alg:cheapRem:discoveredR}
\EndIf
\If{$|\calU'|+|\calR'|=|\calU|+|\calR|$ or $next(c)\in \calN$} 
$cheapRem(\calU',\calR',\calX',\calN\cup \{next(c)\},\bcalU\cup \{c\})$\label{alg:cheapRem:easyAddtU}
\Else
\If{$|\calR'|<|\calR|$} $cheapRem(\calU',\calR'\cup \{next(c)\},\calX',\calN,\bcalU)$ \label{alg:cheapRem:guessNextR}
\EndIf
\If{$|\calU'|<|\calU|$} $cheapRem(\calU'\cup \{next(c)\},\calR',\calX',\calN,\bcalU)$ \label{alg:cheapRem:guessNextU}
\EndIf
\State $cheapRem(\calU',\calR',\calX',\calN\cup \{next(c)\},\bcalU\cup \{c\})$\label{alg:cheapRem:guessNextNotRU}
\EndIf
\end{algorithmic}
\end{mdframed}

    \begin{proof}[Proof of \Cref{lem:successcheapremove}]    
Consider the execution of the above procedure for the correct guess of the values $|\calR|$, $|\calX|$ and $|\calU|$. 
We remark that, when line \ref{alg:cheapRem:selectc} is executed, the set $S-\calQ-\calR'-\calX'-\calN-\bcalU$ is not empty (so that we can choose {an} appropriate $c$). Indeed, at each recursive call we add at most two elements to $\calR'\cup \calX'\cup \calN\cup \bcalU$  
and the value of $|\calU'|+|\calR'|+|\calX'|+|\bcalU|$ grows by at least $1$. The latter value cannot grow more than $2|\calU|+|\calR|+|\calX|\leq 2\ell$ times since at that point we would have necessarily $|\bcalU|=|\calU|$, which makes the condition of line \ref{alg:cheapRem:termination} true. Since the condition of line \ref{alg:cheapRem:fewCenters} cannot hold if line \ref{alg:cheapRem:selectc} is executed at least once, we have that $|S- \calQ|> 4\ell$. Thus $S-\calQ$ contains sufficiently many elements to remove up to $2$ elements for up to $2\ell$ times. 

Concerning the running time of the procedure, each recursive step involves a branching on at most $4$ subproblems, and in each one of them the value of $|\calU'|+|\calR'|+|\calX'|+|\bcalU|$ grows by at least $1$. Thus, by the same argument as before, the depth of the recursion is at most $2\ell$. So the number of recursive calls is at most $4^{2\ell}=n^{{1/\eps^{O(1)}}}$, implying a polynomial running time.

It remains to show that at least one set $\bcalU\in \calL_{cheap}$ (with the associated $\tilde{\mu}$) at the end of the procedure satisfies the claim. If the condition of line \ref{alg:cheapRem:fewCenters} holds, it must happen that in one of the executions of line \ref{alg:cheapRem:optimalGuess} one has $\bcalU=\calU$ and $\bcalR=\calR$. In that case $\bcalU$ and the corresponding $\tmu$ trivially satisfy the claim. In particular we remark that for each $p\in S_{\calQ}(c)$ with $c\in \bcalU=\calU$, $\tmu(p)=next(c)=\mu_{O}(p)$. 

Otherwise, let us focus on the solution $\bcalU\in \calL_{cheap}$ which is generated by the following chain of recursive calls starting from the root call with $(\calU',\calR',\calX',\calN,\bcalU)=(\emptyset,\emptyset,\emptyset,\emptyset,\emptyset)$. Consider the current input parameters $(\calU',\calR',\calX',\calN,\bcalU)$ and let $c$ be selected in line \ref{alg:cheapRem:selectc}. {During the process, we will maintain the following invariants:
\begin{equation}\label{inv:UpRpXp}
\calU'\subseteq \calU;\,\calR'\subseteq \calR; \calX'\subseteq \calX.    
\end{equation}
\begin{equation}\label{inv:N}
\calN \cap (\calU \cup \calR)=\emptyset.    
\end{equation}
\begin{equation}\label{inv:tU}
\bcalU \cap (\calX \cup \calR\cup \calN)=\emptyset. 
\end{equation}
The root call satisfies the mentioned invariants trivially since $\bcalU=\calU'=\calR'=\calX'=\calN=\emptyset$.
}

If $c\in \calX$, we continue with the recursive call of line \ref{alg:cheapRem:discoveredX}. {The invariants are maintained since $\calX'\cup\{c\}\subseteq \calX$ and $c\notin \bcalU$.} Observe that this choice is excluded if $c\in \calU'$ or $|\calX'|=|\calX|$, {however by Invariant \eqref{inv:UpRpXp} the latter conditions cannot happen when $c\in \calX$}.

Similarly, if $c\in \calR$, we continue with the recursive call of line \ref{alg:cheapRem:discoveredR}. {The invariants are maintained since $\calR'\cup\{c\}\subseteq \calR$, $c\notin \calN$, and $c\notin \bcalU$.} Observe that this choice is excluded if $c\in \calU'$ or $|\calR'|=|\calR|$, {however by Invariant \eqref{inv:UpRpXp} the latter conditions cannot happen when $c\in \calR$}.

Otherwise, namely if $c\notin \calR\cup \calX$, if the condition of line \ref{alg:cheapRem:easyAddtU} is true, we continue with the recursive call of the same line. {Let us show that invariants are preserved. Suppose first that $next(c)\in \calN$. In this case the first two invariants are not affected, and the third one is maintained since $c\notin \calN$ (and by assumption $c\notin \calR\cup \calX$). Otherwise one has $|\calU'|+|\calR'|=|\calU|+|\calR|$, which by Invariant \eqref{inv:UpRpXp} implies $\calU'=\calU$ and $\calR'=\calR$. Thus one has $next(c)\notin \calU\cup \calR$ and $c\notin \calR\cup \calX\cup \calN$.}

Otherwise, i.e., if the condition of line \ref{alg:cheapRem:easyAddtU} is false, depending on whether $next(c)$ belongs to $\calR$, $\calU$, or none of the previous cases, we continue with the recursive calls of lines \ref{alg:cheapRem:guessNextR}, \ref{alg:cheapRem:guessNextU}, and \ref{alg:cheapRem:guessNextNotRU}, resp.
{If $next(c)\in \calR$, the invariants are satisfied since $\calR'\cup \{c\}\subseteq \calR$. Invariant \eqref{inv:UpRpXp} guarantees that in this case $|\calR'|<|\calR|$ as required. Similarly, if $next(c)\in \calU$, the invariants are satisfied since $\calU'\cup \{c\}\subseteq \calU$. Invariant \eqref{inv:UpRpXp} guarantees that in this case $|\calU'|<|\calU|$ as required. The remaining case if that $next(c)\notin \calR\cup \calU$. Invariant \eqref{inv:N} is maintained since $next(c)\notin \calR\cup \calU$. Invariant \eqref{inv:tU} is maintained since $c\notin \calX\cup \calR$ by the assumptions of this case, $c\notin \calN\cup \{next(c)\}$ by construction, and $\next{c}\notin \bcalU$ by construction.}

Let us show that the final $\tilde{U}$ obtained with the above procedure with the associated $\tilde{\mu}$ satisfies the claim. Condition (1) is trivially satisfied. {Invariant \eqref{inv:tU} directly implies property (2).} We next focus on property (3). Property (a) is satisfied by definition. {When a center $c$ is added to $\bcalU$, the respective $c':=next(c)=\tmu(p)$ for all $p\in S_{\calQ}(c)$ is added to $\calN$ (if not already there). Invariant \eqref{inv:N} guarantees that $c'\notin \calR$, while Invariant \eqref{inv:tU} guarantess that $c'\notin \bcalU$. Property (b) follows.}

It remains to prove property (c). Consider first any $c\in \calU\cap \bcalU$. We claim that for the corresponding $p\in S_{\calQ}(c)$ one has $\tilde{\mu}(p)=\mu_O(p)$. Indeed, let $c_1,c_2,\dots$ be the centers in $S- \calQ- \{c\}$ in non-decreasing order of distance from $c$. Let also $c_q$ be the first such center belonging to $S- S_0$. In particular $\mu_O(p)=c_q$. By construction all the centers $c_1,\ldots,c_{q-1}$ must have been already added to $\calU'\cup \calR'$ in previous recursive steps when $c$ is added to $\bcalU$. Furthermore, when the latter event happens either $c_q\in \calN$ already or $c_q$ is added to $\calN$. In both cases one has that $\tilde{\mu}(p)=c_q$ as desired. As a consequence
\begin{align}
\sum_{c\in \calU\cap \bcalU}\sum_{p\in S_{
\calQ}(c)}(d^2(p,\tilde{\mu}(p))-d^2(p,\mu_O(p)))=0.\label{lem:successcheapremove:eqn1}
\end{align}
Consider next any $c\in \calU- \bcalU$. For any $p\in S_{\calQ}(c)$ by construction one has $\tilde{\mu}(p)=c$. 
Thus
\begin{align}    
\sum_{c\in \calU- \bcalU}\sum_{p\in S_{
\calQ}(c)}(d^2(p,\tilde{\mu}(p))-d^2(p,\mu_O(p))) & \leq \sum_{c\in S_0- \calQ}\sum_{p\in S_{
\calQ}(c)}d^2(p,c) -\sum_{c\in \calU- \bcalU}\sum_{p\in S_{
\calQ}(c)}d^2(p,\mu_O(p))
\nonumber \\
& \leq O(\eps) opt-\sum_{c\in \calU- \bcalU}\sum_{p\in S_{
\calQ}(c)}d^2(p,\mu_O(p)),\label{lem:successcheapremove:eqn2}
\end{align}
where in the last inequality above we used the assumption that $\calQ$ satisfies property (2) of \Cref{lem:successguessprocess}, and the fact that $\calU- \bcalU\subseteq \calU\subseteq S_0-\calQ$.

Finally consider any $c\in \bcalU- \calU$. Define any bijection between each such $c$ and some distinct $c'\in \calU- \bcalU$ (this is possible since $|\bcalU|=|\calU|$). Consider the recursive call when $c$ is added to $\bcalU$. Let $\calU'$, $\calR'$, $\calX'$, $\calN$, $next(c)$, $R(c)$, $next(c')$, $R(c')$ be the associated quantities in that call. Notice that at that time $c'$ was an available candidate to be added to $\bcalU$ since $c'\in S-\calQ-\calR-\calX-\bcalU\subseteq S-\calQ-\calR'-\calX'-\bcalU$ {(using Invariant \eqref{inv:UpRpXp})}
and $\calN\cap \calU=\emptyset$ {by Invariant \eqref{inv:N}}, {hence $c'\in S-\calQ-\calR'-\calX'-\calN-\bcalU$}. Since we added $c$ instead of $c'$, it must be the case that $R(c)\leq R(c')$. 
For every $p\in S_{\calQ}(c')$, $\mu_O(p)\in \calX$, hence $\mu_O(p)\notin \bcalU$ {by Invariant \eqref{inv:tU}.} It follows that $\mu_O(p)\in S-\calQ-\calU- \calR- \bcalU-\{c'\}\subseteq S-\calQ-\calU'- \calR'- \bcalU-\{c'\}$, where we used {again Invariant \eqref{inv:UpRpXp}}. Thus $d^2(c',next(c'))\leq d^2(c',\mu_O(p))$. 
Hence for every $p\in S_{\calQ}(c')$, using \Cref{lem:apxTriangleInequality2} with $\gamma=1+\sqrt{\eps}$ twice,
\begin{align*}
d^2(p,next(c')) & \leq \frac{2}{\sqrt{\eps}}d^2(p,c')+(1+\sqrt{\eps})d^2(c',next(c')) \leq \frac{2}{\sqrt{\eps}}d^2(p,c')+(1+\sqrt{\eps})d^2(c',\mu_O(p))\\
& \leq \frac{4}{\sqrt{\eps}}d^2(p,c')+(1+\sqrt{\eps})^2d^2(p,\mu_O(p)).
\end{align*}
Summarizing,
\begin{align}
& \sum_{c\in \bcalU- \calU}\sum_{p\in S_{\calQ}(c)}(d^2(p,\tilde{\mu}(p))-d^2(p,\mu_O(p))) \leq \sum_{c\in \bcalU- \calU}\sum_{p\in S_{\calQ}(c)}d^2(p,\tilde{\mu}(p))=
\sum_{c\in \bcalU- \calU}R(c)\leq \sum_{c'\in \calU- \bcalU}R(c')\nonumber \\
& =
\sum_{c'\in \calU- \bcalU}\sum_{p\in S_{\calQ}(c')}d^2(p,next(c'))
 \leq \sum_{c'\in \calU- \bcalU}\sum_{p\in S_{\calQ}(c')}(\frac{4}{\sqrt{\eps}}d^2(p,c')+(1+\sqrt{\eps})^2d^2(p,\mu_O(p)))\nonumber\\
 & \leq O({\sqrt{\eps}}) opt+ \sum_{c\in \calU- \bcalU}\sum_{p\in S_{\calQ}(c)}d^2(p,\mu_O(p)),\label{lem:successcheapremove:eqn3}
\end{align}
where in the last inequality above we used again the assumption on 
$\calQ$, i.e., property (2) of \Cref{lem:successguessprocess}, {and the fact that $\sum_{c\in \calU- \bcalU}\sum_{p\in S_{\calQ}(c)}d^2(p,\mu_O(p))\leq \clcost(M_O, \mu_O)\leq 11\,\sopt$ by \Cref{lemma:costboundofMOandMDwithassignments}}.
Property (c) follows by summing \eqref{lem:successcheapremove:eqn1}, \eqref{lem:successcheapremove:eqn2}, and \eqref{lem:successcheapremove:eqn3}.
\end{proof}

\subsection{Mixed Solutions after the Removal of $\bcalU$}    

We let $(S_{\calQ \cup \bcalU}, \tmu)$ be the output of a successful run of the $cheapRem$ procedure, i.e., where $\bcalU \in \calL_{cheap}$ and its associated assignment $\tmu$ satisfies the properties of \Cref{lem:successcheapremove}. Recall that $S_{\calQ \cup \bcalU}$ consists of the centers $S' = S - \calQ - \bcalU$ and the dummy centers $\dummyset$.
Further recall that $M_O$ consists of centers $S- S_0 = S - \calQ - \calU - \calR$ and $\ho$.  $M_D$ is the same set of centers except that $\ho$ is replaced by $\dummyset$. Notice that some centers of $M_O$ and $M_D$ are now removed if $\bcalU - \calU \neq \emptyset$. To take care of this, we modify these mixed solutions to obtain $M'_O$ and $M'_D$. 
 
 We first define $M'_O$; the definition of $M'_D$ is then very similar. The centers of $M'_O$ are $S- \calQ - \bcalU
 - \calR$ and $\ho$. Hence, the difference between $M_O$  and $M'_O$  is that in $M_O$ we remove $\calU$ from $S$ and in $M'_O$ we remove $\bcalU$. In other words,  $M'_O = M_O \cup \calU - \bcalU$. Similarly, we let $M'_D = M_D \cup \calU - \bcalU$. We update the assignment $\mu_O$ of $M_O$ to an assignment $\mu'_O$ of $M'_O$ (and the assignment $\mu_D$ to $\mu'_D$). 
 Recall that $\mu_O$ assigns each client $p$ to its closest center in $\ho$ (if $p\in \clientsexpensive$) or its closest center in $S- S_0$ (if $p\in \clients- \clientsexpensive$) except for those clients that belong to a cluster $S_{\calQ}(c)$ with $c\in \calU\cup \calR$. Indeed, the clients in $S_\calQ(c)$ are all assigned to the center of $\ho$ that is closest to $c$  if $c\in \calR$, and if $c\in \calU$ they are all assigned to the center in $S - S_0$ that is closest to $c$. We also recall that $\mu_D$ is the same as $\mu_O$ except when $\mu_O(p) \in \ho$ in which case $\mu_O(p)$ is replaced by its corresponding dummy center.

 \paragraph{Definitions of $\mu_O'$ and $\mu_D'$.}
 For clients $p\in S_{\calQ}(c)$ with $c \not\in \calU \cup \bcalU \cup \dummyset$, we define $\mu_O'(p) = \mu_O(p)$. For $p\in S_{\calQ}(c)$  with $c\in \dummyset$, we let $\mu'_O(p)$ be the closest center in $\ho$.
 Finally, for $p\in S_{\calQ}(c)$ with $c\in \calU \cup \bcalU$, we define $\mu'_O(p) =  \tmu(p)$. 

 The assignment $\mu'_D$ is obtained in the same way from $\mu_D$ with the difference that we use $\dummyset$ instead of $\ho$: 
 For clients $p\in S_{\calQ}(c)$ with $c \not\in \calU \cup \bcalU \cup \dummyset$, we define $\mu_D'(p) = \mu_D(p)$. For $p\in S_{\calQ}(c)$  with $c\in \dummyset$, we let $\mu'_D(p)$ be the closest center in $\dummyset$.
 Finally, for $p\in S_{{\calQ}}(c)$ with $c\in \calU \cup \bcalU$, we define $\mu'_D(p) =  \tmu(p)$. 
 
 This completes the definition of $\mu'_O$ and $\mu'_D$.
  We remark that we have $\mu'_O(p) = \mu_O(p)$ and $\mu'_D(p) = \mu_D(p)$  for all $p\in S_\calQ(c)$ with $c\not \in \calU \cup \bcalU \cup \dummyset$.
    
    We continue by arguing that $\mu'_O$ is well-defined, i.e., that $\mu'_O(p) \in M_O'$ for every $p\in \clients$ (the proof for $\mu'_D$ is the same).
 This is immediate for a client $p\in S_\calQ(c)$ with $c \in \dummyset$. For a client $p\in S_{{\calQ}}(c)$ with $c\in \cal U \cup \bcalU$, it holds because, by \Cref{lem:successcheapremove}, $\tmu(p)$ equals $c\in M'_O$ if  $c \in \calU -  \bcalU$ and otherwise $\tmu(p) \in  S- \calQ - \bcalU - \calR\subseteq M'_O$.
 It remains to verify that $\mu_O(p) \in M_O'$ for a client $p\in S_\calQ(c)$ with $c\not \in \calU \cup \bcalU \cup \dummyset$. If $c\in \calR$, we have $\mu_O(p) \in \ho \subseteq M'_O$.  Similarly, if $c \not \in \calU \cup \bcalU \cup \Lambda \cup \calR $ and $p\in \clientsexpensive$ we have $\mu_O(p) \in \ho \subseteq M'_O$.  In the remaining case when $c \not \in \calU \cup \bcalU \cup \Lambda \cup \calR $ and $p\in \clients- \clientsexpensive$, $\mu_O(p)$ equals $p$'s closest center in $S- S_0$. As   $p\in S_\calQ(c)$, we thus have   $\mu_O(p) = c \in S - (S_0 \cup \bcalU) \subseteq M'_O$. 

 The following upper bound on $\clcost(M'_O, \mu'_O) + \clcost(M'_D, \mu'_D)$ is a fairly immediate consequence of the assumption that the cheap-removal process is successful (Property 3c of Lemma \ref{lem:successcheapremove}). 
 Similarly to before, we say that the selection of $(W, \calB, \calQ, \bcalU,\tmu)$ is successful, if the sample $W$ selected in~\ref{sec:dsampleproc} is successful, the set of balls $\calB$ from \Cref{sec:ballguesses} is valid,  $\calQ$ selected in \Cref{sec:removalofExpensive} satisfies the properties of \Cref{lem:successguessprocess}, and $\bcalU,\tmu$  selected in this section satisfies the properties of \Cref{lem:successcheapremove}. 
 \begin{claim}
     If $(W, \calB, \calQ, \bcalU,\tmu)$ is successful,
     $$\clcost(M'_O, \mu'_O) + \clcost(M'_D, \mu'_D) \leq \sum_{p\in \clients} {10} \opt_p + O({\sqrt{\eps}} \cdot \sopt)\,.$$
     \label{claim:Mprimebounds}
 \end{claim}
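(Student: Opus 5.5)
The plan is to compare $(M'_O,\mu'_O)$ and $(M'_D,\mu'_D)$ client by client with $(M_O,\mu_O)$ and $(M_D,\mu_D)$, and then invoke \Cref{lemma:costboundofMOandMDwithassignments}, which bounds the latter pair by $10\,\sopt+O(\sqrt{\eps}\cdot\sopt)$. We partition the clients according to the center $c\in S_\calQ$ they are assigned to, i.e.\ $p\in S_\calQ(c)$, and treat three groups.

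\emph{Group 1: $c\notin \calU\cup\bcalU\cup\dummyset$.} By definition $\mu'_O(p)=\mu_O(p)$ and $\mu'_D(p)=\mu_D(p)$, so these clients contribute exactly the same amount to both sides.

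\emph{Group 2: $c\in\dummyset$.} Here $\mu'_O(p)$ is the closest center in $\ho$ and $\mu'_D(p)$ is the closest center in $\dummyset$.
\begin{itemize}
\item If $p\in\clientsexpensive$ and $p\notin T_1\cup T_2$, then $\mu_O(p)$ is also the closest center in $\ho$. Its dummy $\mu_D(p)$ is a center of $\dummyset$, so $d^2(p,\mu'_D(p))\le d^2(p,\mu_D(p))$ and the cost does not increase.
\item If $p\notin\clientsexpensive$, then $\mu_O(p)=\mu_D(p)$ is a center of $S-S_0$. Since $p$ is closest to a dummy in $S_\calQ\supseteq S-S_0$, we get $d^2(p,\mu'_D(p))=d^2(p,S_\calQ)\le d^2(p,\mu_D(p))$.
\item For $\mu'_O(p)$ we go through the dummy. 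Take the leader $\ell$ of the ball whose dummy is closest to $p$. By the definition of dummy distances and validity of the ball (the ball contains an $\ho$ center), the closest $\ho$ center is at distance at most $d(p,\ell)+\sqrt\rho\le d(p,c)$. Hence $d^2(p,\mu'_O(p))\le d^2(p,\mu'_D(p))\le d^2(p,\mu_D(p))$.
\end{itemize}
Each such client therefore contributes at most $2d^2(p,\mu_D(p))$. The excess over its old contribution is at most $d^2(p,\mu_D(p))-d^2(p,\mu_O(p))$ summed over these clients, which might a priori not be small; this is the main obstacle.

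To handle the obstacle, the cleaner bound is $d^2(p,\mu'_O(p))+d^2(p,\mu'_D(p))\le 2d^2(p,S_\calQ)\le 2d^2(p,\mu_D(p))$. If this is not absorbed directly, I would use that dummy-assigned clients contribute at most the Claim~\ref{clm:costboundofMOandMDwithassignments}.1 bound. Summing per expensive cluster, exactly as in Case~a1 of \Cref{lemma:costboundofMOandMDwithassignments}, the $\avg$ and $\gamma_{C^*}$ terms yield at most $10\opt_p$ per client plus $O(\sqrt{\eps})$ error. For non-expensive $p$, the term $2d^2(p,S_\calQ)\le 2r(p)$ matches Case~b1. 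Consequently, instead of a black-box comparison, I would redo the per-case accounting of \Cref{lemma:costboundofMOandMDwithassignments} with $\mu'$ in place of $\mu$ for Group~2, obtaining the same per-client bounds.

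\emph{Group 3: $c\in\calU\cup\bcalU$.} Here both $\mu'_O(p)$ and $\mu'_D(p)$ equal $\tmu(p)$, while $\mu_O(p)=\mu_D(p)$ for $c\in\calU$. For $c\in\bcalU-\calU$, we bound the old contribution below by $0$. The new contribution is $2\sum d^2(p,\tmu(p))$, and Property~3c of \Cref{lem:successcheapremove} bounds
\[
\sum_{c\in\calU\cup\bcalU}\ \sum_{p\in S_\calQ(c)} d^2(p,\tmu(p))\;\le\;\sum_{c\in\calU}\ \sum_{p\in S_\calQ(c)} d^2(p,\mu_O(p))+O(\sqrt{\eps}\cdot\sopt).
\]
The right-hand side is half the old contribution of the $\calU$-clients. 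Doubling it gives at most the old contribution plus $O(\sqrt{\eps}\cdot\sopt)$.

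Summing the three groups yields $\clcost(M'_O,\mu'_O)+\clcost(M'_D,\mu'_D)\le \clcost(M_O,\mu_O)+\clcost(M_D,\mu_D)+O(\sqrt{\eps}\cdot\sopt)$, up to the Group~2 accounting described above. Applying \Cref{lemma:costboundofMOandMDwithassignments} then gives the claim.
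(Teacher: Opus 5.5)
Your three-group decomposition and the final appeal to \Cref{lemma:costboundofMOandMDwithassignments} follow the paper's route. However, two steps need repair.

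\textbf{Group~2.} The ``obstacle'' here is not real, and your own bullets already remove it. Take $p\in S_\calQ(c)$ with $c\in\dummyset$.
If $p\in\clientsexpensive$, you observed that $\mu_O(p)$ is the closest center of $\ho$, i.e.\ $\mu'_O(p)=\mu_O(p)$.
If $p\notin\clientsexpensive$, then $\mu_O(p)=\mu_D(p)\in S-S_0\subseteq S_\calQ$, so your third bullet gives $d^2(p,\mu'_O(p))\le d^2(p,c)=d^2(p,S_\calQ)\le d^2(p,\mu_O(p))$.
The paper phrases this as follows: $M_O-\ho\subseteq S-\calQ$ and $d(p,\ho)\le d(p,\dummyset)\le d(p,S-\calQ)$, so the closest center of $M_O$ to $p$ lies in $\ho$.
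Together with $d^2(p,\mu'_D(p))\le d^2(p,\mu_D(p))$, no Group~2 client becomes more expensive, so nothing in \Cref{lemma:costboundofMOandMDwithassignments} needs to be redone.
The apparent excess came only from bounding $d^2(p,\mu'_O(p))$ by $d^2(p,\mu_D(p))$ in both subcases.
Also, your fallback reaches $10\opt_p$ only if the $\mu'_O$ term is bounded by $\opt_p$. Feeding $2d^2(p,\mu_D(p))$ into Claim~\ref{clm:costboundofMOandMDwithassignments}.1 would produce a $12\,\avg_{C^*,\opt}$ term and a constant around $18$.

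\textbf{Group~3.} Here you misquote Property~3c of \Cref{lem:successcheapremove}. Its right-hand side is $\sum_{c\in\calU\cup\bcalU}\sum_{p\in S_\calQ(c)}d^2(p,\mu_O(p))$, not the sum over $c\in\calU$ only.
The dropped term, over $c\in\bcalU-\calU$, is not negligible. Such $c$ lie in $S-S_0$, since $\bcalU\cap(\calQ\cup\calR)=\emptyset$ and $S_0-\calQ=\calU\cup\calR$. So for non-expensive $p\in S_\calQ(c)$ one has $\mu_O(p)=c$, and these are ordinary clusters of $S$.
Hence, once you lower-bound the old contribution of the $\bcalU-\calU$ clients by $0$, Property~3c as stated no longer closes the argument.
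The fix is to keep that contribution. Since $d^2(p,\mu_D(p))\ge d^2(p,\mu_O(p))$ for every client, the old contribution of $\bigcup_{c\in\calU\cup\bcalU}S_\calQ(c)$ is at least $2\sum d^2(p,\mu_O(p))$ over that same set. Property~3c then gives new $\le$ old $+\,O(\sqrt{\eps}\cdot\sopt)$, which is exactly the paper's argument.
Your stronger inequality does follow from the three displayed bounds inside the proof of \Cref{lem:successcheapremove}, but not from its statement.
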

 \begin{proof}
     For $x\in \{O, D\}$ the only differences between $\mu'_x$ and $\mu_x$ are clients in $S_{\calQ}(c)$ with $c\in \dummyset \cup \calU \cup \bcalU$.  Consider first a client $p \in S_{\calQ}(c)$  with $c \in \dummyset$. Then, as $c$ is the closest center among $(S - \calQ) \cup \dummyset$, which both contains $M_O - \ho$ and $M'_O - \ho$, we have that the closest center to $p$ in both $M'_O$ and $M_O$ is in $\ho$ (because by the definition of dummy centers, $d^2(p, \ho) \leq d^2(p, \dummyset)$). Similarly, the closest center to $p$ in both $M'_D$ and $M_D$ is in $\Lambda$. Hence, the definitions of $\mu'_O$ and $\mu'_D$ to assign $p$ to its closest center in $\ho$ and $\dummyset$, respectively, cannot increase the cost, i.e., $d^2(p, \mu'_O(p)) \leq d^2(p, \mu_O(p))$ and $d^2(p, \mu'_D(p)) \leq d^2(p, \mu_D(p))$ for such a client $p$. 
     
     Finally, for those clients $p\in S_\calQ(c)$ with $c\in\calU \cup \bcalU$ we have $\mu'_x = \tmu$ by definition and $d^2(p, \mu_O(p)) \leq d^2(p, \mu_D(p))$ since $\mu_O(p) = \mu_D(p)$ unless $\mu_O(p)\in \ho$ in which case $\mu_D(p)$ is the dummy center associated with $\mu_O(p)$ that can only be farther away from $p$ than $\mu_O(p)$.
     Hence
       \begin{gather*}
       \sum_{c \in \calU \cup \bcalU}  \sum_{p\in S_\calQ(c)} \left( d^2(p, \tmu(p)) - d^2(p, \mu_D(p)) \right) \leq \sum_{c \in \calU \cup \bcalU}  \sum_{p\in S_\calQ(c)} \left( d^2(p, \tmu(p)) - d^2(p, \mu_O(p)) \right) = O (\sqrt{\eps} \cdot {\sopt})\,,
        \end{gather*}
where in the equality we used Property (c) of \Cref{lem:successcheapremove}). The claim follows since
$$
\clcost(M'_O, \mu'_O) + \clcost(M'_D, \mu'_D)\leq \clcost(M_O, \mu_O) + \clcost(M_D, \mu_D) +O(\sqrt{\eps}\cdot \sopt) \overset{Lem. \ref{lemma:costboundofMOandMDwithassignments}}\leq 10 \opt_p + O(\sqrt{\eps} \cdot \sopt)\,.
$$
 \end{proof}

To better understand $\clcost(M'_O, \mu'_O)$, let us consider the cost $d(p, \mu'_O(p))$ of a single client $p$:
\begin{itemize}
    \item If $p \in S_{\calQ}(c)$ with $c\not\in \bcalU \cup \calR$ then $\tmu(p) = c$ by Property 3a of \Cref{lem:successcheapremove}.  Moreover,  $\mu'_O(p)$ assigns $p$ either to a center in $\ho$ or to one in $S - \calQ - \bcalU - \cal R$ and we have $d^2(p, c) \leq d^2(p, S - \calQ - \bcalU - \cal R)$.  Hence, in either case, we have
    \[
        d^2(p, \mu'_O(p)) \geq  d^2(p, \{\tmu(p)\} \cup \ho) = d^2(p, \{\tmu(p)\} \cup \ho \cup \dummyset) \,.
    \]
    \item If $p\in S_{\calQ}(c)$ with $c\in \calR$ then $\mu'_O(p) = \mu_O(p) =  c^*\in \ho$ where $c^*$ is the center of $\ho$ that is closest to $c$. So 
    \[
        d^2(p, \mu_O'(p)) = d^2(p, c^*)\,. 
    \]
    \item If $p \in S_{\calQ}(c)$ with $c\in \bcalU$ then $\mu'_O(p) = \tmu(p)$ and so
    \[
        d^2(p, \mu_O'(p)) \geq d^2(p, \{\tmu(p)\} \cup \ho \cup \dummyset).
    \]
\end{itemize}
Similarly, we can analyze $d^2(p, \mu_D(p))$ by replacing $\ho$ with the set $\dummyset$ of dummy centers (and $c^*$ by its associated dummy center). 
Summarizing, we have 
\begin{align*}
     \clcost(M'_O, \mu'_O) &\geq \sum_{c\in S_{\calQ} - \calR}\sum_{p\in S_{\calQ}(c)}  d^2\left(p, \{c\} \cup \ho \cup \dummyset\right) + \sum_{c\in \calR} \min_{c' \in \ho} \sum_{p\in S_{\calQ}(c)}  d^2\left(p, c' \right)\\
     \intertext{and}
     \clcost(M'_D, \mu'_D) &\geq \sum_{c\in S_{\calQ} - \calR}\sum_{p\in S_{\calQ}(c)}  d^2\left(p, \{c\}  \cup \dummyset\right) + \sum_{c\in \calR} \min_{c'\in \dummyset}\sum_{p\in S_{\calQ}(c)}  d^2\left(p,   c'\right)
\end{align*}

 Furthermore, by Properties 3a and 3b of \Cref{lem:successcheapremove}, we have that, for every $c\in \calR$, the set $S_{\calQ}(c)$ of clients assigned to $c$ in $S_{\calQ}$ equals the set $\tmu^{-1}(c)$ of clients assigned by $\tmu$. So, if we let $\{C_c\}_{c\in S_{\calQ \cup \bcalU}}$ be the partitioning of the set $\clients$ of clients according to $\tmu$, i.e., $C_c = \{p\in \clients \mid \tmu(p) =c\}$,  we can thus rewrite the above  bounds on the cost as 
\begin{align}
     \clcost(M'_O, \mu'_O) &\geq \sum_{c\not \in \calR}\sum_{p\in C_c}  d^2\left(p, \{c\} \cup \ho \cup \dummyset\right) + \sum_{c\in \calR}\min_{c' \in \ho}\sum_{p\in C_c}  d^2\left(p,  c'\right) \label{eq:MprimeObound}\\
     \intertext{and}
     \clcost(M'_D, \mu'_D) &\geq \sum_{c \not\in  \calR}\sum_{p\in C_c}  d^2\left(p, \{c\}  \cup \dummyset\right) + \sum_{c\in \calR}\min_{c' \in \dummyset}\sum_{p\in C_c}  d^2\left(p,  c'\right)\,.\label{eq:MprimeDbound}
\end{align}
In the next section, we use these bounds to give a polynomial-time algorithm that outputs a solution $S^*$ with $k$ centers so that (see \Cref{lemma:findingSstar})
\begin{align*}
    \clcost(S^*)  \leq
    \frac{(1+{6\sqrt{\eps}})}{2} \left(  \clcost(M'_O, \mu'_O) +  \clcost(M'_D, \mu'_D) \right) + O({\sqrt{\eps}}\cdot \sopt)
     \leq \sum_{p\in \clients} {5} \opt_p  + O({\sqrt{\eps}} \cdot \sopt)
\end{align*}
where the second inequality is by \Cref{claim:Mprimebounds}.
So the proof of \Cref{lemma:findingSstar}  in the next subsection is the final step in the proof of \Cref{thm:mainadditivecenters} (see also \Cref{sec:stable:everythingtogether} where we put everything together to prove \Cref{thm:mainadditivecenters}).

\subsection{Finding $S^*$ via Submodular Optimization}
\label{sec:submodularopt}

In this section, we give a polynomial-time algorithm for finding the solution $S^*$ by reducing the problem to maximizing a submodular function subject to a partition matroid constraint. Specifically, we prove the following lemma:
\begin{lemma}
    We can in polynomial-time find a clustering $S^*$ with $k$ centers of cost at most 
    \[
    \frac{(1+{6\sqrt{\eps}})}{2} \left(  \clcost(M'_O, \mu'_O) +  \clcost(M'_D, \mu'_D) \right) + O({\sqrt{\eps}}\cdot \sopt)\,.
     \]
    \label{lemma:findingSstar}
\end{lemma}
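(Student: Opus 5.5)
The plan is to follow the submodular reduction of \cite{CGLSS25stoc}, adapted to squared distances. The current solution is $S_{\calQ\cup\bcalU}=S'\cup\dummyset$ with $S'=S-\calQ-\bcalU$, together with the partition $\{C_c\}$ induced by $\tmu$. We must still remove the (unknown) set $\calR\subseteq S'$, with $|\calR|$ already guessed. We must also replace each dummy center by a real center in the corresponding ball of $\calB$. Then $|S'|-|\calR|+|\calB| = k$.

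Ground set and matroid. For each ball $B\in\calB$ we take the part $E_B$, consisting of the facilities in $B$ with capacity $1$. For every $c\in S'$ we add one more part $E_c=\{c\}\times\calB$, also with capacity $1$. Choosing $(c,B)$ means that $c$ is removed and its whole cluster $C_c$ is rerouted to whatever center is picked in $B$. A further global constraint says that exactly $|\calR|$ centers are removed. To keep a partition matroid, we instead guess in advance how many removed clusters go to each ball, which is polynomially many guesses. Alternatively, we simply use a general matroid such as a partition matroid truncated to rank $|\calB|+|\calR|$; \Cref{thm:submodularmatroidoptimization} only needs a matroid.

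Objective. We use the gain relative to the dummy baseline $g_0=\sum_{c}\sum_{p\in C_c}d^2(p,\{c\}\cup\dummyset)$. For a selection $X$, the value $g(X)$ is the decrease obtained when every client $p\in C_c$ pays $d^2(p,\{c\}\cup\dummyset\cup F_X)$ if $c$ is kept, where $F_X$ denotes the chosen real centers. If $c$ is removed toward $B$, the clients of $C_c$ instead pay $d^2(p,\dummyset\cup F_X\cap B)$, the dummy of $B$ being a fallback. Adding real centers is a coverage-type, min-of-distances improvement, which is monotone submodular. The removal pairs contribute terms that have to be handled so that they remain monotone submodular. Following \cite{CGLSS25stoc}, we make these terms nonnegative gains by charging removal against the dummy. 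This is exactly why $M'_D$ is present: $\clcost(M'_D,\mu'_D)$ certifies that the baseline with $\calR$ rerouted to dummies costs little, so the loss from removal is bounded.

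Analysis. The key fact is that the set $X^*=\{\ho\text{ in each ball}\}\cup\{(c,\text{ball of }c^*(c)) : c\in\calR\}$ is feasible. By \eqref{eq:MprimeObound} and \eqref{eq:MprimeDbound}, it achieves value at least $g_0-\clcost(M'_O,\mu'_O)$ relative to a baseline bounded by $\clcost(M'_D,\mu'_D)$. The $(1-1/e-\zeta)$ guarantee alone gives a poor constant. To obtain the factor $\frac12$ of the statement, we instead use randomized rounding, or more precisely the standard fact that we may, as in \cite{CGLSS25stoc}, compare to a distribution mixing $X^*$ and the dummy solution. Half of the time each ball uses its $\ho$ center and half of the time its dummy, giving average cost $\frac12(\clcost(M'_O,\mu'_O)+\clcost(M'_D,\mu'_D))$. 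The $(1+6\sqrt\eps)$ factor and the additive $O(\sqrt\eps\,\sopt)$ absorb approximate-triangle losses, via \Cref{lem:apxTriangleInequality2} with $\gamma=1+\sqrt\eps$, when a client whose nearest real center lies in a ball is routed through the dummy. Finally, each dummy is replaced by an actual selected center in its ball; since the dummy is farther than any center of the ball up to radius $\sqrt\rho$, this does not increase the cost.

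Main obstacle. The step I expect to be hardest is designing $g$ so that it is simultaneously monotone submodular and has $X^*$ attaining the mixed bound. Removals make the cost non-monotone, and the squared metric breaks the clean additive dummy trick. I would first try defining the dummy distance squared as $\rho+d^2(\ell,p)$, in analogy with the free-facility trick $u(\tilde i)$ of the previous section. I would then check submodularity term by term as a minimum over a growing set.
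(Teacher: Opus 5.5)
\textbf{The removal of $\calR$ is not handled, and the encoding you propose for it fails.} Your frame matches the paper: a partition matroid over the balls of $\calB$, a monotone submodular gain over a dummy baseline, comparison with $\ho$, and replacing each dummy by a center of its ball at the end. But the step that carries the lemma is open. Putting pairs $(c,B)$ into the ground set and requiring \emph{exactly} $|\calR|$ removals cannot be expressed by a matroid, since independence is downward closed and truncation only gives ``at most''. A removal always raises the cost relative to your baseline $g_0$, so its marginal value is negative. Hence $f$ is not monotone, and a maximizer simply removes nothing, leaving more than $k$ centers. Worse, $(c,B)$ and any facility $x\in B$ are complements. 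Once $C_c$ is rerouted to $B$, opening $x$ helps those clients far more than while $c$ is open, so $f(\{x,(c,B)\})-f(\{(c,B)\})$ can exceed $f(\{x\})-f(\emptyset)$, which violates submodularity. ``Charging removal against the dummy'' does not define a function that repairs this; you flag it yourself as the open obstacle.

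\textbf{How the paper handles it.} It keeps only $F_\calB$ as ground set and folds the removal into the function: $g(X)$ is the minimum over $\calR'_1\subseteq\calP_1$ with $|\calR'_1|=|\calR_1|$ of the cost after removing $\calR_0\cup\calR'_1$. This gives $g(\emptyset)\le\clcost(M'_D,\mu'_D)$ and $g(\ho)\le\clcost(M'_O,\mu'_O)$ directly. Keeping $f=g(\emptyset)-g$ submodular under this inner minimum needs two ideas absent from your plan.
\begin{itemize}
\item Guess the non-concentrated part $\calR_0$. There are polynomially many choices, because each such cluster costs at least $\eps^2\sopt/|\calR|$.
\item For concentrated clusters, replace the true removal cost by $\closedclcost_c$, which sends the whole core to a single center of $X\cup\dummyset$ unless $X$ hits it. Then $\increase_c(X)$ is a minimum over single elements $c'\in X\cup\dummyset$. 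This is what makes the exchange argument for submodularity work, and lets $\calR'_1$ be chosen greedily.
\end{itemize}
Concentration ($|\core_c|\ge(1-\eps)|C_c|$, with core points close to $c$) is where the factor $(1+6\sqrt{\eps})$ is paid when $X$ is converted back to $k$ real centers. It is not paid in routing clients through dummies.

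\textbf{The factor $\frac12$ needs no mixing argument.} Take $\zeta$ with $1-1/e-\zeta=\frac12$. Then $f(X)\ge\frac12 f(\ho)$ is literally $g(X)\le\frac12\bigl(g(\emptyset)+g(\ho)\bigr)$, so the $(1-1/e-\zeta)$ guarantee is not a ``poor constant''. Your ``half $\ho$, half dummy'' distribution is not even samplable, since $\ho$ is unknown.

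\textbf{Keep the dummy distance as it is.} Do not change it to $\sqrt{\rho+d^2(\ell,p)}$. That would break the property that replacing a dummy by any center of its ball never increases the cost.
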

Recall that at this point the algorithm calculated the local search solution $S$, sampled $W$, guessed $\calB$, $\calQ$, and $\bcalU$ with the assignment $\tmu$ of the clustering $S_{\calQ \cup \bcalU}$. We assume that all these choices were successful guesses, i.e., that $(W, \calB, \calQ, \bcalU, \tmu)$ is successful.  Further recall the notation that we partition the clients into the clusters $\{C_c\}_{c\in S_{\calQ \cup \bcalU}}$ where $C_c = \{ p\in \clients : \tmu(p) =c\}$.  

Our goal is to give a polynomial-time algorithm that finds approximations of the sets $\ho$ and $\calR$ that only have slightly worse cost. We start by defining the feasible set of candidates for $\ho$ as a partition matroid. Recall that $\ho$ contains one center from each ball in $\calB$.

\paragraph{Definition of partition matroid $\calM_\calB$.} 
We define the partition matroid that captures the constraint that we wish to open a center in each ball in $\calB$. 
Let $\facilities_\calB$ be the  (multi) subset of facility/center locations containing  $B\cap \facilities$ for each ball  $B\in \calB$. If a center $c$ is in multiple balls in $\calB$, then $\facilities_\calB$ contains one distinct copy of $c$ for each ball. We let $\facilities_{\calB}(B) \subseteq F_\calB$ denote the centers associated with $B\in \calB$. These sets satisfy the following two properties:
\begin{itemize}
    \item The sets $\facilities_{\calB}(B)$ partition $\facilities_\calB$.
    \item For $B \in \calB$, $\facilities_{\calB}(B)$ contains (a copy of) every center in $\facilities\cap B$.
\end{itemize}
The first property holds since we took a unique copy of each center for each ball, and they are thus disjoint: $\facilities_\calB(B) \cap \facilities_\calB(B') = \emptyset$ for distinct $B,B' \in \calB$. Indeed, while it is not better for the cost to open multiple copies of a center, we make the copies to ensure the above two properties. This allows us to define the partition matroid $\calM_\calB = (\facilities_\calB, \calI)$ where
\[
    \calI = \{X \subseteq \facilities_\calB : |X\cap F_\calB(B)| \leq 1 \mbox{ for every } B\in \calB\}\,. 
\]
Moreover, since there is exactly one ball in $\calB$ for each center in $\ho$, we have $\ho \in \calI$ (where we slightly abuse notation as we should take the copy of center $c^*\in \ho$ that belongs to its associated ball). 

\paragraph{Core, concentrated and hit clusters.} Our goal is now to define a submodular function $f$ so that we obtain a good approximation to $\ho$ by maximizing $f$ over the matroid constraint $\calM_\calB$.  In particular, the domain of $f$ is every subset of $\facilities_\calB$. However, we need some additional steps before defining $f$. In particular, we introduce the concept of the core of a cluster $C_c$ and the notions of concentrated and hit clusters, which allow us to simplify the structure of centers in $\calR$. For a cluster $C_c$, we define the \emph{core} of $C_c$ as
\begin{gather*}
    \core_c = \left\{p \in C_c: d^2(p, c) \leq \eps \cdot \frac{\clcost(S)}{|\calR|\cdot |C_c|}\right\}\,.
\end{gather*}
We further say that cluster $C_c$ is \emph{concentrated} if 
\begin{gather*}
    |\core_c| \geq (1-\eps) |C_c|
\end{gather*}
and it is \emph{hit} by a set $X$ of centers if there is a point $p \in \core_c$ such that
\begin{gather*}
    d^2(p, X) < d^2(p,c)\,.
\end{gather*}
{For shortness we will sometimes say that a center is concentrated (resp., hit), if the corresponding cluster is so.}

\paragraph{Guessing the {centers} of $\calR$ that are not concentrated.} We further simplify the task of finding the set of centers $\calR$  by guessing the {centers} of $\calR$ that are not concentrated. Specifically, partition the set $\calR$ into $\calR_0$ and $\calR_1$, where $\calR_1$ contains those {centers} of $\calR$ that are concentrated and $\calR_0$ contains those that are not. We can  correctly guess $\calR_0$ in polynomial time since
the following simple claim shows that it is a subset of the {centers} whose {cluster} costs at least $\eps^2 \cdot \sopt/|\calR|$, of which there are only $O(|\calR|/\eps^2)$ many, 
and so guessing $\calR_0$ can thus be done in time
$2^{O(|\calR|/\eps^2)}$.
\begin{claim}
    Suppose that $C_c$ is not concentrated. Then the cost of $C_c$ is at least $\eps^2 \cdot \sopt/|\calR|$.
\end{claim}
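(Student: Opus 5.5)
The plan is a direct counting argument from the definition of the core. Suppose $C_c$ is not concentrated, so $|\core_c| < (1-\eps)|C_c|$. Then more than $\eps|C_c|$ clients $p \in C_c$ lie outside the core. By the definition of $\core_c$, each such client satisfies
\[
d^2(p,c) > \eps \cdot \frac{\clcost(S)}{|\calR|\cdot|C_c|}.
\]
Summing over these clients lower bounds the cost of $C_c$ with respect to its center $c$:
\[
\sum_{p\in C_c} d^2(p,c) > \eps|C_c|\cdot \eps\frac{\clcost(S)}{|\calR|\,|C_c|} = \eps^2\frac{\clcost(S)}{|\calR|}.
\]

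It remains to replace $\clcost(S)$ by $\sopt$. Since $S$ is a feasible $k$-Means solution, $\clcost(S)\geq \sopt$, and hence the cost of $C_c$ is at least $\eps^2\sopt/|\calR|$. Here ``the cost of $C_c$'' is understood as $\sum_{p\in C_c}d^2(p,c)$, i.e.\ the cost under the assignment $\tmu$ to the center $c$. This is the natural reading, because the core is itself defined through $d^2(p,c)$.

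The only point needing care is this interpretation of the cost, together with the degenerate case $|\calR| = 0$. If $|\calR|=0$, then $C_c$ is not one of the clusters under consideration, so the claim is only invoked for $c\in\calR$ and $|\calR|\ge 1$. Beyond that I expect no real obstacle. The remark that follows the claim in the paper (only $O(|\calR|/\eps^2)$ such clusters exist) would follow separately: the clusters are disjoint, and their total cost is at most $\clcost(M'_O,\mu'_O)+\clcost(S)=O(\sopt)$. That remark is not part of this claim.
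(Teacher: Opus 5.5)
Your proof is correct and essentially identical to the paper's. Non-concentration leaves at least $\eps|C_c|$ clients outside $\core_c$, each with $d^2(p,c) > \eps\,\clcost(S)/(|\calR|\,|C_c|)$, and summing these together with $\clcost(S)\ge\sopt$ gives the bound.
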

\begin{proof}
    Since $C_c$ is not concentrated we have $|C_c - \core_c| \geq \eps |C_c|$. Moreover, each point $p\in C_c - \core_c$ has $d^2(p,c) > \eps \frac{\clcost(S)}{|\calR| \cdot |C_c|}$ by definition. Hence, as $\clcost(S) \geq \sopt$,
    \[
        \sum_{p\in C_c} d^2(p,c) \geq \sum_{p\in C_c - \core_c} d^2(p,c) \geq |C_c- \core_c|\cdot \eps \frac{\sopt}{|\calR| \cdot |C_c|} \geq \eps^2 \frac{\sopt}{|\calR|}\,.
    \]
\end{proof}

Now let $\calP$ be the potential centers that can be in $\calR$. Specifically, we let $\calP$ be the set that contains a center $c$ in $S -\calQ - \bcalU$ if $C_c$ equals the set of clients in $S_{\calQ}(c)$, where we recall that $S_{\calQ}(c)$ is the set of clients that are closest to $c$ in the clustering $S_{\calQ}$. Notice that the algorithm has all the information $S, \calB, \calQ, \bcalU$ and $\tmu$ to calculate $\calP$. Furthermore, by \Cref{lem:successcheapremove}, we have that no clients from $\bcalU$ were reassigned  by $\tmu$ to a center in $\calR$. So for $c\in \mathcal{R}$ {we} have $C_c = S_{\calQ}(c)$.  Hence, $\cal R \subseteq \calP$ and we can guess $\calR_0$ as follows:
\begin{mdframed}[hidealllines=true, backgroundcolor=gray!15]
\vspace{-5mm}
\paragraph{Guessing $\calR_0$}\ \\
\begin{enumerate}
    \item Let $\calC$ be the clusters of $\calP$ whose cost is at least $\eps^2 \cdot \sopt/|\calR|$.
    \item Output each subset of $\calC$.
\end{enumerate}
\end{mdframed}
By the above claim and the definition of $\calP$, one of the outputs is $\calR_0$.  Furthermore, the above guessing procedure outputs a family of polynomial many subsets. Indeed, we have $|\calR| \leq \log(n)/\eps^3$ by \Cref{lem:numnonpure}. Moreover, the total cost of the clusters in $\calP$ is $O({\sopt})$. To see this notice that the clusters corresponding to $\calP$ is a subset of the clusters of $S_{\calQ}$ and the cost of $S_{\calQ}$ is at most $\clcost(S - S_0  \cup \Lambda)$ (since $S_{\calQ}\supseteq S- S_0 \cup \Lambda$), which has cost at most $O(\sopt)$ by \Cref{lemma:S0properties}.  This implies that $|\calC| = O(\log(n)/\eps^5)$, so the total number of subsets is $n^{\eps^{-O(1)}}$. The algorithm proceeds by trying all possible subsets in the output, and we analyze the algorithm when it takes the correct guess of $\calR_0$.

\paragraph{Definition of the submodular function $f$. }
We first define another function $g$ on the same domain as $f$, i.e., on all subsets $X$ of $F_\calB$. We will then define $f$  by $f(X) = g(\emptyset) - g(X)$. Let the \emph{closed cost} of a cluster $C_c$ be defined as
\[
    \closedclcost_c(X) := 
    \begin{cases}
        \sum_{p\in C_c} d^2(p, \{c\} \cup X \cup \dummyset) & \mbox{if $C_c$ is hit by $X$,} \\
        \min_{c'\in X \cup \dummyset} \sum_{p\in \core_c} d^2(p, c') + \sum_{p\in C_c - \core_c} d^2(p, \{c\} \cup X \cup \dummyset) & \mbox{otherwise.} 
    \end{cases}
\]
Further, let $\calP_1$ be the potential centers for $\calR_1$: it contains each {center} $c\in \calP - \calR_0$ so that $c$ is concentrated, i.e., $|\core_c| \geq (1-\eps) |C_c|$. We remark that the algorithm can calculate this set $\calP_1$ as it only depends on $\calP$, the guessed set $\calR_0$, the value $\clcost(S)$, and the clusters $C_c$ defined by $\tmu$. Moreover, by definition, we have $\calR_1 \subseteq \calP_1$.
For a subset $X \subseteq \facilities_\calB$, we then then define $g(X)$ to be the minimum value of 
\begin{gather*}
     \sum_{c \not \in \calR_0 \cup \calR'_1}\sum_{p\in C_c}  d^2\left(p, \{c\} \cup X \cup \dummyset\right) + \sum_{c\in \calR_0} \sum_{p\in C_c} d^2(p, X \cup \dummyset) + \sum_{c\in \calR'_1} \closedclcost_c(X) 
\end{gather*}
over all subsets $\calR'_1 \subseteq \calP_1$ with $|\calR_1'| = |\calR_1| = |\calR| - |\calR_0|$. 
In words, over the best $\calR_1'$, $g(X)$ is the cost of the solution obtained by removing the centers $\calR_0 \cup \calR_1'$  and assigning clients as follows:
\begin{itemize}
    \item If $p\in C_c$ for a remaining center $c \not\in \calR_0 \cup  \calR_1'$ or $p \not \in \core_c$ with $c\in \calR'_1$, $p$ is assigned to its closest center in $\{c\} \cup X \cup \dummyset$.
    \item If $p \in C_c$ for a removed center $c\in \calR_0$, $p$ is assigned to its closest center in $X \cup \dummyset$.
    \item If $p \in \core_c$ for a removed center $c\in \calR'_1$, $p$ is assigned to its closest center in $\{c\} \cup X \cup \dummyset$ if  $X$ hits $C_c$, and otherwise all clients in $\core_c$ are assigned to the same center $c'\in X \cup \dummyset$ that minimizes the cost.
\end{itemize}

We remark that this assignment is infeasible in the sense that it may assign clients to removed centers in $\calR_1$. Nevertheless, we relate the values of $g(\emptyset)$ and $g(\ho)$ to $\clcost(M'_D, \mu'_D)$ and $\clcost(M'_O, \mu'_O)$, respectively; and, we show that given an $X \subseteq \facilities_\calB$ that is independent in $M_\calB$, i.e., $X\in \calI$, we can in polynomial-time output $k$ centers {$S^*$} whose cost is at most $(1+{6\sqrt{\eps}})g(X) + {O(\sqrt{\eps}}) \sopt$.
Finally, the definition of $g$ allows us to prove that $f$ (defined by $f(X) = g(\emptyset) - g(X)$) is a {non-negative} monotone submodular function.
\begin{lemma}
We have that $g$ and $f$  satisfy the following properties:
\begin{enumerate}
\item We can evaluate $g(X)$ in polynomial time for every {given} $X \subseteq \facilities_\calB$, and we can thus evaluate $f(X)$ in polynomial time. 
\item $f$ is a non-negative monotone submodular function.
\item The value $g(\emptyset)$ is at most $\clcost(M'_D, \mu'_D)$.
\item The value $g(\ho)$ is at most $\clcost(M'_O, \mu'_O)$.
\item  Given an $X \subseteq \facilities_\calB$ that is independent in $M_\calB$, i.e., $X\in \calI$, we can in polynomial-time output $k$ centers {$S^*$} whose {associated} cost is at most $(1+{6\sqrt{\eps}}) g(X) + {O(\sqrt{\eps})} \sopt$.
\end{enumerate}
\label{lemma:propertiesfandg}
\end{lemma}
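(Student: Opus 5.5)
The plan is to establish the five properties in order. Properties 1 and 2 come from the structure of $g$, properties 3 and 4 from plugging the "true" choice $\calR'_1=\calR_1$ into the minimum, and property 5 from a rounding argument.

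For property 1, the key observation is that the objective defining $g(X)$ decomposes over centers of $\calP_1$. For fixed $X$ we compute, for each $c\in\calP_1$, the difference $\Delta_c(X)$ between $\closedclcost_c(X)$ and $\sum_{p\in C_c}d^2(p,\{c\}\cup X\cup\dummyset)$. The remaining terms do not depend on $\calR'_1$. The optimal $\calR'_1$ is therefore the set of $|\calR_1|$ centers of $\calP_1$ with smallest $\Delta_c(X)$, which is found by sorting. Each $\closedclcost_c(X)$ is computable directly, since the minimum over $c'\in X\cup\dummyset$ ranges over polynomially many candidates.

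For property 2, nonnegativity and monotonicity of $f$ follow from $g$ being nonincreasing in $X$. Each term, including both branches of $\closedclcost_c$, only decreases when $X$ grows. A hit cluster's cost is at most its unhit cost, because replacing the single common center by the closest one in $\{c\}\cup X\cup\dummyset$ is pointwise better. Hitting is also monotone.

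Submodularity is the main obstacle. A minimum over $\calR'_1$ of submodular-decrease functions is not automatically submodular. My plan is to rewrite $-g$ as a sum of a fixed facility-location-type part and a "top-$|\calR_1|$ selection" part. First, the fixed terms $\sum_p \min$ distance are standard coverage or facility-location functions, and their negation shifted by a constant is submodular. Second, writing $f(X)=\text{const}+\max_{\calR'_1}\sum_{c\in\calR'_1}(-\text{cost}'_c(X))$ after suitable normalization, I would try to show that each per-cluster gain $h_c(X):=\closedclcost_c(\emptyset)-\closedclcost_c(X)$ is submodular. In the unhit case it is a max over single centers. The hit case has to be compared carefully, and I would try to exploit the core definition to show that the hit branch dominates the unhit one by a margin. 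Then the selection structure should be handled as a weighted matroid rank (a uniform-matroid weighted rank of submodular gains), or if that fails, I would appeal to the fact that a sum over partitions of "max of best single center" is a weighted coverage function. I expect to need a careful exchange argument here.

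Properties 3 and 4 are obtained by choosing $\calR'_1=\calR_1$ and comparing term by term with the lower bounds \eqref{eq:MprimeDbound} and \eqref{eq:MprimeObound}. The unhit branch of $\closedclcost_c$ is bounded above by $\min_{c'}\sum_{p\in C_c}d^2(p,c')$, and the hit branch by $\sum_{p\in C_c}d^2(p,\{c\}\cup X\cup\dummyset)$. For $X=\ho$ in the hit case the comparison requires the closest center of $\ho$ at the relevant client to be at least as good, which I would check via the definition of hit.

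For property 5, take the optimal $\calR'_1$ for $X$ and output $S^*=(S-\calQ-\bcalU-\calR_0-\calR'_1)\cup X$, padded to $k$ centers. Dummy centers are replaced using the ball guarantee, each dummy being no closer than a real center of its ball since $X\in\calI$ can be completed with one center per ball. The delicate part is that hit clusters in $\calR'_1$ still use $c$. For such $C_c$, some core point $p$ has $d^2(p,X)<d^2(p,c)\le\eps\clcost(S)/(|\calR||C_c|)$. So every core point is within squared distance $O(\eps\clcost(S)/(|\calR||C_c|))$ of $X$ via \Cref{lem:apxTriangleInequality2} with $\gamma=1+\sqrt{\eps}$. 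Summed over at most $|\calR|$ clusters this gives a $(1+6\sqrt\eps)$ factor plus an $O(\sqrt\eps)\sopt$ additive loss. Non-core points of concentrated clusters are rerouted to the same center similarly.
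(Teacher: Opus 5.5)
Your Properties 1, 3, 4 and 5 follow the paper's route. In Property 4 the hit case is in fact trivial: $d^2(p,\{c\}\cup\ho\cup\dummyset)\le d^2(p,c')$ holds pointwise for every $c'\in\ho$, so no appeal to hitting is needed.

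The gap is in the submodularity part of Property 2. Your main plan is to show that each gain $h_c(X)=\closedclcost_c(\emptyset)-\closedclcost_c(X)$ is submodular, and then take a ``uniform-matroid weighted rank'' (best $|\calR_1|$) of these gains. That rests on a false principle: choosing the best $r$ among submodular functions, even modular ones, need not be submodular. Take $r=1$, $h_1(X)=|X\cap\{a,b\}|$ and $h_2(X)=\tfrac32\,\ind(e\in X)$. Then $\max(h_1,h_2)$ gives $b$ marginal value $0$ at $\{e\}$ but $\tfrac12$ at $\{e,a\}\supseteq\{e\}$. Your $h_c$ really can look like $h_1$, because the non-core part of $\closedclcost_c$ is a facility-location sum over clients served by different centers of $X$. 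Beyond this, the step ``the hit branch dominates the unhit one by a margin'' is not an argument, and the fallback exchange argument is never supplied.

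The missing idea is the decomposition you already used in Property 1, applied here. Unselected clusters pay $\sum_{p\in C_c}d^2(p,\{c\}\cup X\cup\dummyset)$, and $\closedclcost_c(X)$ equals this sum plus $\increase_c(X)$. So that sum moves into the selection-independent facility-location part, and only $\increase_c(X)$ (your $\Delta_c(X)$) is subject to the selection. What makes the selection submodular is not submodularity of per-cluster gains but the single-center form $\increase_c(X)=\min_{c'\in X\cup\dummyset}\phi_c(c')$. Here $\phi_c(c')=0$ if $c'$ hits $C_c$, and $\phi_c(c')=\sum_{p\in\core_c}\bigl(d^2(p,c')-d^2(p,\{c\}\cup\dummyset)\bigr)\ge 0$ otherwise. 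The hit case is thus just one center attaining the value $0$; no margin argument is needed.

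With this structure you can finish in either of two ways.
\begin{itemize}
\item Use the paper's exchange argument, built from $\Delta_{\text{new}}$, $\Delta^Y_{\text{old}}$, $\Delta^X_{\text{old}}$ and the bijection $\pi$. It relies on $\increase_c(Y\cup\{c_0\})=\increase_c(\{c_0\})$ for $c\in\Delta_{\text{new}}$, which holds precisely because of the single-center form.
\item Apply your matroid idea to \emph{pairs} rather than to gains. Consider the rank-$|\calR_1|$ truncation of the partition matroid on $\calP_1\times(F_\calB\cup\dummyset)$, with one capacity-$1$ block per cluster, and weights $w(c,c')=M-\phi_c(c')$ for a large constant $M$. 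Then $\min_{|T|=|\calR_1|}\sum_{c\in T}\increase_c(X)=|\calR_1|M-\rho_w\bigl(\calP_1\times(X\cup\dummyset)\bigr)$, where $\rho_w$ is the weighted rank. Weighted matroid rank is submodular, and $X\mapsto\calP_1\times(X\cup\dummyset)$ preserves unions, so this term is submodular in $X$.
\end{itemize}
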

We give the proof of the lemma in the next subsection. We explain here how it implies \Cref{lemma:findingSstar}.

\begin{proof}[Proof of \Cref{lemma:findingSstar}]

By the first property of \Cref{lemma:propertiesfandg}, we can evaluate $f$ in polynomial time. Moreover, it is easy to see that we can answer independence queries in polynomial time for any partition matroid, particularly for $\calM_\calB$. We can thus apply \Cref{thm:submodularmatroidoptimization} on $f$ and $\calM_\calB$  with $\zeta =  (1-1/e - 1/2)$
to find a solution $X \in \calI$ such that\footnote{We remark that we could select $\zeta$ to be arbitrarily small and get a better guarantee in the statement of \Cref{lemma:findingSstar}. We have chosen this value to simplify the calculations as improving the guarantee of \Cref{lemma:findingSstar} does not improve the overall result. }
\begin{gather*}
    g(\emptyset) - g(X) =  f(X) \geq (1-1/e - \zeta)f(\ho) = (1-1/e - \zeta)( g(\emptyset) - g(\ho)) = \frac{1}{2} ( g(\emptyset) - g(\ho))\,,
\end{gather*}
where we used that $\ho$ is one feasible solution, i.e., $\ho \in \mathcal{I}$.
This in turn implies that 
\[
\frac{1}{2} \left(g(\ho)  +  g(\emptyset) \right) \geq  g(X)\,.
\]

Using the upper bounds on $g(\emptyset)$ and $g(\ho)$ of the above lemma we thus have found a set $X$ such that 
\begin{gather*}
 \frac{1}{2} (\clcost(M'_O, \mu'_O) +  \clcost(M'_D, \mu'_D)) \geq g(X)\,.
\end{gather*}
Now using the last property of the above lemma we can output a solution $S^*$ whose cost is at most
\begin{gather*}
    \frac{(1+{6\sqrt{\eps}})}{2} \left(  \clcost(M'_O, \mu'_O) +  \clcost(M'_D, \mu'_D) \right) + O({\sqrt{\eps}}\cdot \sopt)\,.
\end{gather*}
as required. 

Finally, each of the above steps runs in polynomial time: the algorithm of \Cref{thm:submodularmatroidoptimization} is polynomial time, and the last property of \Cref{lemma:propertiesfandg} used to obtain $S^*$ is polynomial-time. 
Moreover, the number of guesses of $\calR_0$ is at most $n^{\eps^{-O(1)}}$, as argued after the description of that procedure. So we can, in polynomial time, try all possibilities and, among all solutions found (one for each guess of $\calR_0$), return one that minimizes the cost and, in particular, has cost at most that of $S^*$ (which was analyzed assuming the guess of $\calR_0$ was correct). We thus have  a polynomial time algorithm that returns a solution that satisfies the guarantee of the lemma.
\end{proof}

\subsubsection{Proof of \Cref{lemma:propertiesfandg}}

\paragraph{Proof of Property 1.} Given $X \subseteq \facilities_{\calB}$, we argue that we can evaluate $g(X)$ in polynomial time. For a center $c \in \calP_1$, define 
\[
\increase_c(X) = \closedclcost_c(X) - \sum_{p\in C_c} d^2(p, \{c\} \cup X \cup \dummyset)\,.
\]
In other words, using that $d^2(p,c ) \leq d^2(p, X)$ for $p\in \core_c$ if $C_c$ is not hit by $X$,
\begin{align}
\increase_c(X) = 
    \begin{cases}
        0 & \mbox{if $C_c$ is hit by $X$,} \\
        \min_{c'\in X \cup \dummyset} \sum_{p\in \core_c} (d^2(p, c') - d^2(p,\{c\}  \cup \dummyset))  & \mbox{otherwise.} 
    \end{cases}
\label{eq:increase}
\end{align}
We remark that $\increase_c(X)\geq 0$ by definition. 
Let $\clients_0$ denote the clients belonging to clusters $C_c$ with $c\in \calR_0$.
With this notation, $g(X)$ is the minimum value
\begin{align}
    \sum_{p\in \clients - \clients_0} d^2(p, \{\tmu(p)\} \cup X \cup \dummyset) + \sum_{p\in \clients_0} d^2(p, X \cup \dummyset) +  \sum_{c \in \calR'_1} \increase_c(X)
    \label{eq:nicedefofg}
\end{align}
over all subsets $\calR'_1 \subseteq \calP_1$ with $|\calR'_1| = |\calR_1|$. We further have that the value of $\increase_c(X)$ for $c\in \calR'_1$ is independent of other centers in $\calR'_1$. We can thus obtain the best choice of $\calR'_1$ by evaluating $\increase_c(X)$ for every $c\in \calP_1$ and choose the $|\calR_1|$ ones of smallest value. Notice that the algorithm knows $|\calR_1|$ since it equals $|S_0| - |\calQ| - |\bcalU| - |\calR_0| =|\calB| - |\calQ| - |\bcalU| - |\calR_0|$.     After we have obtained $\calR'_1$ in polynomial time, we can evaluate $g(X)$ in polynomial time by simply calculating the sum.

\paragraph{Proof of Property 2.} In the proof of this property it will be convenient to use definition~\eqref{eq:nicedefofg} of $g$.
We start by verifying that $f$ is monotone. I.e., that $f(Y) \geq f(X)$ when $X \subseteq Y$, which is equivalent to verifying that $g(Y) \leq g(X)$.  Let $g(X)$ equal
\[
     \sum_{ p\in \clients- \clients_0} d^2(p, \{\tmu(p)\} \cup X \cup \dummyset) + \sum_{p\in \clients_0} d^2(p, X \cup \dummyset)+ \sum_{c \in \calR'_1} \increase_c(X)
\]
for some set $\calR'_1$. Then $g(Y)$ is at most
\[
     \sum_{p\in \clients- \clients_0} d^2(p, \{\tmu(p)\} \cup Y \cup \dummyset)+ \sum_{p\in \clients_0} d^2(p, {Y} \cup \dummyset) + \sum_{c \in \calR'_1} \increase_c(Y)
\]
As trivially $d^2(p, \{\tmu(p)\} \cup Y \cup \dummyset) \leq d^2(p, \{\tmu(p)\}\cup X \cup \dummyset)$, $d^2(p,  Y \cup \dummyset) \leq d^2(p,  X \cup \dummyset)$ and $\increase_c(Y) \leq \increase_c(X)$ (see~\eqref{eq:increase}),  we have  $g(Y) \leq g(X)$ as required. Moreover, as $f(\emptyset) = 0$ by definition, non-negativity follows.

We proceed to verify that $f$ is submodular, i.e., that for every $X \subseteq Y \subseteq \facilities_B$ and $c_0 \in \facilities_B - Y$, $f(X \cup \{c_0\}) - f(X) \geq f(Y \cup \{c_0\}) - f(Y)$, or equivalently
\begin{align}
    g(X \cup \{c_0\}) - g(X) \leq g(Y \cup \{c_0\}) - g(Y)\,.
    \label{eq:submodularineq}
\end{align}
 Let $\calR_1^X, \calR_1^Y$, and  ${\calR}_1^{Y+c_0}$ be subsets of $\calP_1$ of cardinality $|\calR_1|$ such that
\begin{align*}
g(X) &=  \sum_{p\in \clients- \clients_0} d^2(p, \{\tmu(p)\} \cup X \cup \dummyset) + \sum_{p\in \clients_0} d^2(p, X \cup \dummyset) + \sum_{c \in \calR^X_1} \increase_c(X)\\
g(Y) &=  \sum_{p\in \clients- \clients_0} d^2(p, \{\tmu(p)\} \cup Y \cup \dummyset)+ \sum_{p\in \clients_0} d^2(p, Y \cup \dummyset) + \sum_{c \in \calR^Y_1} \increase_c(Y)\\
g(Y \cup \{c_0\}) & = \sum_{p\in \clients- \clients_0} d^2(p, \{\tmu(p)\} \cup (Y \cup \{c_0\}) \cup \dummyset)+ \sum_{p\in \clients_0} d^2(p, (Y \cup \{c_0\}) \cup \dummyset) \\
& + \sum_{c \in {\calR}^{Y+c_0}_1} \increase_c(Y\cup \{c_0\})
\end{align*}
We shall define  ${\calR}^{X+c_0}_1$  to be a subset of $\calP_1$ of cardinality $|\calR_1|$ so that  the upper bound
\begin{align*}
g(X\cup \{c_0\}) & \leq  \sum_{p\in \clients- \clients_0} d^2(p, \{\tmu(p)\} \cup (X\cup \{c_0\}) \cup \dummyset)+ \sum_{p\in \clients_0} d^2(p, (X \cup \{c_0\}) \cup \dummyset) \\
& + \sum_{c \in {\calR}^{X+c_0}_1} \increase_c(X \cup \{c_0\})
\end{align*}
allows us to verify Inequality~\eqref{eq:submodularineq}. First notice that no matter the definition of ${\calR}^{X+c_0}_1$, we have that
\begin{align}
& \sum_{p\in \clients- \clients_0} (d^2(p, \{\tmu(p)\} \cup (X\cup \{c_0\}) \cup \dummyset) -  d^2(p, \{\tmu(p)\} \cup X \cup \dummyset) )\nonumber \\
& +\sum_{p\in \clients_0} (d^2(p, (X\cup \{c_0\}) \cup \dummyset) -  d^2(p,  X \cup \dummyset) )
 \label{eq:boringsubmodular1}
\end{align}
is upper bounded by
\begin{align}
& \sum_{p\in \clients- \clients_0} (d^2(p, \{\tmu(p)\} \cup (Y \cup \{c_0\}) \cup \dummyset) - 
 d^2(p, \{\tmu(p)\} \cup Y \cup \dummyset)) \nonumber \\
 & + \sum_{p\in \clients_0}(d^2(p, (Y \cup \{c_0\}) \cup \dummyset) - 
 d^2(p, Y \cup \dummyset))\,.
 \label{eq:boringsubmodular2}
\end{align}
Indeed, if we let $\clients' \subseteq \clients - \clients_0$ be the subset of clients for which $d^2(p, c_0) < d^2(p, \{\tmu(p)\} \cup Y \cup \dummyset)$ and $\clients'_0 \subseteq\clients_0$ be the subset of clients for which $d^2(p, c_0) < d^2(p,  Y \cup \dummyset)$, then~\eqref{eq:boringsubmodular2} equals
\[
    \sum_{p\in \clients'} (d^2(p, c_0) - d^2(p, \{\tmu(p)\} \cup Y \cup \dummyset))+ \sum_{p\in \clients'_0} (d^2(p, c_0) - d^2(p,  Y \cup \dummyset))
\]
and we have 
\[
    \eqref{eq:boringsubmodular1} \leq \sum_{p\in \clients'} (d^2(p, c_0) - d^2(p, \{\tmu(p)\} \cup X \cup \dummyset)) + \sum_{p\in \clients'_0} (d^2(p, c_0) - d^2(p,  X \cup \dummyset))\,.
\]
It follows that \eqref{eq:boringsubmodular1} is at most \eqref{eq:boringsubmodular2} because
\[
    \sum_{p\in \clients'} ( d^2(p, \{\tmu(p)\} \cup Y \cup \dummyset)  - d^2(p, \{\tmu(p)\} \cup X \cup \dummyset)) + \sum_{p\in \clients'_0} ( d^2(p,  Y \cup \dummyset)  - d^2(p,  X \cup \dummyset)) \leq 0\,.
\]

To prove that $f$ is submodular, it is thus sufficient to define ${\calR}^{X+c_0}_1$ so that 
\begin{align}
    \sum_{c\in {\calR}^{X+c_0}_1} \increase_c(X \cup \{c_0\}) - \sum_{c\in {\calR}^X_1} \increase_c(X)\leq \sum_{c\in {\calR}^{Y+c_0}_1} \increase_c(Y \cup \{c_0\}) - \sum_{c\in {\calR}^Y_1} \increase_c(Y)\,.
    \label{eq:mainsubmodular}
\end{align}
To this end, let $\Delta_{\text{new}}$ contain the centers in ${\calR}^{Y+c_0}_1 - {\calR}^Y_1$ and those centers  $c \in {\calR}^{Y+c_0}_1 \cap \calR^Y_1$ with $\increase_c(Y \cup \{c_0\}) < \increase_c(Y)$. Notice that we may assume that, for each $c\in \Delta_{\text{new}}$,
\[
\increase_c(Y \cup \{c_0\})= \increase_c(\{c_0\}) =  \begin{cases}
        0 & \mbox{if $C_c$ is hit by $c_0$,} \\
         \sum_{p\in \core_c} (d^2(p, c_0) - d^2(p,\{c\}  \cup \dummyset))  & \mbox{otherwise.} 
    \end{cases}
\]
because the addition of $c_0$  caused a decrease in $\increase(\cdot)$ for these clusters.   Further, let $\Delta_{\text{old}}^Y$ contain $\Delta_{\text{new}} \cap \calR^{Y}_1$ and the centers $\calR^{Y}_1 - \calR^{Y+c_0}_1$. So $\calR^{Y+c_0}_1 = (\calR^{Y}_1 - \Delta_{\text{old}}^Y) \cup \Delta_{\text{new}}$ and $|\Delta_{\text{new}}| = |\Delta_{\text{old}}^Y|$. We now define $\calR^{X+c_0}_1 = (\calR^{X}_1 - \Delta_{\text{old}}^X) \cup \Delta_{\text{new}}$ where $\Delta^X_{\text{old}}$ is obtained as follows:
\begin{itemize}
    \item Initialize $\Delta_{\text{old}}^X =\calR^X_1\cap \Delta_{\text{new}}$. 
    \item While $|\Delta_{\text{old}}^X|< |\Delta_{\text{new}}|$, add a  center according to the following priorities:
    \begin{itemize}
        \item If there is a center $c\in \calR^X_1 - \Delta_{\text{old}}^X$ so that $c\not \in \calR^{Y}_1$, add $c$ to $\Delta_{\text{old}}^X$.
        \item Else add a center $c\in \calR^X_1- \Delta_{\text{old}}^X$ with $c \in \Delta_{\text{old}}^Y$. 
    \end{itemize}
\end{itemize}
We claim that the above is well-defined, i.e., that if $|\Delta_{\text{old}}^X|< |\Delta_{\text{new}}|$ and there is no center 
$c\in \calR^X_1 - \Delta_{\text{old}}^X$ so that $c\not \in \calR^{Y}_1$, then there must be a center $c\in \calR^X_1- \Delta_{\text{old}}^X$ with $c \in \Delta_{\text{old}}^Y$. This is because $|\Delta_{\text{old}}^X| < |\Delta_{\text{new}}| = |\Delta_{\text{old}}^Y|$ and all remaining centers $\calR_1^X - \Delta_{\text{old}}^X$  are in $\calR^Y_1$ in this case, so at least one of them must be in $\Delta_{\text{old}}^Y$ (recall that $|\calR_1^X| = |\calR_1^Y|$).

With this notation we have 
\begin{align}
   \notag \sum_{c\in {\calR}^{Y+c_0}_1} \increase_c(Y \cup \{c_0\}) - \sum_{c\in {\calR}^Y_1} \increase_c(Y ) & =  \sum_{c\in \Delta_{\text{new}}} \increase_c(Y \cup \{c_0\}) - \sum_{c\in \Delta_{\text{old}}^Y}\increase_{c}(Y) \\
    &=  \sum_{c\in \Delta_{\text{new}}} \increase_c(\{c_0\}) - \sum_{c\in \Delta_{\text{old}}^Y}\increase_{c}(Y)\,. \label{eq:firstsubmodular}
\end{align}
Similarly,
\begin{align}
    \sum_{c\in {\calR}^{X+c_0}_1} \increase_c(X \cup \{c_0\}) - \sum_{c\in {\calR}^X_1} \increase_c(X ) 
    & \leq \sum_{c\in \Delta_{\text{new}}} \increase_c(X \cup \{c_0\}) - \sum_{c\in \Delta_{\text{old}}^X} \increase_{c}(X) \notag\\ 
    & = \sum_{c\in \Delta_{\text{new}}} \increase_c(\{c_0\}) - \sum_{c\in \Delta_{\text{old}}^X} \increase_{c}(X)\,, \label{eq:secondsubmoular}
\end{align}
where we used that $\increase_c(X \cup \{c_0\}) \leq \increase_c(X)$ for any center $c$.

To analyze this let, $\pi$ be a bijection from $\Delta_{\text{old}}^Y$ to $\Delta_{\text{old}}^X$ so that $\pi(c) =c$ for all $c\in \Delta_{\text{old}}^Y \cap \Delta_{\text{old}}^X$. 
\begin{claim}
    For every $c\in \Delta_{\text{old}}^Y$ we have
    \[
    \increase_c(Y) \leq \increase_{\pi(c)}(X)\,.
    \]
\end{claim}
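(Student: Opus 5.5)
The plan is to combine two elementary properties of $\increase$ with the way $\Delta^X_{\text{old}}$ was built. The first property is monotonicity: for every center $c\in\calP_1$ and every $X\subseteq Y$ we have $\increase_c(Y)\leq \increase_c(X)$. This follows directly from \eqref{eq:increase}. If $C_c$ is hit by $X$, it is also hit by $Y$, so both values are $0$. Otherwise the minimum in \eqref{eq:increase} is taken over the larger set $Y\cup\dummyset\supseteq X\cup\dummyset$, so it can only decrease, and it drops to $0$ if $Y$ hits $C_c$. The second property is optimality of $\calR^Y_1$. By the proof of Property 1, $\calR^Y_1$ consists of $|\calR_1|$ centers of $\calP_1$ with the smallest values of $\increase_\cdot(Y)$; otherwise swapping one element would decrease \eqref{eq:nicedefofg}, since the $\increase$ terms do not interact. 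Hence $\increase_c(Y)\leq \increase_{c'}(Y)$ for every $c\in\calR^Y_1$ and every $c'\in\calP_1-\calR^Y_1$.

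Next I would record a structural fact about $\Delta^X_{\text{old}}$: every $c'\in\Delta^X_{\text{old}}$ either lies outside $\calR^Y_1$ or lies in $\Delta^Y_{\text{old}}$. I would check this against the three ways an element enters $\Delta^X_{\text{old}}$. An element added at initialization lies in $\calR^X_1\cap\Delta_{\text{new}}$. If it is also in $\calR^Y_1$, it belongs to $\Delta_{\text{new}}\cap\calR^Y_1\subseteq\Delta^Y_{\text{old}}$. An element added by the first priority rule lies outside $\calR^Y_1$ by definition. An element added by the second priority rule lies in $\Delta^Y_{\text{old}}$ by definition. Since $|\Delta^X_{\text{old}}|=|\Delta_{\text{new}}|=|\Delta^Y_{\text{old}}|$, a bijection $\pi$ that is the identity on the intersection exists.

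Now fix $c\in\Delta^Y_{\text{old}}$, let $c'=\pi(c)$, and split into cases. If $c'=c$, monotonicity gives $\increase_c(Y)\leq\increase_c(X)$. If $c'\neq c$, then $c'\notin\Delta^Y_{\text{old}}$: otherwise $c'$ would be in $\Delta^Y_{\text{old}}\cap\Delta^X_{\text{old}}$, so $\pi(c')=c'$, which contradicts injectivity of $\pi$ since $\pi(c)=c'$ with $c\neq c'$. By the structural fact, this forces $c'\notin\calR^Y_1$. Since $c\in\Delta^Y_{\text{old}}\subseteq\calR^Y_1$, optimality of $\calR^Y_1$ followed by monotonicity gives
\[
\increase_c(Y)\leq\increase_{c'}(Y)\leq\increase_{c'}(X).
\]
This proves the claim.

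The only delicate step is the structural fact, specifically checking that initialization elements lying in $\calR^Y_1$ automatically belong to $\Delta^Y_{\text{old}}$. This holds because $\Delta_{\text{new}}\cap\calR^Y_1\subseteq\Delta^Y_{\text{old}}$ by the definition of $\Delta^Y_{\text{old}}$. Ties in the choice of $\calR^Y_1$ cause no issue, because optimality only needs the weak inequality.
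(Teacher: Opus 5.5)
Your proof is correct and essentially matches the paper's: monotonicity of $\increase$ when $\pi(c)=c$, and otherwise showing $\pi(c)\notin\calR^Y_1$ from how $\Delta^X_{\text{old}}$ is built, then swapping $c$ for $\pi(c)$ in $\calR^Y_1$. You also spell out the injectivity step giving $\pi(c)\notin\Delta^Y_{\text{old}}$, which the paper leaves implicit (its condition $(\Delta^X_{\text{old}}-\Delta^Y_{\text{new}})\cap\calR^Y_1=\emptyset$ should read $\Delta^Y_{\text{old}}$, i.e., your structural fact).
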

\begin{proof}
    If $c\in \Delta_{\text{old}}^Y \cap \Delta_{\text{old}}^X$, the claim holds because $\increase_c(Y) \leq \increase_c(X)$ for every $c\in \calP_1$. Otherwise, we have $\pi(c) \not \in \calR^Y_1$. This is because $(\Delta_{\text{old}}^X - \Delta_{\text{new}}^Y) \cap \calR_1^Y = \emptyset$. Indeed, in the construction of $\Delta_{\text{old}}^X$, we initialized with $\calR^X_1\cap \Delta_{\text{new}}$ and, as $\Delta_{\text{old}}^Y$ contains $\Delta_{\text{new}} \cap \calR^Y_1$, all elements in  $\calR^X_1\cap \Delta_{\text{new}}$ are either in $\Delta_{\text{old}}^Y$ or not in $\calR^Y_1$. The property is then maintained by definition of the while-loop in the construction of $\Delta_{\text{old}}^X$. 

    It follows that $\calR^Y_1$ where we remove $c$ and add $\pi(c)$ is another subset of $\calP_1$ of cardinality $|\calR_1|$. Hence, by the selection of $\calR^Y_1$ (to be the collection of centers with smallest $\increase(\cdot)$) we have
    \[
        \increase_c(Y) \leq \increase_{\pi(c)}(Y) \leq \increase_{\pi(c)}(X)\,.
    \]
\end{proof}
By the above claim,
\begin{align*}
\sum_{c\in \Delta_{\text{old}}^Y}\increase_{c}(Y) - \sum_{c\in \Delta_{\text{old}}^X} \increase_{c}(X)  = \sum_{c\in \Delta_{\text{old}}^Y}(\increase_{c}(Y) - \increase_{\pi(c)}(X)) \leq 0
\end{align*}
and thus by the above arguments (see~\eqref{eq:firstsubmodular} and~\eqref{eq:secondsubmoular})
\[
    \sum_{c\in {\calR}^{X+c_0}_1} \increase_c(X \cup \{c_0\}) - \sum_{c\in {\calR}^X_1} \increase_c(X)\leq \sum_{c\in {\calR}^{Y+c_0}_1} \increase_c(Y \cup \{c_0\}) - \sum_{c\in {\calR}^Y_1} \increase_c(Y)\,.
\]
We have thus verified~\eqref{eq:mainsubmodular},  which concludes the proof that $f$ is submodular.

\paragraph{Proof of Property 3.} We have
\begin{align*}
    g(\emptyset) &\leq 
     \sum_{c \not \in \calR}\sum_{p\in C_c}  d^2\left(p, \{c\}  \cup \dummyset\right) + \sum_{p\in \clients_0} d^2(p, \dummyset) + \sum_{c\in \calR_1} \closedclcost_c(\emptyset)  \\
     &\leq \sum_{c \not \in \calR_1}\sum_{p\in C_c}  d^2\left(p, \{c\}  \cup \dummyset\right)+ \sum_{p\in \clients_0} d^2(p, \dummyset) + \sum_{c\in \calR_1} \min_{c'\in \dummyset} \sum_{p\in C_c} d^2(p, c') \\
     & \leq \sum_{c \not \in \calR}\sum_{p\in C_c}  d^2\left(p, \{c\}  \cup \dummyset\right) + \sum_{c\in \calR} \min_{c'\in \dummyset} \sum_{p\in C_c} d^2(p, c')\,\overset{\eqref{eq:MprimeDbound}}{\leq}\clcost(M'_D, \mu'_D).
\end{align*}

\paragraph{Proof of Property 4.} Similarly to the third property, we have
\begin{align*}
    g(\ho) &\leq 
     \sum_{c \not \in \calR}\sum_{p\in C_c}  d^2\left(p, \{c\} \cup \ho \cup \dummyset\right) + \sum_{p\in \clients_0}d^2(p, \ho \cup \dummyset)+  \sum_{c\in \calR_1} \closedclcost_c(\ho)  \\
     &\leq \sum_{c \not \in \calR}\sum_{p\in C_c}  d^2\left(p, \{c\}   \cup \ho \cup \dummyset\right)+ \sum_{p\in \clients_0} d^2(p, \ho \cup \dummyset) + \sum_{c\in \calR_1} \min_{c'\in \ho} \sum_{p\in C_c} d^2(p, c') \\
     & \leq \sum_{c \not \in \calR}\sum_{p\in C_c}  d^2\left(p, \{c\} \cup \ho \cup \dummyset\right) + \sum_{c\in \calR} \min_{c'\in \ho} \sum_{p\in C_c} d^2(p, c')\overset{\eqref{eq:MprimeObound}}{\leq}\,\clcost(M'_O, \mu'_O)\,.
\end{align*}

\paragraph{Proof of Property 5.}

    Let  $X \subseteq \facilities_\calB$ be independent in $M_\calB$, i.e., $X\in \calI$. We give a polynomial-time algorithm that outputs $k$ centers $S^*$ whose cost is at most $(1+{6\sqrt{\eps}})g(X) +{O(\sqrt{\eps})} \sopt$. Let $\calR'_1$ be a subset of $\calP_1$ of cardinality $|\calR_1|$ so that
\begin{gather*}
   g(X) =   \sum_{c \not \in \calR_0 \cup \calR'_1}\sum_{p\in C_c}  d^2\left(p, \{c\} \cup X \cup \dummyset\right)+ \sum_{p \in \clients_0} d^2(p, X \cup \dummyset) + \sum_{c\in \calR'_1} \closedclcost_c(X) 
\end{gather*}
    By the arguments in the proof of the first property, we can calculate $\calR'_1$ in polynomial time. Now we obtain $S^*$ from $S_{\calQ \cup \bcalU}$ as follows:
    \begin{itemize}
    \item Initialize $S^* = S_{\calQ \cup \bcalU} = S \cup \dummyset - \calQ - \bcalU$.
    \item Remove centers $\calR_0$ and $\calR'_1$ from $S^*$ to obtain a set of cardinality $k$. 
    \item Finally, for each ball in $\calB$, if $X$ contains a center $c$ in that ball, replace the associated dummy center by $c$; otherwise, replace the dummy center with an arbitrary center in the ball.  
    \end{itemize}
    By the definition of the dummy centers $\dummyset$, the cost of $S^*$ is at most
    \[
     \sum_{c \not \in \calR_0 \cup \calR'_1}\sum_{p\in C_c}  d^2\left(p, \{c\} \cup X \cup \dummyset\right)+ \sum_{p \in \clients_0} d^2(p, X \cup \dummyset) + \sum_{c\in \calR'_1}  \sum_{p\in C_c} d^2(p, X \cup \dummyset)\,.
    \]
    To relate this to $g(X)$ we need to compare $\closedclcost_c(X)$ to $\sum_{p\in C_c} d^2(p, X\cup \dummyset)$ for every $c\in \calR'_1$.
    Recall that 
\[
    \closedclcost_c(X) = 
    \begin{cases}
        \sum_{p\in C_c} d^2(p, \{c\} \cup X \cup \dummyset) & \mbox{if $C_c$ is hit by $X$,} \\
        \min_{c'\in X \cup \dummyset} \sum_{p\in \core_c} d^2(p, c') + \sum_{p\in C_c - \core_c} d^2(p, \{c\} \cup X \cup \dummyset) & \mbox{otherwise.} 
    \end{cases}
\]
Consider a cluster $C_c$ with $c\in \calR'_1$ and partition $C_c- \core_c$  into sets $A$ and $B$ where $A$ contain those clients $p$ so that $d^2(p, \{c\} \cup X \cup \dummyset) =d^2(p, c)$ and $B$ contain the remaining clients $p$ with $d^2(p, \{c\} \cup X \cup \dummyset) = d^2(p, X \cup \dummyset)$. Suppose first that there is a client $p_0 \in \core_c$ so that $d^2(p_0, X \cup \dummyset) < d^2(p_0, c)$, i.e., $C_c$ is hit by $X \cup \dummyset$.
By the definition of $\core_c$, we have  $d^2(p_0, c) \leq \eps \cdot \frac{\clcost(S)}{|\calR| \cdot |C_c|}\leq {\apxLSkmeans} \eps \cdot \frac{\sopt}{|\calR| \cdot |C_c|}$ where we used that $\clcost(S) \leq {\apxLSkmeans} \cdot \sopt$ by \Cref{lemma:localsearchis5approximation}. So,  by {\Cref{lem:apxTriangleInequality3} with $\gamma=1+\sqrt{\eps}$},
\begin{align*}
\sum_{p\in C_c} d^2(p, X \cup \dummyset)& \leq  \sum_{p\in C_c - B} ({1+\sqrt{\eps}})d^2(p, c) + {\frac{3}{\sqrt{\eps}}}(d^2(c, p_0) + d^2(p_0, X\cup \dummyset))) + \sum_{p\in B} d^2(p, X \cup \dummyset)\\
&\leq \sum_{p\in A} ({1+\sqrt{\eps}})d^2(p, c) + \sum_{p\in B} d^2(p, X \cup \dummyset)  + {O(\sqrt{\eps})} \cdot \frac{\sopt}{|\calR|}\,,
\end{align*}
where we used $d^2(p_0, X\cup \dummyset) \leq d^2(p_0,c) \leq {\apxLSkmeans}\eps \cdot \frac{\sopt}{|\calR| \cdot |C_c|}$ and $d^2(p, c) \leq {\apxLSkmeans}\eps \cdot \frac{\sopt}{|\calR| \cdot |C_c|}$ for any $p\in \core_c$ for the last inequality. Moreover, by definition of $A$ and $B$, we have $d^2(p, c) = d^2(p,  \{c\} \cup X \cup \dummyset)$ for any $p\in A$ and $d^2(p, X \cup \dummyset) = d^2(p, \{c\} \cup X \cup \dummyset)$ for any $p\in B$. Hence, in this case,
\begin{align*}
    \sum_{p\in C_c} d^2(p, X \cup \dummyset) & \leq \sum_{p\in C_c - \core_c} ({1+\sqrt{\eps}})d^2(p, \{c\} \cup X \cup Y) + {O(\sqrt{\eps})} \cdot \frac{\sopt}{|\calR|} \\
    & \leq   ({1+\sqrt{\eps}})\closedclcost_c(X) + {O(\sqrt{\eps})} \cdot \frac{\sopt}{|\calR|}\,.
\end{align*}

Let us now consider a cluster $C_c$ with $c\in \calR'_1$ that is not hit by $X \cup \dummyset$, particularly not by $X$.
Then 
\begin{align*}
    \closedclcost_c(X) &= \min_{c'\in X \cup \dummyset} \sum_{p\in \core_c} d^2(p, c') + \sum_{p\in C_c - \core_c} d^2(p, \{c\} \cup X \cup \dummyset) \\
    & \geq \sum_{p\in \core_c \cup B} d^2(p, X \cup \dummyset) + \sum_{q\in A} d^2(q, c)
\end{align*}
{Using again \Cref{lem:apxTriangleInequality3} with $\gamma=1+\sqrt{\eps}$,} we further have 
\begin{align*}
  &  \sum_{p\in C_c} d^2(p, X \cup \dummyset)  = \sum_{p\in \core_c \cup B} d^2(p, X \cup \dummyset) + \sum_{q\in A} d^2(q, X \cup \dummyset)\\
 \leq & \sum_{p\in \core_c \cup B} d^2(p, X \cup \dummyset) + \sum_{q\in A} \frac{1}{|\core_c|}\sum_{p\in \core_c} ({(1+\sqrt{\eps})}d^2(q, c) + {\frac{3}{\sqrt{\eps}}}(d^2(c, p) + d^2(p, X \cup \dummyset))).
\end{align*}
Now using that $d^2(c,p) \leq {\apxLSkmeans} \eps \cdot \frac{\sopt}{|\calR| \cdot |C_c|}$ {for $p\in \core_c$} we obtain the upper bound
\begin{align*}
     \sum_{p\in C_c} d^2(p, X \cup \dummyset) & \leq \left(1 + {\frac{3}{\sqrt{\eps}}}\frac{|A|}{|\core_c|}\right) \sum_{p\in \core_c\cup B} d(p, X \cup \dummyset) + \sum_{q\in A} ({1+\sqrt{\eps}})d^2(q,c) + {O(\sqrt{\eps})} \frac{\sopt}{|\calR|} \\
     &\leq \left(1 + \frac{{3\sqrt{\eps}}}{(1-\eps)}\right) \sum_{p\in \core_c \cup B} d^2(p, X \cup \dummyset) + \sum_{q\in A} ({1+\sqrt{\eps}})d^2(q,c) + {O(\sqrt{\eps})} \frac{\sopt}{|\calR|} \\
     & \leq (1+ {6\sqrt{\eps}}) \closedclcost_c(X) + {O(\sqrt{\eps})} \frac{\sopt}{|\calR|}
\end{align*}
where for the second inequality we used that every cluster in $\calR_1' \subseteq \calP_1$ is concentrated. {In more detail $|\core_c|\geq (1-\eps)|C_c|$, hence $|A|\leq |C_c-\core_c|\leq \eps|C_c|$}.

Hence, for any $c\in \calR_1'$ we proved that $\sum_{p\in C_c} d^2(p,X \cup \dummyset) \leq (1+{6\sqrt{\eps}}) \closedclcost_c(X) + {O(\sqrt{\eps})} \frac{\sopt}{|\calR|}$. We thus get that the cost of $S^*$ is at most

    \begin{align*}
    & \sum_{c \not \in \calR_0 \cup \calR'_1}\sum_{p\in C_c}  d^2\left(p, \{c\} \cup X \cup \dummyset\right)+ \sum_{p \in \clients_0} d^2(p, X \cup \dummyset) + \sum_{c\in \calR'_1}\left( (1+{6\sqrt{\eps}}) \closedclcost_c(X) + {O(\sqrt{\eps})} \frac{\sopt}{|\calR|}\right) \\
    \leq &  (1+{6\sqrt{\eps}}) \left(\sum_{c \not \in \calR_0 \cup \calR'_1}\sum_{p\in C_c}  d^2\left(p, \{c\} \cup X \cup \dummyset\right)+ \sum_{p \in \clients_0} d^2(p, X \cup \dummyset) + \sum_{c\in \calR'_1} \closedclcost_c(X) \right) + {O(\sqrt{\eps})} \sopt
    \end{align*}
    which equals $(1+{6\sqrt{\eps}}) g(X) + {O(\sqrt{\eps})} \sopt$. 

\subsection{Putting Everything Together 
-- Proof of \Cref{thm:mainadditivecenters}}
\label{sec:stable:everythingtogether}
We now turn to proving \Cref{thm:mainadditivecenters}. We first describe the algorithm we analyse.
\begin{mdframed}[hidealllines=true, backgroundcolor=gray!15]
\vspace{-5mm}
\paragraph{A $({5+O(\sqrt{\eps})})$-Approximation for $k$-Means on $\eps/\log n$-
Stable Instances $k, \clients, \facilities, \dist$.}\ \\
\begin{enumerate}
    \item $S \gets $ Local Search on $(k, \clients, \facilities, \dist)$, see \Cref{sec:localSearchAnalysis}
    \item Let $W$ be the set produced by the
    $s^*$-$D^2$-Sample
    process (\Cref{sec:dsampleproc}) on $S$.
    \item For each $\calB{\in \calL_{\texttt{bal}}}$ {produced as in \Cref{lem:ballguesses}} on $W$ (\Cref{sec:ballguesses})   
    \begin{enumerate}
    \item Compute the set of dummy centers $\dummyset$.
        \item For each {$Q\in \calLexp$ produced by $expRem()$} as in \Cref{lem:successguessprocess} 
        (\Cref{sec:removalofExpensive}):
        \begin{enumerate}
            \item \label{step:finalalg:submodular} For each $\bcalU{\in \calL_{cheap}}$, with the associated $\tmu$, {produced by $cheapRem()$ as in \Cref{lem:successcheapremove} }           (\Cref{sec:removalofCheap}):\newline 
\quad            solve the Submodular Instance associated with $(W,\calB,\calQ,\bcalU,\tmu)$ and compute the resulting solution $S_{(W,\calB,\calQ,\bcalU,\tmu)}$ for the $k$-Means problem (\Cref{sec:submodularopt})
        \end{enumerate}
    \end{enumerate}
    \item Output the solution $S_{(W,\calB,\calQ,\bcalU,\tmu)}$ of smallest cost. 
\end{enumerate}
\end{mdframed}

Our approach consists of running the above algorithm
$\log n$ times and take the minimum cost solution output.
In the remaining, we argue that the above algorithm yields a ${(5+O(\sqrt{\eps}))}$-approximation with probability at least $(1-\eps)(1-1/n)$. \Cref{thm:mainadditivecenters} then follows as the probability that all $\log n$ executions of the above algorithm would fail is at most $\left( 1- (1-\eps)(1-1/n)\right)^{\log n} < 1/n$, and so with high probability we output a ${(5+O(\sqrt{\eps}))}$-approximation. 

We first discuss the algorithm's running time.
The local search algorithm runs in polynomial time. 
The submodular optimization step also runs in polynomial time by \Cref{lemma:findingSstar}.
The output of the $s^*$-$D^2$-Sample process (which is clearly a polynomial time procedure) is 
a subset of size $s^*$. {By \Cref{lem:ballguesses}, we construct $\calL_{\texttt{bal}}$ in polynomial time the and $|\calL_{\texttt{bal}}| \leq n^{\eps^{-O(1)}}$. Similarly, by \Cref{lem:successguessprocess}, we construct $\calLexp$ in polynomial time and  $|\calLexp| \leq n^{\eps^{-O(1)}}$. 
{Finally, by \Cref{lem:successcheapremove}, we construct in polynomial-time 
 $n^{\eps^{-O(1)}}$ candidate pairs $\bcalU,\tmu$.} It follows that we solve $n^{\eps^{-O(1)}}$ submodular function optimization problems, each in polynomial time, and thus the total running time is polynomial.

We turn to proving the approximation guarantee.
We aim to show that there exists a 
$(W,\calB, \calQ, \bcalU, \tmu)$ that is successful with probability at least $(1-\eps/2)$.
Assuming this, the $({5+O(\sqrt{\eps})})$-approximation follows by combining
\Cref{claim:Mprimebounds} and \Cref{lemma:findingSstar}.

Our algorithm runs the local search algorithm and
thus provides us with a $\apxLSkmeans$ approximation $S$ that satisfies \Cref{lem:purecost}. Our algorithm next applies
the $s^*$-$D^2$-Sample process and by \Cref{lem:probcost} 
finds a successful $W$ with probability at least
$1-\eps$. Condition on having a successful
$W$, we then have that \Cref{lem:cheapcost} holds, and so does \Cref{lemma:convenience_upper_bound} by combining \Cref{lem:purecost} and \Cref{lem:cheapcost}.

Next, \Cref{lem:ballguesses} implies that one 
set of balls $\calB$ is a valid set of balls. This
set of balls induces a set of dummy centers 
$\dummyset$. From there, by \Cref{lem:successguessprocess} we have that 
with probability at least $1-1/n$, the 
{$expRem()$} procedure produces a set
of center $\mathcal{Q}$ such that     \begin{enumerate}
    \item $\calQ \subseteq S_0$, and
    \item The total cost in solution $S {-} Q \cup \dummyset$ of the clusters in $S_0 {-} \calQ$ is at most $\eps \cdot \sopt$.
    \end{enumerate}
and so we have obtained a successful 
$(W,\calB,\calQ)$ and \Cref{lemma:costboundofMOandMDwithassignments}
applies. Condition on the event that we obtain
a successful $(W, \calB, \calQ)$, then, applying \Cref{lem:successcheapremove} we have that 
the {$cheapRem()$} procedure outputs a pair
$(\bcalU, \tmu)$ satisfying the properties of 
\Cref{lem:successcheapremove}.
We thus have a $(W, \calB, \calQ, \bcalU,\tmu)$ such that $W$ is successful, the set of balls $\calB$ from \Cref{sec:ballguesses} is valid,  $\calQ$ selected in \Cref{sec:removalofExpensive} satisfies the properties of \Cref{lem:successguessprocess}, and $\bcalU,\tmu$  selected in this section satisfies the properties of \Cref{lem:successcheapremove} and
so $(W, \calB, \calQ, \bcalU,\tmu)$ is successful. Finally, note that the only probabilistic steps were the sampling of $W$ and of $\calQ$, and both steps are successful with probability at least $(1-\eps) \cdot (1-1/n)$, as required. 



\printbibliography
\end{document}